\documentclass[11pt,letterpaper]{amsart}
\usepackage{float}
\usepackage{amsmath, amssymb, amsthm, epsfig, color, mathtools, bbm, dsfont, mathrsfs, bm, mathabx}
\usepackage[latin1]{inputenc}
\usepackage[T1]{fontenc}
\usepackage{indentfirst}
\usepackage[shortlabels]{enumitem}
\usepackage[title]{appendix}
\usepackage{lmodern}       
\usepackage{setspace}
\usepackage{soul}
\usepackage{accents}
\usepackage{hyperref} 
\usepackage{xcolor}
\usepackage{relsize}
\usepackage{scalerel}
\usepackage{tikz}
\usepackage{xr}    
\usepackage{graphicx} 
\usepackage[export]{adjustbox} 
\usepackage{subcaption} 
\usepackage{siunitx}
\usepackage{bm} 
\usepackage{comment}
\usepackage{tikz, tkz-euclide}
\usetikzlibrary{patterns}
\usetikzlibrary{arrows.meta}
\usetikzlibrary{calc}
\usetikzlibrary{decorations.markings}
\usepackage{fancybox}
\newcounter{dummy} \numberwithin{dummy}{section}
  \theoremstyle{plain}
  \newtheorem*{theorem*}        {Theorem}
  \newtheorem{theorem}[dummy]          {Theorem}
  \newtheorem{lemma}[dummy]              {Lemma}
  \newtheorem*{lemma*}          {Lemma}
    
  \newtheorem{corollary}[dummy]           {Corollary}

  \newtheorem{remark}[dummy]           {Remark}
  \newtheorem{notation}[dummy]           {Notation}
  \theoremstyle{remark}
  \theoremstyle{definition}
   \newtheorem{definition}[dummy]          {Definition}
   \newtheorem{hypothesis}{Hypothesis}
   \newtheorem{convention}{Convention}
   
\makeatletter

\def\moverlay{\mathpalette\mov@rlay}
\def\mov@rlay#1#2{\leavevmode\vtop{%
		\baselineskip\z@skip \lineskiplimit-\maxdimen
		\ialign{\hfil$\m@th#1##$\hfil\cr#2\crcr}}}
\newcommand{\charfusion}[3][\mathord]{
	#1{\ifx#1\mathop\vphantom{#2}\fi
		\mathpalette\mov@rlay{#2\cr#3}
	}
	\ifx#1\mathop\expandafter\displaylimits\fi}

\makeatother

\newcommand{\I}{\mathrm{I}}
\newcommand{\E}{\mathrm{E}}
\newcommand{\Pt}{\mathbbold{P}}
\newcommand{\Gr}{\mathbbold{G}}
\newcommand{\Tr}{\mathbbold{T}}

\newcommand{\W}{\mathrm{W}}
\newcommand{\rS}{\mathrm{S}}
\newcommand{\degr}{\mathrm{deg}}
\newcommand{\diam}{\mathrm{diam}}
\newcommand{\dist}{\mathrm{dist}}

\DeclareSymbolFont{bbold}{U}{bbold}{m}{n}
\DeclareSymbolFontAlphabet{\mathbbold}{bbold}
\newcommand{\ind}{\mathbbold{1}}

\allowdisplaybreaks
\begin{document}

\title[Cluster Expansion and Decay of Correlations]{A Cluster Expansion and the Decay of Correlations of the 1d Long-Range Ising Model at low temperatures}
\author[Bissacot]{RODRIGO BISSACOT}
\author[Corsini]{HENRIQUE CORSINI}


\begin{abstract}
      In this work, a convergent low-temperature cluster expansion of the one-dimensional long-range ferromagnetic Ising model with polynomial decay $\alpha\in (1,2]$ is developed; that is, $J(r)=r^{-\alpha}$. As an application, the $n$-point correlations are studied and the two-point correlation is shown to be algebraic with rate of decay exactly $\alpha$. 
\end{abstract}

\maketitle

\tableofcontents

\section{Introduction}\label{sec: introduction}

\subsection{A brief history of the one-dimensional long-range Ising model}\label{subsec: brief-history}

In 1967, Gallavotti and Miracle-Sole \cite{Gallavotti.Miracle-Sole.67} showed the existence of the thermodynamic limit and infinite-volume correlation functions for a large class of lattice models with spin space $\{0,1\}$ and translation-invariant potentials. In particular, they showed the existence of the thermodynamic limit and infinite-volume correlation functions for regular long-range ferromagnetic Ising models. A year later, Ruelle \cite{Ruelle.68-no.P.T.} showed the absence of spontaneous magnetization if the long-range interaction decayed too fast, that is, if $\alpha>2$. In 1969, employing a hierarchical model, Dyson \cite{Dyson.69} gave the first proof of the existence of phase transition for a ferromagnet with long-range polynomial decay $\alpha\in (1,2)$. In the following years, progressively stronger criteria for the lack of long-range order were proved in \cite{Dyson.69-no.P.T.,Rogers1981}. 

In 1980, Bricmont, Lebowitz and Pfister \cite{BLP.1980} showed that, under Ruelle's condition, any equilibrium state is translation invariant. In 1982, Fannes, Vanheuverzwijn and Verbeure \cite{Fannes.Vanheuverzwijn.Verbeure.82} established a criterion for translation invariance of equilibrium states in terms of correlation inequalities, which show, in particular, that any equilibrium state of a regular one-dimensional long-range ferromagnetic Ising model is translation invariant. A similar result (translation invariance of equilibrium states in the one-dimensional case) was obtained independently by Burkov and Sinai in \cite{Burkov.Sinai.83}.

The matter of phase-transition for critical decay remained open until \cite{Frohlich.Spencer.82}. In 1982, Fr\"ohlich and Spencer successfully employed a new contour notion, which had already proved very fruitful in \cite{FS81}, where the first rigorous proof of the BKT transition for the planar XY model was given, to show the existence of long-range order at small temperature for $\alpha=2$. In \cite{Imbrie.82}, building on the work of Fr\"ohlich and Spencer, Imbrie provided a converging cluster expansion for the critical exponent and, as an application, proved the algebraic decay of the two-point function.

In 1988, employing techniques of percolation theory and the hypothesis that the interaction between first neighbors be arbitrarily large (which was absent in \cite{Frohlich.Spencer.82,Imbrie.82}), Aizenman, Chayes, Chayes and Newman \cite{Aizenman1988} proved the non-continuity of the magnetization for the critical polynomial decay (Thouless effect). In the same year and building on the previous work, Imbrie and Newman \cite{Imbrie.Newman.88} proved the existence of an intermediary phase, where correlations decay algebraically with an exponent smaller than $2$; that is, a BKT-like transition.

Nearly twenty years later, interest in the one-dimensional long-range ferromagnetic Ising model was renewed, when Cassandro, Ferrari, Merola and Presutti \cite{Cassandro.05} gave a proof via contours of phase transition in case $\alpha\in (3/2,2]$ (the same proof works for $\alpha\in(3-\log_23,2]$). Similarly to the works cited in the previous paragraph, they assumed that the interaction between first neighbors is sufficiently large. Under this hypothesis and employing the same contours, Cassandro, Orlandi and Picco \cite{Cassandro.Picco.09} proved the stability of phase-transition under the influence of an external random field, provided $\alpha\in(3-\log_23,3/2)$ and the variance $\varepsilon$ be sufficiently small. Recently, this result was extended to $\alpha\in (1,3/2)$ by Ding, Maia and Huang in \cite{Ding} without the introduction of the first-neighbor interaction as a perturbative factor.

In \cite{Cassandro.Merola.Picco.Rozikov.14}, Cassandro, Merola, Picco and Rozikov provided a convergent cluster expansion, and as an application studied the phase separation point in the one-dimensional long-range Ising model (we recall the absence of non-translation invariant equilibrium states in 1D). They also enunciate a bound for the $n$-site correlation function (see Theorem 2.5 in \cite{Cassandro.Merola.Picco.Rozikov.14}). In \cite{Cassandro.Merola.Picco.17}, the authors dealt with the question of phase separation for $\alpha\in (3-\log_23,2)$, once again employing cluster expansion techniques.

A first attempt to eliminate the hypothesis that the interaction between first-neighbors is sufficiently large can be found in \cite{Bissacot-Kimura2018}. Bissacot, Endo, van Enter, Kimura and Ruszel were partially successful: employing the same contours as \cite{Cassandro.05,Cassandro.Merola.Picco.17,Cassandro.Merola.Picco.Rozikov.14,Cassandro.Picco.09}, the authors proved the existence of phase-transition provided $\alpha\in (\alpha^*,2]$, where $\zeta(\alpha^*)=2$. We note that $3-\log_23<\alpha^*$. Another work dealing with the same contours is \cite{LittinPicco2017}, Littin and Picco show that the energy estimates break down for $\alpha<3-\log_23$.

Motivated by the very fruitful adaptation of the Fr\"ohlich-Spencer contours to the multidimensional case \cite{Pereira,Affonso.2021,potts,Johanes,cluster,maia2024phase}, Affonso, Bissacot, Corsini and Welsch \cite{corsini1dpt} obtained, with appropriate modifications to energy estimates, a proof of phase transition via (Fr\"ohlich-Spencer) contours for the region $\alpha\in (1,2]$, without the introduction of the nearest neighbor interaction as a perturbative factor. This fact motivates revisiting, and complementing, whenever possible, many results where, while a contour approach was successful, the introduction of a perturbative first-neighbors interaction was necessary. 

In this work, we provide a converging cluster expansion and, as an application, show that at low temperature the two-point correlation decays polynomially, with the same decay rate as $J(r)=r^{-\alpha}$; the lower-bound was already proved in the seventies by Iagolnitzer and Souillard \cite{Iag77}. In order to study the $n$-point correlation function, we also revisit (abstractly) the technique of cluster expansion in the very particular scenario, where compatibility is (graph)-transitive. 

On the one hand, summing over leaves of trees of polymers is very standard (see, for example, \cite{Dobrushin96,Dobrushin96b,Fernndez2007,Koteck1986,Malyshev1991-yo,procacci2023cluster}). The convergence of the cluster expansion present in this work can be proved with the traditional criteria (Koteck\'y-Preiss \cite{Koteck1986}, Dobrushin \cite{Dobrushin96, Dobrushin96b}, Fern\'andez-Procacci \cite{Fernndez2007}). For a comparison between traditional methods, see \cite{BisFerProc10}. For recent technical improvements, see \cite{Jansen2022,temmelthese,Temmel2014}. On the other hand, given the form of the activity of this particular gas of polymers, we sum over trees of contours; that is, trees where compatibility conditions are global. Hence, it is reasonable (and useful) to introduce more general operations which are represented graphically by removing vertices of an arbitrary degree.

\subsection{Description of the results}

In this work, we develop a convergent low-temperature cluster expansion of the one-dimensional long-range ferromagnetic Ising model with polynomial decay $\alpha\in (1,2]$ in terms of polymer of Fr\"ohlich-Spencer contours. In Section~\ref{sec: preliminaries}, we revisit the definition of the contours first introduced in \cite{Frohlich.Spencer.82} and list the crucial estimates that ensure the convergence of the cluster expansion.

In Section~\ref{sec: polymers}, we derive a representation of the one-dimensional Ising model in terms of a hard-core gas of polymers, which are collections of contours satisfying a "positive" compatibility relation. In this section, we also prove an upper bound to the activity of the polymer gas which is more convenient for computations. 

In Section~\ref{sec: sums-over-trees}, we prove the usual "one-vertex" bound employing a convergence criterion similar to Koteck\'y-Preiss \cite{Koteck1986}; however, we note that our computation is closer to that of Fern\'andez-Procacci \cite{Fernndez2007}. We also prove (many-vertex) estimates for a special type of expansion in terms of trees where compatibility is global. This is useful because of the particular form of the upper bound to the activity of the polymer gas. In Section~\ref{sec: convergence}, we apply the abstract bounds present in Section~\ref{sec: sums-over-trees} to our particular case. Transforming trees of polymers into trees of contours, we prove the convergence of the representation of the pressure given by the cluster expansion. That is, we prove the following theorem:

\begin{theorem}\label{theo: convergence-cluster-expansion}
			If $\alpha\in(1,2]$, then there exist $\beta_0=\beta_0(\alpha)>0$ such that if $\beta\geq\beta_0$, then \begin{align}
				\sup_{\Lambda\Subset\mathbb{Z}}\frac{1}{\lvert\Lambda\rvert}\sum_{n=1}^\infty\frac{1}{n!}\sum_{\mathbf{\Gamma}\in\mathscr{P}^n_{\Lambda}}\lvert{\phi_n(\mathbf{\Gamma})}\rvert\prod_{i=1}^nz_{\beta}(\Gamma_i)<\infty
			\end{align} 
			If $A\Subset\mathbb{Z}$ and $\beta\geq\beta_0$, there exists $r_0\coloneqq r_0(\beta,\lvert A\rvert)>0$ such that 
			\begin{align}
				h\mapsto p_\beta(h)\coloneqq\lim_{\Lambda\uparrow\mathbb{Z}}\frac{1}{\lvert\Lambda\rvert}\sum_{n=1}^\infty\frac{1}{n!}\sum_{\mathbf{\Gamma}\in\mathscr{P}^n_\Lambda}\phi_n(\mathbf{\Gamma})\prod_{i=1}^nz^+_{\beta,h}(\Gamma_i)
			\end{align} 
			is analytic in the polydisk $\mathbb{D}_{r_0}^A\coloneqq\{h\in\mathbb{C}^A:\lVert{h}\rVert_\infty\leq r_0\}$.
		\end{theorem}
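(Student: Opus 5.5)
The plan is the standard route: a one-vertex (Kotecký–Preiss / Fernández–Procacci type) bound for the polymer gas, applied to the Fröhlich–Spencer contours, followed by Vitali's theorem for analyticity. Write $\mathbf{\Gamma}=(\Gamma_1,\dots,\Gamma_n)$ and recall that $\phi_n$ is the Ursell function of the incompatibility graph. By the Penrose tree-graph inequality, $\lvert\phi_n(\mathbf{\Gamma})\rvert$ is bounded by the number of spanning trees of the incompatibility graph of $\mathbf{\Gamma}$.

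\textbf{Step 1: reduction to a single root polymer.} By translation invariance of the bounds, it suffices to control
\begin{align}
\sup_{x\in\mathbb{Z}}\sum_{n\geq1}\frac{1}{(n-1)!}\sum_{\substack{\mathbf{\Gamma}\in\mathscr{P}^n_\Lambda\\ x\in \mathrm{supp}\,\Gamma_1}}\lvert\phi_n(\mathbf{\Gamma})\rvert\prod_{i=1}^n\lvert z_\beta(\Gamma_i)\rvert .
\end{align}
This bounds the sum in the statement by $\lvert\Lambda\rvert$ times a constant.

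\textbf{Step 2: the convergence criterion.} I will invoke the abstract one-vertex bound of Section~\ref{sec: sums-over-trees}. It requires a weight $a(\Gamma)$ with
\begin{align}
\sum_{\Gamma'\not\sim\Gamma}\lvert z_\beta(\Gamma')\rvert e^{a(\Gamma')}\leq a(\Gamma).
\end{align}
To verify it, I use the upper bound on the activity proved in Section~\ref{sec: polymers}. A polymer is a collection of contours glued by positive compatibility, and its activity is bounded by a product over its contours $\gamma$ of $e^{-c\beta F(\gamma)}$. Here $F(\gamma)$ is the Fröhlich–Spencer energy, of order $\lvert\gamma\rvert$ plus logarithmic terms, that comes from the energy estimates of \cite{corsini1dpt} listed in Section~\ref{sec: preliminaries}.

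Because the activity factorizes over contours and compatibility is global, I convert trees of polymers into trees of contours using the many-vertex estimates for transitive compatibility. The condition then reduces to a contour-level estimate. The required input is that the entropy, namely the number of contours $\gamma$ containing a given point with $\lvert\gamma\rvert=m$, grows at most like $e^{Cm}$. The Fröhlich–Spencer energy bound beats this for $\beta\geq\beta_0$. I choose $a(\gamma)$ proportional to $\lvert\gamma\rvert$, or to $\lvert\mathrm{supp}\,\gamma\rvert$.

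\textbf{Main obstacle.} The hardest step is controlling the incompatibility range. Two contours are incompatible when their supports interact on long scales, not merely when they intersect. The number of contours incompatible with $\gamma$ must therefore be summed with weight governed by distance. I would try first to show that incompatibility forces the two supports to lie within a multiscale neighborhood whose size is comparable to $\mathrm{diam}$, and then absorb the resulting logarithmic or diameter factors into a fraction of $\beta F(\gamma)$. This absorption uses that $F(\gamma)$ dominates $\log\mathrm{diam}(\gamma)$, which holds by the Fröhlich–Spencer construction for $\alpha\in(1,2]$.

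\textbf{Analyticity in $h$.} For $h\in\mathbb{D}^A_{r_0}$ the activities $z^+_{\beta,h}$ are entire in $h$. They satisfy $\lvert z^+_{\beta,h}(\Gamma)\rvert\leq \lvert z_\beta(\Gamma)\rvert e^{2\beta r_0\lvert A\cap \mathrm{supp}\,\Gamma\rvert}$. I choose $r_0$ small depending on $\beta$ and $\lvert A\rvert$ so that this factor is absorbed in the margin left by Step 2; note that $\lvert A\cap\mathrm{supp}\,\Gamma\rvert\leq\lvert \mathrm{supp}\,\Gamma\rvert$, which is controlled by $F$. The finite-volume pressures are then uniformly bounded analytic functions on the polydisk. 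They converge pointwise, by absolute convergence of the cluster series and a boundary-term estimate of order $o(\lvert\Lambda\rvert)$ from the root-point bound. Vitali's theorem (in several variables) gives analyticity of the limit $p_\beta(h)$.
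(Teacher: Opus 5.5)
Your Step~2 rests on bounding the activity by $\prod_{\gamma\in\Gamma}e^{-c\beta F(\gamma)}$, and this is where the argument breaks. Such a bound is actually true. It follows from Lemma~\ref{lem: upper-activity}, since $\sum_{T\in\Tr(\Gamma)}\prod_{e\in E(T)}\Phi(e)\le\prod_{e\in\mathrm{P}_2(\Gamma)}(1+\Phi(e))\le\prod_{\gamma\in\Gamma}e^{H(\gamma)/8}$. But it discards exactly what makes polymer sums finite. Positive compatibility only forces the contours of a polymer to be \emph{far apart}. So for fixed $\gamma$ the pairs $\{\gamma,\gamma+k\}$ are polymers for all large $k$, and $\sum_{\Gamma'\ni\gamma}\prod_{\gamma'\in\Gamma'}e^{-c\beta F(\gamma')}$ diverges (in a box it grows with $\lvert\Lambda\rvert$). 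Hence the criterion $\sum_{\Gamma'\not\sim\Gamma}\lvert z_\beta(\Gamma')\rvert e^{a(\Gamma')}\le a(\Gamma)$ cannot be verified from a product bound for any $a$, and no contour-level entropy estimate repairs this.

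The missing idea is the interaction factor kept in Lemma~\ref{lem: upper-activity}: $\bar z_\beta(\Gamma)=\prod_{\gamma\in\Gamma}e^{-\frac12\beta H(\gamma)}\sum_{T\in\Tr(\Gamma)}\prod_{e\in E(T)}\Phi(e)$, which comes from the Mayer expansion of the factors $e^{\beta\Phi_e}-1$. This spanning tree with edge weights $\Phi$ is what binds the contours of a polymer together. It is also why $\sum_{\Gamma\ni\gamma}\bar z_\beta(\Gamma)$ is a sum over trees of contours, Equation~\eqref{eq: bar-zeta-compared-to-W}; the trees of contours do not come from a factorization of the activity. The paper therefore applies Lemma~\ref{lem: reducing-sums-over-trees} twice: once to polymers, and once to contours with vertex weight $e^{-\beta H}$ and edge weight $\Phi\,\ind_{\parallel}$. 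The contour-level leaf estimate is Lemma~\ref{lem: leaf-cutter-operator}: $\sum_{\gamma}e^{-\beta H(\gamma)}\Phi(\gamma,\gamma_1)\ind_{\gamma\parallel\gamma_1}\le 2e^{-c_2\beta/2}H(\gamma_1)$. Its proof sums $\Phi$ over the translates of $\gamma$, at a cost of at most $2\diam(\gamma)H(\gamma_1)$, and then uses $\diam(\gamma)\le e^{c_0H(\gamma)}$ and the Peierls sum.

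This leaf estimate also dictates the weights, and neither of your choices works. The estimate is linear in $H(\gamma_1)$, which the number of flips does not control: a two-flip contour of length $L$ has $\lvert\gamma\rvert=2$ but $H(\gamma)\asymp L^{2-\alpha}$, or $\asymp\log L$ when $\alpha=2$. On the other hand, $\lvert\I_-(\gamma)\rvert$ can be of order $\diam(\gamma)$, i.e.\ as large as $e^{c_0H(\gamma)}$, which $e^{-\beta H(\gamma)}$ cannot absorb. The contour weight therefore has to be of order $\beta H(\gamma)$, as in \eqref{eq: v-prime-function-polymers} and \eqref{eq: modified-vertex-function-contours}.

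Your entropy input is also false as stated. Infinitely many two-flip contours already contain $0$ in their minus-interior, so no bound $e^{Cm}$ in terms of the number of flips $m$ can hold. The usable count is in terms of the Fr\"ohlich--Spencer cover size $N(\gamma)$, i.e.\ Hypotheses~\ref{hyp: diameter-hypothesis} and~\ref{hyp: phase-transition-estimate}.

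Your handling of the incompatibility range and of analyticity is in line with the paper. For the range, you absorb polynomial diameter factors via $\diam\le e^{c_0H}$, as in Lemma~\ref{lem: notjoin-leaf-operator-satisfies-hypothesis}; the actual range there is $2M\min\diam^{3/2}$. One correction on analyticity: $z^+_{\beta,h}$ is not entire, because $Z^*_{\beta,h}(\Gamma)$ sits in a denominator. You need $r_0\lesssim(\beta\lvert A\rvert)^{-1}$ both to keep it away from zero and to get $\lvert z^+_{\beta,h}\rvert\le\bar z_\beta$ uniformly on the polydisk.
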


In Section~\ref{sec: point-estimates}, we establish some estimates in terms of the lattice sites in the spirit of \cite{Imbrie.82}. Furthermore, we prove a lemma that changes estimates in terms of contours into estimates in terms of sites. In Section~\ref{sec: correlations}, we apply Theorem~\ref{theo: convergence-cluster-expansion} and the contour-site estimates to obtain an upper bound to the decay rate of the correlation functions of the one-dimensional long-range ferromagnetic Ising model with polynomial decay $\alpha\in (1,2]$. Thus, we obtain the following results:

\begin{theorem}\label{theo: two-point-correlations}
    There exist $c_3>0$ and $\beta_0>0$ such that if $\beta\geq\beta_0$ and $x\neq y\in\mathbb{Z}$, then
    \begin{equation}\label{eq: two-point-correlations}
        \rvert\langle\sigma_x;\sigma_y\rangle^+_{\beta}\rvert\leq 2e^{-c_3\beta}\lvert x-y\rvert^{-\alpha}.
    \end{equation}
\end{theorem}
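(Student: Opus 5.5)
The plan is to get the truncated two-point function as a mixed second derivative of the pressure with respect to a local magnetic field, and then bound that derivative with the cluster expansion of Theorem~\ref{theo: convergence-cluster-expansion}. Take $A=\{x,y\}$ and the field $h=(h_x,h_y)$ entering the activities $z^+_{\beta,h}$. In finite volume the truncated correlation is
\begin{equation*}
\langle\sigma_x;\sigma_y\rangle^+_{\beta,\Lambda}=\partial_{h_x}\partial_{h_y}\log Z^+_{\beta,\Lambda}(h)\big|_{h=0}.
\end{equation*}
The cluster expansion writes $\log Z^+_{\beta,\Lambda}(h)=\beta\,(\text{ground-state energy, linear in }h)+\sum_n\frac{1}{n!}\sum_{\mathbf\Gamma}\phi_n(\mathbf\Gamma)\prod_i z^+_{\beta,h}(\Gamma_i)$. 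The ground-state term is linear in $h$, so it is killed by $\partial_{h_x}\partial_{h_y}$.

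Next I look at which clusters survive. A contour activity depends on $h_x$ only when $x$ lies in its interior or support, i.e.\ when $x$ is flipped inside $\mathrm{I}(\Gamma)$. So after taking both derivatives, only clusters $\mathbf\Gamma$ whose union of (interiors of) contours contains both $x$ and $y$ contribute. The derivatives of $z^+_{\beta,h}$ at $h=0$ are controlled by Cauchy estimates on the polydisk $\mathbb{D}^A_{r_0}$ from Theorem~\ref{theo: convergence-cluster-expansion}. That polydisk is uniform in $\Lambda$, and its radius depends only on $|A|=2$. This gives
\begin{equation*}
|\langle\sigma_x;\sigma_y\rangle^+_{\beta}|\le C r_0^{-2}\sum_{n\ge1}\frac{1}{n!}\sum_{\mathbf\Gamma\ni x,y}|\phi_n(\mathbf\Gamma)|\prod_i|\tilde z_\beta(\Gamma_i)|,
\end{equation*}
where $\tilde z_\beta$ is the upper bound on the activity proved in Section~\ref{sec: polymers}, with $\sup_{\mathbb{D}_{r_0}}$ absorbed into it. Because the estimates are uniform in $\Lambda$, the bound passes to the limit $\Lambda\uparrow\mathbb{Z}$.

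The core step, and the main obstacle, is showing that the sum over clusters connecting $x$ to $y$ is at most $e^{-c_3\beta}|x-y|^{-\alpha}$. Exponential decay is impossible here. The decay has to come from the long-range factors built into the activity: either a single contour whose interior contains both sites, or two contours joined by a compatibility/interaction link of strength $\sim\beta J(d)$.

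To handle this I would use the tree-graph (Penrose) bound to replace $|\phi_n|$ by a sum over trees, then the many-vertex estimates of Section~\ref{sec: sums-over-trees}, and finally the contour-to-site lemma of Section~\ref{sec: point-estimates}. That lemma should convert sums over contours containing a given point into site estimates of the form
\begin{equation*}
\sum_{\Gamma\ni x,\,y}|\tilde z_\beta(\Gamma)|e^{a|\Gamma|}\le e^{-c\beta}|x-y|^{-\alpha}.
\end{equation*}
The idea behind this estimate is that a contour spanning distance $d$ has energy cost at least of order $\beta\log d$ times a constant. For $\alpha<2$ the Fröhlich--Spencer energy bound gives even more, namely cost $\gtrsim d^{2-\alpha}$. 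Either way the decay of the single-contour sum can be made at least $d^{-\alpha}$ for $\beta$ large.

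For chains of contours along a path $x=x_0,x_1,\dots,x_k=y$ in the tree, each link contributes $e^{-c\beta}|x_i-x_{i+1}|^{-\alpha}$. I would then use the convolution inequality
\begin{equation*}
\sum_{z}|x-z|^{-\alpha}|z-y|^{-\alpha}\le C_\alpha|x-y|^{-\alpha},
\end{equation*}
which holds for $\alpha>1$. Applied $k-1$ times it gives a factor $(C_\alpha e^{-c\beta})^k$, which sums geometrically once $\beta\ge\beta_0$. The difficulty is making sure that the summation over the remaining branches of the tree (by the one-vertex bound), together with the entropy of contours attached at a point, does not destroy the $|x-y|^{-\alpha}$ factor. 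This is exactly where the global-compatibility tree estimates are needed. Adjusting constants then yields the stated factor $2e^{-c_3\beta}$.
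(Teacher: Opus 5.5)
Your plan is correct in outline and follows essentially the paper's route. Cauchy estimates reduce the truncated function to trees of polymers touching both sites. The side branches are collapsed by the one-vertex bound into $\overline{\W}_\beta(\Gamma_1)\overline{\E}_\beta(\Gamma_1,\Gamma_2)\overline{\W}_\beta(\Gamma_2)$. Everything is then pushed down to contours and lattice sites, where the convolution bound for $C$ closes a geometric series.

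The step you flag as the main obstacle is the paper's Lemma~\ref{lem: concrete-overline-E-beta-bound}. Its key input is that each polymer incompatibility yields a contour pair in $\mathscr{A}$, while the entry and exit contours of an intermediate polymer are either equal or positively compatible; in the latter case Corollary~\ref{cor: two-contour-estimate} supplies the long-range $\Phi$ factor. Your aside ``cost $\gtrsim d^{2-\alpha}$'' is false for sparse contours, but you do not rely on it: the $\log d$ lower bound from Hypothesis~\ref{hyp: diameter-hypothesis} is all you need.
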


\begin{theorem}\label{theo: many-point-correlations}
    For all $N\geq 3$, there exist $\beta_0\coloneqq\beta_0(N)$ and $c_4\coloneqq c_4(N)>0$ such that if $\beta\geq\beta_0$, $A\Subset \mathbb{Z}$ and $\lvert A\rvert=N$, then \begin{equation}\label{eq: many-point-correlations}
        \lvert\langle:\sigma_A:\rangle^+_{\beta}\rvert\leq 2e^{-c_4\beta}\sum_{T\in\Tr(A)}\prod_{\{x_i,x_j\}\in E(T)}\hspace{-.35cm}\lvert x_i-x_j\rvert^{-\alpha}.
    \end{equation} 
\end{theorem}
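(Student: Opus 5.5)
The plan is to obtain the truncated $N$-point function as a mixed derivative of the pressure with respect to local external fields, and then bound it via the cluster expansion of Theorem~\ref{theo: convergence-cluster-expansion}. Concretely, we use
\[
\langle:\sigma_A:\rangle^+_{\beta}=\partial^{A}_{h}\,\lim_{\Lambda\uparrow\mathbb{Z}}\log Z^+_{\Lambda,\beta,h}\Big|_{h=0},
\]
where $\partial_h^A=\prod_{x\in A}\partial_{h_x}$. Only polymers whose support meets every $x\in A$ have activities $z^+_{\beta,h}$ that depend on all the $h_x$, $x\in A$. The mixed derivative therefore kills every cluster $\mathbf{\Gamma}$ whose union of supports misses some site of $A$. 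Since the activities are analytic in the polydisk $\mathbb{D}^A_{r_0}$, we bound the derivatives by Cauchy's estimate, losing a factor $r_0^{-N}$ that depends only on $N$ and $\beta$. This reduces the problem to bounding
\[
\sum_{n\ge1}\frac1{n!}\sum_{\substack{\mathbf{\Gamma}\in\mathscr{P}^n\\ \mathbf{\Gamma}\ni A}}\lvert\phi_n(\mathbf{\Gamma})\rvert\prod_i\sup_{h\in\mathbb{D}^A_{r_0}}\lvert z_{\beta,h}(\Gamma_i)\rvert,
\]
the sum over clusters whose joint support covers $A$. The perturbed activities are bounded by $z_\beta(\Gamma)e^{r_0\lvert A\rvert}$, or by a version with a slightly smaller $\beta$, so the convergence of Theorem~\ref{theo: convergence-cluster-expansion} still applies.

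Next we organize the cluster sum as a sum over trees, as in Section~\ref{sec: sums-over-trees}. We bound the Ursell function by the Penrose tree-graph bound, which gives a sum over spanning trees of the incompatibility graph. The polymers are then expanded into trees of contours using the global compatibility structure from Section~\ref{sec: convergence}. Given such a tree covering $A$, we mark for each $x\in A$ a contour (or polymer) containing $x$, and extract the minimal subtree connecting these $N$ marked vertices. All remaining branches are summed with the one-vertex (Kotecký--Preiss type) bound, each contributing a factor close to $1$. The minimal subtree is a tree whose leaves and branch points carry sites of $A$, with paths of contours in between. After contracting the paths it has the shape of a tree $T\in\Tr(A)$, possibly with Steiner-like branch points.

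The main step is to show that a path of contours joining $x$ to $y$ costs at most $e^{-c\beta}\lvert x-y\rvert^{-\alpha}$. This is exactly the two-point mechanism, and we plan to reuse the contour-to-site lemma of Section~\ref{sec: point-estimates} (the Imbrie-style estimates) that gives Theorem~\ref{theo: two-point-correlations}. In the same spirit, a single contour or polymer whose support touches several points $x_1,\dots,x_k\in A$ should be bounded by $e^{-c\beta}$ times a sum over trees on those points of $\prod\lvert x_i-x_j\rvert^{-\alpha}$. We expect this to follow from the long-range energy estimate, since the interaction between far-apart pieces forces the decay.

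We expect the branch points to be the main obstacle. A branch point is an intermediate contour $\gamma$ touching three or more paths, and we must sum over its location. The strategy is to use the one-dimensional convolution inequality $\sum_{z}\lvert x-z\rvert^{-\alpha}\lvert z-y\rvert^{-\alpha}\le C_\alpha\lvert x-y\rvert^{-\alpha}$, valid for $\alpha>1$, together with its generalization to star configurations. Each such configuration is bounded by a sum over trees on the endpoints alone, which re-expresses Steiner trees as trees in $\Tr(A)$. The combinatorial constants generated here, namely the number of tree shapes, the Cauchy factor $r_0^{-N}$, and the powers $C_\alpha^{N}$, all depend on $N$. They are absorbed by taking $\beta_0(N)$ large, since each of the at least $N-1$ edges carries a factor $e^{-c\beta}$. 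This leaves the claimed bound with prefactor $2e^{-c_4\beta}$.
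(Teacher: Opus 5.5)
Your proposal follows essentially the same route as the paper, which only sketches the case $N\geq 3$ and defers the details to Corsini's thesis. Cauchy estimates restrict the sum to clusters covering $A$. The tree-graph bound and the Steiner-type reduction of Lemma~\ref{lem: many-point-arbitrary-sums} then leave polymer vertices and total edge functions. The total edge functions are converted to lattice sites via Lemma~\ref{lem: contour-to-point-bound}, as in Lemma~\ref{lem: concrete-overline-E-beta-bound}. Polymers carrying several marked contours are handled by Corollary~\ref{cor: m-contour-estimate}, and branch points are removed with the star inequality of Lemma~\ref{lem: C-xyz-bound}.

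One small correction: the factor $e^{-c\beta}$ comes from the contour and polymer vertex weights (via Hypothesis~\ref{hyp: phase-transition-estimate}), not from every edge. A direct incompatibility edge carries no smallness, and if all of $A$ lies in a single contour there are no edges at all. A single such factor still suffices to absorb all $N$-dependent constants, so your conclusion is unaffected.
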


We note that combining Theorem~\ref{theo: two-point-correlations} with \cite{Iag77}, we have thus proven that, for $\beta$ large, the two-point correlation decay rate is \textit{exactly} $\alpha$. We recall the existence of an intermediary phase in the case of $\alpha=2$ \cite{Imbrie.Newman.88}.

\section{Preliminaries}\label{sec: preliminaries}

\subsection{The long-range one-dimensional ferromagnetic Ising model}\label{subsec: long_range_1d_Ising_model} 

The configuration space of the one-dimensional Ising model is given by $\Omega=\{-1,+1\}^\mathbb{Z}.$ Given $\omega\in\Omega$ and $\Lambda\Subset\mathbb{Z}$, we denote the set of configurations satisfying the $\omega$-boundary condition outside of $\Lambda$ by \begin{equation*}
    \Omega^\omega_\Lambda\coloneqq\{\sigma\in\Omega:(x\notin\Lambda\Rightarrow\sigma_x=\omega_x)\}.
\end{equation*} We recall that $\Lambda\Subset\mathbb{Z}$ denotes that $\Lambda$ is a finite subset of $\mathbb{Z}$. In a similar fashion, we denote the set of all configurations satisfying the $\omega$-boundary condition by \begin{equation*}
    \Omega^\omega\coloneqq\bigcup_{\Lambda\Subset\mathbb{Z}}\Omega^\omega_\Lambda.
\end{equation*}If $\omega_x=+1$ for all $x\in\mathbb{Z}$, we write $\Omega^+$ instead of $\Omega^\omega$ and say that $\sigma\in\Omega^+$ satisfies the plus boundary condition. If $\omega_x=-1$ for all $x\in\mathbb{Z}$, we proceed in an analogous way. Henceforth, we restrict ourselves to dealing with homogeneous boundary conditions, that is, either the plus boundary condition or the minus boundary condition. 

Let $J:\mathbb{Z}\times\mathbb{Z}\rightarrow\mathbb{R}$ be a coupling function and $\omega\in\Omega$ be a boundary condition, we define the Hamiltonian $H^\omega_J:\Omega^\omega\rightarrow\mathbb{R}$ by \begin{equation}\label{eq: def-hamiltonian}
    H^\omega_J(\sigma)\coloneqq\frac{1}{2}\sum_{(x,y)}J_{xy}(\omega_x\omega_y-\sigma_x\sigma_y).
\end{equation}In this work, we assume that\begin{equation}\label{eq: polynomial-decay-coupling}
    J_{xy}=\begin{cases}
        0 &\textrm{if }x=y,\\
        \lvert x-y\rvert^{-\alpha}&\textrm{otherwise,}
    \end{cases}
\end{equation}where $1<\alpha\leq2$. That is, we assume that the model is translation invariant, ferromagnetic, with a two-body regular \textit{essentially} long-range interaction. Henceforth, we omit $J$ from the notation for better writing. 

Let $h:\mathbb{Z}\rightarrow\mathbb{R}$ be an external field and $\omega\in\Omega$ be a boundary condition, we define the Hamiltonian $H^\omega_h:\Omega^\omega\rightarrow\mathbb{R}$ by \begin{equation*}\label{eq: def-hamiltonian-ext-field}
    H^\omega_h(\sigma)\coloneqq\frac{1}{2}\sum_{(x,y)}J_{xy}(\omega_x\omega_y-\sigma_x\sigma_y)+\sum_{x}h_x(\omega_x-\sigma_x)\eqqcolon H^\omega(\sigma)+E^\omega_h(\sigma).
\end{equation*}
Given $\Lambda\Subset\mathbb{Z}$ and $\beta>0$, we denote the \textit{partition function} at the inverse temperature $\beta$ by \begin{equation*}\label{eq: def-partition-function}
    Z^\omega_{\Lambda,\beta}\coloneqq\sum_{\sigma\in\Omega^\omega_\Lambda}e^{-\beta H^\omega(\sigma)}.
\end{equation*}By the global spin-flip map $\sigma\mapsto-\sigma$, it is easy to see that $Z^-_{\Lambda,\beta}=Z^+_{\Lambda,\beta}$. If the external field $h$ is not identically zero, we define \begin{equation*}
    Z^\omega_{\Lambda,\beta,h}\coloneqq\sum_{\sigma\in\Omega^\omega_\Lambda}e^{-\beta H_h^\omega(\sigma)}.
\end{equation*}We note that if $h\not\equiv0$, then $Z^-_{\Lambda,\beta,h}\neq Z^+_{\Lambda,\beta,h}$. 

\subsection{Configurations as collections of spin flips}\label{subsec: spin-flip-configs}

In this work, following \cite{Frohlich.Spencer.82}, we think of configurations of the one-dimensional Ising model as collections of spin flips. We associate a configuration $\sigma\in\Omega$ to \begin{equation*}
    \partial\sigma\coloneqq\{e=\{x,y\}\in\mathbb{Z}^*:\sigma_x\neq\sigma_y\},
\end{equation*}
where $\mathbb{Z}^*\simeq\mathbb{Z}+1/2$ denotes the dual graph of $\mathbb{Z}$. This application is two-to-one and invariant under the global spin-flip map $\sigma\mapsto -\sigma$. 

Since we shall be dealing with configurations with homogeneous boundary conditions, we define our configuration space as \begin{equation*}\label{eq: configuration-space}
    \Omega^*\coloneqq\{\sigma^*\in\mathrm{P}(\mathbb{Z}^*):\lvert\sigma^*\rvert\textrm{ is even}\},
\end{equation*} where $\mathrm{P}(X)$ denotes the set of all finite parts of $X$. We define the canonical identification $\sigma^+:\Omega^*\rightarrow\Omega^+$ by \begin{equation*}\label{eq: identification-config-spaces}
    \sigma^+(\partial\sigma)=\sigma,
\end{equation*}
for all $\sigma\in\Omega^+$, that is, if $\sigma$ satisfies the plus boundary condition. For the rest of this work, unless noted otherwise, we identify collections of spin flips with configurations satisfying the plus boundary condition. 

Given $\sigma^*\in\Omega^*$, we define its $-$-interior by \begin{equation*}\label{eq: minus-interior}
    \mathrm{I}_-(\sigma^*)\coloneqq\{x\in\mathbb{Z}:\sigma^+(\sigma^*)=-1\}.
\end{equation*}
We note that for \textit{any} coupling function \begin{equation}\label{eq: dual-hamiltonian}
    H(\sigma^*)\coloneqq\hspace{-0.25cm}\sum_{\substack{x\in \mathrm{I}_-(\sigma^*)\\y\in\mathrm{I}_-(\sigma^*)^c}}\hspace{-0.25 cm}2J_{xy}=H^+(\sigma^+),
\end{equation}
where $\sigma^+\equiv\sigma^+(\sigma^*)$. In particular, this holds for the case \eqref{eq: polynomial-decay-coupling}. 

\begin{remark}
    This geometric description (independent of the finite box considered) of the Hamiltonian justifies the definition present in Equation~\eqref{eq: def-hamiltonian}. In any case, without this normalization, which is not detected by Gibbs' measures, there is no hope that a cluster expansion would converge.
\end{remark}

If $\sigma^*_1,\sigma^*_2\in\Omega^*$ satisfy $\I_-(\sigma_1^*)\cap\I_-(\sigma^*_2)=\emptyset$, it is also useful to define \begin{equation}\label{eq: definition-phi}
    \Phi(\sigma^*_1,\sigma^*_2)\coloneqq\hspace{-0.25cm}\sum_{\substack{x\in \I_-(\sigma^*_1)\\y\in\I_-(\sigma^*_2)}}\hspace{-0.25cm}4J_{xy}.
\end{equation}Finally, if $h\not\equiv0$ is an external field, we define \begin{equation}\label{eq: external-fiels-component-definition}
    E_h(\sigma^*)\coloneqq\hspace{-0.25cm}\sum_{x\in\I_-(\sigma^*)}\hspace{-0.25cm}2h_x=E_h^+(\sigma^+)=-E_h^-(\sigma^-)
\end{equation} and $H_h(\sigma^*)=H(\sigma^*)+E_h(\sigma^*)$.

\subsection{Long-range contours}\label{subsec: long-range-contours}

In \cite{Frohlich.Spencer.82}, Fr\"ohlich and Spencer showed the existence of a first-order phase transition for the critical exponent $\alpha=2$ using the notion of a geometric contour that need not be connected, which they first introduced a year earlier in \cite{FS81} to give the first rigorous proof of the BKT transition to the first-neighbor two-dimensional XY model. Such contours are defined by a fixed distancing exponent $a=3/2$ and a linear parameter $M>1$ whose value is chosen during the proof. 

In \cite{Affonso.2021}, the authors managed to prove the existence of a first-order phase transition in the region $\alpha>d\geq 2$, where $d$ denotes the dimension, employing a modification to the definition originally given by Fr\"ohlich and Spencer in which the choice of the distancing exponent itself depended on $(d,\alpha)$. It turns out that, at least in one dimension, this does not need to be the case. In \cite{corsini1dpt} it was shown that the same contours employed by Fr\"ohlich and Spencer are sufficient to prove the existence of a first-order phase transition in the whole region of polynomial decay where such a phenomenon occurs. In fact, \textit{any} distancing exponent $a\in(1,2)$ is sufficient. 

We recall the slight modification to the definition of an irreducible contour introduced by Imbrie in \cite{Imbrie.82}, where, by employing a cluster expansion method, he established sharp estimates of the correlations in the case $\alpha=2$.

\begin{notation}
    Let $n\in\mathbb{N}$, we write $[n]\coloneqq\{1,...,n\}$.
\end{notation}

\begin{notation}\label{not: graphs-trees-partitions}
    Let $V$ be a finite set. We denote by $\Gr(V)$ the set of unordered connected graphs $G$ such that $V(G)=V$. Similarly, we denote by $\Tr(V)$ the set of all unordered connected trees $T$ such that $V(T)=V$. Finally, we denote the set of unordered partitions of $V$ by $\Pt(V)$. If $V=[n]$, we write $\Gr_n$, $\Tr_n$ and $\Pt_n$ instead.
\end{notation}

\begin{convention}
    If $V$ is a finite set and $P\in\Pt(V)$, then $\emptyset\notin P$. If $\lvert P\rvert=1$, we say that $P$ is the trivial partition of $V$.
\end{convention}

\begin{definition}[$M$-contour]\label{def: M-contour}
    Fix $M>1$. We say that $\emptyset\neq\gamma\in\Omega^*$ is an $M$\textit{-irreducible contour} if for any $P\in\Pt(\gamma)\cap\mathrm{P}(\Omega^*)$ not trivial, there exists $\gamma_1\neq\gamma_2\in P$ such that\begin{equation*}
        \mathrm{dist}(\gamma_1,\gamma_2)\leq M(\min\{\mathrm{diam}(\gamma_1),\mathrm{diam}(\gamma_2)\})^{3/2}.
    \end{equation*}We denote the set of all $M$-irreducible contours by $\mathscr{C}_M$. 
    
    If $M>1$ is clear from context, we omit $M$ from the notation and say that $\gamma\in\mathscr{C}$ is a \textit{contour}. If $\gamma_1,\gamma_2\in\mathscr{C}$ are not empty and \begin{equation*}
        \mathrm{dist}(\gamma_1,\gamma_2)>M(\min\{\mathrm{diam}(\gamma_1),\mathrm{diam}(\gamma_2)\})^{3/2},
    \end{equation*}we say that $\gamma_1$ and $\gamma_2$ are \textit{compatible}, denoted by $\gamma_1\sim\gamma_2$. Note that $\gamma\not\sim\gamma$.
\end{definition}

\begin{definition}[$M$-partition]\label{def: M-partition}
    Let $\emptyset\neq\sigma^*\in\Omega^*$. We say that $\Gamma\in \Pt(\sigma^*)\cap\mathrm{P}(\mathscr{C})$ is an $M$\textit{-partition of }$\sigma^*$ if \begin{equation*}
        (\gamma_1,\gamma_2\in\Gamma)\wedge(\gamma_1\neq\gamma_2)\Rightarrow\gamma_1\sim\gamma_2.
    \end{equation*}We denote the set of all $M$-partitions by $\mathscr{E}$. 

    Following \eqref{eq: dual-hamiltonian}, we define $H_h:\mathscr{E}\rightarrow\mathbb{R}$ by \begin{equation*}\label{eq: hamiltonian-partition}
        H_h(\Gamma)\coloneqq H_h\left(\sqcup_{\gamma\in\Gamma}\gamma\right).
    \end{equation*}
\end{definition}

\begin{lemma}\label{lem: M-partition}
    Let $\emptyset\neq \sigma^*\in\Omega^*$, then there is a unique $M$-partition of $\sigma^*$.
\end{lemma}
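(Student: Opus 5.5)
Existence and uniqueness will both come from comparing partitions of $\sigma^*$ into even-sized pieces under a merging rule. For a partition $P\in\Pt(\sigma^*)\cap\mathrm{P}(\Omega^*)$, call a pair $\gamma_1\neq\gamma_2\in P$ \emph{bad} if $\mathrm{dist}(\gamma_1,\gamma_2)\leq M(\min\{\diam(\gamma_1),\diam(\gamma_2)\})^{3/2}$. An $M$-partition is then exactly a partition with no bad pairs whose blocks are all irreducible.

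\textbf{Existence.} We start from the trivial partition $\{\sigma^*\}$ and run a splitting procedure. If a block $\gamma$ is not $M$-irreducible, then by definition there is a nontrivial $Q\in\Pt(\gamma)\cap\mathrm{P}(\Omega^*)$ with no bad pair, and we replace $\gamma$ by the blocks of $Q$. The difficulty is that blocks of $Q$ may form bad pairs with other existing blocks. To avoid this, we instead pick a partition $\Gamma$ of $\sigma^*$ into even nonempty pieces with no bad pairs and with the maximal number of blocks; such partitions exist because the trivial one qualifies and $\sigma^*$ is finite. Suppose some block $\gamma\in\Gamma$ were reducible via $Q$. Replacing $\gamma$ by $Q$ gives more blocks, so by maximality some $\eta\in Q$ and $\gamma'\in\Gamma\setminus\{\gamma\}$ must form a bad pair. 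We want to rule this out, and here the monotonicity of the compatibility condition is the key point. We have $\mathrm{dist}(\eta,\gamma')\geq\mathrm{dist}(\gamma,\gamma')$, but $\min\{\diam\eta,\diam\gamma'\}$ may be smaller, so compatibility of $\eta$ with $\gamma'$ does not follow immediately. Maximality alone therefore does not obviously work, and I will switch to a canonical construction instead.

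\textbf{Canonical construction (main approach).} Define a partition as the finest one in a suitable lattice. Say a partition $P$ (even blocks) is \emph{admissible} if it has no bad pairs. The plan is to show that admissible partitions are closed under common coarsening, the join in the partition lattice, and, more importantly, that for any two admissible $P,P'$ there is an admissible common refinement. The cleaner route is the reverse, a merging algorithm. Start from the finest partition of $\sigma^*$ into pairs of consecutive flips (or singletons; evenness is handled by noting that merges preserve membership in $\Omega^*$ only at the end, so I would work with arbitrary subsets and check parity afterwards). Repeatedly merge any bad pair until none remains. The output $\Gamma^\star$ is admissible.

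We then claim that every admissible partition $P$ is coarser than each intermediate stage, proved by induction on merges. If $A,B$ are merged, with $A\subset\gamma_1\in P$ and $B\subset\gamma_2\in P$, then $\gamma_1\neq\gamma_2$ would give a bad pair in $P$, because $\mathrm{dist}(\gamma_1,\gamma_2)\leq\mathrm{dist}(A,B)\leq M\min(\diam A,\diam B)^{3/2}\leq M\min(\diam\gamma_1,\diam\gamma_2)^{3/2}$. Here monotonicity works in the right direction. Hence $\Gamma^\star$ is the finest admissible partition, and its blocks are irreducible: a nontrivial admissible splitting of a block, together with the other blocks, would need to remain admissible, and by the same monotonicity (blocks of the splitting being subsets) I expect to show that the result is an admissible partition strictly finer than $\Gamma^\star$, a contradiction.

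\textbf{Uniqueness.} If $\Gamma$ is any $M$-partition, it is admissible, so it is coarser than $\Gamma^\star$. Each block $\gamma\in\Gamma$ is therefore a union of blocks of $\Gamma^\star$ forming an admissible partition of $\gamma$. By irreducibility of $\gamma$ that partition is trivial, so $\Gamma=\Gamma^\star$.

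The main obstacle is the parity constraint. Blocks must lie in $\Omega^*$, so both the starting finest partition and the irreducibility step must respect even cardinalities. The plan is to take the initial atoms to be singletons and observe that the final blocks of $\Gamma^\star$ must be even; this needs an argument, possibly using that $\sigma^*$ is even, or a modified definition of admissibility restricted to $\mathrm{P}(\Omega^*)$. The other delicate point is the irreducibility step, where the other blocks must stay compatible with the subblocks.
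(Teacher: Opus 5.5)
Your uniqueness argument breaks down on parity, and parity is the real content of the lemma rather than something to check at the end.

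With singletons as atoms, $\mathrm{diam}\{b\}=0$, while distinct points of $\mathbb{Z}^*$ are at distance at least $1$. So no pair involving a singleton is ever bad, the merging algorithm halts at once, and $\Gamma^\star$ is the partition of $\sigma^*$ into singletons. Then ``every admissible partition is coarser than $\Gamma^\star$'' is vacuous. The step ``by irreducibility of $\gamma$ that partition is trivial'' is invalid, because $M$-irreducibility only constrains partitions in $\mathrm{P}(\Omega^*)$, i.e.\ into even blocks. If it applied to the singleton splitting, every block of every $M$-partition would have to be a singleton.

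Taking consecutive pairs as atoms instead makes the base case of your coarsening induction false. Take flips $b_1<b_2<b_3<b_4$ with $b_2-b_1=b_4-b_3=L$, $b_3-b_2=1$ and $L>M$. The nested family $\{\{b_1,b_4\},\{b_2,b_3\}\}$ is an $M$-partition (distance $L$, smaller diameter $1$), yet it is not coarser than $\{\{b_1,b_2\},\{b_3,b_4\}\}$. Your algorithm merges these two pairs (distance $1$) and returns $\{\sigma^*\}$, which is reducible, so the claimed irreducibility of the blocks of $\Gamma^\star$ also fails. More generally, the merging strategy needs a starting even partition finer than every compatible even partition, and the existence of such a partition is essentially the lemma itself.

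Two repairs.

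\textbf{Existence.} The existence argument you abandoned was correct; you reversed the monotonicity. If $\eta\subseteq\gamma$ and $\gamma\sim\gamma'$, then $\mathrm{dist}(\eta,\gamma')\geq\mathrm{dist}(\gamma,\gamma')$ and $\min\{\mathrm{diam}\,\eta,\mathrm{diam}\,\gamma'\}\leq\min\{\mathrm{diam}\,\gamma,\mathrm{diam}\,\gamma'\}$. Both changes help, so $\eta\sim\gamma'$; this is just the contrapositive of the inequality you use in the merging step. Hence a pairwise compatible even partition of $\sigma^*$ with the maximal number of blocks has only irreducible blocks.

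\textbf{Uniqueness.} You need an argument that never creates odd pieces, for instance induction on $\lvert\sigma^*\rvert$. Given $M$-partitions $\Gamma,\Gamma'$, let $\gamma_0\in\Gamma$ (say) have minimal diameter in $\Gamma\cup\Gamma'$. If distinct $\gamma'_1,\gamma'_2\in\Gamma'$ both met $\gamma_0$, then, using $M>1$ and $\mathrm{diam}\,\gamma_0\geq 1$, $\mathrm{dist}(\gamma'_1,\gamma'_2)\leq\mathrm{diam}\,\gamma_0\leq M(\mathrm{diam}\,\gamma_0)^{3/2}\leq M\min\{\mathrm{diam}\,\gamma'_1,\mathrm{diam}\,\gamma'_2\}^{3/2}$. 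That contradicts $\gamma'_1\sim\gamma'_2$, so $\gamma_0\subseteq\gamma'$ for a single $\gamma'\in\Gamma'$.

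Suppose $\gamma'\neq\gamma_0$. Then $\{\gamma_0,\gamma'\setminus\gamma_0\}$ is a nontrivial partition of $\gamma'$ into even sets. It has no bad pair, because $\gamma'\setminus\gamma_0$ is covered by blocks of $\Gamma\setminus\{\gamma_0\}$. Each such block is compatible with $\gamma_0$ and has diameter at least $\mathrm{diam}\,\gamma_0$, hence lies at distance $>M(\mathrm{diam}\,\gamma_0)^{3/2}$ from $\gamma_0$. This contradicts the irreducibility of $\gamma'$. So $\gamma_0\in\Gamma\cap\Gamma'$, and removing it reduces the problem to the $M$-partitions $\Gamma\setminus\{\gamma_0\}$ and $\Gamma'\setminus\{\gamma_0\}$ of $\sigma^*\setminus\gamma_0$.
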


\begin{proof}
    We refer the reader to Proposition 2.1 in \cite{Imbrie.82}.
\end{proof}

\begin{corollary}\label{cor: M-partition}
    Let $\Gamma_1,\Gamma_2\in\mathscr{E}$ such that $(\gamma_1\in\Gamma_1)\wedge(\gamma_2\in\Gamma_2)\Rightarrow\gamma_1\sim\gamma_2$, then $\Gamma_1\sqcup\Gamma_2\in\mathscr{E}$.
\end{corollary}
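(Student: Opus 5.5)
The plan is to check directly the three requirements in the definition of an $M$-partition for $\Gamma\coloneqq\Gamma_1\sqcup\Gamma_2$.

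\emph{Its elements are contours.} Every element of $\Gamma_1$ or of $\Gamma_2$ already lies in $\mathscr{C}$, so $\Gamma\subset\mathscr{C}$.

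\emph{Pairwise compatibility.} Take $\gamma\neq\gamma'$ in $\Gamma$. If both lie in the same $\Gamma_i$, they are compatible because $\Gamma_i\in\mathscr{E}$. If $\gamma\in\Gamma_1$ and $\gamma'\in\Gamma_2$ (or the reverse), the hypothesis gives $\gamma\sim\gamma'$ directly. Since $\sim$ is symmetric, this covers every pair.

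\emph{It is a partition of some $\sigma^*\in\Omega^*$.} Set $\sigma^*\coloneqq\bigcup_{\gamma\in\Gamma}\gamma$. This takes two further checks.
\begin{enumerate}[(i)]
    \item The contours in $\Gamma$ are pairwise disjoint as subsets of $\mathbb{Z}^*$. Inside each $\Gamma_i$ this holds because $\Gamma_i$ is a partition. Across $\Gamma_1$ and $\Gamma_2$, any $\gamma_1\in\Gamma_1$ and $\gamma_2\in\Gamma_2$ satisfy $\gamma_1\sim\gamma_2$. Contours are nonempty, so $\mathrm{diam}\geq 0$ and hence $\mathrm{dist}(\gamma_1,\gamma_2)>M\min\{\mathrm{diam}(\gamma_1),\mathrm{diam}(\gamma_2)\}^{3/2}\geq 0$. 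Therefore $\gamma_1\cap\gamma_2=\emptyset$, and in particular $\gamma_1\neq\gamma_2$.
    \item $\lvert\sigma^*\rvert$ is even. By (i), $\lvert\sigma^*\rvert=\sum_{\gamma\in\Gamma}\lvert\gamma\rvert$, and each $\lvert\gamma\rvert$ is even because $\gamma\in\Omega^*$. Hence $\sigma^*\in\Omega^*$, $\sigma^*\neq\emptyset$, and $\Gamma\in\Pt(\sigma^*)\cap\mathrm{P}(\mathscr{C})$.
\end{enumerate}

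Together these show $\Gamma\in\mathscr{E}$. By Lemma~\ref{lem: M-partition}, $\Gamma$ is moreover the unique $M$-partition of $\sigma^*=\sqcup_{\gamma\in\Gamma}\gamma$, which is the form in which the corollary will be used.

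The only step needing any care is the disjointness across $\Gamma_1$ and $\Gamma_2$, which supplies both the disjoint-union notation and the parity of $\sigma^*$. It follows because compatibility forces strictly positive distance; everything else is bookkeeping.
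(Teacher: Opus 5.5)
Your proof is correct and is essentially the paper's argument. The paper gives no separate proof: it presents the statement as a consequence of Lemma~\ref{lem: M-partition}, and later justifies the analogous Lemma~\ref{lem: positive-partition} by noting that membership in $\mathscr{E}$ is a pairwise compatibility condition, which is exactly what you verify. Your explicit checks of cross-disjointness (via $\mathrm{dist}>0$) and of parity fill in details the paper leaves implicit, and they also show that the uniqueness part of Lemma~\ref{lem: M-partition} is not actually needed.
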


\begin{convention}
    We establish $\emptyset\notin\mathscr{C}$ and $\emptyset\notin\mathscr{E}$. The reason for this shall be clear when dealing with collections of compatible polymers.
\end{convention}

\begin{definition}\label{def: positive-partitions}
    If $\gamma_1,\gamma_2\in\mathscr{C}$ satisfy $\gamma_1\sim\gamma_2$ and $\mathrm{I}_-(\gamma_1)\cap\mathrm{I}_-(\gamma_2)=\emptyset,$ we say that they are \textit{positively compatible}, which we denote by $\gamma_1\parallel\gamma_2.$

    If $\Gamma_1,\Gamma_2\in\mathscr{E}$ are such that $(\gamma_1\in\Gamma_1)\wedge(\gamma_2\wedge\Gamma_2)\Rightarrow\gamma_1\parallel\gamma_2$, we define \begin{equation*}\label{eq: Phi-definition-for-collections-of-contours}
        \Phi(\Gamma_1,\Gamma_2)\coloneqq\Phi(\sqcup_{\gamma_1\in\Gamma_1}\gamma_1,\sqcup_{\gamma_2\in\Gamma_2}\gamma_2).
    \end{equation*}We say that $\Gamma\in\mathscr{E}$ is \textit{positive} if \begin{equation*}
        \gamma_1\neq\gamma_2\in\Gamma\Rightarrow\gamma_1\parallel\gamma_2.
    \end{equation*}We denote the set of all positive $M$-partitions by $\mathscr{P}$.
\end{definition}

\begin{figure}[hbt!]
    \centering
    \tikzset{every picture/.style={line width=0.75pt}} 

\begin{tikzpicture}[x=0.75pt,y=0.75pt,yscale=-1,xscale=1]

\draw    (40,110.38) -- (360.17,110.38) ;
\draw [color={rgb, 255:red, 208; green, 2; blue, 27 }  ,draw opacity=1 ][line width=1.5]    (50.25,100.31) -- (50.25,120.6) ;
\draw [color={rgb, 255:red, 245; green, 166; blue, 35 }  ,draw opacity=1 ][line width=1.5]    (120.25,100.31) -- (120.25,120.6) ;
\draw [color={rgb, 255:red, 245; green, 166; blue, 35 }  ,draw opacity=1 ][line width=1.5]    (110.25,100.31) -- (110.25,120.6) ;
\draw [color={rgb, 255:red, 208; green, 2; blue, 27 }  ,draw opacity=1 ][line width=1.5]    (200.25,100.31) -- (200.25,120.6) ;
\draw [color={rgb, 255:red, 74; green, 144; blue, 226 }  ,draw opacity=1 ][line width=1.5]    (240.25,100.31) -- (240.25,120.6) ;
\draw [color={rgb, 255:red, 74; green, 144; blue, 226 }  ,draw opacity=1 ][line width=1.5]    (250.25,100.31) -- (250.25,120.6) ;
\draw [color={rgb, 255:red, 208; green, 2; blue, 27 }  ,draw opacity=1 ][line width=1.5]    (280.25,100.31) -- (280.25,120.6) ;
\draw [color={rgb, 255:red, 208; green, 2; blue, 27 }  ,draw opacity=1 ][line width=1.5]    (350.25,100.31) -- (350.25,120.6) ;
\draw  [dash pattern={on 0.84pt off 2.51pt}] (30,90) -- (390,90) -- (390,140) -- (30,140) -- cycle ;

\draw (372,103.4) node [anchor=north west][inner sep=0.75pt]    {$\Gamma $};
\draw (55,113.4) node [anchor=north west][inner sep=0.75pt]  [color={rgb, 255:red, 208; green, 2; blue, 27 }  ,opacity=1 ]  {$\gamma _{1}$};
\draw (125,113.4) node [anchor=north west][inner sep=0.75pt]  [color={rgb, 255:red, 245; green, 166; blue, 35 }  ,opacity=1 ]  {$\gamma _{2}$};
\draw (255,113.4) node [anchor=north west][inner sep=0.75pt]  [color={rgb, 255:red, 74; green, 144; blue, 226 }  ,opacity=1 ]  {$\gamma _{3}$};

\end{tikzpicture}
    \caption{All three contours are compatible, but $\gamma_1$ is not positively compatible with $\gamma_2$. On the other hand, $\gamma_3$ is positively compatible with both $\gamma_1$ and $\gamma_2$. Hence, $\Gamma=\{\gamma_1,\gamma_2,\gamma_3\}$ is \textit{not} a positive collection.}
    \label{fig: positive-compatibility}
\end{figure}
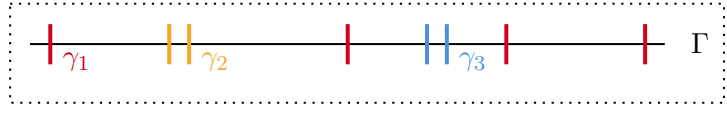

\begin{remark}
    For those who are colorblind, let us give an alternative description of Figure~\ref{fig: positive-compatibility}. Ordering spin flips so that $i<j$ implies that $b_i$ lies to the left of $b_j$, we have \begin{align*}
        &\gamma_1=\{b_1,b_4,b_7,b_8\},
        &\gamma_2=\{b_2,b_3\},\\
        &\gamma_3=\{b_5,b_6\}.
    \end{align*}
\end{remark}

\begin{lemma}\label{lem: positive-partition}
    Let $\Gamma_1,\Gamma_2\in\mathscr{P}$ such that $(\gamma_1\in\Gamma_1)\wedge(\gamma_2\in\Gamma_2)\Rightarrow\gamma_1\parallel\gamma_2$, then $\Gamma_1\sqcup\Gamma_2\in\mathscr{P}$. In this case, we write $\Gamma_1\parallel\Gamma_2$.
\end{lemma}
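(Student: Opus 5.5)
Since positive compatibility $\gamma_1\parallel\gamma_2$ is, by Definition~\ref{def: positive-partitions}, the conjunction of ordinary compatibility $\gamma_1\sim\gamma_2$ and disjointness of minus-interiors, the natural plan is to verify membership of $\Gamma_1\sqcup\Gamma_2$ in $\mathscr{P}$ in two layers. First we check that $\Gamma_1\sqcup\Gamma_2\in\mathscr{E}$, and then that every pair of distinct contours in it is positively compatible.

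\textbf{Step 1 (membership in $\mathscr{E}$).} Since $\Gamma_1,\Gamma_2\in\mathscr{P}\subseteq\mathscr{E}$, and every $\gamma_1\in\Gamma_1$, $\gamma_2\in\Gamma_2$ satisfy $\gamma_1\parallel\gamma_2$, in particular $\gamma_1\sim\gamma_2$. Corollary~\ref{cor: M-partition} then applies verbatim and gives $\Gamma_1\sqcup\Gamma_2\in\mathscr{E}$.

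Two small points should be noted here.
\begin{enumerate}[(i)]
\item The union is genuinely disjoint. A contour in both $\Gamma_1$ and $\Gamma_2$ would have to satisfy $\gamma\sim\gamma$, which is false (as noted after Definition~\ref{def: M-contour}).
\item The underlying sets of spin flips are disjoint. Compatibility forces $\mathrm{dist}(\gamma_1,\gamma_2)>0$, so $\sqcup_{\gamma}\gamma$ is a legitimate element of $\Omega^*$, being a union of even sets that are pairwise disjoint.
\end{enumerate}

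\textbf{Step 2 (positivity).} Take distinct $\gamma,\gamma'\in\Gamma_1\sqcup\Gamma_2$. There are three cases.
\begin{enumerate}[(a)]
\item If both lie in $\Gamma_1$, then $\gamma\parallel\gamma'$ because $\Gamma_1$ is positive.
\item If both lie in $\Gamma_2$, the same argument applies with $\Gamma_2$.
\item If one lies in each, then $\gamma\parallel\gamma'$ by hypothesis.
\end{enumerate}
Note that $\parallel$ is symmetric, since both $\sim$ and disjointness of $\mathrm{I}_-$ are symmetric. So the ordering within the pair does not matter, and $\Gamma_1\sqcup\Gamma_2$ is positive, i.e. lies in $\mathscr{P}$.

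The only potential obstacle is whether the definition of $\mathscr{E}$ implicitly requires $\Gamma$ to be the $M$-partition of a specific $\sigma^*$, rather than merely a pairwise-compatible family of contours. This is exactly what Corollary~\ref{cor: M-partition} handles: by the uniqueness in Lemma~\ref{lem: M-partition}, $\Gamma_1\sqcup\Gamma_2$ is the unique $M$-partition of $\sigma^*=\sqcup_{\gamma\in\Gamma_1\sqcup\Gamma_2}\gamma$. No further geometric estimate is needed. The notation $\Gamma_1\parallel\Gamma_2$ is then simply a name for the hypothesis.
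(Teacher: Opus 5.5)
Your proof is correct and follows essentially the same route as the paper, which simply observes that membership in $\mathscr{E}$, and hence in $\mathscr{P}$, is a pairwise condition on the contours; your Steps 1 and 2 verify exactly this, and your remarks on disjointness fill in routine details the paper leaves implicit. The appeal to uniqueness in Lemma~\ref{lem: M-partition} is harmless but unnecessary: $\mathscr{E}$ is by definition the set of pairwise compatible contour partitions of some $\sigma^*$, and $\Gamma_1\sqcup\Gamma_2$ is such a partition of $\sqcup_{\gamma}\gamma$.
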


\begin{proof}
    This follows directly from the fact that compatibility is checked in pairs. That is, $\Gamma\in\mathscr{E}$ if and only if $\gamma_1\neq\gamma_2\in\Gamma\Rightarrow\gamma_1\sim\gamma_2.$
\end{proof}

\begin{notation}\label{not: sets-in-a-box}
    If $\Lambda\Subset\mathbb{Z}$, we write 
    \begin{align*}
        &\mathscr{C}_\Lambda\coloneqq\{\gamma\in\mathscr{C}:\mathrm{I}_-(\gamma)\subset\Lambda\},\\
        &\mathscr{E}_\Lambda\coloneqq\{\Gamma\in\mathscr{E}:\gamma\in\Gamma\Rightarrow\gamma\in\mathscr{C}_\Lambda\},\\
        &\mathscr{P}_\Lambda\coloneqq\mathscr{P}\cap\mathscr{E}_\Lambda.
    \end{align*}
\end{notation}

\begin{definition}\label{def: partial-ordering}
    Let $\gamma_1,\gamma_2\in\mathscr{C}$. If $\I_-(\gamma_1)\subset\I_-(\gamma_2)$, we write $\gamma_1\trianglelefteq\gamma_2$. This defines a partial ordering in $\mathscr{C}$ such that \begin{equation*}
        \gamma_1\sim\gamma_2\Rightarrow(\gamma_1\parallel\gamma_2)\vee(\gamma_1\trianglelefteq\gamma_2)\vee(\gamma_2\trianglelefteq\gamma_1).
    \end{equation*}
\end{definition}

The partial ordering $\trianglelefteq$ defines a natural projection $\partial:\mathscr{E}\rightarrow\mathscr{P}$ in the following way: \begin{equation*}\label{def: positive-contours-of-Gamma}
    \partial\Gamma\coloneqq\{\gamma\in\Gamma:(\gamma'\in\Gamma)\wedge(\gamma\neq\gamma')\Rightarrow(\gamma'\trianglelefteq\gamma)\vee(\gamma\parallel\gamma')\}.
\end{equation*}Note that \begin{equation*}
    \I_-(\Gamma)\coloneqq\I_-\left(\sqcup_{\gamma\in\Gamma}\gamma\right)\subset\bigsqcup_{\gamma\in\partial\Gamma}\I_-(\gamma).
\end{equation*}Furthermore, $\partial\Gamma$ is the smallest subset of $\Gamma$ with this property. Equality ($\Gamma=\partial\Gamma$) holds if and only if $\Gamma\in\mathscr{P}$.

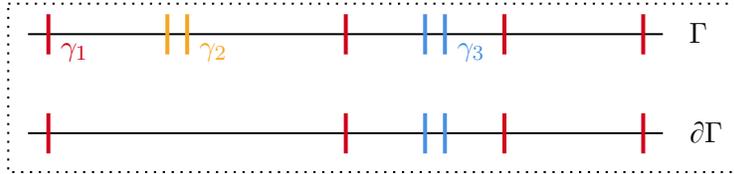
\begin{figure}[hbt!]
    \centering
    \tikzset{every picture/.style={line width=0.75pt}} 

\begin{tikzpicture}[x=0.75pt,y=0.75pt,yscale=-1,xscale=1]

\draw    (20,90.38) -- (340.17,90.38) ;
\draw [color={rgb, 255:red, 208; green, 2; blue, 27 }  ,draw opacity=1 ][line width=1.5]    (30.25,80.31) -- (30.25,100.6) ;
\draw [color={rgb, 255:red, 245; green, 166; blue, 35 }  ,draw opacity=1 ][line width=1.5]    (100.25,80.31) -- (100.25,100.6) ;
\draw [color={rgb, 255:red, 245; green, 166; blue, 35 }  ,draw opacity=1 ][line width=1.5]    (90.25,80.31) -- (90.25,100.6) ;
\draw [color={rgb, 255:red, 208; green, 2; blue, 27 }  ,draw opacity=1 ][line width=1.5]    (180.25,80.31) -- (180.25,100.6) ;
\draw [color={rgb, 255:red, 74; green, 144; blue, 226 }  ,draw opacity=1 ][line width=1.5]    (220.25,80.31) -- (220.25,100.6) ;
\draw [color={rgb, 255:red, 74; green, 144; blue, 226 }  ,draw opacity=1 ][line width=1.5]    (230.25,80.31) -- (230.25,100.6) ;
\draw [color={rgb, 255:red, 208; green, 2; blue, 27 }  ,draw opacity=1 ][line width=1.5]    (260.25,80.31) -- (260.25,100.6) ;
\draw [color={rgb, 255:red, 208; green, 2; blue, 27 }  ,draw opacity=1 ][line width=1.5]    (330.25,80.31) -- (330.25,100.6) ;
\draw    (20,140.37) -- (340.17,140.37) ;
\draw [color={rgb, 255:red, 208; green, 2; blue, 27 }  ,draw opacity=1 ][line width=1.5]    (30.25,130.3) -- (30.25,150.59) ;
\draw [color={rgb, 255:red, 208; green, 2; blue, 27 }  ,draw opacity=1 ][line width=1.5]    (180.25,130.3) -- (180.25,150.59) ;
\draw [color={rgb, 255:red, 74; green, 144; blue, 226 }  ,draw opacity=1 ][line width=1.5]    (220.25,130.3) -- (220.25,150.59) ;
\draw [color={rgb, 255:red, 74; green, 144; blue, 226 }  ,draw opacity=1 ][line width=1.5]    (230.25,130.3) -- (230.25,150.59) ;
\draw [color={rgb, 255:red, 208; green, 2; blue, 27 }  ,draw opacity=1 ][line width=1.5]    (260.25,130.3) -- (260.25,150.59) ;
\draw [color={rgb, 255:red, 208; green, 2; blue, 27 }  ,draw opacity=1 ][line width=1.5]    (330.25,130.3) -- (330.25,150.59) ;
\draw  [dash pattern={on 0.84pt off 2.51pt}] (10,75) -- (380,75) -- (380,160) -- (10,160) -- cycle ;

\draw (352,83.4) node [anchor=north west][inner sep=0.75pt]    {$\Gamma $};
\draw (352,133.4) node [anchor=north west][inner sep=0.75pt]    {$\partial \Gamma $};
\draw (35,93.4) node [anchor=north west][inner sep=0.75pt]  [color={rgb, 255:red, 208; green, 2; blue, 27 }  ,opacity=1 ]  {$\gamma _{1}$};
\draw (105,93.4) node [anchor=north west][inner sep=0.75pt]  [color={rgb, 255:red, 245; green, 166; blue, 35 }  ,opacity=1 ]  {$\gamma _{2}$};
\draw (235,93.4) node [anchor=north west][inner sep=0.75pt]  [color={rgb, 255:red, 74; green, 144; blue, 226 }  ,opacity=1 ]  {$\gamma _{3}$};

\end{tikzpicture}
    \caption{Take $\Gamma$ as in Figure~\ref{fig: positive-compatibility}, then $\partial\Gamma=\{\gamma_1,\gamma_3\}$.}
    \label{fig: boundary-of-a-partition}
\end{figure}

\begin{remark}
    Recalling Definition 2.7 in \cite{cluster}, $\partial\Gamma$ denotes the set of external contours of $\Gamma$.
\end{remark}

\begin{definition}\label{def: internal-collection}
    We say that $\Gamma\in\mathscr{E}$ is \textit{internal to }$\gamma_0\in\mathscr{C}$, if $\gamma\in\Gamma\Rightarrow(\gamma\sim\gamma_0)\wedge(\gamma\trianglelefteq\gamma_0)$, which we denote by $\Gamma\trianglelefteq\gamma_0$. We denote the collection of all $M$-partitions internal to $\gamma_0$ by \begin{equation*}\label{eq: internal-collection}
        \mathscr{I}(\gamma_0)\coloneqq\{\Gamma\in\mathscr{E}:\Gamma\trianglelefteq\gamma_0\}.
    \end{equation*}Since $\gamma_0\parallel\gamma_1\Rightarrow\mathscr{I}(\gamma_0)\cap\mathscr{I}(\gamma_1)=\emptyset$, this notion extends to $\Gamma\in\mathscr{P}$ by \begin{equation*}\label{eq: internal-collection-2}
        \mathscr{I}(\Gamma)\coloneqq\bigsqcup_{\gamma\in\Gamma}\mathscr{I}(\gamma).
    \end{equation*}
\end{definition}

\begin{remark}
    If $\Gamma\in\mathscr{P}$ and $\Gamma_1\in\mathscr{I}(\Gamma)$, then there is a unique $\gamma\in\Gamma$ such that $\Gamma_1\in\mathscr{I}(\gamma)$.
\end{remark}

Given $\gamma_0\in\mathscr{C}$, we define $\iota^*_{\gamma_0}:\mathscr{E}\rightarrow\mathscr{I}(\gamma_0)\sqcup\{\emptyset\}$ and $\iota_{\gamma_0}:\mathscr{E}\rightarrow\mathscr{E}$ by\begin{align*}
        &\iota^*_{\gamma_0}(\Gamma)\coloneqq\{\gamma\in\Gamma:(\gamma\sim\gamma_0)\wedge(\gamma\trianglelefteq\gamma_0)\}, &\iota_{\gamma_0}(\Gamma)\coloneqq\{\gamma_0\}\sqcup\iota_{\gamma_0}^*(\Gamma).
    \end{align*} These definitions naturally extend to positive collections $\Gamma_0\in\mathscr{P}$. We write \begin{align*}
        &\iota^*_{\Gamma_0}(\Gamma)\coloneqq\bigsqcup_{\gamma_0\in\Gamma_0}\iota^*_{\gamma_0}(\Gamma), &\iota_{\Gamma_0}(\Gamma)\coloneqq\bigsqcup_{\gamma_0\in\Gamma_0}\iota_{\gamma_0}(\Gamma).
    \end{align*}Finally, if $\Gamma\in\mathscr{E}$, then \begin{equation}\label{eq: relation-minus-interior-Gamma-interior-gamma}
        \I_-(\Gamma)=\bigsqcup_{\gamma\in\partial\Gamma}\I_-(\iota_\gamma(\Gamma))=\I_-(\partial\Gamma)\cap \bigcap_{\gamma\in\partial\Gamma}\I_-(\iota^*_\gamma(\Gamma))^c.
    \end{equation}

\begin{notation}\label{not: internal-collection}
    Henceforth, we write (recall Equation~\eqref{eq: definition-phi}) \begin{equation}
        \begin{split}
            &H_h(\iota_{\gamma_0}(\Gamma))\equiv H_{h,\gamma_0}(\Gamma),\\
            &H_h(\iota^*_{\gamma_0}(\Gamma))\equiv H^*_{h,\gamma_0}(\Gamma),\\
            &\Phi\left(\iota_{\gamma_1}(\Gamma),\iota_{\gamma_2}(\Gamma)\right)\equiv\Phi_{\gamma_1,\gamma_2}(\Gamma).
        \end{split}
    \end{equation}
    If $h\equiv0$, we omit it from the notation.
\end{notation}
    
\begin{lemma}\label{lem: energy-positive-partition-1}
    If $\Gamma\in\mathscr{E}$, then \begin{equation*}\label{eq: energy_positive-partition}
        H_h(\Gamma)=\sum_{\gamma\in\partial\Gamma}H_{\gamma,h}(\Gamma)-\hspace{-.25cm}\sum_{\substack{\gamma_1,\gamma_2\in\partial\Gamma\\\gamma_1\neq\gamma_2}}\frac{1}{2}\Phi_{\gamma_1,\gamma_2}(\Gamma).
    \end{equation*} In particular, if $\Gamma\in\mathscr{P}$, then \begin{equation*}\label{eq: energy-positive-partition-2}
        H_h(\Gamma)=\sum_{\gamma\in\Gamma}H_h(\gamma)-\hspace{-.25cm}\sum_{\substack{\gamma_1,\gamma_2\in\Gamma\\\gamma_1\neq\gamma_2}}\frac{1}{2}\Phi(\gamma_1,\gamma_2).
    \end{equation*}
\end{lemma}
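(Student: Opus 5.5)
The plan is to reduce everything to the decomposition \eqref{eq: relation-minus-interior-Gamma-interior-gamma} of the minus-interior and then expand the Hamiltonian \eqref{eq: dual-hamiltonian} as a double sum. Write $A\coloneqq\I_-(\Gamma)$ and $A_\gamma\coloneqq\I_-(\iota_\gamma(\Gamma))$ for $\gamma\in\partial\Gamma$. By \eqref{eq: relation-minus-interior-Gamma-interior-gamma}, $A=\bigsqcup_{\gamma\in\partial\Gamma}A_\gamma$, and this union is disjoint. Disjointness holds because distinct elements of $\partial\Gamma$ are positively compatible. Moreover $A_\gamma\subset\I_-(\gamma)$, and the sets $\I_-(\gamma)$ for $\gamma\in\partial\Gamma$ are pairwise disjoint.

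For the field part, $E_h$ is additive over disjoint unions of minus-interiors. Hence
\begin{equation*}
E_h(\Gamma)=\sum_{x\in A}2h_x=\sum_{\gamma\in\partial\Gamma}\sum_{x\in A_\gamma}2h_x=\sum_{\gamma\in\partial\Gamma}E_h(\iota_\gamma(\Gamma)).
\end{equation*}
Only the interaction part produces the correction term.

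For the interaction part, start from $H(\Gamma)=\sum_{x\in A,\,y\in A^c}2J_{xy}$. For each $\gamma$ one has $A^c=A_\gamma^c\setminus\bigsqcup_{\gamma'\neq\gamma}A_{\gamma'}$. Therefore
\begin{equation*}
\sum_{x\in A_\gamma,\,y\in A^c}2J_{xy}=\sum_{x\in A_\gamma,\,y\in A_\gamma^c}2J_{xy}-\sum_{\gamma'\neq\gamma}\sum_{x\in A_\gamma,\,y\in A_{\gamma'}}2J_{xy}=H(\iota_\gamma(\Gamma))-\sum_{\gamma'\neq\gamma}\tfrac12\Phi_{\gamma,\gamma'}(\Gamma),
\end{equation*}
using \eqref{eq: definition-phi}. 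Here $\Phi$ is well defined, since $\iota_\gamma(\Gamma)$ and $\iota_{\gamma'}(\Gamma)$ have disjoint minus-interiors. Summing over $\gamma\in\partial\Gamma$ and adding the field part gives the first formula.

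The one point needing care is the identification $H(\iota_\gamma(\Gamma))=\sum_{x\in A_\gamma,y\in A_\gamma^c}2J_{xy}$. This requires $\I_-$ of the spin-flip set $\sqcup_{\gamma''\in\iota_\gamma(\Gamma)}\gamma''$ to be exactly $A_\gamma$. That is just the definition of $A_\gamma$, so \eqref{eq: dual-hamiltonian} applies. It requires only that $\iota_\gamma(\Gamma)$ is a collection of spin flips of even cardinality, hence an element of $\Omega^*$, which holds since each contour has even cardinality. So the main obstacle is really only justifying \eqref{eq: relation-minus-interior-Gamma-interior-gamma}, which the paper states beforehand and I take as given. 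The sign and the factor $\tfrac12$ come out as claimed: each unordered pair is counted twice in the ordered double sum, and each ordered pair contributes $2J$, i.e. $\tfrac12\cdot 4J$.

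For the particular case $\Gamma\in\mathscr{P}$, we have $\partial\Gamma=\Gamma$. For every $\gamma\in\Gamma$ the set $\iota^*_\gamma(\Gamma)$ is empty, because no other $\gamma'\in\Gamma$ satisfies $\gamma'\trianglelefteq\gamma$: positive compatibility forces disjoint nonempty minus-interiors. Hence $\iota_\gamma(\Gamma)=\{\gamma\}$, and the formula reduces to the stated one with $H_h(\gamma)$ and $\Phi(\gamma_1,\gamma_2)$.
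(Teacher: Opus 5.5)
Your proposal is correct and follows essentially the same route as the paper. Both arguments use the disjoint decomposition $\I_-(\Gamma)=\bigsqcup_{\gamma\in\partial\Gamma}\I_-(\iota_\gamma(\Gamma))$ together with $A^c=A_\gamma^c\setminus\bigsqcup_{\gamma'\neq\gamma}A_{\gamma'}$ to split $A\times A^c$ into the self-energy terms minus the ordered cross terms. Your separate treatment of the field term $E_h$ via additivity is a small improvement, since the paper's proof writes out only the interaction part.
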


\begin{proof}
    Let $\{A_1,...,A_n\}$ be a finite collection of non-empty disjoint sets. Then, \begin{align*}
        \left(\sqcup_{i=1}^nA_i\right)^c&=\cap_{i=1}^nA_i^c=A_i^c\setminus\left(\sqcup_{j\neq i}A_j\right),
    \end{align*} for all $i\in[n]$. Hence, \begin{equation}\label{eq: set-relation}
        \begin{split}
            \left(\sqcup_{i=1}^nA_i\right)\times\left[\left(\sqcup_{i=1}^nA_i\right)^c\right]&=\sqcup_{i=1}^n\left[A_i\times\left(A_i^c\setminus\sqcup_{j\neq i}A_j\right)\right]\\
        &=\left[\sqcup_{i=1}^n(A_i\times A_i^c)\right]\setminus\left[\sqcup_{i=1}^n\sqcup_{j\neq i}A_i\times A_j\right].
        \end{split}
    \end{equation} 
    Let $A_\gamma=\I_-(\iota_\gamma(\Gamma))$, where $\gamma\in\partial\Gamma$. Combining Equations~\eqref{eq: relation-minus-interior-Gamma-interior-gamma} and \eqref{eq: set-relation}, we obtain \begin{align*}
        H_h(\Gamma)=\hspace{-0.30cm}\sum_{\substack{x\in\I_-(\Gamma)\\y\in\I_-(\Gamma)^c}}\hspace{-0.20cm}2J_{xy}=\sum_{\gamma\in\partial\Gamma}\sum_{\substack{x\in\I_-(\iota_\gamma(\Gamma))\\y\in\I_-(\iota_\gamma(\Gamma))^c}}\hspace{-0.50cm}2J_{xy}-\sum_{\substack{\substack{\gamma,\gamma'\in\partial\Gamma\\\gamma\neq\gamma'}}}\sum_{\substack{x\in\I_-(\iota_\gamma(\Gamma))\\y\in\I_-(\iota_{\gamma'}(\Gamma))}}\hspace{-0.45cm}2J_{xy},
    \end{align*}The lemma has thus been proved.
\end{proof}

Finally, note that $\mathscr{C}$ is invariant under integer translations. If $\gamma_0\in\mathscr{C}$, we define \begin{align*}\label{eq: translation-invariance}
    &[\gamma_0]\coloneqq\{\gamma_0+k:k\in\mathbb{Z}\}, &\overline{\mathscr{C}}\coloneqq\{[\gamma_0]:\gamma_0\in\mathscr{C}\}.
\end{align*}

\subsection{Crucial estimates}\label{subsec: crucial-estimates} 

More important than the geometric objects themselves are the relevant estimates that they satisfy. We list the three that are key to proving the convergence of the cluster expansion. 

\begin{hypothesis}\label{hyp: M-choice}
    There exists $M=M(\alpha)>1$ such that \begin{equation*}
        H(\Gamma\setminus\gamma)+(7/8)H(\gamma)\leq H(\Gamma)\leq H(\Gamma\setminus\gamma)+H(\gamma),
    \end{equation*} for any $\Gamma\in\mathscr{E}$ and $\gamma\in\partial\Gamma$. Furthermore, there exists $\varepsilon>0$ such that $H(\gamma)\geq \varepsilon$ for all $\gamma\in\mathscr{C}$. In our case, we may take $\varepsilon=2$.
\end{hypothesis}

\begin{hypothesis}\label{hyp: diameter-hypothesis}
    There exist $c_0,c_1>0$, depending on $(M,\alpha)$, and a function $N:\mathscr{C}\rightarrow\mathbb{N}$, invariant under integer translations, such that \begin{equation*}\label{eq: cover-size}
        1+\log_2\mathrm{diam}(\gamma)\leq N(\gamma)\leq c_0(\log 2)^{-1}H(\gamma)
    \end{equation*} and, for all $n\in\mathbb{N}$, \begin{equation*}
        \lvert\{\gamma\in\mathscr{C}:(0\in\I_-(\gamma))\wedge(N(\gamma)=n)\}\rvert\leq 2^{c_1n}.
    \end{equation*}We note that    
    \begin{equation*}\label{eq: diameter-hypothesis}
    \gamma\in\mathscr{C}\Rightarrow \mathrm{diam}(\gamma)\leq e^{c_0 H(\gamma)}.
\end{equation*}
\end{hypothesis}

\begin{remark}
    In the one-dimensional case, one can choose the cover size introduced by Fr\"ohlich and Spencer in \cite{Frohlich.Spencer.82} as the function $N$.
\end{remark}

\begin{hypothesis}\label{hyp: phase-transition-estimate}
    There exist $\beta_0=\beta_0(M,\alpha)>0$ and $c_2=c_2(M,\alpha)>0$ such that \begin{equation*}\label{eq: pressure-upper-bound}
        \beta\geq\beta_0\Rightarrow\sum_{\gamma\in\mathscr{C}}\ind_{0\in\I_-(\gamma)}e^{-\beta H(\gamma)}\leq e^{-c_2\beta}.
    \end{equation*}
\end{hypothesis}

These hypotheses are \textit{the} crucial ingredients needed for the energy-entropy argument used to prove the existence of a first-order phase transition via contours. If $\alpha=2$, all were shown to hold in \cite{Frohlich.Spencer.82}. They were sufficient to obtain a convergent cluster expansion in this case; see \cite{Imbrie.82}. For the region $\alpha\in(1,2]$, Hypotheses~\ref{hyp: M-choice}, \ref{hyp: diameter-hypothesis} and \ref{hyp: phase-transition-estimate} were shown to hold in \cite{corsini1dpt}. We note that a Peierls' argument is possible with different types of energy estimates, i.e, modifications to Hypothesis~\ref{hyp: M-choice}, see for instance \cite{Bissacot-Kimura2018,Cassandro.05}. Unfortunately, in that case, a restriction on $\alpha$ appeared, see \cite{LittinPicco2017}.

Not only are these hypotheses sufficient to prove phase transition, they guaranty the absolute convergence of the cluster expansion. We show that this is the case following a similar approach to that of \cite{Cassandro.Merola.Picco.17,Cassandro.Merola.Picco.Rozikov.14}, which deals with a smaller region of polynomial decay $\alpha$. The same approach was successfully extended to the multidimensional case in \cite{cluster}.

Although Hypotheses~\ref{hyp: M-choice} and \ref{hyp: diameter-hypothesis} imply Hypothesis~\ref{hyp: phase-transition-estimate}, it is convenient to have it as a separate hypothesis. As a final observation before describing the appropriate polymer decomposition, we note that \begin{align}\label{eq: hypothesis-3-modified}
    \sum_{[\gamma\hspace{+0.025cm}]\in\overline{\mathscr{C}}}e^{-\beta H([\gamma\hspace{+0.025cm}])}\leq \sum_{\gamma\in\mathscr{C}}\ind_{0\in\I_-(\gamma)}e^{-\beta H(\gamma)}=\sum_{[\gamma\hspace{+0.025cm}]\in\overline{\mathscr{C}}}\lvert\I_-(\gamma)\rvert e^{-\beta H([\gamma\hspace{+0.025cm}])}.
\end{align}We will often use this alternative form of Hypothesis~\ref{hyp: phase-transition-estimate}.

\section{Polymers}\label{sec: polymers}

In this Section, we describe a polymer construction that is appropriate to the model at hand. We follow very closely the approach developed in \cite{cluster, Cassandro.Merola.Picco.17,Cassandro.Merola.Picco.Rozikov.14}: a polymer is a positive collection of contours, as in Definition~\ref{def: positive-partitions}. For this reason, we refer henceforth to positive collections of contours as \textit{polymers.} Nevertheless, a different type of compatibility must be introduced, for reasons that become clear in the proof of the Lemma~\ref{lem: polymer-gas-partition-function}. 

\begin{definition}\label{def: polymers}
    Let $\Gamma_1,\Gamma_2\in\mathscr{P}$. They are \textit{polymerally compatible}, which we denote by $\Gamma_1\sim\Gamma_2$, if one of the following mutually exclusive conditions is satisfied: \begin{enumerate}
        \item $\Gamma_1\parallel\Gamma_2$;
        \item $\Gamma_2\in\mathscr{I}(\Gamma_1)$, i.e., there exists $\gamma_1\in\Gamma_1\textrm{ such that }\Gamma_2\in\mathscr{I}(\gamma_1)$;
        \item $\Gamma_1\in\mathscr{I}(\Gamma_2)$.
    \end{enumerate}
    We say that $X\in\mathrm{P}(\mathscr{P})$ is a \textit{collection of compatible polymers} if \begin{equation*}
        (\Gamma_1,\Gamma_2\in X)\wedge(\Gamma_1\neq\Gamma_2)\Rightarrow\Gamma_1\sim\Gamma_2.
    \end{equation*} We denote the set of all collections of compatible polymers by $\mathscr{X}$. Similarly to Notation~\ref{not: sets-in-a-box}, if $\Lambda\Subset\mathbb{Z}$, we write \begin{equation*}
        \mathscr{X}_\Lambda\coloneqq\{X\in\mathscr{X}:\Gamma\in X\Rightarrow\Gamma\in\mathscr{P}_\Lambda\}.
    \end{equation*}
\end{definition}

\begin{remark}
    We note that if $\Gamma_1=\{\gamma_1\}$ and $\Gamma_2=\{\gamma_2\}$, then $\Gamma_1\sim\Gamma_2$ if and only if $\gamma_1\sim\gamma_2$.
\end{remark}

\begin{remark}
    For each $\Gamma\in\mathscr{E}$, there exists at least one decomposition $X$ of $\Gamma$ into compatible polymers. Indeed, just consider $X=\{\{\gamma\}:\gamma\in\Gamma\}$.
\end{remark}

\begin{convention}
    We establish $\emptyset\in\mathscr{X}_\Lambda$, for all $\Lambda\Subset\mathbb{Z}$. In particular, $\emptyset\in\mathscr{X}$.
\end{convention}

\begin{remark}\label{rem: characteristic-function-polymer}
    Let $\Gamma_1,\Gamma_2\in\mathscr{P}$. Then \begin{equation*}\label{eq: characteristic-function-polymer}
        \ind_{\Gamma_1\sim\Gamma_2}=\left(\ind_{\Gamma_1\parallel\Gamma_2}+\sum_{\gamma'_1\in\Gamma_1}\ind_{\Gamma_2\in\mathscr{I}(\gamma'_1)}+\sum_{\gamma'_2\in\Gamma_2}\ind_{\Gamma_1\in\mathscr{I}(\gamma'_2)}\right)\prod_{\substack{\gamma_1\in\Gamma_1\\\gamma_2\in\Gamma_2}}\ind_{\gamma_1\sim\gamma_2}.
    \end{equation*} Since all indicative functions inside parentheses are mutually exclusive, we have \begin{equation*}\label{eq: characteristic-function-not-polymer}
        \ind_{\Gamma_1\not\sim\Gamma_2}=\ind_{\left\{\substack{\exists\gamma_1\in\Gamma_1,\gamma_2\in\Gamma_2\\\textrm{s.t. }\gamma_1\not\sim\gamma_2}\right\}}+\prod_{\substack{\gamma_1,\gamma'_1\in\Gamma_1\\\gamma_2,\gamma'_2\in\Gamma_2}}\ind_{\gamma_1\sim\gamma_2}\ind_{\Gamma_1\not\parallel\Gamma_2}\ind_{\Gamma_2\not\in\mathscr{I}(\gamma'_1)}\ind_{\Gamma_1\not\in\mathscr{I}(\gamma'_2)}.
    \end{equation*}
\end{remark}

Since we may see collections of compatible polymers as partitions of elements of $\mathscr{E}$, the notion of compatible polymers implies that \begin{equation*}
    (X\in\mathscr{X})\wedge(Y\preceq X)\Rightarrow Y\in\mathscr{X}.
\end{equation*} That is, if $X$ is a collection of compatible polymers and $Y$ is finer than $X$ (if $A\in Y$, then there is $B\in X$ such that $A\subset B$), then $Y$ is also a collection of compatible polymers. 

There is also a subjacent natural order to collections of compatible polymers. More precisely, if $X\in\mathscr{X}$, it is reasonable to define \begin{equation*}
    \partial X\coloneqq\{\Gamma_1\in X:[(\Gamma_2\in X)\wedge(\I_-(\Gamma_1)\subset\I_-(\Gamma_2))]\Rightarrow\Gamma_1=\Gamma_2\}.
\end{equation*} In this case, we define recursively $X_1=\partial X$ and $X_{k+1}=\partial(X\setminus\sqcup_{i=1}^k X_i)$. It is easy to see that \begin{equation*}
    X=\bigsqcup_{k=1}^\infty X_k.
\end{equation*} This allows us to see $X$ as a (graph) forest whose set of vertices is $X$, and whose set of edges is \begin{align*}
    E(X)\coloneqq\bigsqcup_{k=1}^\infty\left\{\{\Gamma_k,\Gamma_{k+1}\}:(\exists\gamma_k\in\Gamma_k\in X_k:\Gamma_{k+1}\in\mathscr{I}(\gamma_k)\cap X_{k+1})\right\}.
\end{align*}In fact, we can attribute the "color" $\gamma_k$ to the edge $\{\Gamma_k,\Gamma_{k+1}\}$, if $\gamma_k\in\Gamma_k$ and $\Gamma_{k+1}\in\mathscr{I}(\gamma_k)$.

This very same representation hints at the existence of a single \textit{coarsest} collection of compatible polymers $X_\Gamma$ representing the same configuration $\Gamma\in\mathscr{E}$ as $X$. If the edges $\{\Gamma_k,\Gamma_{k+1,1}\}\neq\{\Gamma_k,\Gamma_{k+1,2}\}$ both have the color $\gamma_k\in\Gamma_k$, it must be true that $\Gamma_{k+1,1}\parallel\Gamma_{k+1,2}$, that is, $\Gamma_{k+1,1}\sqcup\Gamma_{k+1,2}$ is also a polymer. It is easy to see that \begin{equation*}
    X'=X\sqcup\{\Gamma_{k+1,1}\sqcup\Gamma_{k+1,2}\}\setminus\{\Gamma_{k+1,1},\Gamma_{k+1,2}\}
\end{equation*} is also a collection of compatible polymers. Furthermore, this change does not affect the "grading" described earlier.

One may repeat the procedure described in the previous paragraph as long as there remain two edges with the same color. At this point, since we have not yet made any changes to $\partial X$, we may not yet have obtained $X_\Gamma$. The last step is simply to combine all polymers in $\partial X$ into a single one. Thus, we obtain the coarsest collection of compatible polymers $X_\Gamma$ representing $\Gamma\in\mathscr{E}$. Note that the forest representation of $X_\Gamma$ is a tree.

As a matter of concreteness, let us give two examples.

\begin{figure}[hbt!]
    \centering
    \input{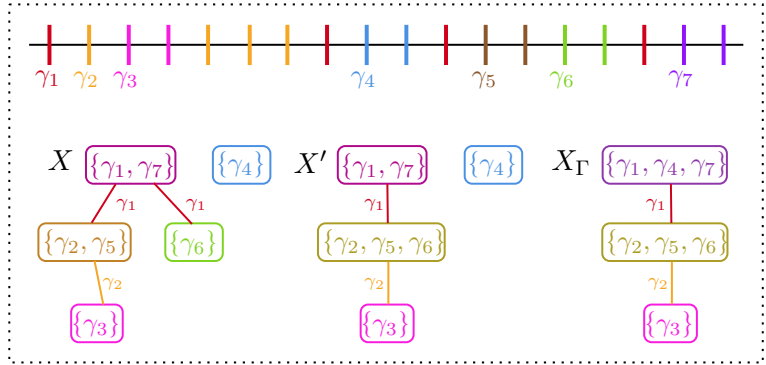}
    \caption{An example of the procedure whereof one obtains the coarsest decomposition of $\Gamma\in\mathscr{E}$ into compatible polymers.}
    \label{fig: forest-of-polymers}
\end{figure}

Let us explain the content of Figure~\ref{fig: forest-of-polymers}. Firstly, for those who are colorblind, let us give an alternative description of the contours. Ordering spin flips so that $i<j$ implies that $b_i$ lies to the left of $b_j$, we have\begin{align*}
        &\gamma_1=\{b_1,b_8,b_{11},b_{16}\} &\gamma_2=\{b_2,b_5,b_6,b_7\}\\
        &\gamma_3=\{b_3,b_4\} &\gamma_4=\{b_9,b_{10}\}\\
        &\gamma_5=\{b_{12},b_{13}\} &\gamma_6=\{b_{14},b_{15}\}\\
        &\gamma_7=\{b_{17},b_{18}\}.
    \end{align*}
    Let $\Gamma=\{\gamma_1,\gamma_2,\gamma_3,\gamma_4,\gamma_5,\gamma_6,\gamma_7\}$ and $X=\{\Gamma_1,\Gamma_2,\Gamma_3,\Gamma_4,\Gamma_5\}$, where \begin{align*}
        &\Gamma_1=\{\gamma_1,\gamma_7\}, &\Gamma_2=\{\gamma_2,\gamma_5\},\\
        &\Gamma_3=\{\gamma_3\}, &\Gamma_4=\{\gamma_4\},\\
        &\Gamma_5=\{\gamma_6\}.
    \end{align*}Then $X$ is a decomposition of $\Gamma$ into compatible polymers. We may then fuse $\Gamma_2=\{\gamma_2,\gamma_5\}$ and $\Gamma_5=\{\gamma_6\}$, and thus obtain $X'$, which is a decomposition of $\Gamma$ into compatible polymers, coarser than $X$. Finally, we may fuse $\Gamma_1=\{\gamma_1,\gamma_7\}$ and $\Gamma_4=\{\gamma_4\}$, and thus obtain $X_\Gamma$, the \textit{coarsest} decomposition of $\Gamma$ into compatible polymers.

\begin{figure}[hbt!]
    \centering
    \input{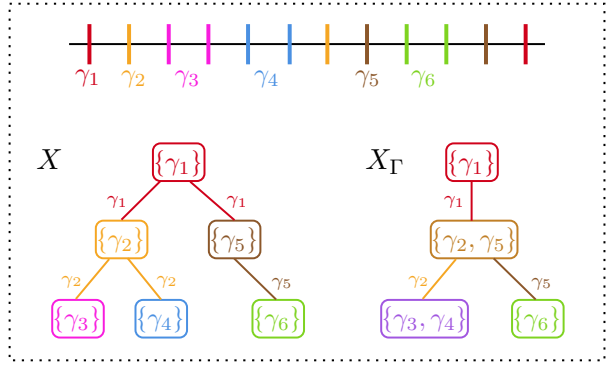}
    \caption{Another example of the procedure whereof one obtains the coarsest decomposition of $\Gamma\in\mathscr{E}$ into compatible polymers.}
    \label{fig: forest-of-polymers-2}
\end{figure}

Once again, let us explain the content of Figure~\eqref{fig: forest-of-polymers-2}. For those who are colorblind, let us give an alternative description of the contours. Ordering spin flips so that $i<j$ implies that $b_i$ lies to the left of $b_j$, we have \begin{align*}
    &\gamma_1 = \{b_1,b_{12}\},&\gamma_2=\{b_2,b_7\},\\
    &\gamma_3=\{b_3,b_4\},&\gamma_4=\{b_5,b_6\},\\
    &\gamma_5=\{b_8,b_{11}\},&\gamma_6=\{b_9,b_{10}\}.
\end{align*}Let $\Gamma=\{\gamma_1,\gamma_2,\gamma_3,\gamma_4,\gamma_5,\gamma_6\}$ and $X=\{\Gamma_i:i\in[6]\}$, where $\Gamma_i=\{\gamma_i\}$. We may fuse $\Gamma_3$ and $\Gamma_4$; similarly, we may fuse $\Gamma_2$ and $\Gamma_5$. Note that $\Gamma_6$ cannot be glued to $\Gamma'\coloneqq\Gamma_3\sqcup\Gamma_4$, for that would break the property of compatibility of the collection of polymers (recall Definition~\ref{def: polymers}). Thus, we obtain the coarsest decomposition of $\Gamma$.

\begin{lemma}
    Let $\Gamma\in\mathscr{E}$. Define \begin{equation*}
        \mathscr{X}_\Gamma\coloneqq\{X\in\Pt(\Gamma):X\preceq X_\Gamma\},
    \end{equation*} where $X_\Gamma$ denotes the coarsest decomposition of $\Gamma$ into compatible polymers. Then, \begin{equation*}
        \mathscr{X}=\{\emptyset\}\sqcup\bigsqcup_{\Gamma\in\mathscr{E}}\mathscr{X}_\Gamma.
    \end{equation*}Similarly, for any $\Lambda\Subset\mathbb{Z}$, \begin{equation*}
        \mathscr{X}_\Lambda=\{\emptyset\}\sqcup\bigsqcup_{\Gamma\in\mathscr{E}_\Lambda}\mathscr{X}_\Gamma.
    \end{equation*}
\end{lemma}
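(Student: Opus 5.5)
We will prove the first identity by exhibiting the map $X\mapsto\Gamma(X)\coloneqq\bigsqcup_{\Gamma'\in X}\Gamma'$ and showing three things: it sends $\mathscr{X}\setminus\{\emptyset\}$ into $\mathscr{E}$; the fibres are exactly the sets $\mathscr{X}_\Gamma$; and the fibres over distinct $\Gamma$ are disjoint. Disjointness is immediate, since $X\in\Pt(\Gamma)$ determines $\Gamma$ as the union of its blocks. The box version follows by restriction. A polymer lies in $\mathscr{P}_\Lambda$ exactly when all its contours lie in $\mathscr{C}_\Lambda$, so $X\in\mathscr{X}_\Lambda$ if and only if $\Gamma(X)\in\mathscr{E}_\Lambda$.

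\emph{Step 1 ($\Gamma(X)\in\mathscr{E}$).} Take $X\in\mathscr{X}$ nonempty and two contours $\gamma\neq\gamma'$ in $\Gamma(X)$.
\begin{itemize}
\item If they lie in the same polymer, they are compatible because polymers belong to $\mathscr{P}\subset\mathscr{E}$.
\item If they lie in different polymers $\Gamma_1\sim\Gamma_2$, we read off one of the three cases of Definition~\ref{def: polymers}. Either way, Remark~\ref{rem: characteristic-function-polymer} gives $\gamma\sim\gamma'$.
\end{itemize}
In particular the polymers in $X$ are pairwise disjoint as sets of contours, since $\gamma\not\sim\gamma$. So $\Gamma(X)$ is a family of pairwise compatible contours. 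By Corollary~\ref{cor: M-partition} (applied iteratively), or directly by pairwise checking, it is an $M$-partition of its union. Hence $X\in\Pt(\Gamma(X))$.

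\emph{Step 2 (fibres).} We must show that for $\Gamma\in\mathscr{E}$,
\[
\{X\in\mathscr{X}:\Gamma(X)=\Gamma\}=\{X\in\Pt(\Gamma):X\preceq X_\Gamma\}.
\]
The inclusion $\supseteq$ is the heredity property stated just before the lemma: $X_\Gamma\in\mathscr{X}$, and any refinement of a compatible collection is again compatible. The main obstacle is the inclusion $\subseteq$, namely that every compatible decomposition of $\Gamma$ refines $X_\Gamma$. This requires that $X_\Gamma$ be well defined, i.e.\ independent of the starting decomposition.

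To get this, we will describe $X_\Gamma$ intrinsically rather than through the fusion procedure. Its root block is $\partial\Gamma$. Recursively, for each contour $\gamma$ in an already constructed block, its children block is the set of maximal elements, with respect to $\trianglelefteq$, of $\{\gamma'\in\Gamma\setminus\{\gamma\}:\gamma'\trianglelefteq\gamma\}$ that are not below any other such contour. These maximal elements are pairwise positively compatible by Definition~\ref{def: partial-ordering}, so each children block is a polymer.

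We then show, by induction on the grading $X=\bigsqcup_k X_k$, that any $X\in\mathscr{X}$ with $\Gamma(X)=\Gamma$ has each block contained in one of these intrinsic blocks.
\begin{itemize}
\item \emph{Base case.} Blocks in $\partial X$ consist of external contours of $\Gamma$. Two polymers of $\partial X$ cannot be nested via $\mathscr{I}$, because they are maximal, so they are positively compatible.
\item \emph{Inductive step.} A block of $X_{k+1}$ is internal to a unique contour $\gamma_k$ of a block of $X_k$. Its contours are then the immediate $\trianglelefteq$-successors of $\gamma_k$ inside $\Gamma$: any intermediate contour $\gamma''$ would belong to some polymer of $X$, and we will check case by case that this violates polymeral compatibility.
\end{itemize}
This shows that the intrinsic partition is the coarsest one and coincides with the output of the fusion procedure. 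Hence every $X$ in the fibre satisfies $X\preceq X_\Gamma$.

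Finally, we combine the pieces. Every nonempty $X\in\mathscr{X}$ lies in $\mathscr{X}_{\Gamma(X)}$, each $\mathscr{X}_\Gamma\subset\mathscr{X}$, the sets $\mathscr{X}_\Gamma$ are pairwise disjoint, and none of them contains $\emptyset$ because $\emptyset\notin\mathscr{E}$. This gives the disjoint decomposition, and the $\Lambda$ version follows as noted at the start.
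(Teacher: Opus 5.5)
Your route differs from the paper's. The paper gives no separate proof: the lemma is read off from the fusion discussion before it. That discussion starts from an arbitrary $X\in\mathscr{X}$, grades it by iterating $\partial$, fuses sibling polymers attached through edges of the same colour $\gamma_k$, and finally fuses $\partial X$. That the output does not depend on the starting $X$ is left implicit (it ``hints at the existence of a single coarsest collection''). You instead characterise $X_\Gamma$ intrinsically, as the partition of $\Gamma$ into sibling classes of the $\trianglelefteq$-forest on $\Gamma$, and you prove that every compatible decomposition of $\Gamma$ refines it. This makes the well-definedness of $X_\Gamma$ explicit, and it is exactly what the inclusion $\{X\in\mathscr{X}:\Gamma(X)=\Gamma\}\subseteq\mathscr{X}_\Gamma$ needs. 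The paper's algorithmic picture buys something else: it is closer to the recursive decomposition used later in the proof of Lemma~\ref{lem: polymer-gas-partition-function}.

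Two steps need repair.

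(i) Your inductive-step justification is wrong as stated. Take an intermediate contour $\gamma''\in R$ with $p\triangleleft\gamma''\triangleleft\gamma_k$ for some $p\in P$. This need not violate polymer compatibility. If all of $P$ lies below $\gamma''$, then $P\in\mathscr{I}(\gamma'')$ and $R\in\mathscr{I}(\gamma_k)$ are perfectly compatible. What actually fails is the grading: $\I_-(P)\subset\I_-(R)\subset\I_-(Q)$ keeps $R$ alive at stage $k+1$, so $P\notin X_{k+1}$. Compatibility fails only when $\gamma''$ splits $P$.

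A cleaner fix avoids the grading altogether. Let $\gamma_a,\gamma_b\in P\in X$, and let $p_b$ be the immediate $\trianglelefteq$-predecessor of $\gamma_b$ in $\Gamma$, with $p_b\in R\in X$. Then $R\neq P$, and $R,P$ are not positively compatible. Also $R\in\mathscr{I}(P)$ is impossible, since it would put $\gamma_b$ strictly below another contour of the positive collection $P$. Hence $P\in\mathscr{I}(p_b)$, so $\gamma_a\triangleleft p_b$. In particular $\gamma_a$ is not external, and since the strict ancestors of $\gamma_a$ form a chain, $p_a\trianglelefteq p_b$. By symmetry $p_b\trianglelefteq p_a$. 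So all contours of a block share the same immediate predecessor, or are all external, and this is a pure compatibility argument.

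(ii) You use $X_\Gamma\in\mathscr{X}$, both for the inclusion $\supseteq$ and to call the sibling partition ``coarsest''. But you never check that this partition is itself a compatible collection. Write $\mathrm{ch}(\gamma)$ for the set of immediate successors of $\gamma$; the check is then short.
\begin{itemize}
\item If $\gamma'$ is a descendant of $\gamma$, then $\mathrm{ch}(\gamma')\in\mathscr{I}(c)$, where $c\in\mathrm{ch}(\gamma)$ is the child on the path to $\gamma'$ (possibly $c=\gamma'$).
\item Likewise, $\mathrm{ch}(\gamma')\in\mathscr{I}(e)$ for the external contour $e\trianglerighteq\gamma'$, which handles the root block $\partial\Gamma$.
\item If $\gamma,\gamma'$ are $\trianglelefteq$-incomparable, then $\gamma\parallel\gamma'$. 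Their descendants therefore have disjoint minus interiors, so $\mathrm{ch}(\gamma)\parallel\mathrm{ch}(\gamma')$.
\end{itemize}
With these two repairs your argument is complete.
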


\subsection{A polymer gas associated to the long-range Ising model}\label{subsec: polymer-gas-decomposition}

In this subsection, we show that, for all $\Lambda\Subset\mathbb{Z}$, the partition function of the long-range ferromagnetic Ising model at inverse temperature $\beta>0$ and external field $h$ is identical to that of a hard-core polymer gas. This decomposition is similar to the one employed in \cite{cluster,Cassandro.Merola.Picco.17,Cassandro.Merola.Picco.Rozikov.14}. 

Let $\Gamma\in\mathscr{P}$, we define the activity of $\Gamma$ at temperature $\beta$ by \begin{equation}\label{eq: z-activity}
    z^+_{\beta,h}(\Gamma)\coloneqq\sum_{G\in\Gr(\Gamma)}\left\langle\prod_{e\in E(G)}\left[e^{\beta\Phi_e(\cdot)}-1\right]\right\rangle_{\Gamma,\beta,h}\frac{Z_{\beta,h}(\Gamma)}{Z^*_{\beta,h}(\Gamma)},
\end{equation}where (recall Definition~\ref{def: internal-collection} and Notation~\ref{not: internal-collection}) $\langle\cdot\rangle_{\Gamma,\beta,h}$ is the probability measure on $\mathscr{I}(\Gamma)$ such that if $\overline{\Gamma}\in\mathscr{I}(\Gamma)$, then
\[
    \left\langle\left\{\overline{\Gamma}\right\}\right\rangle_{\Gamma,\beta,h}\coloneqq \prod_{\gamma\in\Gamma}e^{-\beta H_{h,\gamma}\left(\overline{\Gamma}\right)}\cdot\left(Z_{\beta,h}(\Gamma)\right)^{-1}.
\]
Similarly, $\langle\cdot\rangle^*_{\Gamma,\beta,h}$ is the probability measure on $\mathscr{I}(\Gamma)$ such that if $\overline{\Gamma}\in\mathscr{I}(\Gamma)$, then
\[
    \left\langle\left\{\overline{\Gamma}\right\}\right\rangle^*_{\Gamma,\beta,h}\coloneqq \prod_{\gamma\in\Gamma}e^{-\beta H^*_{h,\gamma}\left(\overline{\Gamma}\right)}\cdot\left(Z^*_{\beta,h}(\Gamma)\right)^{-1}.
\]

\begin{lemma}\label{lem: polymer-gas-partition-function}
    Let $\Lambda\Subset\mathbb{Z}$. Then \begin{equation*}\label{eq: polymer-gas-partition-function}
        Z^+_{\beta,h,\Lambda}=\sum_{X\in\mathscr{X}_\Lambda}\prod_{\Gamma\in X}z^+_{\beta,h}(\Gamma)\eqqcolon\sum_{X\in\mathrm{P}(\mathscr{P})}z^+_{\beta,h}(X)\ind_{X\in\mathscr{X}_\Lambda}.
    \end{equation*}
\end{lemma}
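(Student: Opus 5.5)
We will prove the identity by showing that the right-hand side equals $\sum_{\Gamma\in\mathscr{E}_\Lambda\sqcup\{\emptyset\}}e^{-\beta H_h(\Gamma)}$. By Lemma~\ref{lem: M-partition} and the identification $\sigma^+$ of Subsection~\ref{subsec: spin-flip-configs}, this sum is exactly $Z^+_{\beta,h,\Lambda}$, with $\emptyset$ corresponding to the all-plus configuration. Using the decomposition $\mathscr{X}_\Lambda=\{\emptyset\}\sqcup\bigsqcup_{\Gamma\in\mathscr{E}_\Lambda}\mathscr{X}_\Gamma$, it suffices to establish the following: for each $\Gamma\in\mathscr{E}_\Lambda$, the weights of all decompositions finer than the coarsest one, summed appropriately, reproduce $e^{-\beta H_h(\Gamma)}$.

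We proceed by induction on $\lvert\Gamma\rvert$, or equivalently on the depth of the tree $X_\Gamma$, peeling off the external layer $\partial\Gamma$. Lemma~\ref{lem: energy-positive-partition-1} gives
\[
e^{-\beta H_h(\Gamma)}=\prod_{\gamma\in\partial\Gamma}e^{-\beta H_{h,\gamma}(\Gamma)}\prod_{\{\gamma_1,\gamma_2\}\subset\partial\Gamma}e^{\beta\Phi_{\gamma_1,\gamma_2}(\Gamma)}.
\]
We then write $e^{\beta\Phi}=1+(e^{\beta\Phi}-1)$ and expand the product over pairs into a sum over graphs $G$ on vertex set $\partial\Gamma$. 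Each graph splits $\partial\Gamma$ into connected components $\Gamma^{(1)},\dots,\Gamma^{(k)}$. Each component is a positive collection, hence a polymer by Lemma~\ref{lem: positive-partition}, and the components are pairwise $\parallel$-compatible. The sum over graphs grouped by component therefore becomes a sum over partitions of $\partial\Gamma$ into polymers, and each block carries the connected-graph sum appearing in \eqref{eq: z-activity}.

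Next we handle the interiors. Summing $\prod_{\gamma\in\Gamma^{(j)}}e^{-\beta H_{h,\gamma}(\overline\Gamma)}$ over all internal configurations $\overline\Gamma\in\mathscr{I}(\Gamma^{(j)})$ produces $Z_{\beta,h}(\Gamma^{(j)})$ times the expectation $\langle\cdot\rangle_{\Gamma^{(j)},\beta,h}$. Note that $\Phi_e$ depends on the interior configuration, which is why the activity uses an expectation rather than a product of separate factors. Multiplying and dividing by $Z^*_{\beta,h}(\Gamma^{(j)})$ isolates the factor $z^+_{\beta,h}(\Gamma^{(j)})$ and leaves $\sum_{\overline\Gamma}\prod_{\gamma}e^{-\beta H^*_{h,\gamma}(\overline\Gamma)}$. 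Here $H^*_{h,\gamma}(\overline\Gamma)=H_h(\iota^*_\gamma(\overline\Gamma))$ is the energy of an $M$-partition of strictly smaller size. Corollary~\ref{cor: M-partition} and the fact that compatibility with $\gamma$ is checked pairwise show that every internal family $\overline\Gamma\trianglelefteq\gamma$ is an element of $\mathscr{E}$ (or empty). The induction hypothesis then rewrites each $e^{-\beta H_h(\iota^*_\gamma(\overline\Gamma))}$ as a sum over compatible polymer collections internal to $\gamma$.

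Finally, we must check that concatenating the outer partition of $\partial\Gamma$ with the internal collections yields each $X\in\mathscr{X}_\Lambda$ exactly once, and that every $X$ arising this way is compatible. A polymer internal to $\gamma\in\Gamma^{(j)}$ lies in $\mathscr{I}(\Gamma^{(j)})$, so case (2) of Definition~\ref{def: polymers} applies. Polymers internal to distinct $\gamma,\gamma'$ are $\parallel$. Polymers internal to distinct blocks are separated by Hypothesis-free geometric compatibility. This bijection is the main obstacle. The delicate point is that $Z^*$ runs over the full $\mathscr{I}(\Gamma^{(j)})$, whereas the sets $\mathscr{X}_\Gamma$ are defined through the coarsest decomposition. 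We must verify that $\partial X$ together with the graded forest structure recovers exactly the data $(G\text{-components},\overline\Gamma)$. We would argue via the unique grading $X=\bigsqcup_k X_k$, so that $X_1=\partial X$ corresponds to the blocks $\Gamma^{(j)}$, and we would check with care that no double counting occurs between ``merged'' and ``unmerged'' external polymers. This is precisely where the connected-graph expansion supplies the inclusion–exclusion. The case $\Lambda$-restriction is automatic, since $\I_-(\gamma')\subset\I_-(\gamma)\subset\Lambda$ for internal contours.
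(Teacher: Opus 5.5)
The algebraic steps you describe match the paper: Lemma~\ref{lem: energy-positive-partition-1} on the external layer, the Mayer expansion, grouping connected components into polymers, and multiplying and dividing by $Z^*_{\beta,h}$. The gap is the reduction your induction rests on, which is false. The activity $z^+_{\beta,h}(\Gamma')$ is not a function of the contours of $\Gamma'$ alone. It contains the ratio $Z_{\beta,h}(\Gamma')/Z^*_{\beta,h}(\Gamma')$, and hence sums over all interior configurations.

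So there is no identity, for a fixed $\Gamma\in\mathscr{E}_\Lambda$, between $\sum_{X\in\mathscr{X}_\Gamma}\prod_{\Gamma'\in X}z^+_{\beta,h}(\Gamma')$ and $e^{-\beta H_h(\Gamma)}$. Here is a counterexample. Take $\gamma\in\mathscr{C}_\Lambda$ with $\mathscr{I}(\gamma)\neq\emptyset$. Then $\mathscr{X}_{\{\gamma\}}=\{\{\{\gamma\}\}\}$ and $z^+_{\beta,0}(\{\gamma\})=Z_{\beta,0}(\gamma)/Z^*_{\beta,0}(\gamma)$. For every nonempty interior $\overline\Gamma$,
\[
H(\{\gamma\}\sqcup\overline\Gamma)-H(\overline\Gamma)=H(\gamma)-\sum_{x\in\I_-(\overline\Gamma),\,y\notin\I_-(\gamma)}4J_{xy}<H(\gamma),
\]
so $z^+_{\beta,0}(\{\gamma\})>e^{-\beta H(\gamma)}$.

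Your induction step invokes exactly this false statement when it ``rewrites each $e^{-\beta H_h(\iota^*_\gamma(\overline\Gamma))}$ as a sum over compatible polymer collections internal to $\gamma$.'' Your middle paragraph already sums over all interiors, which an induction on a fixed configuration $\Gamma$ cannot accommodate.

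What can be iterated, and what the paper's algorithm does, is an identity between partition functions. Group configurations by their external layer:
\[
Z^+_{\Lambda,\beta,h}=1+\sum_{\Gamma_1'\in\mathscr{P}_\Lambda}\sum_{\partial\overline\Gamma_1=\Gamma_1'}e^{-\beta H_h(\overline\Gamma_1)}.
\]
Your Mayer step, together with the factorization of interiors over blocks, turns the inner sum into $\sum_{X_1\in\Pt(\Gamma_1')}\prod_{\Gamma_1\in X_1}z^+_{\beta,h}(\Gamma_1)Z^*_{\beta,h}(\Gamma_1)$. Here $Z^*_{\beta,h}(\Gamma_1)=\prod_{\gamma\in\Gamma_1}Z^*_{\beta,h}(\gamma)$, and $Z^*_{\beta,h}(\gamma)=1+\sum_{\Gamma_2'}\sum_{\partial\overline\Gamma_2=\Gamma_2'}e^{-\beta H_h(\overline\Gamma_2)}$, with $\Gamma_2'$ ranging over polymers in $\mathscr{I}(\gamma)$. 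This is an expression of the same form.

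The correct induction hypothesis is therefore: $Z^*_{\beta,h}(\gamma)$ equals the sum of $\prod z^+_{\beta,h}$ over compatible collections, including $\emptyset$, of polymers internal to $\gamma$. Induct on $\diam(\gamma)$, which strictly decreases for internal contours (Remark~\ref{rem: decomposition-algorithm}).

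With this formulation, the step you flag as the main obstacle needs no coarsest decompositions. A nonempty $X\in\mathscr{X}_\Lambda$ corresponds bijectively to two pieces of data. The first is $X_1=\partial X$, a partition of the polymer $\bigsqcup_{\Gamma\in\partial X}\Gamma$. The second is, for each contour $\gamma$ of that polymer, a compatible collection of polymers internal to $\gamma$. Distinct partitions $X_1$ of the same $\Gamma_1'$ give distinct $X$, so no merged/unmerged double counting arises. The term $1$ in $Z^*_{\beta,h}(\gamma)$ accounts for empty interiors.
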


\begin{proof}
Given any finite set $A$, we write \begin{equation*}
    \mathrm{P}_2(A)\coloneqq\{e\in\mathrm{P}(A):\lvert e\rvert=2\}.
\end{equation*}That is, we denote by $\mathrm{P}_2(A)$ the collection of all subsets of $A$ that contain exactly two elements. Let $\{x_e:e\in \mathrm{P}_2(A)\}\subset\mathbb{R}$, then (recall Notation~\ref{not: graphs-trees-partitions})\begin{align}\label{eq: mayer-trick}
    \prod_{e\in\mathrm{P}_2(A)}(1+x_e)=\sum_{P\in\Pt(A)}\prod_{Q\in P}\left[\sum_{G\in\Gr(Q)}\prod_{e\in E(G)}x_e\right].
\end{align}
Let $\Lambda\Subset\mathbb{Z}$. Then, \begin{align*}
    Z^+_{\Lambda,\beta,h}\hspace{-0.05cm}=\hspace{-0.05cm}\sum_{\sigma^+\in\Omega^+_\Lambda}\hspace{-0.05cm}e^{-\beta H_h^+(\sigma^+)}\hspace{-0.05cm}=\hspace{-0.05cm}1\hspace{-0.05cm}+\hspace{-0.05cm}\sum_{\overline{\Gamma}\in\mathscr{E}_\Lambda}\hspace{-0.05cm}e^{-\beta H_h(\overline{\Gamma})}\hspace{-0.05cm}=\hspace{-0.05cm}1\hspace{-0.05cm}+\hspace{-0.05cm}\sum_{\Gamma'_1\in\mathscr{P}_\Lambda}\hspace{-0.05cm}\sum_{\substack{\overline{\Gamma}_1\in\mathscr{E}\\\partial\overline{\Gamma}_1=\Gamma'_1}}\hspace{-0.05cm}e^{-\beta H_h(\overline{\Gamma}_1)}.
\end{align*}
By Lemma~\ref{lem: energy-positive-partition-1} and Mayer's trick (Equation~\eqref{eq: mayer-trick}), we have \begin{equation*}
    e^{-\beta H_h(\overline{\Gamma})}=\prod_{\gamma\in\partial\overline{\Gamma}}e^{-\beta H_{h,\gamma}(\overline{\Gamma})}\sum_{P\in\Pt(\partial\overline{\Gamma})}\prod_{Q\in P}\sum_{G\in\Gr(Q)}\prod_{e\in E(G)}\left(e^{\beta\Phi_e(\overline{\Gamma})}-1\right),
\end{equation*} for every $\Gamma\in\mathscr{E}$. Hence,\begin{align*}
    \sum_{\substack{\overline{\Gamma}_1\in\mathscr{E}\\\partial\overline{\Gamma}_1=\Gamma'_1}}e^{-\beta H_h(\overline{\Gamma}_1)}=\sum_{\substack{\overline{\Gamma}_1\in\mathscr{E}\\\partial\overline{\Gamma}_1=\Gamma'_1}}\prod_{\gamma_1\in\Gamma'_1}e^{-\beta H_{h,\gamma_1}(\overline{\Gamma}_1)}\sum_{X_1\in\Pt(\Gamma'_1)}\prod_{\Gamma_1\in X_1}\sum_{G_1\in\Gr(\Gamma_1)}\prod_{e_1\in E(G_1)}\left(e^{\beta\Phi_{e_1}(\overline{\Gamma}_1)}-1\right).\end{align*}
    
Let us comment on the chosen notation. We write $X_1$ in the hope that the reader identifies it with the same notation used earlier, where a graphical description of compatible collections of polymers is provided. 

Back to the decomposition algorithm: note that, for each $X_1\in\Pt(\Gamma'_1)$, there is a natural bijection between $\{\overline{\Gamma}_1\in\mathscr{E}:\partial\overline{\Gamma}_1=\Gamma'_1\}$ and $\times_{\Gamma_1\in X_1}\{\overline{\Gamma}(\Gamma_1)\in\mathscr{E}:\partial(\overline{\Gamma}(\Gamma_1))=\Gamma_1\}$; the canonical identification being \begin{equation*}
    (\overline\Gamma(\Gamma_1):\Gamma_1\in X_1)\mapsto\overline{\Gamma}_1=\sqcup_{\Gamma_1\in X_1}\overline{\Gamma}(\Gamma_1),
\end{equation*} whose inverse is given by \begin{equation*}
    \overline{\Gamma}_1\mapsto\left(\iota_{\Gamma_1}(\overline{\Gamma}_1):\Gamma_1\in X_1\right).
\end{equation*} This bijection leaves the values of $H_{h,\gamma_1}(\overline{\Gamma}_1)$ and $\Phi_{e_1}(\overline{\Gamma}_1)$ invariant. That is, if $\gamma_1\in\Gamma_1$, then \begin{align*}
    H_{h,\gamma_1}(\overline{\Gamma}_1)=H_{h,\gamma_1}(\overline{\Gamma}(\Gamma_1)).
\end{align*}Similarly, if $e_1\in\mathrm{P}_2(\Gamma_1)$, then \begin{align*}
    \Phi_{e_1}(\overline{\Gamma}_1)=\Phi_{e_1}(\overline{\Gamma}(\Gamma_1)).
\end{align*}
In this case, we arrive at  
\begin{equation}\label{eq: iteration-polymer-factorization}
    \begin{split}
        \sum_{\substack{\overline{\Gamma}_1\in\mathscr{E}\\\partial\overline{\Gamma}_1=\Gamma'_1}}\hspace{-0.10cm}e^{-\beta H_h(\overline{\Gamma}_1)}\hspace{-0.05cm}&=\hspace{-0.25cm}\sum_{X_1\in\Pt(\Gamma'_1)}\prod_{\Gamma_1\in X_1}\hspace{-0.10cm}\left(\sum_{G\in\Gr(\Gamma)}\left\langle\prod_{e\in E(G)}\left[e^{\beta\Phi_e(\cdot)}-1\right]\right\rangle
        \right)Z_{\beta,h}(\Gamma_1)\hspace{-0.05cm}\\
        &=\hspace{-0.25cm}\sum_{X_1\in\Pt(\Gamma'_1)}\prod_{\Gamma_1\in X_1}\hspace{-0.10cm}z^+_{\beta,h}(\Gamma_1 )Z^*_{\beta,h}(\Gamma_1).
    \end{split}
\end{equation}On the other hand, for each $\gamma_1\in\Gamma_1$, we have \begin{align*}
    Z^*_{\beta,h}(\gamma_1)&=\sum_{\substack{\overline{\Gamma}_1\in\mathscr{E}\\\partial\overline{\Gamma}_1=\{\gamma_1\}}}e^{-\beta H^*_{h,\gamma_1}(\overline{\Gamma}_1)}=1+\sum_{\substack{\Gamma'_2\in\mathscr{P}\\\Gamma'_2\in\mathscr{I}(\gamma_1)}}\sum_{\substack{\overline{\Gamma}_2\in\mathscr{E}\\\partial\overline{\Gamma}_2=\Gamma'_2}}e^{-\beta H_{h}(\overline{\Gamma}_2)}.
\end{align*}Since $\mathscr{I}(\gamma_1)$ might be empty, the sum can be $0$. This happens if, for example, $\mathrm{diam}(\gamma_1)\leq 2M$. If not, we proceed with the decomposition by applying Equation~\eqref{eq: iteration-polymer-factorization}. We have thus provided an algorithm that writes the initial partition function in terms of a gas of polymers.\end{proof}

\begin{remark}\label{rem: decomposition-algorithm}
    Firstly, note that\begin{equation*}
        \gamma_{k+1}\in\Gamma'_{k+1}\in\mathscr{I}(\gamma_k)\Rightarrow\mathrm{diam(\gamma_{k+1})}\leq \frac{1}{2M}\mathrm{diam}(\gamma_k)<\frac{1}{2}\mathrm{diam}(\gamma_k).
    \end{equation*} This means that this algorithm ends after at most $\log_2\mathrm{diam}(\Lambda)$ steps. In this case, what \textit{does} such an algorithm produce? Certainly, it cannot produce $z^+_{\beta,h}(X)$, lest $X\in\mathscr{X}_\Lambda$. That is, it does not produce a collection of polymers that are \textit{not} compatible. Hence, we obtain an expression of the type \begin{equation*}
        Z^+_{\Lambda,\beta,h}=1+\sum_{X\in\mathscr{X}_\Lambda}c_\Lambda(X)\prod_{\Gamma\in X}z^+_{\beta,h}(\Gamma).
    \end{equation*}Since $c_\Lambda(\emptyset)=0$, we may naturally identify "$1$" outside the summation with $z_{\beta,h}(\emptyset)$. 

Consider now $\emptyset\neq X\in\mathscr{X}_\Lambda$. Is $z_{\beta,h}(X)$ produced by the algorithm described earlier? How many times? The subjacent order to any collection of compatible polymers described after Remark~\ref{rem: characteristic-function-polymer} guaranties that should $z_{\beta,h}(X)$ appear, it does so only once. The fact that it \textit{does} appear is a consequence of the term "$+1$" in $Z_{\beta,h}^*(\gamma)$, which gives us an arbitrary stopping step as we run the decomposition algorithm described in the proof.
\end{remark}

\subsection{The formal representation of the pressure}

The representation of the model in terms of a gas of polymers is useful as long as the series computing the pressure of the system is absolutely convergent. That is, if the following formal series converges absolutely:
\begin{equation}\label{eq: pressure-polymer-gas}
    P_{\beta,h}(\Lambda)\coloneqq\frac{1}{\lvert\Lambda\rvert}\log Z^+_{\Lambda,\beta,h}=\frac{1}{\lvert\Lambda\rvert}\sum_{n=1}^\infty\frac{1}{n!}\sum_{\mathbf{\Gamma}\in\mathscr{P}^n_\Lambda}\phi_n(\mathbf{\Gamma})\prod_{i=1}^nz^+_{\beta, h}(\Gamma_i),
\end{equation} where $\phi_n:\mathscr{P}_\Lambda^n\rightarrow\mathbb{R}$ denotes a Ursell's function\begin{equation*}\label{eq: ursell_function}
    \phi_n(\mathbf{\Gamma})\coloneqq\sum_{G\in\Gr_n}\prod_{\{i,j\}\in E(G)}\left(-\ind_{\Gamma_i\not\sim\Gamma_j}\right)=\sum_{\substack{G\in\Gr_n\\E(G)\subset E(\mathbf{\Gamma})}}(-1)^{\lvert E(G)\rvert},
\end{equation*}and \begin{equation*}\label{eq: edges-vector-Gamma}
    E(\mathbf{\Gamma})=\{\{i,j\}\in\mathrm{P}_2([n]):\Gamma_i\not\sim\Gamma_j\}.
\end{equation*}We note that $\phi_n$ is invariant under the action of the symmetric group $\mathrm{Sym}(n)$, and that $\phi_n(\mathbf{\Gamma})=0$ if the graph with vertex set $[n]$ and edge set $E(\mathbf{\Gamma})$ is not connected. One may consult \cite{procacci2023cluster}, for example, for a derivation of Equation~\eqref{eq: pressure-polymer-gas}; see Section 3.1.2.

\subsection{Upper bounds to the activity} Although the decomposition derived in  Section~\ref{subsec: polymer-gas-decomposition} is exact, there are caveats. First, proving that the pressure is analytic implies that such decomposition is true in the case of a complex-valued external field. If $A\Subset\mathbb{Z}$, then for every $\Gamma\in\mathscr{P}$ there exists $\delta=\delta(\beta,A,\Gamma)>0$ such that 
\[
    \mathbb{D}_{\delta}^A\coloneqq\{h\in\mathbb{C}^A:\lVert h\rVert_\infty\leq \delta(\Gamma)\}\ni h\mapsto z^+_{\beta,h}(\Gamma)
\]
is holomorphic. Thus, if $\Lambda\Subset\mathbb{Z}$, then there exists $\delta=\delta(\beta,A,\Lambda)>0$ such that
\[
    \mathbb{D}^A_{\delta}\ni h\mapsto P_{\beta,h}(\Lambda)
\]
is holomorphic. However, it is necessary to prove that $\delta(\beta,A)=\inf_{\Lambda}\delta(\beta,A,\Lambda)>0$.

The second pertains to the \emph{ansatz} of $z^+_{\beta,h}$, it is quite complicated: it is much more reasonable to prove the absolute convergence of the cluster expansion by studying the convergence of the pressure of another hard-core gas of polymers with a simpler activity $\bar{z}_{\beta}(\cdot)$ that bounds $z^+_{\beta,h}(\cdot)$ uniformly (provided $\lVert h\rVert_\infty$ is sufficiently small).

We first deal with the matter of holomorphism. If $\Gamma\in\mathscr{P}$ and $F:\mathscr{I}(\Gamma)\rightarrow\mathbb{C}$, then
\[
    w\mapsto f_{\beta,h}(w)\coloneqq
    \left\langle e^{wF(\cdot)-\beta E_h(\cdot)}\right\rangle_{\Gamma,\beta,0}
\]
is an entire function such that $f_{\beta,h}(0)\neq 0$ provided $\lVert h\rVert_{\infty}$ is sufficiently small. Therefore, there exists $\epsilon>0$ such that 
\[
    \mathbb{D}_\epsilon\ni w\mapsto \log f_{\beta,h}(w)
\]
is holomorphic (we also consider the principal value logarithm). Note that
\[
    \frac{\mathrm{d}(\log f_{\beta,h})}{\mathrm{d}w}|_{w=0}=\left\langle F(\cdot)\right\rangle_{\Gamma,\beta,h}.
\]
Furthermore, the same argument works when replacing $\langle\cdot\rangle_{\Gamma,\beta,0}$ with $\langle\cdot\rangle^*_{\Gamma,\beta,0}$. 

There are types of such functions that we must define. Let $\Gamma\in\mathscr{P}$ and $G\in\Gr(\Gamma)$. Define
\[
    f_{G,\beta,h}(w)\coloneqq\left\langle\exp\left(w\cdot\prod_{e\in E(G)}\left[e^{\beta\Phi_e(\cdot)}-1\right]-\beta E_h(\cdot)\right)\right\rangle_{\Gamma,\beta,0}\eqqcolon \left\langle e^{wF_{G,\beta}(\cdot)-\beta E_h(\cdot)}\right\rangle_{\Gamma,\beta,0}.
\]
The second type arises from writing $\frac{Z_{\beta,h}(\Gamma)}{Z^*_{\beta,h}(\Gamma)}$ as an expectation with respect to $\langle\cdot\rangle^*_{\Gamma,\beta,h}$. On the one hand, if $\overline{\Gamma}\in\mathscr{I}(\gamma)$, then
\[
    E_h\left(\iota_\gamma^*\left(\overline{\Gamma}\right)\right)=E_h(\gamma)-E\left(\iota_\gamma\left(\overline{\Gamma}\right)\right).
\]
On the other hand, 
\[
    (7/8)H(\gamma)\leq \sum_{x,y}4J_{xy}\ind_{x\in\I_-(\iota^*_\gamma(\overline{\Gamma}))}\ind_{y\in\I_-(\gamma)^c}=H_\gamma\left(\overline{\Gamma}\right)-H^*_\gamma\left(\overline{\Gamma}\right)\leq H(\gamma).
\]
Thus,
\[
    \frac{Z_{\beta,h}(\Gamma)}{Z^*_{\beta,h}(\Gamma)}=\left\langle\prod_{\gamma\in\Gamma}e^{\beta H^*_\gamma(\cdot)-\beta H_\gamma(\cdot)+2\beta E_h(\iota^*_\gamma(\cdot))-\beta E_h(\gamma)}\right\rangle^*_{\Gamma,\beta,h}\eqqcolon\langle G_{\Gamma,\beta,h}(\cdot)\rangle^*_{\Gamma,\beta,h},
\]
which motives the definition of
\[
    g_{\Gamma,\beta, h}(w)\coloneqq\left\langle e^{w G_{\Gamma,\beta,h}(\cdot)-\beta E_h(\cdot\setminus\Gamma)}\right\rangle^*_{\Gamma,\beta,0}.
\]
Finally,
\begin{equation}\label{eq: activity-as-derivative-of-complex-logarithms}
    z^+_{\beta,h}(\Gamma)=\sum_{G\in\Gr(\Gamma)}\frac{\mathrm{d}\log f_{G,\beta,h}}{\mathrm{d}w}|_{w=0}\frac{\mathrm{d}\log g_{\Gamma,\beta,h}}{\mathrm{d}w}|_{w=0}.
\end{equation}

\begin{lemma}\label{lem: upper-activity}
    Let $A\Subset\mathbb{Z}$, $\Gamma\in\mathscr{P}$ and $\beta\geq (1/3)+2\log 288$. If $h\in\mathbb{D}^A_\delta$, where $\delta=(12\beta\cdot \# A)^{-1}$, then 
	\begin{equation}\label{eq: upper-activity-no-field}
		\left\lvert{z^+_{\beta,h}(\Gamma)}\right\rvert\leq\prod_{\gamma\in\Gamma}e^{-\frac{1}{2}\beta H(\gamma)}\sum_{T\in\Tr(\Gamma)}\prod_{e\in E(T)}\Phi(e)\eqqcolon\bar{z}_\beta(\Gamma).
	\end{equation}
\end{lemma}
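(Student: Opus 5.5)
We bound the two factors of the activity \eqref{eq: z-activity} separately: the ratio $Z_{\beta,h}(\Gamma)/Z^*_{\beta,h}(\Gamma)$ will supply $\prod_{\gamma\in\Gamma}e^{-\frac12\beta H(\gamma)}$, and the Mayer sum over connected graphs will supply the tree sum $\sum_{T\in\Tr(\Gamma)}\prod_{e\in E(T)}\Phi(e)$.

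\emph{The ratio.} We use the expectation representation
\[
\frac{Z_{\beta,h}(\Gamma)}{Z^*_{\beta,h}(\Gamma)}=\langle G_{\Gamma,\beta,h}\rangle^*_{\Gamma,\beta,h}.
\]
For real $h=0$ the bound $(7/8)H(\gamma)\le H_\gamma(\overline\Gamma)-H^*_\gamma(\overline\Gamma)$ gives pointwise $0<G_{\Gamma,\beta,0}\le\prod_\gamma e^{-\frac78\beta H(\gamma)}$. For complex $h$ with $\lVert h\rVert_\infty\le\delta=(12\beta\#A)^{-1}$, the field terms satisfy $\beta\lvert E_h(\cdot)\rvert\le 2\beta\delta\#A=1/6$, because only sites of $A$ contribute. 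So each Gibbs weight changes by a factor $e^{\pm1/6}$ in modulus and by a small phase. Writing numerator and denominator as sums, we bound the numerator in modulus by $e^{c}$ times its $h=0$ counterpart. The denominator keeps real part at least $\cos(1/3)e^{-1/6}$ times its $h=0$ value, since every phase lies in $[-1/3,1/3]$. This yields
\[
\left\lvert\frac{Z_{\beta,h}}{Z^*_{\beta,h}}\right\rvert\le C\prod_\gamma e^{-\frac78\beta H(\gamma)}
\]
with an absolute $C$ (of order $e$). Alternatively, one can use the holomorphic $\log g_{\Gamma,\beta,h}$ of \eqref{eq: activity-as-derivative-of-complex-logarithms}.

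\emph{The Mayer sum.} The same argument handles $\langle\cdot\rangle_{\Gamma,\beta,h}$ with a constant. It then remains to bound
\[
\Big\lvert\sum_{G\in\Gr(\Gamma)}\prod_{e\in E(G)}\big(e^{\beta\Phi_e(\overline\Gamma)}-1\big)\Big\rvert
\]
uniformly in $\overline\Gamma$. Each $\Phi_e\ge0$, so we cannot use the Penrose alternating-sign tree-graph inequality, which needs repulsive terms. Instead we use the positive-weight tree-graph bound
\[
\sum_{G\in\Gr(V)}\prod_{e\in E(G)}\big(e^{x_e}-1\big)\le e^{\sum_e x_e}\sum_{T\in\Tr(V)}\prod_{e\in E(T)}x_e,\qquad x_e\ge 0.
\]
This follows by choosing a spanning tree for each $G$ via any fixed partition scheme and bounding the sum over the remaining edges by $\prod(1+(e^{x}-1))$, together with $e^x-1\le xe^x$.

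With $x_e=\beta\Phi_e(\overline\Gamma)$ we must control two things. First, $e^{\beta\sum_e\Phi_e}$: by Lemma~\ref{lem: energy-positive-partition-1} and Hypothesis~\ref{hyp: M-choice}, $\frac12\sum_{e}\Phi_e(\overline\Gamma)=\sum_\gamma H_\gamma(\overline\Gamma)-H(\overline\Gamma)\le\frac18\sum_\gamma H(\gamma)$ up to interior corrections. Second, $\Phi_e(\overline\Gamma)\le\Phi(e)$: interior contours only remove minus sites, and $\Phi$ is monotone in the minus interiors.

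\emph{Combining and the main obstacle.} Combining gives
\[
\lvert z^+\rvert\le C^2\beta^{\lvert\Gamma\rvert-1}\prod_\gamma e^{-(\frac78-\frac14)\beta H(\gamma)}\sum_T\prod_e\Phi(e).
\]
We absorb $C^2\beta^{\lvert\Gamma\rvert-1}$ into the spare $e^{-\frac18\beta H(\gamma)}\le e^{-\beta/4}$ per contour, using $H(\gamma)\ge2$ and the assumption $\beta\ge\frac13+2\log288$. This leaves the exponent $\frac12$. The main obstacle is this bookkeeping: we must verify that the interaction sum $\sum_e\Phi_e$ costs at most a fraction $\frac14$ of $\sum_\gamma H(\gamma)$. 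This comes from the upper half of Hypothesis~\ref{hyp: M-choice} applied iteratively by removing external contours one at a time. We must also make sure the complex-field constants do not depend on $\lvert\Gamma\rvert$, which holds because the field acts only on $A$.
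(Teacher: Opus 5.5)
Your proof is correct. Its combinatorial core matches the paper's: the overcounting spanning-tree bound $\sum_{G\in\Gr(\Gamma)}\prod_{e\in E(G)}(e^{x_e}-1)\le e^{\sum_e x_e}\sum_{T\in\Tr(\Gamma)}\prod_{e\in E(T)}x_e$ via $e^x-1\le xe^x$, the monotonicity $\Phi_e(\overline{\Gamma})\le\Phi(e)$, the factor $\prod_\gamma e^{-\frac78\beta H(\gamma)}$ from $H_\gamma-H^*_\gamma\ge\frac78 H(\gamma)$, and absorbing $\beta^{\lvert\Gamma\rvert-1}$ into a spare fraction of $\beta H(\gamma)\ge 2\beta$.

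Where you genuinely depart from the paper is in handling the complex field. The paper writes both factors of $z^+_{\beta,h}(\Gamma)$ as $w$-derivatives at $0$ of $\log f_{G,\beta,h}$ and $\log g_{\Gamma,\beta,h}$. It then applies Cauchy's estimate on disks of radius $(6\lVert F_{G,\beta}\rVert)^{-1}$ and $(6\lVert G_{\Gamma,\beta,h}\rVert)^{-1}$. This gives $\lvert z^+_{\beta,h}(\Gamma)\rvert\le\sum_{G}36\lVert F_{G,\beta}\rVert\,\lVert G_{\Gamma,\beta,h}\rVert$ with $\lVert G_{\Gamma,\beta,h}\rVert\le e^{\lvert\Gamma\rvert/6}\prod_\gamma e^{-\frac78\beta H(\gamma)}$. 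You instead compare numerators and denominators directly with their $h=0$ versions. This works because the sets $\I_-(\iota_\gamma(\overline\Gamma))$ and $\I_-(\iota^*_\gamma(\overline\Gamma))$, $\gamma\in\Gamma$, lie in the disjoint sets $\I_-(\gamma)$. Hence the \emph{total} field exponent of each configuration has modulus at most $2\beta\delta\#A=1/6$, not merely each per-contour piece. Your route is more elementary and gives an absolute constant instead of $36e^{\lvert\Gamma\rvert/6}$. It also yields nonvanishing of $Z_{\beta,h}(\Gamma)$ and $Z^*_{\beta,h}(\Gamma)$, hence holomorphy of $h\mapsto z^+_{\beta,h}(\Gamma)$, for free. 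The paper's route packages the same smallness as analyticity of $\log f$ and $\log g$ in an auxiliary variable.

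One step should be repaired, and fixing it removes what you call the main obstacle. The identity in your first item is off as written: with the interior present it involves $H(\Gamma\sqcup\overline\Gamma)$, and controlling it directly through Hypothesis~\ref{hyp: M-choice} leaves extra nonnegative interaction terms between the different $\iota^*_\gamma(\overline\Gamma)$. You do not need it. Apply your second item before exponentiating:
$\sum_{e\in\mathrm{P}_2(\Gamma)}\Phi_e(\overline\Gamma)\le\sum_{e\in\mathrm{P}_2(\Gamma)}\Phi(e)=\sum_{\gamma\in\Gamma}H(\gamma)-H(\Gamma)$.
The equality is exact by Lemma~\ref{lem: energy-positive-partition-1}, with no interior at all. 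Then $H(\Gamma)\ge\frac78\sum_{\gamma}H(\gamma)$ follows by iterating the \emph{left} inequality $H(\Gamma\setminus\gamma)+\frac78H(\gamma)\le H(\Gamma)$ of Hypothesis~\ref{hyp: M-choice}; every $\gamma\in\Gamma$ is external because $\Gamma\in\mathscr{P}$. The upper inequality you cite only gives $\sum_e\Phi(e)\ge0$. So the interaction cost is $\frac18\sum_\gamma H(\gamma)$, and your allocation of $\frac14$ is merely conservative. The exponent still closes at $\frac78-\frac14-\frac18=\frac12$, and with your $\lvert\Gamma\rvert$-independent constant the hypothesis $\beta\ge\frac13+2\log 288$ easily suffices for the absorption.
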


\begin{proof}
    We first show that if $\lVert h\rVert_\infty\leq \delta$ and $\Gamma\in\mathscr{P}$, then $h\mapsto z^+_{\beta,h}(\Gamma)$ is holomorphic.
    
    By Equation~\eqref{eq: activity-as-derivative-of-complex-logarithms} and Cauchy's estimate for the derivative of holomorphic functions,
    \[
        \left\lvert{z^+_{\beta,h}(\Gamma)}\right\rvert\leq\sum_{G\in\Gr(\Gamma)}\frac{1}{R_1}\sup_{\lvert w\rvert\leq R_1}\left\lvert \log f_{G,\beta,h}(w)\right\rvert\cdot\frac{1}{R_2}\sup_{\lvert w\rvert\leq R_2}\lvert\log g_{\Gamma,\beta,h}(w)\rvert,
    \]
    where
    \[
        R_1\coloneqq \left(6\cdot\sup_{\overline{\Gamma}\in\mathscr{I}(\Gamma)}\left\lvert F_{G,\beta}\left(\overline{\Gamma}\right)\right\rvert\right)^{-1}=(6\lVert F_{G,\beta}\rVert)^{-1}
    \]
    and
    \[
        R_2\coloneqq\left(6\cdot \sup_{\overline{\Gamma}\in\mathscr{I}(\Gamma)}\left\lvert G_{\Gamma,\beta,h}\left(\overline{\Gamma}\right)\right\rvert\right)^{-1}=(6\lVert G_{\Gamma,\beta,h}\rVert)^{-1}.
    \]
    If $\lvert w\rvert\leq 1/2$, then $\lvert \log(1+w)\rvert\leq 2\lvert w\rvert$; and $\lvert e^w-1\lvert\leq \lvert w\rvert e^{\lvert w\rvert}$. These two bounds allows us to write
    \[
        \sup_{\lvert w\rvert\leq R_1}\left\lvert\log f_{G,\beta,h}(w)\right\rvert\leq2\cdot \sup_{\lvert w\rvert\leq R_1} (\lvert w\vert\lVert F_{G,\beta}\rVert+2\beta\lvert A\rvert\delta)e^{(\lvert w\vert\lVert F_{G,\beta}\rVert+2\beta\lvert A\rvert\delta)}\leq 2(2/6)e^{2/6}\leq 1.
    \]
    The same result holds for $\sup_{\lvert w\rvert\leq R_1}\left\lvert\log g_{\Gamma,\beta,h}(w)\right\rvert$. Thus,
    \[
         \left\lvert{z^+_{\beta,h}(\Gamma)}\right\rvert\leq\sum_{G\in\Gr(\Gamma)}36\cdot \lVert F_{G,\beta,h}\rVert\cdot \lVert G_{\Gamma,\beta,h}\lVert.
    \]
    We first bound $\lVert G_{\Gamma,\beta,h}\rVert$. By the choice of $\delta>0$, if $\overline{\Gamma}\in\mathscr{I}(\Gamma)$, then
    \[
        \left\lvert 2\beta E_h\left(\iota^*_\gamma\left(\overline{\Gamma}\right)\right)-\beta E_h(\gamma)\right\rvert\leq 2\beta\lvert A\rvert \delta\leq 1/6.
    \]
    Since $H_\gamma\left(\overline{\Gamma}\right)-H^*_\gamma\left(\overline{\Gamma}\right)\leq -(7/8)H(\gamma)$,
    \[
        \lVert G_{\Gamma,\beta,h}\rVert\leq e^{\frac{1}{6}\lvert\Gamma\rvert}\cdot\prod_{\gamma\in\Gamma}e^{-\frac{7}{8}\beta H(\gamma)}.
    \]
    We now bound $\sum_{G\in\Gr(\Gamma)}\lVert F_{G,\beta,h}\rVert$. If $\Gamma\in\mathscr{P}$, $\overline{\Gamma}\in\mathscr{I}(\Gamma)$ and $e\in\mathrm{P}_2(\Gamma)$, then $\Phi_e(\overline{\Gamma})\leq \Phi(e)$. Furthermore, every connected graph $G\in\Gr(\Gamma)$ contains at least one spanning tree $T\in\Tr(\Gamma)$, therefore, 
    \begin{equation*}
        \sum_{G\in\Gr(\Gamma)}\lVert F_{G,\beta,h}\rVert\leq\sum_{T\in\Tr(\Gamma)}\prod_{e\in E(T)}\left(e^{\beta\Phi(e)}-1\right)\sum_{\substack{G\in\Gr(\Gamma)\\E(T)\subset E(G)}}\prod_{e\in E(G)\setminus E(T)}\left(e^{\beta \Phi(e)}-1\right).
    \end{equation*}The non-negativity of $\Phi(e)$ implies that\begin{equation*}
        \sum_{\substack{G\in\Gr(\Gamma)\\E(T)\subset E(G)}}\hspace{-0.25cm}\prod_{e\in E(G)\setminus E(T)}\hspace{-0.50cm}(e^{\beta\Phi(e)}-1)\leq\hspace{-0.50cm}\sum_{\substack{E\subset\mathrm{P}_2(\Gamma)\\E\cap E(T)=\emptyset}}\hspace{-0.15cm}\prod_{e\in E}(e^{\beta\Phi(e)}-1)=\hspace{-0.50cm}\prod_{e\in\mathrm{P_2}(\Gamma)\setminus E(T)}\hspace{-0.60cm}e^{\beta\Phi(e)}.
    \end{equation*}On the other hand, since $x\geq 0\Rightarrow e^x-1\leq xe^x$, we have \begin{equation*}
        \prod_{e\in E(T)}\left(e^{\beta\Phi(e)}-1\right)\leq \left(\prod_{e\in E(T)}\hspace{-0.15cm}\beta\Phi(e)\right)\exp\left(\sum_{e\in E(T)}\beta\Phi(e)\right).
    \end{equation*}Hence \begin{equation*}
        \sum_{G\in\Gr(\Gamma)}\lVert F_{G,\beta,h}\rVert\leq \sum_{T\in\Tr(\Gamma)}\prod_{e\in E(T)}\beta\Phi(e)\cdot \prod_{e\in P_2(\Gamma)}e^{\beta\Phi(e)}.
    \end{equation*}By Lemma~\ref{lem: energy-positive-partition-1} and Hypothesis~\ref{hyp: M-choice}, \begin{align*}
    \sum_{e\in\mathrm{P}_2(\Gamma)}\Phi(e)=-H(\Gamma)+\sum_{\gamma\in\Gamma}H(\gamma)\leq\frac{1}{8}\sum_{\gamma\in\Gamma}H(\gamma).
\end{align*}
Since $\inf_{\gamma\in\mathscr{C}}H(\gamma)\geq 2$, $\lvert E(T)\rvert=\lvert \Gamma\rvert-1$ and $x\geq 0\Rightarrow x\leq e^x$, we have 
\begin{equation*}
    36\cdot e^{\frac{1}{6}\lvert\Gamma\rvert}\cdot\beta^{\lvert\Gamma\rvert-1}\leq (288\cdot e^{1/6})^{\lvert \Gamma\rvert}\cdot(\beta/8)^{\lvert\Gamma\rvert}\leq \prod_{\gamma\in\Gamma}e^{\frac{1}{4}\beta H(\gamma)},
\end{equation*}
where the inequality follows from the choice $\beta
\geq 1/3+2\log(288)>8$. 

Since $-(7/8)+1/4+1/8
=1/2$, the lemma follows. This concludes the proof.
\end{proof}

\section{Sums over trees}\label{sec: sums-over-trees}

This Section should be understood as a systematization and extension of the methods employed in \cite{cluster, Cassandro.Merola.Picco.17, Cassandro.Merola.Picco.Rozikov.14} to prove the absolute convergence of the cluster expansion. Although it is the traditional method of summing over leaves of trees, in the application to the long-range Ising model, we transform trees of polymers into trees of contours. This implies a change from local, as in for each edge of the tree, (in)compatibility conditions to global compatibility conditions. We can, under this hypothesis and another appropriate hypothesis (see Hypothesis~\ref{hyp: kth-contracting-function}), remove vertices of arbitrary degree and prove a rather useful result dealing with trees of contours (see Theorem~\ref{theo: m-point-bound}). We deal with trees whose vertices \textit{and} edges are assigned weights.

Let $\mathscr{V}$ be a countable set. Consider a family of functions $v_\beta:\mathscr{V}\rightarrow\mathbb{R}_{>0}$ and another (single) symmetric function $e:\mathscr{V}^2\rightarrow\mathbb{R}_{\geq 0}$, where $\beta>1$. We call $\mathfrak{V}\coloneqq\{v_\beta:\beta> 1\}$ the \textit{vertex function family} and $e(\cdot,\cdot)$ the \textit{edge weight function}. We define the set of compatible pairs of vertices by \begin{equation*}
    \mathscr{V}_2\coloneqq\{(\vartheta_1,\vartheta_2)\in\mathscr{V}^2:e(\vartheta_1,\vartheta_2)\neq 0\}.
\end{equation*}It is a symmetric set. We demand that $\pi_1\mathscr{V}_2=\mathscr{V}$, where $\pi_1$ denotes the first projection.

Since the function $\vartheta\mapsto \log v_{\beta/2}(\vartheta)-\log v_\beta(\vartheta)$ will appear often, let us give it a name. We define the \textit{modified vertex function family} $\mathfrak{V}'\coloneqq\{v'_\beta:\beta>1\}$, where $v'_\beta:\mathscr{V}\rightarrow\mathbb{R}$ is given by \begin{equation}\label{eq: modified-vertex-function}
    v'_\beta(\vartheta)\coloneqq\log v_{\beta/2}(\vartheta)-\log v_\beta(\vartheta).
\end{equation} For $k\geq 3,$ we also define \begin{equation*}\label{eq: global-vertex-compatibility}
    \mathscr{V}_k\coloneqq\{\mathbf{v}=(\vartheta_1,...,\vartheta_k)\in\mathscr{V}^k:i\neq j\in[k]\Rightarrow(\vartheta_i,\vartheta_j)\in \mathscr{V}_2\}.
\end{equation*}
We demand that the (modified) vertex function family $\mathfrak{V}'$ satisfies the following hypothesis:

\begin{hypothesis}\label{hyp: vertex-edge-function-hypotheses}
    The function $\beta\mapsto v'_\beta(\vartheta)$ is an increasing function for all $\vartheta\in\mathscr{V}$. Furthermore, \begin{equation*}
        \lim_{\beta\rightarrow\infty}\inf_{\vartheta\in\mathscr{V}}v'_\beta(\vartheta)=\infty.
    \end{equation*}
\end{hypothesis}

\begin{remark}\label{rem: VnVn}
    In general, $\mathscr{V}_n\subsetneq\mathscr{V}^n$. Nevertheless, it is natural to write $\mathscr{V}=\mathscr{V}^1=\mathscr{V}_1$. If $\mathcal{I}$ is a finite set of indices, we may naturally define $\mathscr{V}^\mathcal{I}$ and $\mathscr{V}_\mathcal{I}$, accordingly.
\end{remark}

\begin{remark}
    In our concrete applications, the modified vertex function $v'_\beta(\cdot)$ is proportional to $\beta H(\cdot)$, see Equations~\eqref{eq: v-prime-function-polymers} and \eqref{eq: modified-vertex-function-contours}.
\end{remark}

Before proceeding, let us fix some useful notation regarding trees. 

\begin{notation}\label{not: specific-vertex-sets}
    Let $V$ be a finite set and $T\in\Tr(V)$. Let $v\in V$, we define \begin{align*}
    &E_T(v)\coloneqq\{e\in E(T):v\in e\},\\
    &N_T(v)\coloneqq\{w\in V:\{v,w\}\in E(T)\}.
\end{align*}
\end{notation}

\begin{definition}\label{def: degree-of-a-vertex}
    Let $T\in\Tr(V)$ and $v\in V$, we define $\degr_T:V\rightarrow\mathbb{N}$ by \begin{equation*}\label{eq: degree-definition}
        \degr_T(v)\coloneqq\lvert E_T(v)\rvert=\lvert N_T(v)\rvert.
    \end{equation*}That is, we denote the function that maps a vertex $v$ to its \textit{degree} (in the tree $T$) by $\degr_T$.
\end{definition}

\begin{notation}\label{not: trees-with-set-of-leaves-fixed}
    Let $V$ be a finite set. We denote the set of leaves of $T\in\Tr(V)$ by \begin{equation*}
        L(T)\coloneqq\{v\in V:\degr_T(v)=1\}.
    \end{equation*}Let $L\subset V$, we define \begin{equation*}
        \Tr(V,L)\coloneqq\{T\in\Tr(V):L(T)\subset L\}
    \end{equation*}If $V=[n]$, we write $\Tr_n(L)$ instead of $\Tr([n],L)$. If $B=[n]$ and $A=[m]$, we write $\Tr_{n,m}$ instead of $\Tr_n([m])$. Finally, we define \begin{equation*}
        \overline{\Tr}_{n,m}\coloneqq \bigsqcup_{A\subset [m]\subset B\subset[n]}\Tr(B,A)\hspace{1cm}\textrm{and}\hspace{1cm}
        \overline{\Tr}_m\coloneqq\bigcup_{n:n\geq m}\overline{\Tr}_{n,m}.
    \end{equation*}Note that $m\leq n_1\leq n_2\Rightarrow\overline{\Tr}_{n_1,m}\subset\overline{\Tr}_{n_2,m}$ and $m_1\leq m_2\Rightarrow\overline{\Tr}_{m_1}\subset\overline{\Tr}_{m_2}$.
\end{notation}


Let $A\subset [n]$ and $T\in\Tr_n$. We denote by $T[A]$ the smallest sub-tree of $T$ whose set of vertices contains $A$. If $A=[m]$, we write $T[m]$ instead of $T[[m]]$. Note that \begin{equation}\label{eq: leaves-of-smallest-subtree}
    L(T[A])\subseteq A. 
\end{equation} Let $V\Subset\mathbb{N}$ and $T\in\Tr(V)$. Define $w_\beta[T]:\mathscr{V}^V\rightarrow\mathbb{R}$ by \begin{equation*}
    w_\beta[T](\mathbf{v})\coloneqq\prod_{i\in V}v_\beta(\vartheta_i)\hspace{-0.25cm}\prod_{\{j,k\}\in E(T)}\hspace{-0.4cm}e(\vartheta_j,\vartheta_k).
\end{equation*}Note that if $\{j,k\}\in E(T)$ and $\ind_{\mathscr{V}_2}(\vartheta_j,\vartheta_k)=0$, then $w_\beta[T](\mathbf{v})=0$.

Let $V_1$ and $V_2$ be finite sets and $\phi:V_1\rightarrow V_2$ be a bijection. Define $\phi_\star:\Tr(V_1)\rightarrow\Tr(V_2)$ by
\[
    E(\phi_\star T_1)\coloneqq\{\{\phi(v_1),\phi(w_1)\}:\{v_1,w_1\}\in E(T_1)\}.
\]
Note that $\phi_\star$ is a bijection. Similarly, define $\phi_\star:\mathscr{V}^{V_1}\rightarrow\mathscr{V}^{v_2}$ by
\[
    (\phi_\star\mathbf{v})_{i_2}\coloneqq\vartheta_{\phi^{-1}(i_2)}.
\]
Thus, if $T\in\Tr(V_1)$ and $\mathbf{v}\in\mathscr{V}^{V_1}$, then
\begin{equation}\label{eq: w-transforms-with-relabeling}
    w_\beta[\phi_\star T](\phi_\star\mathbf{v})=w_\beta[T](\mathbf{v}).
\end{equation}
Let $\mathbf{w}=(\varOmega_1,...,\varOmega_m)\in\mathscr{V}^m$. Let $T\in\Tr_n$ such that $m\leq n$, we define the \textit{weight of} $(T,\mathbf{w})$ by\begin{align*}
    \overline{\iota}_\beta[\mathbf{w}](T)\coloneqq\sum_{\varphi\in \mathrm{Inj}(m,n)}\frac{1}{n!}\sum_{\mathbf{v}\in\mathscr{V}^n}\prod_{l=1}^m\ind_{\vartheta_{\varphi(l)}=\varOmega_l}w_\beta(T,\mathbf{v}),
\end{align*} where $\mathrm{Inj}(m,n)$ denotes the set of all injections from $[m]$ to $[n]$.

We have just defined the usual weight seen in a cluster expansion. Let us now define another one, which takes into account the global compatibility condition announced earlier, which holds for the trees of contours, our ultimate interest. We define the \textit{modified weight of} $(T,\mathbf{w})$ by \begin{equation*}
    \iota_\beta[\mathbf{w}](T)\coloneqq\sum_{\varphi\in \mathrm{Inj}(m,n)}\frac{1}{n!}\sum_{\mathbf{v}\in\mathscr{V}_n}\prod_{l=1}^m\ind_{\vartheta_{\varphi(l)}=\varOmega_l}w_\beta(T,\mathbf{v}).
\end{equation*}Note the change in the sum from $\mathbf{v}\in\mathscr{V}^n$ to $\mathbf{v}\in\mathscr{V}_n$.

\begin{figure}[hbt!]
    \centering
    \tikzset{every picture/.style={line width=0.75pt}} 

\begin{tikzpicture}[x=0.75pt,y=0.75pt,yscale=-1,xscale=1]

\draw   (55,60) .. controls (55,54.48) and (59.48,50) .. (65,50) .. controls (70.52,50) and (75,54.48) .. (75,60) .. controls (75,65.52) and (70.52,70) .. (65,70) .. controls (59.48,70) and (55,65.52) .. (55,60) -- cycle ;
\draw   (30,90) .. controls (30,84.48) and (34.48,80) .. (40,80) .. controls (45.52,80) and (50,84.48) .. (50,90) .. controls (50,95.52) and (45.52,100) .. (40,100) .. controls (34.48,100) and (30,95.52) .. (30,90) -- cycle ;
\draw   (80,90) .. controls (80,84.48) and (84.48,80) .. (90,80) .. controls (95.52,80) and (100,84.48) .. (100,90) .. controls (100,95.52) and (95.52,100) .. (90,100) .. controls (84.48,100) and (80,95.52) .. (80,90) -- cycle ;
\draw   (55,120) .. controls (55,114.48) and (59.48,110) .. (65,110) .. controls (70.52,110) and (75,114.48) .. (75,120) .. controls (75,125.52) and (70.52,130) .. (65,130) .. controls (59.48,130) and (55,125.52) .. (55,120) -- cycle ;
\draw   (30,150) .. controls (30,144.48) and (34.48,140) .. (40,140) .. controls (45.52,140) and (50,144.48) .. (50,150) .. controls (50,155.52) and (45.52,160) .. (40,160) .. controls (34.48,160) and (30,155.52) .. (30,150) -- cycle ;
\draw   (80,150) .. controls (80,144.48) and (84.48,140) .. (90,140) .. controls (95.52,140) and (100,144.48) .. (100,150) .. controls (100,155.52) and (95.52,160) .. (90,160) .. controls (84.48,160) and (80,155.52) .. (80,150) -- cycle ;
\draw    (40,100) -- (65,110) ;
\draw    (65,130) -- (90,140) ;
\draw    (65,130) -- (40,140) ;
\draw    (65,70) -- (90,80) ;
\draw    (65,70) -- (40,80) ;
\draw    (120,90) .. controls (159.6,60.3) and (179.6,118.81) .. (218.81,90.87) ;
\draw [shift={(220,90)}, rotate = 143.13] [color={rgb, 255:red, 0; green, 0; blue, 0 }  ][line width=0.75]    (10.93,-3.29) .. controls (6.95,-1.4) and (3.31,-0.3) .. (0,0) .. controls (3.31,0.3) and (6.95,1.4) .. (10.93,3.29)   ;
\draw  [color={rgb, 255:red, 208; green, 2; blue, 27 }  ,draw opacity=1 ] (265.56,60.33) .. controls (265.56,54.81) and (270.04,50.33) .. (275.56,50.33) .. controls (281.08,50.33) and (285.56,54.81) .. (285.56,60.33) .. controls (285.56,65.86) and (281.08,70.33) .. (275.56,70.33) .. controls (270.04,70.33) and (265.56,65.86) .. (265.56,60.33) -- cycle ;
\draw  [color={rgb, 255:red, 208; green, 2; blue, 27 }  ,draw opacity=1 ] (240.56,90.33) .. controls (240.56,84.81) and (245.04,80.33) .. (250.56,80.33) .. controls (256.08,80.33) and (260.56,84.81) .. (260.56,90.33) .. controls (260.56,95.86) and (256.08,100.33) .. (250.56,100.33) .. controls (245.04,100.33) and (240.56,95.86) .. (240.56,90.33) -- cycle ;
\draw  [color={rgb, 255:red, 208; green, 2; blue, 27 }  ,draw opacity=1 ] (290.56,90.33) .. controls (290.56,84.81) and (295.04,80.33) .. (300.56,80.33) .. controls (306.08,80.33) and (310.56,84.81) .. (310.56,90.33) .. controls (310.56,95.86) and (306.08,100.33) .. (300.56,100.33) .. controls (295.04,100.33) and (290.56,95.86) .. (290.56,90.33) -- cycle ;
\draw  [color={rgb, 255:red, 208; green, 2; blue, 27 }  ,draw opacity=1 ] (265.56,120.33) .. controls (265.56,114.81) and (270.04,110.33) .. (275.56,110.33) .. controls (281.08,110.33) and (285.56,114.81) .. (285.56,120.33) .. controls (285.56,125.86) and (281.08,130.33) .. (275.56,130.33) .. controls (270.04,130.33) and (265.56,125.86) .. (265.56,120.33) -- cycle ;
\draw   (240.56,150.33) .. controls (240.56,144.81) and (245.04,140.33) .. (250.56,140.33) .. controls (256.08,140.33) and (260.56,144.81) .. (260.56,150.33) .. controls (260.56,155.86) and (256.08,160.33) .. (250.56,160.33) .. controls (245.04,160.33) and (240.56,155.86) .. (240.56,150.33) -- cycle ;
\draw   (290.56,150.33) .. controls (290.56,144.81) and (295.04,140.33) .. (300.56,140.33) .. controls (306.08,140.33) and (310.56,144.81) .. (310.56,150.33) .. controls (310.56,155.86) and (306.08,160.33) .. (300.56,160.33) .. controls (295.04,160.33) and (290.56,155.86) .. (290.56,150.33) -- cycle ;
\draw [color={rgb, 255:red, 208; green, 2; blue, 27 }  ,draw opacity=1 ]   (250.56,100.33) -- (275.56,110.33) ;
\draw    (275.56,130.33) -- (300.56,140.33) ;
\draw    (275.56,130.33) -- (250.56,140.33) ;
\draw [color={rgb, 255:red, 208; green, 2; blue, 27 }  ,draw opacity=1 ]   (275.56,70.33) -- (300.56,80.33) ;
\draw [color={rgb, 255:red, 208; green, 2; blue, 27 }  ,draw opacity=1 ]   (275.56,70.33) -- (250.56,80.33) ;
\draw  [dash pattern={on 0.84pt off 2.51pt}] (20,40) -- (320,40) -- (320,170) -- (20,170) -- cycle ;

\draw (134,106.4) node [anchor=north west][inner sep=0.75pt]  [font=\scriptsize]  {$\mathbf{w} =( \varOmega _{1} ,\varOmega _{2})$};
\draw (97,52.4) node [anchor=north west][inner sep=0.75pt]    {$T$};
\draw (60,53.4) node [anchor=north west][inner sep=0.75pt]  [font=\small]  {$1$};
\draw (85,83.4) node [anchor=north west][inner sep=0.75pt]  [font=\small]  {$3$};
\draw (35,83.4) node [anchor=north west][inner sep=0.75pt]  [font=\small]  {$2$};
\draw (60,113.4) node [anchor=north west][inner sep=0.75pt]  [font=\small]  {$4$};
\draw (35,143.4) node [anchor=north west][inner sep=0.75pt]  [font=\small]  {$5$};
\draw (85,143.4) node [anchor=north west][inner sep=0.75pt]  [font=\small]  {$6$};
\draw (147,124.4) node [anchor=north west][inner sep=0.75pt]  [font=\scriptsize]  {$ \begin{array}{l}
\varphi ( 1) =4\\
\varphi ( 2) =3
\end{array}$};
\draw (270.56,53.73) node [anchor=north west][inner sep=0.75pt]  [font=\small,color={rgb, 255:red, 208; green, 2; blue, 27 }  ,opacity=1 ]  {$1$};
\draw (245.56,83.73) node [anchor=north west][inner sep=0.75pt]  [font=\small,color={rgb, 255:red, 208; green, 2; blue, 27 }  ,opacity=1 ]  {$2$};
\draw (267.56,115.73) node [anchor=north west][inner sep=0.75pt]  [font=\scriptsize,color={rgb, 255:red, 208; green, 2; blue, 27 }  ,opacity=1 ]  {$\varOmega _{1}$};
\draw (245.56,143.73) node [anchor=north west][inner sep=0.75pt]  [font=\small]  {$5$};
\draw (295.56,143.73) node [anchor=north west][inner sep=0.75pt]  [font=\small]  {$6$};
\draw (292.56,85.23) node [anchor=north west][inner sep=0.75pt]  [font=\scriptsize,color={rgb, 255:red, 208; green, 2; blue, 27 }  ,opacity=1 ]  {$\varOmega _{2}$};
\draw (214,54.4) node [anchor=north west][inner sep=0.75pt]  [font=\footnotesize,color={rgb, 255:red, 208; green, 2; blue, 27 }  ,opacity=1 ]  {$T[ \mathrm{Im}\varphi ]$};

\end{tikzpicture}
    \caption{Visual description of the injection of vertices on a tree.}
    \label{fig: injecting-vertices-on-a-tree}
\end{figure}
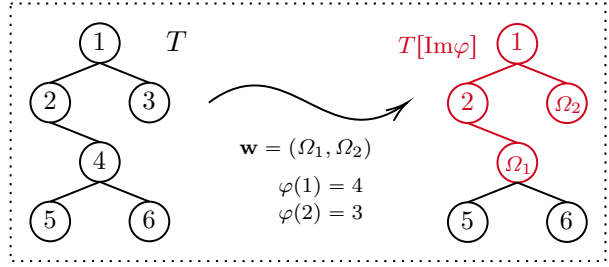

The term $(n!)^{-1}$ comes from the labeling introduced in the set of trees, and without which there is no hope of computing sums of trees. In which case, fixing the value of specific vertices can be described by injections. For a graphical description, see Figure~\ref{fig: injecting-vertices-on-a-tree}. 

\begin{lemma}\label{lem: mute-indices-cluster-weights}
    Let $\mathbf{w}\in\mathscr{V}^m$ and $m\leq n$. Then \begin{equation}\label{eq: bar-w-normal-w-mute-indices}
    \begin{split}
        \sum_{T\in\Tr_n}\overline{\iota}_\beta[\mathbf{w}](T)&=\sum_{T\in\Tr_n}\frac{1}{(n-m)!}\sum_{\mathbf{v}\in\mathscr{V}^n}\ind_{\mathbf{v}|_{[m]}=\mathbf{w}}w_\beta[T](\mathbf{v})\eqqcolon\sum_{T\in\Tr_n}\overline{\W}_\beta[\mathbf{w}](T),\\
        \sum_{T\in\Tr_n}\iota_\beta[\mathbf{w}](T)&=\sum_{T\in\Tr_n}\frac{1}{(n-m)!}\sum_{\mathbf{v}\in\mathscr{V}_n}\ind_{\mathbf{v}|_{[m]}=\mathbf{w}}w_\beta[T](\mathbf{v})\eqqcolon\sum_{T\in\Tr_n}\W_\beta[\mathbf{w}](T).
    \end{split}
    \end{equation}
\end{lemma}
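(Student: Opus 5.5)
The plan is a relabeling argument based on the invariance \eqref{eq: w-transforms-with-relabeling}. I treat the barred identity; the unbarred one is identical, provided relabeling preserves $\mathscr{V}_n$, which holds because $\mathscr{V}_n$ is defined by a pairwise condition invariant under permutations of coordinates.

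First, for a fixed injection $\varphi\in\mathrm{Inj}(m,n)$, pick a permutation $\pi\in\mathrm{Sym}(n)$ with $\pi(\varphi(l))=l$ for every $l\in[m]$. Exactly $(n-m)!$ such permutations exist, namely those agreeing with the prescribed values on $\varphi([m])$.

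Since $\pi_\star$ is a bijection of $\Tr_n$ and of $\mathscr{V}^n$ (and of $\mathscr{V}_n$), we substitute $T=\pi_\star^{-1}T'$ and $\mathbf{v}=\pi_\star^{-1}\mathbf{v}'$. Under this substitution the constraint $\vartheta_{\varphi(l)}=\varOmega_l$ becomes $\vartheta'_l=\varOmega_l$, i.e.\ $\mathbf{v}'|_{[m]}=\mathbf{w}$, and $w_\beta[T](\mathbf{v})=w_\beta[T'](\mathbf{v}')$ by \eqref{eq: w-transforms-with-relabeling}. Hence
\[
\sum_{T\in\Tr_n}\sum_{\mathbf{v}\in\mathscr{V}^n}\prod_{l=1}^m\ind_{\vartheta_{\varphi(l)}=\varOmega_l}w_\beta[T](\mathbf{v})=\sum_{T\in\Tr_n}\sum_{\mathbf{v}\in\mathscr{V}^n}\ind_{\mathbf{v}|_{[m]}=\mathbf{w}}w_\beta[T](\mathbf{v}),
\]
and this quantity is independent of $\varphi$.

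Summing over $\varphi$ then gives a factor $\lvert\mathrm{Inj}(m,n)\rvert=n!/(n-m)!$. Combined with the prefactor $1/n!$ in the definition of $\overline{\iota}_\beta$, this produces the $1/(n-m)!$ in \eqref{eq: bar-w-normal-w-mute-indices}.

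The sums are over nonnegative terms, so the interchanges of summation are justified by Tonelli, even if the sums are infinite. The only delicate points are bookkeeping: checking the direction of $\pi_\star$ on vectors, given the convention $(\phi_\star\mathbf{v})_{i_2}=\vartheta_{\phi^{-1}(i_2)}$, and that relabeling maps $\mathscr{V}_n$ onto itself. Neither is a real obstacle.
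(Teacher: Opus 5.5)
Your proof is correct and follows the paper's argument. The paper likewise extends each injection $\varphi$ to a permutation $\tilde\varphi\in\mathrm{Sym}(n)$ (your $\pi$ is $\tilde\varphi^{-1}$), changes variables via $(T,\mathbf{v})\mapsto(\tilde\varphi^{-1}_\star T,\tilde\varphi^{-1}_\star\mathbf{v})$ using \eqref{eq: w-transforms-with-relabeling}, collects the factor $\lvert\mathrm{Inj}(m,n)\rvert/n!=1/(n-m)!$, and handles the unbarred case through the permutation invariance of $\mathscr{V}_n$; your index check $\vartheta'_l=\vartheta_{\pi^{-1}(l)}=\vartheta_{\varphi(l)}$ is right.
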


\begin{proof}
    Let $\mathrm{Inj}(m,n)\ni\varphi\mapsto\tilde{\varphi}\in\mathrm{Sym}(n)$ such that $\tilde{\varphi}|_{[m]}=\varphi$. Hence,
    \[
        \begin{split}
            \sum_{T\in\Tr_n}\overline{\iota}_\beta[\mathbf{w}](T)&=\sum_{\varphi\in\mathrm{Inj}(m,n)}\frac{1}{n!}\sum_{T\in\Tr_n}\sum_{\mathbf{v}\in\mathscr{V}^n}\prod_{l\in[m]}\ind_{(\tilde{\varphi}^{-1}_\star\mathbf{v})_l=\varOmega_l}w_\beta[\tilde{\varphi}_\star(\tilde{\varphi}_\star^{-1}T)](\tilde{\varphi}_\star(\tilde{\varphi}_\star^{-1}\mathbf{v}))\\
            &=\frac{\lvert\mathrm{Inj}(m,n)\rvert}{n!}\sum_{T\in\Tr_n}\sum_{\mathbf{v}\in\mathscr{V}^n}\prod_{l\in[m]}\ind_{(\tilde{\varphi}^{-1}_\star\mathbf{v})_l=\varOmega_l}w_\beta[T](\mathbf{v}),
        \end{split}
    \]
    where the second equality from the change of variables $(T,\mathbf{v})\mapsto(\tilde\varphi^{-1}_\star T,\tilde{\varphi}_\star^{-1}\mathbf{v})$ and Equation~\eqref{eq: w-transforms-with-relabeling}. The same proof holds for $\sum_{T\in\Tr_n}\iota[\mathbf
    w](T)$ because $\mathscr{V}_n$ is invariant under index permutations. The lemma has thus been proved.
    
\end{proof}

Given $\mathbf{w}\in\mathscr{V}^m$, our objective is to estimate\begin{align}\label{eq: mute-indices-cluster-formula}
    \overline{\W}_\beta(\mathbf{w})\coloneqq\sum_{n=m}^\infty\overline{\W}_{\beta,n}(\mathbf{w})\coloneqq\sum_{n=m}^\infty\sum_{T\in\Tr_n}\overline{\W}_\beta[\mathbf{w}](T)
\end{align}
and
\[
    {\W}_\beta(\mathbf{w})\coloneqq\sum_{n=m}^\infty{\W}_{\beta,n}(\mathbf{w})\coloneqq\sum_{n=m}^\infty\sum_{T\in\Tr_n}{\W}_\beta[\mathbf{w}](T)
\]

\begin{definition}\label{def: families-of-sums}
    We call $\mathfrak{\overline{W}}\coloneqq\{\overline{\W}_\beta(\mathbf{w}):\beta>1, m\geq 1, \mathbf{w}\in\mathscr{V}^m\}$ the \textit{family of locally compatible sums} associated to $(\mathscr{V},\mathfrak{V},e)$.

    Similarly, we call $\mathfrak{W}\coloneqq\{\W_\beta(\mathbf{w}):\beta>1, m\geq 1, \mathbf{w}\in\mathscr{V}_m\}$ the \textit{family of globally compatible sums} associated to $(\mathscr{V},\mathfrak{V},e)$.
\end{definition}

\subsection{One-vertex estimates}\label{subsec: one-vertex-estimates}

Let us begin by tackling the case $m=1$. Let $\mathbf{w}=\varOmega\in\mathscr{V}$. Since we can see any $T\in\Tr_n$ as a tree rooted in $1$, there is a one-to-one correspondence between $T\in\Tr_n$ and \begin{enumerate}
    \item a number $t\in [n-1]$, that is, the generation of the rooted tree $T$;
    \item an \textit{ordered} partition $(P_1,...,P_t)$ of $[2,n]$;
    \item a family of functions $\varphi_p:P_p\rightarrow P_{p-1}$, with the convention that $P_0\equiv\{1\}$. That is, the edges of the tree $T$ are given by $\{k_p,\varphi_p(k_p)\}$, where $p\in [t]$ and $k_p\in P_p$.
\end{enumerate}
Hence, by separating the sum over $T\in\Tr_1$ and making the change of indices $n\mapsto n-1$ (this implies the convention $P_0\equiv\{0\}$ and $\vartheta_0\equiv\varOmega$), we obtain \begin{align*}
    \frac{\overline{\W}_\beta(\varOmega)}{v_\beta(\varOmega)}=1+\sum_{n=1}^\infty\frac{1}{n!}\sum_{t=1}^n\sum_{P_1,...,P_t}\sum_{\varphi_1,...,\varphi_t}\sum_{\mathbf{v}\in\mathscr{V}^n}\prod_{p=1}^t\prod_{k_p\in P_p}v_\beta(\vartheta_{k_p})e(\vartheta_{k_p},\vartheta_{\varphi_p(k_p)}).
\end{align*}
Since every index belonging to $[n]$ is mute, this sum is much easier to compute than it might look at first glance. One must simply count the number of ordered partitions of $[n]$ and note that \begin{equation*}
    \sum_{\varphi\in B^A}\prod_{a\in A}x_{a,\varphi(a)}=\prod_{a\in A}\sum_{b\in B} x_{a,b}.
\end{equation*} Thus, the series is equal to\begin{equation*}
    \sum_{n=1}^\infty\sum_{t=1}^n\sum_{\substack{n_1+...+n_t=n\\n_i\geq 1}}\prod_{p=1}^t\frac{1}{n_p!}\prod_{k_p=1}^{n_p}\sum_{l_{p-1}=1}^{n_{p-1}}\sum_{\vartheta_{p,k_p}}v_\beta(\vartheta_{p,k_p})e(\vartheta_{p,k_p},\vartheta_{p-1,l_{p-1}}).
\end{equation*} The first product should be understood as compute everything involving the index $p=t$, then everything involving the index $p=t-1$, etc. Exchanging the order of summation and summing over the superfluous index $n$, we obtain \begin{align*}\label{eq: bar-S-definition-leaf-op}
    \sum_{t=1}^\infty\prod_{p=1}^t\sum_{n_p=1}^\infty\frac{1}{n_p!}\prod_{k_p=1}^{n_p}\sum_{l_{p-1}=1}^{n_{p-1}}\sum_{\vartheta_{p,k_p}}v_\beta(\vartheta_{p,k_p})e(\vartheta_{p,k_p},\vartheta_{p-1,l_{p-1}})\eqqcolon\sum^\infty_{t=1}\rS_{\beta,t}(\varOmega),
\end{align*}with the conventions $n_0\equiv1$, and $\vartheta_{0,1}\equiv\varOmega$. 

Define the \textit{leaf pruning function family} $\mathfrak{R}^1=\{\mathcal{R}^1_\beta:\beta>1\}$, where $\mathcal{R}^1_\beta:\mathscr{V}\rightarrow\mathbb{R}_{\geq 0}$ is given by\begin{equation}\label{eq: leaf-pruning-operator}
    \mathcal{R}^1_\beta (\vartheta_1)\coloneqq\sum_{\vartheta\in\mathscr{V}}v_\beta
    (\vartheta)e(\vartheta,\vartheta_1).
\end{equation}Suppose that this function satisfies the following hypothesis.

\begin{hypothesis}\label{hyp: leaf-pruning-operator}
    The family function $\mathfrak{V}'$ dominates the family function $\mathfrak{R}^1$. That is,\begin{equation*}
        \lim_{\beta\rightarrow\infty}\sup_{\vartheta\in\mathscr{V}}\left\lvert\frac{\mathcal{R}^1_\beta(\vartheta)}{v'_\beta(\vartheta)}\right\rvert=0,
    \end{equation*}where we recall the definition of the modified vertex function present in Equation~\ref{eq: modified-vertex-function}.
\end{hypothesis}

\begin{remark}
    This is very similar to the convergence criterion given by Koteck\'y-Preiss in \cite{Koteck1986}. On the other hand, our computation is much closer to that of Fernand\'ez-Procacci in \cite{Fernndez2007}.
\end{remark}

\begin{lemma}\label{lem: reducing-sums-over-trees}
    For all $\eta>0$, there exists $\beta_0\coloneqq\beta_0(\eta)>0$ such that if $\beta\geq\beta_0$ and $\varOmega\in\mathscr{V}$, then \begin{equation}\label{eq: reducing-sums-over-trees}
    \overline{\W}_\beta(\varOmega)\leq \eta v_{\beta/2}(\varOmega).
\end{equation}
\end{lemma}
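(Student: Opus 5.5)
Starting from the generation expansion already derived, $\overline{\W}_\beta(\varOmega)/v_\beta(\varOmega)=1+\sum_{t\geq1}\rS_{\beta,t}(\varOmega)$, the natural route is to bound the whole series by an exponential generating function and use Hypothesis~\ref{hyp: leaf-pruning-operator} to absorb the exponent into the gap $v'_\beta$. Concretely, I would prove by induction on the number of generations (pruning the deepest generation first) the following claim: for $\beta\geq\beta_0$ and every $\vartheta\in\mathscr{V}$, the total weight of all rooted trees of depth at most $t$ hanging from a root labelled $\vartheta$ is bounded by
\[
    v_\beta(\vartheta)\exp\bigl(\epsilon\, v'_\beta(\vartheta)\bigr)=v_\beta(\vartheta)^{1-\epsilon}v_{\beta/2}(\vartheta)^{\epsilon},
\]
with $\epsilon\in(0,1)$ fixed (say $\epsilon=1/2$). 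Equivalently, the quantity $a_t(\vartheta)\coloneqq 1+\sum_{s\leq t}(\text{depth-}s\text{ contributions})/v_\beta(\vartheta)$ satisfies $a_t(\vartheta)\leq e^{\epsilon v'_\beta(\vartheta)}$, uniformly in $t$; monotone convergence then bounds the full sum.

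The inductive step is the standard Fernández--Procacci/Kotecký--Preiss computation. Summing over the children of the root, whose number is $n_1$ and which come with the factor $1/n_1!$, and over the subtrees hanging from each child, the factorized form of $\rS_{\beta,t}$ gives
\[
    a_{t+1}(\vartheta)\leq\exp\Bigl(\sum_{\vartheta_1}v_\beta(\vartheta_1)e(\vartheta_1,\vartheta)\,a_t(\vartheta_1)\Bigr)\leq\exp\Bigl(\sum_{\vartheta_1}v_\beta(\vartheta_1)^{1-\epsilon}v_{\beta/2}(\vartheta_1)^{\epsilon}e(\vartheta_1,\vartheta)\Bigr).
\]
The main obstacle is to compare this exponent with $\epsilon v'_\beta(\vartheta)$ using only Hypothesis~\ref{hyp: leaf-pruning-operator}, which controls $\mathcal{R}^1_\beta$ rather than the mixed sum. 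I would exploit the monotonicity in Hypothesis~\ref{hyp: vertex-edge-function-hypotheses}. Since $v'_\beta$ is increasing and tends to infinity, the exponent $\epsilon$ can be rerouted through a rescaled temperature: writing $v_\beta^{1-\epsilon}v_{\beta/2}^{\epsilon}=v_\beta e^{\epsilon v'_\beta}$, I would aim to bound it by $v_{\beta'}$ for some $\beta'$ comparable to $\beta$. Failing that, the fallback is to run the induction with the weaker ansatz $a_t\leq e^{v'_\beta}$, that is, with $v_{\beta/2}$ in place of the mixed weight, and apply Hypothesis~\ref{hyp: leaf-pruning-operator} at the parameter $\beta/2$, obtaining $\mathcal{R}^1_{\beta/2}(\vartheta)\leq\delta\, v'_{\beta/2}(\vartheta)\leq\delta\, v'_\beta(\vartheta)$ by monotonicity, with $\delta$ as small as desired for $\beta$ large. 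In either version, the inductive step closes as soon as the exponent is at most $\epsilon v'_\beta(\vartheta)$, and this is exactly what the domination in Hypothesis~\ref{hyp: leaf-pruning-operator} supplies for large $\beta$. I expect some care to be needed in choosing which temperature the ansatz is stated at so that the induction is self-consistent.

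Finally, to obtain the factor $\eta$, I would write $\overline{\W}_\beta(\varOmega)\leq v_\beta(\varOmega)e^{\epsilon v'_\beta(\varOmega)}=v_{\beta/2}(\varOmega)e^{-(1-\epsilon)v'_\beta(\varOmega)}$. By Hypothesis~\ref{hyp: vertex-edge-function-hypotheses}, $\inf_\vartheta v'_\beta(\vartheta)\to\infty$, so choosing $\beta_0$ with $e^{-(1-\epsilon)\inf v'_{\beta_0}}\leq\eta$ and using the monotonicity in $\beta$ proves \eqref{eq: reducing-sums-over-trees} for all $\beta\geq\beta_0$.
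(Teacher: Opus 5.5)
Your argument is correct, but it is organized differently from the paper's. The paper keeps the generation-wise expansion $\overline{\W}_\beta(\varOmega)/v_\beta(\varOmega)=1+\sum_t\rS_{\beta,t}(\varOmega)$. It replaces the weights of the last generation by $v_{\beta/2}$ to get $\bar{\rS}_{\beta,t}$, and proves the halving estimate $\bar{\rS}_{\beta,t+1}\le\frac12\bar{\rS}_{\beta,t}$. To do so it sums out the deepest generation, using $e^{\sum\mathcal{R}^1_{\beta/2}}-1\le\frac12\prod e^{3\mathcal{R}^1_{\beta/2}}$, and absorbs $e^{3\mathcal{R}^1_{\beta/2}}$ into $v_\beta$ to turn it into $v_{\beta/2}$. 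Summing the geometric series gives $\overline{\W}_\beta\le 2v_\beta e^{\mathcal{R}^1_{\beta/2}}\le 2v_\beta e^{v'_\beta/2}$, and $\eta$ comes from requiring $4\eta^{-2}\le e^{v'_\beta}$.

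You instead use the exact recursion $a_{t+1}(\vartheta)=\exp\bigl(\sum_{\vartheta_1}v_\beta(\vartheta_1)e(\vartheta_1,\vartheta)a_t(\vartheta_1)\bigr)$ for depth-truncated rooted trees, which follows from the generation-wise formula by regrouping the $1/n_p!$ factors parent by parent. You combine it with a Kotecký--Preiss-type ansatz. This is shorter and yields directly $\overline{\W}_\beta\le v_\beta e^{\mathcal{R}^1_{\beta/2}}$, without the factor $2$ or the crude exponential inequality. The paper's halving format, on the other hand, is the template it reuses for the vertex-by-vertex pruning in Theorem~\ref{theo: m-point-bound}.

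The step you flag as the main obstacle is not one. Once $\beta$ is large enough that $\inf_\vartheta v'_\beta(\vartheta)\ge 0$, you have $v_\beta e^{\epsilon v'_\beta}\le v_\beta e^{v'_\beta}=v_{\beta/2}$ for every $\epsilon\le 1$. So the exponent is at most $\mathcal{R}^1_{\beta/2}(\vartheta)\le\delta\, v'_{\beta/2}(\vartheta)\le\delta\, v'_\beta(\vartheta)$, by Hypothesis~\ref{hyp: leaf-pruning-operator} at $\beta/2$ together with monotonicity. The $\epsilon=1/2$ induction therefore closes as soon as $\delta\le 1/2$.

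In your fallback with the ansatz $e^{v'_\beta}$, the factor $\eta$ must come from the sharper output $e^{\delta v'_\beta}$ of the inductive step. The ansatz $e^{v'_\beta}$ alone only gives $\overline{\W}_\beta\le v_{\beta/2}$, that is, $\eta=1$.
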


\begin{remark}
    The natural graphical representation of Lemma~\ref{lem: reducing-sums-over-trees} is the contraction of trees into a single vertex (with fixed value $\vartheta=\varOmega$).
\end{remark}

\begin{proof}
    By Hypothesis~\ref{hyp: vertex-edge-function-hypotheses}, we have  \begin{align*}
    \rS_{\beta,t}(\varOmega)\leq&\prod_{p=1}^{t-1}\sum_{n_p=1}^\infty\frac{1}{n_p!}\prod_{k_p=1}^{n_p}\left(\sum_{l_{p-1}=1}^{n_{p-1}}\sum_{\vartheta_{p,k_p}}v_\beta(\vartheta_{p,k_p})e(\vartheta_{p,k_p},\vartheta_{p-1,l_{p-1}})\right)\\&\times \sum_{n_t=1}^\infty\frac{1}{n_t!}\prod_{k_t=1}^{n_t}\sum_{l_{t-1}=1}^{n_{t-1}}\sum_{\vartheta_{t,k_t}}v_{\beta/2}(\vartheta_{t,k_t})e(\vartheta_{t,k_t},\vartheta_{t-1,l_{t-1}})\eqqcolon\bar{\rS}_{\beta,t}(\varOmega).
\end{align*} We prove the validity of Equation~\ref{eq: reducing-sums-over-trees} by showing that there exists $\beta_0>0$ such that if $\beta\geq \beta_0$, then\begin{align}\label{eq: S-bar-recurrence}
    \bar{\rS}_{\beta,t+1}(\varOmega)\leq 2^{-1}\bar{\rS}_{\beta,t}(\varOmega),
\end{align} for all $t\in\mathbb{N}$ and all $\varOmega\in\mathscr{V}$. In this case, \begin{align*}
    \frac{\overline{\W}_\beta(\varOmega)}{v_\beta(\varOmega)}\leq 1+\sum_{t=1}^\infty \bar{\rS}_{\beta,t}(\varOmega)\leq 1+\sum_{t=1}^\infty2^{-(t-1)}\bar{\rS}_{\beta,1}(\varOmega)\leq 2(1+\bar{\rS}_{\beta,1}(\varOmega)).
\end{align*}By the definition of $\bar{\rS}_{\beta,1}(\varOmega)$ and Equation~\eqref{eq: leaf-pruning-operator}, we have \begin{equation*}
\begin{split}
    1+\bar{\rS}_{\beta,1}(\varOmega)&=1+\sum_{n=1}^\infty\frac{1}{n!}\prod_{k=1}^n\sum_{\vartheta_k}v_{\beta/2}(\vartheta_k)e(\vartheta_k,\varOmega)=\sum_{n=0}^\infty\frac{1}{n!}(\mathcal{R}^1_{\beta/2}(\varOmega))^n=\exp\mathcal{R}^1_{\beta/2}(\varOmega).
    \end{split}
\end{equation*}By Hypothesis~\ref{hyp: leaf-pruning-operator}, if $\beta\geq\beta_0$ is sufficiently large, then, for every $\varOmega\in\mathscr{V}$, we have \begin{equation*}
    \mathcal{R}^1_{\beta/2}(\varOmega)\leq\frac{1}{2}v'_{\beta/2}(\varOmega)\leq\frac{1}{2} v'_\beta(\varOmega),
\end{equation*}where the second inequality is a consequence of Hypothesis~\ref{hyp: vertex-edge-function-hypotheses}. In this case, $\beta_0>0$ must be chosen so that \begin{equation*}
    2v_\beta(\varOmega)\exp(2^{-1} v'_\beta(\varOmega))\leq\eta v_{\beta/2}(\varOmega)\iff 4\eta^{-2}\leq\exp(v'_\beta(\varOmega))
\end{equation*}holds for $\beta\geq\beta_0$ and  all $\varOmega\in\mathscr{V}$. This can certainly be done by Hypothesis~\ref{hyp: vertex-edge-function-hypotheses}.

Let us now prove that if $\beta_0$ is sufficiently large, then $\bar{\rS}_{\beta,t+1}(\varOmega)\leq 2^{-1}\bar{\rS}_{\beta,t}(\varOmega)$ for all $t\geq 1$ and all $\varOmega\in\mathscr{V}$. By the definition of $\mathcal{R}^1_{\beta/2}(\cdot)$, we have 
\begin{equation*}
    \sum_{n_{t+1}=1}^{\infty}\frac{1}{n_{t+1}!}\prod_{k_{t+1}=1}^{n_{t+1}}\sum_{l_t=1}^{n_t}\sum_{\vartheta_{t+1,k_{t+1}}}v_{\beta/2}(\vartheta_{t+1,k_{t+1}})e(\vartheta_{t+1,k_{t+1}},\vartheta_{t,l_t})= e^{\sum_{l_t=1}^{n_t}\mathcal{R}^1_{\beta/2}(\vartheta_{t,l_t})}-1.
\end{equation*} Recalling that $x>0\Rightarrow(e^x-1<xe^x)\wedge(x<e^x)$, we arrive at 
\begin{equation}\label{eq: generation-eating-operation}
    \sum_{n_{t+1}=1}^{\infty}\frac{1}{n_{t+1}!}\prod_{k_{t+1}=1}^{n_{t+1}}\sum_{l_t=1}^{n_t}\sum_{\vartheta_{t+1,k_{t+1}}}v_{\beta/2}(\vartheta_{t+1,k_{t+1}})e(\vartheta_{t+1,k_{t+1}},\vartheta_{t,l_t})\leq \frac{1}{2}\prod_{l_t=1}^{n_t}e^{3\mathcal{R}^1_{\beta/2}(\vartheta_{t,l_t})}.
\end{equation}By Hypotheses~\ref{hyp: vertex-edge-function-hypotheses} and \ref{hyp: leaf-pruning-operator}, there exists $\beta_0>0$ such that if $\beta\geq\beta_0$, then \begin{align*}
    e^{3\mathcal{R}^1_{\beta/2}(\vartheta_1)}v_\beta(\vartheta_1)e(\vartheta_1,\vartheta_2)\leq v_{\beta/2}(\vartheta_1)e(\vartheta_1,\vartheta_2),
\end{align*}for all $\vartheta_1,\vartheta_2\in\mathscr{V}$. Combining this inequality with Equation~\ref{eq: generation-eating-operation}, we obtain Equation~\eqref{eq: S-bar-recurrence}. Thus, the lemma has been proved.
\end{proof}

\subsection{Two-vertex estimates}\label{subsec: two-vertex-estimates}

We now tackle the case $m=2$. Recalling Notation~\ref{not: trees-with-set-of-leaves-fixed} and Equation~\eqref{eq: leaves-of-smallest-subtree}, if $2\leq m\leq n$, we may define a  canonical application $\Psi_{n,m}:\Tr_n\mapsto\overline{\Tr}_{n,m}$ by
\begin{align*}
    \Psi_{n,m}(T)\coloneqq T[m].
\end{align*}
In particular, if $m=2$, then this application maps $\Tr_n$ to linear trees with leaves $\{1,2\}$. The matter is understanding how such an application interacts with the sums over trees.

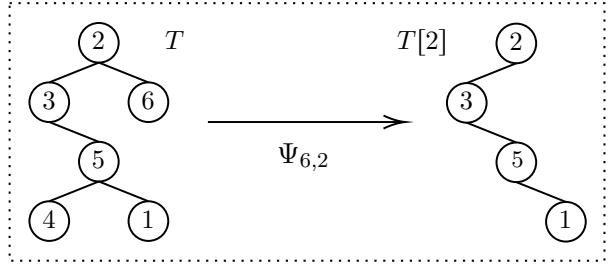
\begin{figure}[hbt!]
    \centering
    \tikzset{every picture/.style={line width=0.75pt}} 

\begin{tikzpicture}[x=0.75pt,y=0.75pt,yscale=-1,xscale=1]

\draw   (55,60) .. controls (55,54.48) and (59.48,50) .. (65,50) .. controls (70.52,50) and (75,54.48) .. (75,60) .. controls (75,65.52) and (70.52,70) .. (65,70) .. controls (59.48,70) and (55,65.52) .. (55,60) -- cycle ;
\draw   (30,90) .. controls (30,84.48) and (34.48,80) .. (40,80) .. controls (45.52,80) and (50,84.48) .. (50,90) .. controls (50,95.52) and (45.52,100) .. (40,100) .. controls (34.48,100) and (30,95.52) .. (30,90) -- cycle ;
\draw   (80,90) .. controls (80,84.48) and (84.48,80) .. (90,80) .. controls (95.52,80) and (100,84.48) .. (100,90) .. controls (100,95.52) and (95.52,100) .. (90,100) .. controls (84.48,100) and (80,95.52) .. (80,90) -- cycle ;
\draw   (55,120) .. controls (55,114.48) and (59.48,110) .. (65,110) .. controls (70.52,110) and (75,114.48) .. (75,120) .. controls (75,125.52) and (70.52,130) .. (65,130) .. controls (59.48,130) and (55,125.52) .. (55,120) -- cycle ;
\draw   (30,150) .. controls (30,144.48) and (34.48,140) .. (40,140) .. controls (45.52,140) and (50,144.48) .. (50,150) .. controls (50,155.52) and (45.52,160) .. (40,160) .. controls (34.48,160) and (30,155.52) .. (30,150) -- cycle ;
\draw   (80,150) .. controls (80,144.48) and (84.48,140) .. (90,140) .. controls (95.52,140) and (100,144.48) .. (100,150) .. controls (100,155.52) and (95.52,160) .. (90,160) .. controls (84.48,160) and (80,155.52) .. (80,150) -- cycle ;
\draw    (40,100) -- (65,110) ;
\draw    (65,130) -- (90,140) ;
\draw    (65,130) -- (40,140) ;
\draw    (65,70) -- (90,80) ;
\draw    (65,70) -- (40,80) ;
\draw  [color={rgb, 255:red, 0; green, 0; blue, 0 }  ,draw opacity=1 ] (265.56,60.33) .. controls (265.56,54.81) and (270.04,50.33) .. (275.56,50.33) .. controls (281.08,50.33) and (285.56,54.81) .. (285.56,60.33) .. controls (285.56,65.86) and (281.08,70.33) .. (275.56,70.33) .. controls (270.04,70.33) and (265.56,65.86) .. (265.56,60.33) -- cycle ;
\draw  [color={rgb, 255:red, 0; green, 0; blue, 0 }  ,draw opacity=1 ] (240.56,90.33) .. controls (240.56,84.81) and (245.04,80.33) .. (250.56,80.33) .. controls (256.08,80.33) and (260.56,84.81) .. (260.56,90.33) .. controls (260.56,95.86) and (256.08,100.33) .. (250.56,100.33) .. controls (245.04,100.33) and (240.56,95.86) .. (240.56,90.33) -- cycle ;
\draw  [color={rgb, 255:red, 0; green, 0; blue, 0 }  ,draw opacity=1 ] (265.56,120.33) .. controls (265.56,114.81) and (270.04,110.33) .. (275.56,110.33) .. controls (281.08,110.33) and (285.56,114.81) .. (285.56,120.33) .. controls (285.56,125.86) and (281.08,130.33) .. (275.56,130.33) .. controls (270.04,130.33) and (265.56,125.86) .. (265.56,120.33) -- cycle ;
\draw   (290.56,150.33) .. controls (290.56,144.81) and (295.04,140.33) .. (300.56,140.33) .. controls (306.08,140.33) and (310.56,144.81) .. (310.56,150.33) .. controls (310.56,155.86) and (306.08,160.33) .. (300.56,160.33) .. controls (295.04,160.33) and (290.56,155.86) .. (290.56,150.33) -- cycle ;
\draw [color={rgb, 255:red, 0; green, 0; blue, 0 }  ,draw opacity=1 ]   (250.56,100.33) -- (275.56,110.33) ;
\draw    (275.56,130.33) -- (300.56,140.33) ;
\draw [color={rgb, 255:red, 0; green, 0; blue, 0 }  ,draw opacity=1 ]   (275.56,70.33) -- (250.56,80.33) ;
\draw  [dash pattern={on 0.84pt off 2.51pt}] (20,40) -- (320,40) -- (320,170) -- (20,170) -- cycle ;
\draw    (120,100) -- (193,100) -- (217.5,100) ;
\draw [shift={(219.5,100)}, rotate = 180] [color={rgb, 255:red, 0; green, 0; blue, 0 }  ][line width=0.75]    (10.93,-3.29) .. controls (6.95,-1.4) and (3.31,-0.3) .. (0,0) .. controls (3.31,0.3) and (6.95,1.4) .. (10.93,3.29)   ;

\draw (97,52.4) node [anchor=north west][inner sep=0.75pt]  [font=\small]  {$T$};
\draw (60,53.4) node [anchor=north west][inner sep=0.75pt]  [font=\small]  {$2$};
\draw (85,83.4) node [anchor=north west][inner sep=0.75pt]  [font=\small]  {$6$};
\draw (35,83.4) node [anchor=north west][inner sep=0.75pt]  [font=\small]  {$3$};
\draw (60,113.4) node [anchor=north west][inner sep=0.75pt]  [font=\small]  {$5$};
\draw (35,143.4) node [anchor=north west][inner sep=0.75pt]  [font=\small]  {$4$};
\draw (85,143.4) node [anchor=north west][inner sep=0.75pt]  [font=\small]  {$1$};
\draw (153.5,108.9) node [anchor=north west][inner sep=0.75pt]  [font=\normalsize]  {$\Psi _{6,2}$};
\draw (270.56,53.73) node [anchor=north west][inner sep=0.75pt]  [font=\small,color={rgb, 255:red, 0; green, 0; blue, 0 }  ,opacity=1 ]  {$2$};
\draw (245.56,83.73) node [anchor=north west][inner sep=0.75pt]  [font=\small,color={rgb, 255:red, 0; green, 0; blue, 0 }  ,opacity=1 ]  {$3$};
\draw (271.56,113.73) node [anchor=north west][inner sep=0.75pt]  [font=\footnotesize,color={rgb, 255:red, 0; green, 0; blue, 0 }  ,opacity=1 ]  {$5$};
\draw (295.56,143.73) node [anchor=north west][inner sep=0.75pt]  [font=\small]  {$1$};
\draw (214,51.4) node [anchor=north west][inner sep=0.75pt]  [font=\small,color={rgb, 255:red, 0; green, 0; blue, 0 }  ,opacity=1 ]  {$T[ 2]$};

\end{tikzpicture}
    \caption{Note that there might not exist $n'\in\mathbb{N}$ such that $\Psi_{n,2}(T)\in\Tr_n$.}
    \label{fig: Psi-n-m-example}
\end{figure}

We define the \textit{total edge function family} $\mathfrak{\overline{E}}=\{\overline{\E}_\beta:\beta>1\}$, where $\overline{\E}_\beta:\mathscr{V}^2\rightarrow\mathbb{R}_{\geq0}$ is given by\begin{equation}\label{eq: total-edge-function}
    \overline{\E}_\beta(\varOmega_1,\varOmega_2)\coloneqq\sum_{n=2}^\infty\sum_{\mathbf{v}\in\mathscr{V}^n}\ind_{\vartheta_1=\varOmega_1}\ind_{\vartheta_n=\varOmega_2}\prod_{i=2}^{n-1}\overline{\W}_\beta(\vartheta_i)\prod_{j=1}^{n-1}e(\vartheta_j,\vartheta_{j+1}).
\end{equation}It is also useful to define \begin{equation*}
    \overline{\E}_{\beta,n}(\varOmega_1,\varOmega_2)\coloneqq\hspace{-0.25cm}\sum_{\mathbf{v}\in\mathscr{V}^{n+2}}\hspace{-0.25cm}\ind_{\vartheta_1=\varOmega_1}\ind_{\vartheta_{n+2}=\varOmega_2}\prod_{i=2}^{n+1}\overline{\W}_\beta(\vartheta_i)\prod_{j=1}^{n+1}e(\vartheta_j,\vartheta_{j+1}),
\end{equation*}in which case, we have \begin{equation*}
    \overline{\E}_{\beta}(\varOmega_1,\varOmega_2)\coloneqq e(\varOmega_1,\varOmega_2)+\sum_{n=1}^\infty\overline{\E}_{\beta,n}(\varOmega_1,\varOmega_2).
\end{equation*}Note that $e(\varOmega_1,\varOmega_2)=0$ does \textit{not} imply that $\overline{\E}_\beta(\varOmega_1,\varOmega_2)=0$.

We also define the \textit{modified total edge function family}  $\mathfrak{E}\coloneqq\{\E_\beta:\beta>1\}$, where $\E_\beta:\mathscr{V}^2\rightarrow\mathbb{R}_{\geq0}$ is given by\begin{equation*}
    \E_\beta(\varOmega_1,\varOmega_2)\coloneqq\sum_{n=2}^\infty\sum_{\mathbf{v}\in\mathscr{V}_n}\ind_{\vartheta_1=\varOmega_1}\ind_{\vartheta_n=\varOmega_2}\prod_{i=2}^{n-1}\W_\beta(\vartheta_i)\prod_{j=1}^{n-1}e(\vartheta_j,\vartheta_{j+1}).
\end{equation*}Once again, it is useful to define \begin{equation*}
    \E_{\beta,n}(\varOmega_1,\varOmega_2)\coloneqq\hspace{-0.25cm}\sum_{\mathbf{v}\in\mathscr{V}_{n+2}}\hspace{-0.25cm}\ind_{\vartheta_1=\varOmega_1}\ind_{\vartheta_{n+2}=\varOmega_2}\prod_{i=2}^{n+1}\W_\beta(\vartheta_i)\prod_{j=1}^{n+1}e(\vartheta_j,\vartheta_{j+1}),
\end{equation*}in which case, we have \begin{equation*}
    \E_{\beta}(\varOmega_1,\varOmega_2)\coloneqq e(\varOmega_1,\varOmega_2)+\sum_{n=1}^\infty\E_{\beta,n}(\varOmega_1,\varOmega_2).
\end{equation*}Note that, in this case, $e(\varOmega_1,\varOmega_2)=0$ does imply that $\E_\beta(\varOmega_1,\varOmega_2)=0$.

\begin{lemma}\label{lem: two-point-arbitrary sums}
    Let $\mathbf{w}=(\varOmega_1,\varOmega_2)\in\mathscr{V}^2$. Then, \begin{equation*}
        \begin{split}
            \overline{\W}_\beta(\mathbf{w})&=\overline{\W}_\beta(\varOmega_1)\overline{\E}_\beta(\varOmega_1,\varOmega_2)\overline{\W}_\beta(\varOmega_2),\\
            \W_\beta(\mathbf{w})&\leq\W_\beta(\varOmega_1)\mathrm{E}_\beta(\varOmega_1,\varOmega_2)\W_\beta(\varOmega_2).
        \end{split}
    \end{equation*}
\end{lemma}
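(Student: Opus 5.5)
The plan is to decompose every tree $T\in\Tr_n$ along the unique path joining the vertices $1$ and $2$. By Lemma~\ref{lem: mute-indices-cluster-weights},
\[
\overline{\W}_\beta(\mathbf{w})=\sum_{n\geq 2}\frac{1}{(n-2)!}\sum_{T\in\Tr_n}\sum_{\mathbf{v}\in\mathscr{V}^n}\ind_{\vartheta_1=\varOmega_1}\ind_{\vartheta_2=\varOmega_2}w_\beta[T](\mathbf{v}).
\]
For $T\in\Tr_n$, the subtree $\Psi_{n,2}(T)=T[2]$ is a path $1=u_1,u_2,\dots,u_k=2$ with $k\geq 2$. Deleting the edges of this path leaves a forest $\{T_1,\dots,T_k\}$, where $T_j$ is rooted at $u_j$. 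The vertex sets $V(T_j)$ form a partition of $[n]$.

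Conversely, a tree $T$ is determined by the following data:
\begin{itemize}
\item an integer $k\geq 2$;
\item an ordered choice of $k-2$ distinct internal labels $u_2,\dots,u_{k-1}\in[3,n]$;
\item an assignment of the remaining $n-k$ labels to the $k$ path vertices;
\item a tree on each resulting block, containing its root.
\end{itemize}
Since $w_\beta[T](\mathbf{v})$ factorizes over the path edges and the subtrees $T_j$, the weight is
\[
\prod_{j=1}^{k-1}e(\vartheta_{u_j},\vartheta_{u_{j+1}})\prod_{j=1}^k w_\beta[T_j](\mathbf{v}|_{V(T_j)}).
\]

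Next I count labels. Fix $k$ and block sizes $n_j=|V(T_j)|$ with $n_j\geq1$ and $\sum_j n_j=n$. The number of ways to choose the ordered internal labels together with the non-root labels of each block is $(n-2)!/\prod_j(n_j-1)!$. This exactly cancels the prefactor $1/(n-2)!$ and leaves $\prod_j 1/(n_j-1)!$. After relabeling by \eqref{eq: w-transforms-with-relabeling}, summing over $n$ and the $n_j$ decouples: for each $j$, the sum over $n_j$, over trees on $[n_j]$ rooted at $1$ with root value $\vartheta_{u_j}$, and over the remaining vertex values is exactly $\overline{\W}_\beta(\vartheta_{u_j})$ by Lemma~\ref{lem: mute-indices-cluster-weights} with $m=1$. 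What remains is
\[
\sum_{k\geq2}\sum_{\vartheta_{u_2},\dots,\vartheta_{u_{k-1}}}\prod_{j=1}^k\overline{\W}_\beta(\vartheta_{u_j})\prod_{j=1}^{k-1} e(\vartheta_{u_j},\vartheta_{u_{j+1}}),
\]
with endpoint values $\varOmega_1,\varOmega_2$. This equals $\overline{\W}_\beta(\varOmega_1)\overline{\E}_\beta(\varOmega_1,\varOmega_2)\overline{\W}_\beta(\varOmega_2)$ by \eqref{eq: total-edge-function}. All terms are nonnegative, so rearrangement is justified by Tonelli.

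For the globally compatible sums I run the same bijection, with one change. The constraint $\mathbf{v}\in\mathscr{V}_n$ demands pairwise compatibility of all vertex values. I relax it to three weaker conditions:
\begin{itemize}
\item each block $\mathbf{v}|_{V(T_j)}$ lies in $\mathscr{V}_{n_j}$;
\item the path values $(\vartheta_{u_1},\dots,\vartheta_{u_k})$ lie in $\mathscr{V}_k$;
\item no compatibility is required between different blocks beyond this.
\end{itemize}
Since $\mathscr{V}_n$ is contained in this relaxed set and the weights are nonnegative, the sum can only increase. The factorization then goes through identically. Each block yields $\W_\beta(\vartheta_{u_j})$ and the path yields $\E_\beta$, giving the inequality.

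The main obstacle is the combinatorial bookkeeping of labels. One must check that the map from $T$ to (path, ordered internal labels, rooted subtrees) is a bijection and that the multinomial factor matches the $1/(n_j-1)!$ normalization in $\overline{\W}_{\beta}(\vartheta)$. In particular, the root of each block must be pinned to index $1$ through Lemma~\ref{lem: mute-indices-cluster-weights}, and this has to be done without overcounting the path order. The path is oriented from $1$ to $2$, so it has no reversal symmetry.
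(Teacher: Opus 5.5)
Your proposal is correct and follows the paper's proof. Both decompose $T$ along the path $T[2]=\Psi_{n,2}(T)$ into an ordered partition of $[n]$ with a subtree rooted at each path vertex, factorize the weight, and obtain the inequality for $\W_\beta$ by relaxing $\mathscr{V}_n$ to blockwise compatibility while keeping compatibility along the path (your version states the retained path constraint more explicitly than the paper's phrase $\mathscr{V}_{n_1+n_2}\subset\mathscr{V}_{n_1}\times\mathscr{V}_{n_2}$). Your single multinomial count $(n-2)!/\prod_j(n_j-1)!$ is a cleaner route to the same cancellation that the paper reaches by summing over injections $\underline{\varphi}$ through $\underline{N}=\max\mathrm{Im}(\underline{\varphi})$ and a hockey-stick identity.
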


\begin{remark}
    The inequality in the case of global compatibility conditions is natural: it is not necessary that gluing globally compatible trees generates another globally compatible tree.
\end{remark}

\begin{proof}
    By Equation~\eqref{eq: mute-indices-cluster-formula}, we have \begin{equation*}
    \overline{\W}_\beta(\mathbf{w})=\sum_{n=2}^\infty\frac{1}{(n-2)!}\sum_{T\in\Tr_n}\sum_{\mathbf{v}\in\mathscr{V}^n}\ind_{\mathbf{v}|_{[2]}=\mathbf{w}}w_\beta(T,\mathbf{v}).
\end{equation*}Introducing the application $\Psi_{n,2}$, we obtain \begin{equation*}
    \overline{\W}_\beta(\mathbf{w})=\sum_{n=2}^\infty\frac{1}{(n-2)!}\sum_{T\in\Tr_n}\sum_{\underline{T}\in\overline{\Tr}_{n,2}}\ind_{\Psi_{n,2}(T)=\underline{T}}\sum_{\mathbf{v}\in\mathscr{V}^n}\ind_{\mathbf{v}|_{[2]}=\mathbf{w}}w_\beta(T,\mathbf{v}).
\end{equation*}Inverting the order of the sums and recalling that $\Psi_{n,2}(T)=T[2]$, we have \begin{equation*}
    \overline{\W}_\beta(\mathbf{w})=\sum_{\underline{T}\in\overline{\Tr}_2}\sum_{n=2}^\infty\frac{1}{(n-2)!}\sum_{T\in\Tr_n}\ind_{T[2]=\underline{T}}\sum_{\mathbf{v}\in\mathscr{V}^n}\ind_{\mathbf{v}|_{[2]}=\mathbf{w}}w_\beta(T,\mathbf{v}).
\end{equation*}We note that $\overline{\Tr}_2$ is in one-to-one correspondence with pairs $(\underline{n},\underline{\varphi})$, where $\underline{n}\geq 2$ is a natural number and $\underline{\varphi}$ is an injection from $[\underline{n}]$ to $\mathbb{N}$ such that $\underline{\varphi}(1)=1$ and $\underline{\varphi}(\underline{n})=2$. This bijection is given by \begin{enumerate}
    \item $V(\underline{T})=\underline{\varphi}([\underline{n}])$,
    \item $E(\underline{T})=\{\{\underline{\varphi}(i),\underline{\varphi}(i+1)\}:i\in [\underline{n}-1]\}$.
\end{enumerate}We also note that if $T[2]=\underline{T}$, then $\textrm{Im}(\underline{\varphi})\subset [n]$. That is, if $\underline{N}\coloneqq \max\textrm{Im}(\underline{\varphi})\leq n$.

\begin{figure}[hbt!]
    \centering
    \input{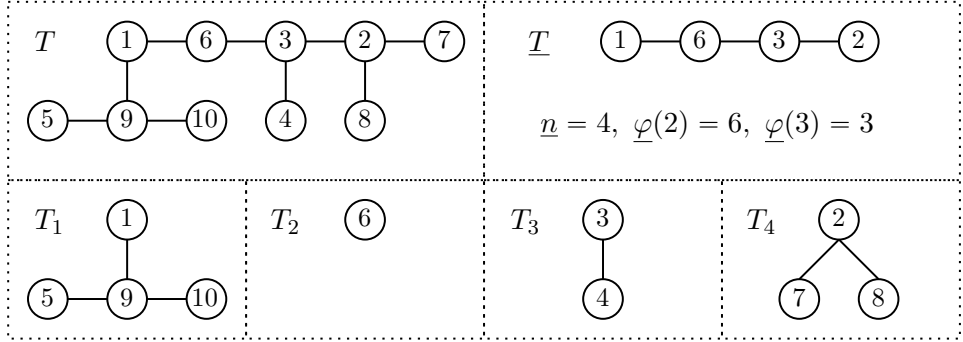}
    \caption{An example of the correspondences between $\underline{T}\in\overline{\Tr}_2$ and $(\underline{n},\underline{\varphi})$, and between $T\in\Psi_{n,2}^{-1}(\underline{T})$ and $(\mathbf{P,\mathbf{T}})$.}
    \label{fig: T-underlineT-correspondence}
\end{figure}

For each $\underline{T}\in\overline{\Tr}_2$, provided $n\geq\underline{N}$, the set $\Psi_{n,2}^{-1}(\underline{T})$ is in a one-to-one correspondence with \begin{enumerate}
    \item an \textit{ordered} partition $\mathbf{P}=(P_1,...,P_{\underline{n}})$ of $[n]$ such that $i\in[\underline{n}]\Rightarrow\underline{\varphi}(i)\in P_i$;
    \item a vector of trees $\mathbf{T}=(T_1,...,T_{\underline{n}})$, such that $T_i\in\Tr(P_i)$.
\end{enumerate}The correspondence is given by \begin{equation*}
    E(T)=E(\underline{T})\sqcup\bigsqcup_{i=1}^{\underline{n}}E(T_i).
\end{equation*}In this case, if $T\in\Psi_{n,2}^{-1}(\underline{T})$, then there exists $(\mathbf{P},\mathbf{T})$ such that \begin{equation}\label{eq: weight-usual-cluster-2}
    w_\beta(T,\mathbf{v})=\prod_{i=1}^{\underline{n}}w_\beta(T_i,\mathbf{v}|_{P_i})\prod_{j=1}^{\underline{n}-1}e(\vartheta_{\underline{\varphi}(j)},\vartheta_{\underline{\varphi}(j+1)}).
\end{equation}
By Equation~\eqref{eq: weight-usual-cluster-2}, we have \begin{equation*}
    \overline{\W}_\beta(\mathbf{w})=\sum_{\underline{n}=2}^\infty\sum_{\underline{\varphi}}\sum_{n=\underline{N}}^\infty\frac{1}{(n-2)!}\sum_{\mathbf{P}}\sum_{\mathbf{T}}\sum_{\mathbf{v}\in\mathscr{V}^n}\ind_{\mathbf{v}|_{[2]}=\mathbf{w}}\prod_{i=1}^{\underline{n}}w_\beta(T_i,\mathbf{v}|_{P_i})\prod_{j=1}^{\underline{n}-1}e(\vartheta_{\underline{\varphi}(j)},\vartheta_{\underline{\varphi}(j+1)}),
\end{equation*}and an analogous formula for $\W_\beta(\mathbf{w})$. 

It is useful to separate the sum term indexed by $\underline{n}=2$. Note that in this case there is a single injection $\underline{\varphi}$ and $\underline{N}=2$. Furthermore, the term involving edge functions simplifies to $e(\varOmega_1,\varOmega_2)$. Since the indices are mute, the matter becomes counting the set of partitions $\mathbf{P}=(P_1,P_2)$ of $[n]$ such that $1\in P_1$ and $2\in P_2$. One readily finds that the term indexed by $\underline{n}=2$ is equal to \begin{equation*}
    \overline{\W}_\beta(\varOmega_1)e(\varOmega_1,\varOmega_2)\overline{\W}_\beta(\varOmega_2).
\end{equation*}Since $\mathscr{V}_{n_1+n_2}\subset\mathscr{V}_{n_1}\times\mathscr{V}_{n_2}$, where equality is not necessary, in the case of $\W$, the same argument yields an upper bound: the term indexed by $\underline{n}=2$ is bounded above by \begin{equation*}
    \W_\beta(\varOmega_1)e(\varOmega_1,\varOmega_2)\W_\beta(\varOmega_2).
\end{equation*}

Henceforth, we suppose that $\underline{n}\geq 3$. Let us begin by eliminating the sum over $\underline{\varphi}$ and transform it into a sum over $\underline{N}\geq\underline{n}$. For each $\underline{\varphi}$, this will be done by relabeling $i\in[n]$ (not $\underline{n}$ or $\underline{N}$) so that for $i\in[\underline{n}]$, $\underline{\varphi}(i)=i$. This operation will introduce a combinatorial factor which we compute in the following. Given $\underline{N}$, one must simply count the amount of injections from $[\underline{n}]$ to $\mathbb{N}$ such that $\underline{\varphi}(1)=1$, $\underline{\varphi}(\underline{n})=2$ and $\max\textrm{Im}(\underline{\varphi})=\underline{N}$. There are $(\underline{n}-2)$ possible indices such that $\underline{\varphi}(i)=\underline{N}$, in which case one must simply distribute $(\underline{N}-3)$ values over the remaining $(\underline{n}-3)$ indices. That is, one obtains the following combinatorial factor \begin{equation*}
    (\underline{n}-2)\times(\underline{N}-3)...(\underline{N}-\underline{n}+1)=(\underline{n}-2)\frac{(\underline{N}-3)!}{(\underline{N}-\underline{n})!}.
\end{equation*} The sum over $\underline{n}\geq3$ becomes \begin{equation*}
    \sum_{\underline{n}=3}^\infty\sum_{\underline{N}=\underline{n}}^\infty\sum_{n=\underline{N}}^\infty\frac{(\underline{N}-3)!(\underline{n}-2)}{(\underline{N}-\underline{n})!(n-2)!}\sum_{\mathbf{P}}\sum_{\mathbf{T}}\sum_{\mathbf{v}\in\mathscr{V}^n}\ind_{\vartheta_1=\varOmega_1}\ind_{\vartheta_{\underline{n}}=\varOmega_2}\prod_{i=1}^{\underline{n}}w_\beta(T_i,\mathbf{v}|_{P_i})\prod_{j=1}^{\underline{n}-1}e(\vartheta_{j},\vartheta_{j+1}).
\end{equation*}We note that the ordered partition now satisfies $i\in P_i$, for $i\in [\underline{n}]$.

The next step is to relabel the ordered partitions. Consider a vector $\mathbf{n}=(n_1,...,n_{\underline{n}})$ such that $n_i\geq 1$ and $n_1+...+n_{\underline{n}}=n$. Since $i\in P_i$, there are \begin{equation*}
    \frac{(n-\underline{n})!}{\prod_{i=1}^{\underline{n}}(n_i-1)!}
\end{equation*}ordered partitions $\mathbf{P}$ such that $\lvert P_i\rvert=n_i$. In this case, relabeling them in terms of $\mathbf{n}$ and recalling Equation~\eqref{eq: mute-indices-cluster-formula}, the sum becomes \begin{equation*}
    \sum_{\underline{n}=3}^\infty\sum_{\underline{N}=\underline{n}}^\infty\sum_{n=\underline{N}}^\infty C(\underline{n},\underline{N},n)\sum_{\mathbf{n}}\sum_{\mathbf{v}\in\mathscr{V}^{\underline{n}}}\ind_{\vartheta_1=\varOmega_1}\ind_{\vartheta_{\underline{n}}=\varOmega_2}\prod_{i=1}^{\underline{n}}\overline{\W}_{\beta,n_i}(\vartheta_i)\prod_{j=1}^{\underline{n}-1}e(\vartheta_j,\vartheta_{j+1}),
\end{equation*}where \begin{equation*}
    C(\underline{n},\underline{N},n)\coloneqq \frac{(\underline{n}-2)(\underline{N}-3)!(n-\underline{n})!}{(\underline{N}-\underline{n})!(n-2)!}=\binom{\underline{N}-3}{\underline{n}-3}\binom{n-2}{\underline{n}-2}^{-1}.
\end{equation*}Once again, in the case of $\W_\beta(\cdot)$, the analogous formula is an upper bound.

The final step consists of showing that, for all $\mathbf{v}\in\mathscr{V}^{\underline{n}}$, \begin{equation*}
        \sum_{\underline{N}=\underline{n}}^\infty\sum_{n=\underline{N}}^\infty C(\underline{n},\underline{N},n)\sum_{\mathbf{n}}\prod_{i=1}^{\underline{n}}\overline{\W}_{\beta,n_i}(\vartheta_i)=\prod_{i=1}^{\underline{n}}\sum_{n_i=1}^\infty\overline{\W}_{\beta, n_i}(\vartheta_i)=\prod_{i=1}^{\underline{n}}\overline{\W}_\beta(\vartheta_i),
\end{equation*}where we recall Equation~\eqref{eq: mute-indices-cluster-formula}. Noting that \begin{equation*}
    \sum_{\underline{N}=\underline{n}}^\infty\sum_{n=\underline{N}}^\infty\sum_{\mathbf{n}}=\sum_{n_1=...=n_{\underline{n}}=1}^\infty\sum_{n=\underline{n}}^\infty\sum_{\underline{N}=\underline{n}}^n\ind_{n=n_1+...+n_{\underline{n}}},
\end{equation*}it is sufficient to show that \begin{equation}\label{eq: computation-mute-indices}
    \sum_{n=\underline{n}}^\infty\sum_{\underline{N}=\underline{n}}^n\ind_{n=n_1+...+n_{\underline{n}}}C(\underline{n},\underline{N},n)=1.
\end{equation}We recall that if $r,s\geq 0$, then \begin{equation*}
    \sum_{t=0}^r\binom{s+t}{t}=\binom{r+s+1}{r+1}.
\end{equation*}Hence, \begin{equation*}
    \sum_{\underline{N}=\underline{n}}^n\binom{\underline{N}-3}{\underline{n}-3}=\sum_{\underline{N}=0}^{n-\underline{n}}\binom{\underline{n}-3+\underline{N}}{\underline{N}}=\binom{n-2}{\underline{n}-2}.
\end{equation*}We have thus shown that Equation~\eqref{eq: computation-mute-indices} holds. The Lemma has thus been proved.
\end{proof}

By Lemma~\ref{lem: two-point-arbitrary sums}, it is clear that we must understand the (modified) total edge function. In general, we want to prove a bound of the type \begin{equation}\label{eq: total-edge-function-bound}
    \overline{\E}_\beta(\varOmega_1,\varOmega_2)\leq \eta v_{\beta_1}(\varOmega_1)^{-1}\tilde{e}(\varOmega_1,\varOmega_2)v_{\beta_1}(\varOmega_2)^{-1},
\end{equation}where $\beta_1\in\mathbb{R}$ is fixed, $\tilde{e}:\mathscr{V}^2\rightarrow\mathbb{R}_+$ is a \textit{good} edge function, and $\eta>0$ can be made arbitrarily small, provided $\beta\geq\beta_0=\beta_0(\eta)$ is sufficiently large. Although we have not found a reasonable abstract assumption that proves this bound, we are able to prove it in our case. On the other hand, in the case of the modified total edge function, we can prove, under reasonable hypotheses, that \begin{equation}\label{eq: modified-tota-edge-bound}
    \E_\beta(\varOmega_1,\varOmega_2)\leq \eta v'_{\beta/2}(\varOmega_1)^2e(\varOmega_1,\varOmega_2)v'_{\beta/2}(\varOmega_2)^2.
\end{equation}

Choosing $\eta=1$ in Lemma~\ref{lem: reducing-sums-over-trees}, if $\beta_0$ is sufficiently large, then $\W_\beta(\vartheta)\leq v_{\beta/2}(\vartheta)$, for all $\beta\geq\beta_0$ and $\vartheta\in\mathscr{V}$. In this case, if $\beta\geq\beta_0$, then \begin{equation}\label{eq: E-E-prime-bound}
    \E_{\beta,n}(\varOmega_1,\varOmega_2)\leq \sum_{\mathbf{v}\in\mathscr{V}_{n+2}}\ind_{\vartheta_1=\varOmega_1}\ind_{\vartheta_{n+2}=\varOmega_2}\prod_{i=2}^{n+1}v_{\beta/2}(\vartheta_i)\prod_{j=1}^{n+1}e(\vartheta_j,\vartheta_{j+1})\eqqcolon \E'_{\beta/2,n}(\varOmega_1,\varOmega_2).
\end{equation}
This motivates us to introduce the \textit{two-vertex contracting function family} $\mathfrak{R}^2\coloneqq\{\mathcal{R}_\beta^2:\beta>1\}$, where $\mathcal{R}_{\beta}^2:\mathscr{V}^2\rightarrow \mathbb{R}_{\geq 0}$ is given by \begin{equation}\label{eq: reducing-operator-definition}
    \mathcal{R}_{\beta}^2 (\vartheta_1,\vartheta_2)\coloneqq\sum_{\vartheta\in\mathscr{V}}e(\vartheta_1,\vartheta)v_{\beta}(\vartheta)e(\vartheta,\vartheta_2).
\end{equation}By Hypothesis~\ref{hyp: vertex-edge-function-hypotheses}, the function $\beta\mapsto \mathcal{R}^2_ \beta(\vartheta_1,\vartheta_2)$ is increasing for any $(\vartheta_1,\vartheta_2)\in\mathscr{V}^2$.

\begin{hypothesis}\label{hyp: reduction-operator}
    The family function $\mathfrak{V}'_2$ dominates the family function $\mathfrak{R}^2$. That is, \begin{equation*}
        \lim_{\beta\rightarrow\infty}\sup_{(\vartheta_1,\vartheta_2)\in\mathscr{V}_2}\left\lvert\frac{\mathcal{R}_{\beta}^2(\vartheta_1,\vartheta_2)}{e(\vartheta_1,\vartheta_2)[v'_\beta(\vartheta_1)+v'_{\beta}(\vartheta_2)]}\right\rvert=0.
    \end{equation*}
\end{hypothesis}

\begin{remark}\label{rem: second-reduction-function}
    We will employ this object only in the case of contours. By Notation~\ref{not: calC_k} and Equation~\eqref{eq: modified-vertex-function-contours}, Hypothesis~\ref{hyp: reduction-operator} means that for all $\eta>0$, there exists $\beta_0$ such that if $\beta\geq \beta_0$ and $(\gamma_1,\gamma_2)\in\mathscr{C}_2$, then \begin{equation*}
        \mathcal{R}^2_\beta(\gamma_1,\gamma_2)\leq \eta[\beta H(\gamma_1)+\beta H(\gamma_2)] \Phi(\gamma_1,\gamma_2).
    \end{equation*}
\end{remark}

\begin{lemma}\label{lem: reduction-function-sums}
    Let $\eta\in (0,1)$. There exists $\beta_0=\beta_0(\eta)$ such that if $\beta\geq\beta_0$, $n\in\mathbb{N}$ and $(\varOmega_1,\varOmega_2)\in\mathscr{V}_2$, then \begin{equation*}
        \E_{\beta,n}(\varOmega_1,\varOmega_2)\leq \eta^nv'_{\beta/2}(\varOmega_1)^2e(\varOmega_1,\varOmega_2)v'_{\beta/2}(\varOmega_2)^2.
    \end{equation*}
\end{lemma}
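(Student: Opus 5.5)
We reduce to $\E'_{\beta/2,n}$ via \eqref{eq: E-E-prime-bound}: for $\beta\geq\beta_0$ we have $\E_{\beta,n}\leq\E'_{\beta/2,n}$. We then prove by induction on $n$ the bound
\[
\E'_{\beta/2,n}(\varOmega_1,\varOmega_2)\leq \eta^n\bigl[v'_{\beta/2}(\varOmega_1)+v'_{\beta/2}(\varOmega_2)\bigr]^{n}e(\varOmega_1,\varOmega_2)\cdot c_n,
\]
and finally convert the growing powers of $v'$ into the fixed square using an extra small factor. This last point needs care and is discussed below. The essential structural input is global compatibility. Since $\mathbf{v}\in\mathscr{V}_{n+2}$, every pair of vertices in the path is compatible, in particular $(\varOmega_1,\varOmega_2)\in\mathscr{V}_2$ and $(\varOmega_1,\vartheta_i)\in\mathscr{V}_2$ for every intermediate $\vartheta_i$. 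This is what allows us to contract an interior vertex and still land on a pair to which Hypothesis~\ref{hyp: reduction-operator} applies.

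\textbf{Base case $n=1$.} Here
\[
\E'_{\beta/2,1}(\varOmega_1,\varOmega_2)\leq\ind_{(\varOmega_1,\varOmega_2)\in\mathscr{V}_2}\,\mathcal{R}^2_{\beta/2}(\varOmega_1,\varOmega_2).
\]
By Hypothesis~\ref{hyp: reduction-operator}, for $\beta$ large this is at most $\eta\, e(\varOmega_1,\varOmega_2)\bigl[v'_{\beta/2}(\varOmega_1)+v'_{\beta/2}(\varOmega_2)\bigr]$.

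\textbf{Inductive step.} Contract the vertex $\vartheta_{n+1}$ adjacent to $\varOmega_2$. Dropping some of the compatibility constraints, which only enlarges the sum, the sum over $\vartheta_{n+1}$ is bounded by
\[
\mathcal{R}^2_{\beta/2}(\vartheta_n,\varOmega_2)\leq\eta\, e(\vartheta_n,\varOmega_2)\bigl[v'_{\beta/2}(\vartheta_n)+v'_{\beta/2}(\varOmega_2)\bigr],
\]
where we used that $(\vartheta_n,\varOmega_2)\in\mathscr{V}_2$ by global compatibility. The factor $v'_{\beta/2}(\varOmega_2)$ is harmless. The factor $v'_{\beta/2}(\vartheta_n)$ must instead be absorbed into the vertex weight $v_{\beta/2}(\vartheta_n)$.

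To do this, we replace $v_{\beta/2}$ by $v_{\beta/4}$ using
\[
v'_{\beta/2}(\vartheta)\,v_{\beta/2}(\vartheta)\leq v_{\beta/4}(\vartheta),
\]
which holds because $x\le e^{x}$ and $v_{\beta/4}=v_{\beta/2}e^{v'_{\beta/2}}$. Repeating the argument after this replacement would degrade the temperature at every step. We avoid this by splitting the exponent more finely: $v'_{\beta/2}\leq\epsilon^{-1}e^{\epsilon v'_{\beta/2}}$, so the absorption costs only a factor $e^{\epsilon v'_{\beta/2}}$. This is compensated by the monotonicity and divergence in Hypothesis~\ref{hyp: vertex-edge-function-hypotheses}, after lowering $\eta$ and raising $\beta_0$. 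In this way the contraction of $\vartheta_{n+1}$ leaves a path of length $n-1$ with the same weights up to a factor $\eta$, plus a term carrying the factor $v'_{\beta/2}(\varOmega_2)$.

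\textbf{Main obstacle: the exponent $2$ on the endpoints.} A naive induction accumulates one factor $\bigl[v'(\varOmega_1)+v'(\varOmega_2)\bigr]$ per contraction, i.e.\ power $n$. To avoid this, we contract alternately from both ends, so each endpoint receives at most one new factor before its neighbour is absorbed. The interior factors $v'(\vartheta_i)$ are always absorbed into interior vertices as above.

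Concretely, we contract interior vertices in a way that always pairs an interior vertex with an \emph{interior} neighbour when $n\geq 3$. In that situation both resulting $v'$ factors fall on interior vertices and are absorbed, so no endpoint factor is created. Only the final one or two contractions touch $\varOmega_1$ and $\varOmega_2$. Those final contractions produce at most
\[
\bigl[v'(\varOmega_1)+v'(\varOmega_2)\bigr]^2\leq 4\,v'(\varOmega_1)^2v'(\varOmega_2)^2,
\]
where the inequality uses $v'\geq 1$, which holds for large $\beta$. Absorbing the constant $4$ into $\eta$ gives the claim. Verifying that this pairing scheme is legitimate — each contracted pair lies in $\mathscr{V}_2$, which is exactly where global compatibility rather than mere path compatibility is used — is the delicate step I would check first.
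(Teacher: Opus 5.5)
There is a genuine gap. The issue is not the step you flag as delicate: after summing out one vertex and dropping its constraints, the remaining vector still lies in $\mathscr{V}_{n+1}$, so every pair you contract across is automatically in $\mathscr{V}_2$. Your reduction to $\E'_{\beta/2,n}$ and your endgame (each endpoint carrying at most two factors, then $v'\geq 1$) are fine. The missing piece is control of the $v'$ factors that land on \emph{interior} vertices.

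Your two descriptions of the scheme differ. Contracting ``alternately from both ends'' fails like the one-sided induction: an endpoint can receive a factor each time its current neighbour is removed, about $n/2$ times. The rule ``only remove vertices with two interior neighbours'' protects the endpoints, but without a fixed order it does not protect interior vertices. Removing $b_1,b_2,\dots,b_k$ successively to the right of a fixed vertex $a$ obeys your rule and can deposit $v'(a)^k$ on $a$. Absorbing this costs $k!$ (from $x^k\leq k!e^x$), which destroys the geometric factor $\eta^n$.

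The missing idea is the paper's explicit zigzag. Start a pointer at the middle vertex. Convert the weight of the pointed vertex, move the pointer to its neighbour in the current direction, sum out the vertex just converted, and reverse direction. The removed vertices then form a growing contiguous block. Each removed vertex has as neighbours the next vertex to be converted and an untouched vertex. Consequently every vertex, endpoints included, receives at most two factors, all before it is converted.

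Your absorption device is also not available at the abstract level, and it is not needed. The weight $e^{\epsilon v'_{\beta/2}(\vartheta)}v_{\beta/2}(\vartheta)$ is not a member of the family, and Hypothesis~\ref{hyp: reduction-operator} says nothing about contracting it. The extra factor is unbounded in $\vartheta$, so it cannot be paid for by shrinking $\eta$. The only bound the hypotheses give is $e^{\epsilon v'_{\beta/2}}v_{\beta/2}\leq v_{\beta/4}$.

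The cumulative loss of temperature you want to avoid does not actually occur. Write $\beta'=\beta/2$ and proceed as follows:
\begin{enumerate}
\item Convert each vertex once, just before summing it out, via $v_{\beta'}(\vartheta)\,v'_{\beta'}(\vartheta)^{\xi}\leq\xi!\,v_{\beta'/2}(\vartheta)$ with $\xi\leq 2$.
\item Contract with $\mathcal{R}^2_{\beta'/2}$.
\item Use the monotonicity $v'_{\beta'/2}\leq v'_{\beta'}$ from Hypothesis~\ref{hyp: vertex-edge-function-hypotheses}, so the newly created factor is of type $v'_{\beta'}$ and falls on a vertex still carrying $v_{\beta'}$.
\end{enumerate}
This is a second reason the order matters. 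A factor $v'_{\beta'}$ landing on an already converted weight $v_{\beta'/2}$ could not be absorbed, since $v'_{\beta'}\geq v'_{\beta'/2}$.

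With this, the constants combine to $\eta^n$: $2^n$ from splitting each contraction into two sums, $2^n$ from $\xi!\leq 2$, and $(\eta/4)^n$ from Hypothesis~\ref{hyp: reduction-operator}. The endpoints carry $v'_{\beta/2}(\varOmega_i)^{\xi_i}$ with $\xi_i\leq 2$, as required.
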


\begin{proof}
    First, employing an idea very similar to that in the proof of Proposition 3.7 in \cite{cluster}, we obtain a bound for $\E'_{\beta,n}(\varOmega_1,\varOmega_2)$. Then, we conclude by Equation~\eqref{eq: E-E-prime-bound}. We provide the algorithm operated on $\E'_{\beta,n}(\varOmega_1,\varOmega_2)$ below. Its crucial feature is the fact that no vertex is visited more than twice.

    \begin{figure}[hbt!]
    \centering
    \input{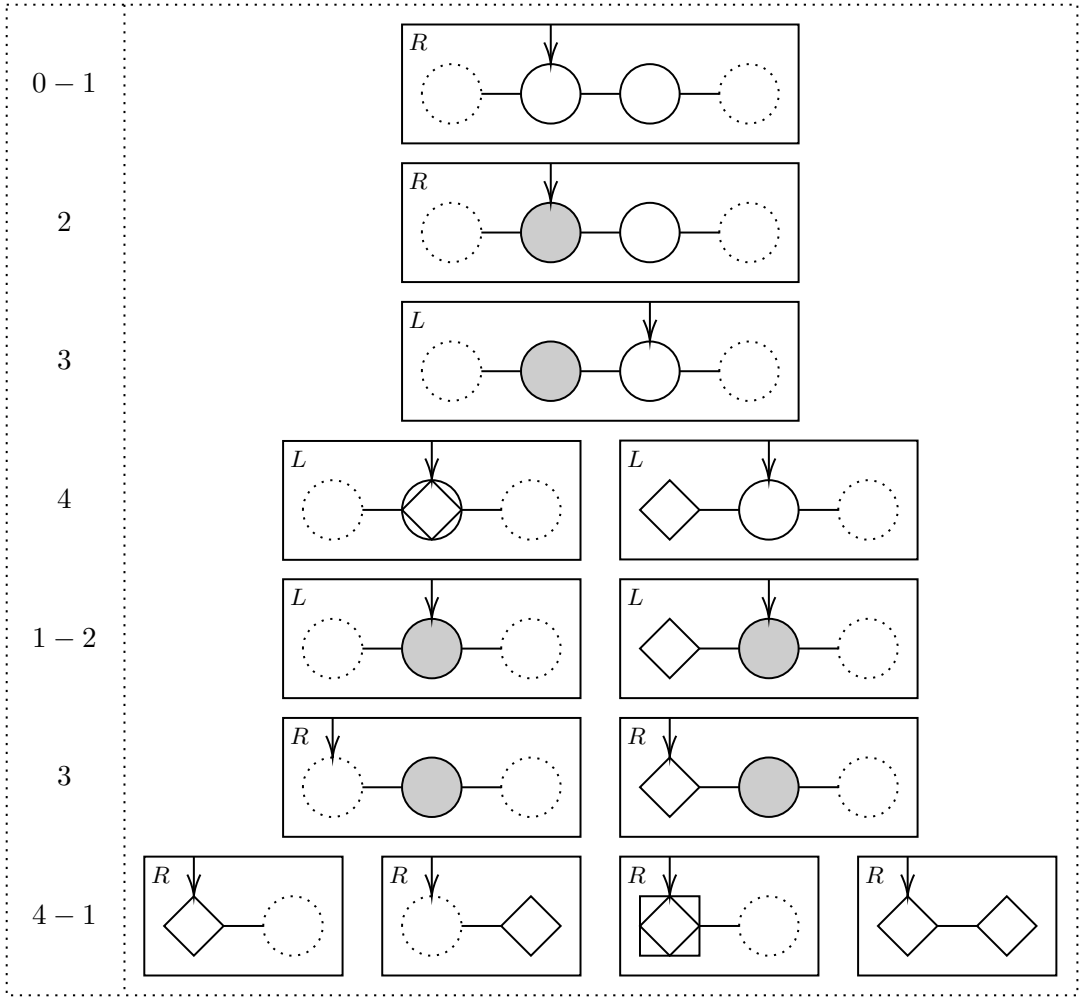}
    \caption{A graphical description of the algorithm in case $n=2$. The number to the left is the current step ($0$ denotes the starting position). A dotted circle represents a weight of $1$, a blank circle represents a weight of $v_\beta(\cdot)$ and a gray circle represents a weight of $v_{\beta/2}(\cdot)$. A blank lozenge or square represents a weight of $v'_\beta(\cdot)$. Superposition of figures denotes product of weights.}
    \label{fig: two-vertices-algorithm}
\end{figure}

    There is a \textit{state} $s$ that takes values in $\{L,R\}$, which indicates in which direction the pointer will move next: $L$ is for \textit{left} and $R$ is for \textit{right}. The starting position of the pointer is $n_0=\lfloor n/2\rfloor+1$ (recall that there are $n+2$ vertices). If $n$ is odd, the starting state is $s_0=L$. If $n$ is even, the starting state is $s_0=R$.

    After setting up the starting position, we run the following \textit{monotone} algorithm. More precisely, we run an algorithm that can only increase the value of the estimated sum.

    In the first step, we check whether the pointer is over an extremal vertex. If it is, then the algorithm ends. Otherwise, we proceed to the second step.

    In the second step, we update the weight of the vertex indicated by the pointer. More precisely, its current weight, which is equal to $v_\beta(\cdot)v'_\beta(\cdot)^{\xi}$, where $\xi\in\{0,1,2\}$, is updated to $2v_{\beta/2}(\cdot)$. This is possible due to Hypothesis~\ref{hyp: vertex-edge-function-hypotheses} and $x^2\leq 2e^x$, if $x>0$.

    In the third step, the pointer moves in the direction indicated by the state, then it updates ($L\rightarrow R$ or $R\rightarrow L$). 
    
    In the fourth step, if $s=L$ ($s=R$), then we remove the vertex to the left (right) of the pointer employing Hypothesis~\ref{hyp: reduction-operator}. More precisely, if $\beta\geq \beta_0(\eta)$, then \begin{equation*}
        \mathcal{R}^2_{\beta/2}(\vartheta_1,\vartheta_2)\leq (\eta/4)[v'_{\beta/2}(\vartheta_1)+v'_{\beta/2}(\vartheta_2)]e(\vartheta_1,\vartheta_2)\leq (\eta/4)[v'_\beta(\vartheta_1)+v'_\beta(\vartheta_2)]e(\vartheta_1,\vartheta_2),
    \end{equation*}where the second inequality follows from Hypothesis~\ref{hyp: vertex-edge-function-hypotheses}. In this way, a single sum is split into two: in the first, the weight of the vertex indicated by the pointer has been multiplied by $v'_\beta(\cdot)$; in the second, the vertex to the direction indicated by the state of the pointer has been multiplied by $v'_\beta(\cdot)$. We move to the first step.

    For a detailed graphical description of the algorithm, see Figure~\ref{fig: two-vertices-algorithm}. By the end, we obtain \begin{align*}
        \E'_{\beta,n}(\varOmega_1,\varOmega_2)&\leq  2^n2^n(\eta/4)^n\max\left\{v'_\beta(\varOmega_1)^{\xi_1}v'_\beta(\varOmega_2)^{\xi_2}:\xi_1,\xi_2\in\{0,1,2\}\right\}e(\Omega_1,\Omega_2)\\
        &=\eta^nv'_\beta(\varOmega_1)^2e(\varOmega_1,\varOmega_2)v'_\beta(\varOmega_2)^2,
    \end{align*}where the equality is a consequence of Hypothesis~\ref{hyp: vertex-edge-function-hypotheses}, as long as $\beta\geq\beta_0$ is sufficiently large. The first term $2^n$ comes from the number of sums generated as the algorithm runs. The second term $2^n$ comes from the second step.  Recalling Equation~\eqref{eq: E-E-prime-bound}, this concludes the proof of the lemma.
\end{proof}

\begin{corollary}\label{cor: reduction-operator-sums}
    There exists $\beta_0>0$ such that if $\beta\geq \beta_0$ and $(\varOmega_1,\varOmega_2)\in\mathscr{V}_2$, then \begin{equation*}\label{eq: reduction-operator-sums}
        \W_\beta
        (\varOmega_{1},\varOmega_2)\leq v_{\beta/4}(\varOmega_{1})e(\varOmega_{1},\varOmega_2)v_{\beta/4}(\varOmega_2).
    \end{equation*}
\end{corollary}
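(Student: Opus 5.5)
The plan is to chain three earlier results: Lemma~\ref{lem: two-point-arbitrary sums}, which factorizes a two-vertex sum; Lemma~\ref{lem: reducing-sums-over-trees}, which controls the one-vertex pieces; and Lemma~\ref{lem: reduction-function-sums}, which controls the total edge function. By Lemma~\ref{lem: two-point-arbitrary sums},
\[
\W_\beta(\varOmega_1,\varOmega_2)\leq \W_\beta(\varOmega_1)\,\E_\beta(\varOmega_1,\varOmega_2)\,\W_\beta(\varOmega_2),
\]
so it suffices to bound each of the three factors separately.

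\textbf{One-vertex factors.} Since $\mathscr{V}_1=\mathscr{V}$, we have $\W_\beta(\vartheta)\leq\overline{\W}_\beta(\vartheta)$. Applying Lemma~\ref{lem: reducing-sums-over-trees} with a small $\eta_1$ gives $\W_\beta(\varOmega_i)\leq \eta_1 v_{\beta/2}(\varOmega_i)$ for $\beta\geq\beta_0(\eta_1)$.

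\textbf{Edge factor.} We have $\E_\beta=e+\sum_{n\geq1}\E_{\beta,n}$. Applying Lemma~\ref{lem: reduction-function-sums} with $\eta=1/2$ and summing the geometric series gives
\[
\E_\beta(\varOmega_1,\varOmega_2)\leq e(\varOmega_1,\varOmega_2)\bigl(1+v'_{\beta/2}(\varOmega_1)^2v'_{\beta/2}(\varOmega_2)^2\bigr)\leq 2\,e(\varOmega_1,\varOmega_2)\,v'_{\beta/2}(\varOmega_1)^2v'_{\beta/2}(\varOmega_2)^2 .
\]
The last inequality uses $v'_{\beta/2}\geq1$, which holds for $\beta$ large by Hypothesis~\ref{hyp: vertex-edge-function-hypotheses}.

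\textbf{Absorbing the polynomial factors.} Combining the bounds, it remains to show, for each $i$,
\[
\sqrt{2}\,\eta_1\, v_{\beta/2}(\varOmega_i)\, v'_{\beta/2}(\varOmega_i)^2\leq v_{\beta/4}(\varOmega_i).
\]
By definition, $v_{\beta/4}=v_{\beta/2}\,e^{v'_{\beta/2}}$. The required inequality is therefore $\sqrt2\,\eta_1 x^2\leq e^{x}$ with $x=v'_{\beta/2}(\varOmega_i)$. This holds for every $x>0$ once $\eta_1\leq 1/2$, because $x^2\leq 2e^x$ for all $x>0$ (the same bound already used in the proof of Lemma~\ref{lem: reduction-function-sums}). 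Taking $\beta_0$ to be the maximum of the thresholds from Lemmas~\ref{lem: reducing-sums-over-trees} and \ref{lem: reduction-function-sums} and Hypothesis~\ref{hyp: vertex-edge-function-hypotheses} concludes the argument.

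\textbf{Expected obstacle.} The only delicate point is bookkeeping of the temperature halvings. The one-vertex bound produces $v_{\beta/2}$, while Lemma~\ref{lem: reduction-function-sums} produces powers of $v'_{\beta/2}$. These are compatible precisely because $v'_{\beta/2}$ is the logarithmic gap between $v_{\beta/4}$ and $v_{\beta/2}$, so the polynomial factors in $v'$ are absorbed by passing from $\beta/2$ to $\beta/4$. If Lemma~\ref{lem: reduction-function-sums} is read as giving $v'_\beta$ instead of $v'_{\beta/2}$, then monotonicity of $\beta\mapsto v'_\beta$ would point the wrong way. In that case I would instead absorb using $v_{\beta/4}\geq v_{\beta/2}e^{v'_{\beta/2}}$ together with an extra factor from the $\eta_1$-smallness, which requires checking which version of $v'$ actually appears in that lemma's proof.
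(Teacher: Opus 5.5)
Your argument is correct and is essentially the paper's proof: factor via Lemma~\ref{lem: two-point-arbitrary sums}, bound the one-vertex factors through Lemma~\ref{lem: reducing-sums-over-trees}, bound $\E_\beta$ through Lemma~\ref{lem: reduction-function-sums}, and absorb the $v'_{\beta/2}(\cdot)^2$ factors by passing from $v_{\beta/2}$ to $v_{\beta/4}$. The paper differs only in bookkeeping: it treats the direct term $e(\varOmega_1,\varOmega_2)$ separately, takes $\eta=1$ in the one-vertex lemma, and takes $\eta=1/9$ in the edge lemma, instead of a small $\eta_1$. One arithmetic slip: the bound $x^2\leq 2e^x$ only gives what you need for $\eta_1\leq 1/(2\sqrt2)$, not for $\eta_1\leq 1/2$ (the $\eta_1=1/2$ claim is still true via $x^2\leq 4e^{-2}e^x$); this is harmless since $\eta_1$ is arbitrary, and your $v'_\beta$-versus-$v'_{\beta/2}$ worry is moot because the lemma is stated with $v'_{\beta/2}$.
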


\begin{proof}
    The proof is a direct application of Lemmas~\ref{lem: reducing-sums-over-trees} and \ref{lem: reduction-function-sums}. We have\begin{align*} 
        \W_\beta(\varOmega_1,\varOmega_2)&\leq v_{\beta/2}(\varOmega_1)e(\varOmega_1,\varOmega_2)v_{\beta/2}(\varOmega_2)+\sum_{n=1}^\infty v_{\beta/2}(\varOmega_1)\E'_{\beta/2,n}(\varOmega_1,\varOmega_2)v_{\beta/2}(\varOmega_2).
    \end{align*}By Hypothesis~\ref{hyp: vertex-edge-function-hypotheses}, we can bound the first term by \begin{align*}
        v_{\beta/2}(\varOmega_1)e(\varOmega_1,\varOmega_2)v_{\beta/2}(\varOmega_2)\leq 2^{-1}v_{\beta/4}(\varOmega_1)e(\varOmega_1,\varOmega_2)v_{\beta/4}(\varOmega_2).
    \end{align*}Let us now bound the second term. Choose $\eta=9^{-1}$ in Lemma~\ref{lem: reduction-function-sums}. Recalling that \begin{equation*}
        v'_{\beta/2}(\vartheta)^2v_{\beta/2}(\vartheta)\leq 2!v_{\beta/4}(\vartheta)=2v_{\beta/4}(\vartheta),
    \end{equation*} we obtain \begin{equation*}
        \sum_{n=1}^\infty v_{\beta/2}(\varOmega_1)\E'_{\beta/2,n}(\varOmega_1,\varOmega_2)v_{\beta/2}(\varOmega_2)\leq 2^{-1}v_{\beta/4}(\varOmega_1)e(\varOmega_1,\varOmega_2)v_{\beta/4}(\varOmega_2).
    \end{equation*}The corollary has thus been proved.
\end{proof}

\subsection{Many-vertex estimates}\label{subsec: many-vertex-estimates}

We now tackle the case $m\geq 3$. It is simpler than it seems, we just need a generalization of Lemma~\ref{lem: two-point-arbitrary sums} and a generalization of the contraction function to an arbitrary (finite) number of vertices. Let us first prove a simple result on tree sizes that is very useful in accomplishing the first task.

\begin{lemma}\label{lem: tree-size}
    Let $m,n\in\mathbb{N}$ such that $2\leq m\leq n$. Let $T\in\Tr_n$ such that\begin{enumerate}
        \item $L(T)\subset [m]$;
        \item $v\in [m+1,n]\Rightarrow \degr_T(v)\geq 3$.
    \end{enumerate} Then, $n\leq 2m-2$.
\end{lemma}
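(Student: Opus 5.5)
The plan is to use the handshake lemma, that is, the fact that in a tree the degrees sum to twice the number of edges. Since $T\in\Tr_n$ has exactly $n-1$ edges,
\begin{equation*}
    \sum_{v\in[n]}\degr_T(v)=2(n-1).
\end{equation*}
Because $n\geq m\geq 2$, the tree has at least two vertices, so every vertex has degree at least $1$. Hypothesis (2) gives $\degr_T(v)\geq 3$ for every $v\in[m+1,n]$. Using only $\degr_T(v)\geq1$ for the vertices $v\in[m]$, we get
\begin{equation*}
    2(n-1)\geq m+3(n-m)=3n-2m .
\end{equation*}
Rearranging yields $n\leq 2m-2$, which is the claim.

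Hypothesis (1) is consistent with this count but is not strictly needed beyond the lower bound $\degr_T\geq 1$ on $[m]$. It only records that the vertices outside $[m]$ are not leaves, which is already implied by (2). If one wanted a sharper statement, one could split $[m]$ into the leaves and the non-leaves. I do not expect this to be necessary.

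There is no real obstacle here. The only point to check is the degenerate situation. For $n=m$ the set $[m+1,n]$ is empty and the bound $m\leq 2m-2$ holds since $m\geq2$. The lower bound $\degr_T(v)\geq 1$ holds for all $v$ because $T$ is connected on at least two vertices.
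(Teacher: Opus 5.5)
Your proof is correct and is essentially the paper's argument. Both count edges via the degree sum $\lvert E(T)\rvert=n-1$, bound the degrees below by $1$ on $[m]$ and by $3$ on $[m+1,n]$, and conclude $n\leq 2m-2$; your remark that hypothesis (1) follows from (2) is accurate, and the paper's proof does not use it either.
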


\begin{proof}
    On the one hand, since $T$ is a tree, we have $\lvert E(T)\rvert=n-1$. On the other hand, we have \begin{align*}
        \lvert E(T)\rvert&=\frac{1}{2}\sum_{v\in [n]}\degr_T(v)=\frac{1}{2}\sum_{v\in [m]}\degr_T(v)+\frac{1}{2}\sum_{v=m+1}^n\degr_T(v)\geq \frac{1}{2}m+\frac{3}{2}(n-m)=\frac{1}{2}(3n-2m).
    \end{align*}We thus obtain $n\leq 2m-2$.
\end{proof}

\begin{remark}
    It is not difficult to construct a tree that satisfies the hypotheses of Lemma~\ref{lem: tree-size} such that $n=2m-2$. That is, $n=2m-2$ is the optimal bound.
\end{remark}

\begin{notation}\label{not: special-trees}
    Let $m\leq n\leq 2m-2$, we denote by $\Tr^*_{n,m}$ the set of elements of $\Tr_{n,m}$ (recall Notation~\ref{not: trees-with-set-of-leaves-fixed}) satisfying the conditions of Lemma~\ref{lem: tree-size}. 
    
    Let $[m]\subset B$ such that $\lvert B\rvert\leq 2m-2$, we denote by $\Tr^*_m(B)$ the set of elements $T\in \Tr(B,[m])$ such that $b\in B\setminus[m]\Rightarrow \degr_T(b)\geq 3$. Finally, we also write \begin{equation*}
        \Tr^*_m\coloneqq\bigsqcup_{n=m}^{2m-2}\Tr^*_{n,m}\textrm{ and }\overline{\Tr}^*_m\coloneqq\bigsqcup_{\substack{[m]\subset B\\\lvert B\rvert\leq 2m-2}}\Tr^*_m(B).
    \end{equation*}For a graphical representation of trees in $\Tr^*_m$ for $m$ small, see Figure~\ref{fig: star-trees} below.
\end{notation}

\begin{figure}[hbt!]
    \centering
    \input{Fig_star-trees}
    \caption{All examples of trees in $\Tr^*_{n,m}$ for $2\leq m\leq 5$. White vertices correspond to those labeled by $[m]$ and black vertices correspond to those labeled by $[m+1,n]$.}
    \label{fig: star-trees}
\end{figure}

\begin{lemma}\label{lem: many-point-arbitrary-sums}
    Let $\mathbf{w}\in(\varOmega_1,...,\varOmega_m)\in\mathscr{V}^m$, where $m\geq 2$. Then, \begin{equation}\label{eq: many-point-arbitrary-sums}
        \begin{split}
            \overline{\W}_\beta(\mathbf{w})&=\sum_{n=m}^{2m-2}\frac{1}{(n-m)!}\sum_{T\in\Tr^*_{n,m}}\sum_{\mathbf{v}\in\mathscr{V}^n}\ind_{\mathbf{v}|_{[m]}=\mathbf{w}}\prod_{i=1}^n\overline{\W}_\beta(\vartheta_i)\hspace{-0.35cm}\prod_{\{j,k\}\in E(T)}\hspace{-0.35cm}\overline{\E}_\beta(\vartheta_j,\vartheta_k),\\
            \W_\beta(\mathbf{w})&\leq\sum_{n=m}^{2m-2}\frac{1}{(n-m)!}\sum_{T\in\Tr^*_{n,m}}\sum_{\mathbf{v}\in\mathscr{V}_n}\ind_{\mathbf{v}|_{[m]}=\mathbf{w}}\prod_{i=1}^n\W_\beta(\vartheta_i)\hspace{-0.35cm}\prod_{\{j,k\}\in E(T)}\hspace{-0.35cm}\E_\beta(\vartheta_j,\vartheta_k).
        \end{split}
    \end{equation}
\end{lemma}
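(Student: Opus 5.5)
We generalize the proof of Lemma~\ref{lem: two-point-arbitrary sums}, replacing the linear skeleton $T[2]$ by the reduced skeleton of $T[m]$. Starting from Equation~\eqref{eq: mute-indices-cluster-formula}, we write $\overline{\W}_\beta(\mathbf{w})$ as a sum over $n\geq m$ and $T\in\Tr_n$ with weight $\frac{1}{(n-m)!}\sum_{\mathbf{v}}\ind_{\mathbf{v}|_{[m]}=\mathbf{w}}w_\beta(T,\mathbf{v})$, and insert $\ind_{\Psi_{n,m}(T)=\underline{T}}$, summing over $\underline{T}\in\overline{\Tr}_{n,m}$. Each $\underline{T}=T[m]$ has $L(\underline{T})\subset[m]$. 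Its vertices outside $[m]$ have degree at least $2$ in $\underline{T}$; we call those of degree exactly $2$ \emph{internal path vertices}. Contracting every maximal path whose interior consists of such vertices into a single edge gives a tree $\underline{T}^*$. Its vertices are $[m]$ together with the branching vertices outside $[m]$, and the latter have degree $\geq 3$. By Lemma~\ref{lem: tree-size}, $\lvert V(\underline{T}^*)\rvert\leq 2m-2$. After relabeling the branching vertices as $m+1,\dots,n^*$, this tree becomes an element of $\Tr^*_{n^*,m}$. So $T$ is encoded by three pieces of data: $S\in\Tr^*_{n^*,m}$; for each edge of $S$, an ordered sequence of path vertices; and for each vertex of $\underline{T}$, a tree of hanging vertices attached to it, exactly as in the pair $(\mathbf{P},\mathbf{T})$ of the two-vertex case. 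The weight factorizes as in Equation~\eqref{eq: weight-usual-cluster-2}: a product of $w_\beta(T_i,\mathbf{v}|_{P_i})$ over the vertices of $\underline{T}$, times the edge weights along $\underline{T}$.

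Next we resum the hanging trees. By the same mute-index computation as Equation~\eqref{eq: computation-mute-indices}, each vertex of $\underline{T}$ with fixed value $\vartheta$ contributes exactly $\overline{\W}_\beta(\vartheta)$. The only modification is that the combinatorial factor now involves $(n-m)!$ and the number of non-$[m]$ vertices of $\underline{T}$, rather than $(n-2)!$. For the path vertices on a given edge $\{j,k\}$ of $S$, summing over their number $\ell\geq 0$ and their values gives precisely the series defining $\overline{\E}_\beta(\vartheta_j,\vartheta_k)$: the term $\ell=0$ is $e(\vartheta_j,\vartheta_k)$, and the terms $\ell\geq1$ give $\overline{\E}_{\beta,\ell}$, with the factors $\overline{\W}_\beta$ on the internal vertices. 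The branching vertices of $S$ carry the weights $\overline{\W}_\beta(\vartheta_i)$ and free values. The prefactor $1/(n^*-m)!$ should then appear because relabeling the $n^*-m$ branching vertices by $[m+1,n^*]$ overcounts by exactly $(n^*-m)!$; the skeleton trees in $\Tr^*_{n^*,m}$ with their vertex values $\mathbf{v}$ are summed over all labelings.

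The main obstacle is the combinatorial bookkeeping. The labels of the non-$[m]$ vertices of $\underline{T}$ are arbitrary elements of $[m+1,n]$, and this freedom must be absorbed exactly. In the two-vertex case this is done through the binomial identity $\sum_{\underline{N}}\binom{\underline{N}-3}{\underline{n}-3}=\binom{n-2}{\underline{n}-2}$. Here we would avoid tracking the maximal label, and instead use the symmetry argument of Lemma~\ref{lem: mute-indices-cluster-weights}. Writing $k$ for the number of non-$[m]$ vertices in $\underline{T}$, choosing their labels contributes $\binom{n-m}{k}k!$. Choosing the partition of the remaining $n-m-k$ labels into hanging sets contributes a multinomial. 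Together these combine with $1/(n-m)!$ into $\prod_i 1/(n_i-1)!$ and $1/k!$. The factor $1/k!$ is then split: the path vertices are ordered along each edge and are absorbed by the ordering, and the branching vertices give $1/(n^*-m)!$. We would verify this identity carefully, since it is the only non-mechanical step.

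For $\W_\beta$ the same decomposition applies with $\mathbf{v}\in\mathscr{V}_n$. The global compatibility constraint on all $n$ vertices implies compatibility within each hanging tree, within each path, and among the skeleton vertices. Since $\mathscr{V}_{n}\subset\mathscr{V}_{n_1}\times\cdots$, relaxing it to these separate constraints only increases the sum, because all weights are nonnegative. This yields the inequality, with $\mathbf{v}\in\mathscr{V}_{n^*}$ retained on the skeleton.
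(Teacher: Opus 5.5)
Your proposal is correct and follows essentially the paper's route. You prune each tree to $T[m]$ and resum the hanging trees into $\overline{\W}_\beta$, contract the chains of degree-two non-$[m]$ vertices into $\overline{\E}_\beta$-edges, relabel the branching vertices (giving $1/(n-m)!$, with $n\le 2m-2$ by the tree-size lemma), and obtain the $\W_\beta$ inequality by relaxing global compatibility to blockwise compatibility. You merely do the paper's two contractions in one pass, and count labels directly instead of using the maximal-label binomial identity of the two-vertex proof.

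There is one small slip in that count. If you then sum over all labeled $\underline{T}$ on the chosen label set, choosing the $k$ labels contributes $\binom{n-m}{k}$, not $\binom{n-m}{k}k!$. It is this that leaves the factor $1/k!$. That factor becomes $1/(n^*-m)!$ once the $(k-b)!$ orderings of path labels are absorbed, with $b=n^*-m$, via $\frac{1}{k!}\binom{k}{b}(k-b)!=\frac{1}{b!}$.
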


\begin{remark}
    Note that, due to Lemma~\ref{lem: tree-size}, the combinatorial factor in Equation~\eqref{eq: many-point-arbitrary-sums} is immaterial, when it comes to perturbative results.
\end{remark}

\begin{proof}
    Since the combinatorial part of the argument is identical to that of the proof of Lemma~\ref{lem: two-point-arbitrary sums}, we just describe the argument in terms of operations of trees.

    In the first step, we contract every tree to ensure that the set of leaves is a subset of $[m]$. Then, we relabel them to ensure that their vertex set is given by $[n]$, for some $n\geq m$. That is, recalling Notation~\ref{not: trees-with-set-of-leaves-fixed}, we obtain \begin{equation}\label{eq: m-points-abstract-sum-step-one}
        \overline{\W}_\beta(\mathbf{w})=\sum_{n=m}^\infty\frac{1}{(n-m)!}\sum_{T\in\Tr_{n,m}}\sum_{\mathbf{v}\in\mathscr{V}^n}\ind_{\mathbf{v}|_{[m]}=\mathbf{w}}\prod_{i=1}^n\overline{\W}_{\beta}(\vartheta_i)\hspace{-0.35cm}\prod_{\{j,k\}\in E(T)}\hspace{-0.35cm}e(\vartheta_j,\vartheta_k).
    \end{equation}
    
    In the second step, we contract lines to guaranty that if a vertex is not labeled by an element of $[m]$, then it has degree at least $3$. Let $\Psi^*_m:\Tr_m\rightarrow\Tr^*_m$ be the application defined by the composition of the following two maps. First, let $T\in\Tr_m$, define \begin{equation*}
        V(T^*)\coloneqq[m]\sqcup\{i\in V(T): (i>m)\wedge (\delta_T(i)\geq 3)\}.
    \end{equation*}Let $i,j\in V(T^*)$, we denote the (unordered) path in $T$ connecting $i$ and $j$ by $\mathfrak{p}_T(i,j)$. In this case, we also define \begin{equation*}
        E(T^*)=\{\{i,j\}\in\mathrm{P}_2(V(T^*)):V(\mathfrak{p}_T(i,j))\cap V(T^*)=\{i,j\}\}.
    \end{equation*}Since $T^*\coloneqq (V(T^*),E(T^*))\in\Tr^*_m$, we have just constructed an application $\overline{\Psi}^*_m:\Tr_m\rightarrow\Tr^*_m$.  
    
    \begin{figure}[hbt!]
    \centering
    \input{Fig_theo-m-points-1}
    \caption{A graphic description of the procedure where $m=4$.}
    \label{fig: special-trees-application}
\end{figure}
    
    Let $[m]\subset B$ such that $\lvert B\rvert\leq 2m-2$. We denote by $\varphi_B:B\rightarrow[\lvert B\rvert]$ the canonical bijection satisfying \begin{equation*}
        b_1<b_2\in B\Rightarrow\varphi_B(b_1)<\varphi_B(b_2).
    \end{equation*}The family $\{\varphi_B:([m]\subset B\in\mathrm{P}(\mathbb{N}))\wedge(\lvert B\rvert\leq 2m-2)\}$ induces a canonical relabeling application $\overline{\varphi}_m:\Tr^*_m\rightarrow\Tr^*_m$. Finally, we define $\Psi^*_m\coloneqq\overline{\varphi}_m\circ\overline{\Psi}^*_m$. For a visual description, see Figure~\ref{fig: special-trees-application}.
    
    Proceeding as in the proof of Lemma~\ref{lem: two-point-arbitrary sums}, we transform Equation~\eqref{eq: m-points-abstract-sum-step-one} into Equation~\eqref{eq: many-point-arbitrary-sums}. Once again, we note that in the case of the global compatibility condition, we obtain upper bounds.    
\end{proof}

Fix $m\geq 3$. By Hypothesis~\ref{hyp: vertex-edge-function-hypotheses}, Lemma~\eqref{lem: reducing-sums-over-trees} and Equation~\eqref{eq: modified-tota-edge-bound} (consequence of Lemma~\ref{lem: reduction-function-sums}), if $\beta\geq\beta_0$ is sufficiently large, we may suppose that \begin{align*}
    &\W_\beta(\vartheta)\leq [(8m-12)!]^{-1/m}v_{\beta/2}(\vartheta),\\
    &\E_\beta(\vartheta_1,\vartheta_2)\leq v'_{\beta/2}(\vartheta_1)^2e(\vartheta_1,\vartheta_2)v'_{\beta/2}(\vartheta_2)^2,
\end{align*}for all $\vartheta,\vartheta_1,\vartheta_2\in\mathscr{V}$. By Lemma~\ref{lem: many-point-arbitrary-sums}, for all $\mathbf{w}\in\mathscr{V}_m$, we obtain \begin{equation*}
    \W_\beta(\mathbf{w})\leq\sum_{n=m}^{2m-2}\frac{[8m-12)!]^{-1}}{(n-m)!}\sum_{T\in\Tr^*_{n,m}}\sum_{\mathbf{v}\in\mathscr{V}_n}\ind_{\mathbf{v}|_{[m]}=\mathbf{w}}\prod_{i=1}^n v'_{\beta/2}(\vartheta_i)^{2\degr_{T}(i)}v_{\beta/2}(\vartheta_i)\hspace{-0.35cm}\prod_{\{j,k\}\in E(T)}\hspace{-0.35cm}e(\vartheta_j,\vartheta_k).
\end{equation*}On the other hand, since $x>0\Rightarrow x^n\leq n!e^x$, we have \begin{equation*}
    v'_{\beta/2}(\vartheta_i)^{2\degr_{T}(i)}v_{\beta/2}(\vartheta_i)\leq (2\degr_{T}(i))!v_{\beta/4}(\vartheta_i),
\end{equation*}and \begin{equation*}
    \sum_{i=1}^n2\cdot\degr_{T}(i)=4(n-1)\leq 8m-12\Rightarrow\prod_{i=1}^n(2\cdot\degr_{T}(i))!\leq(8m-12)!. 
\end{equation*}Therefore, \begin{equation}\label{eq: W'-definition}
    \begin{split}
        \W_\beta(\mathbf{w})&\leq\sum_{n=m}^{2m-2}\sum_{T\in\Tr^*_{n,m}}\frac{1}{(n-m)!}\sum_{\mathbf{v}\in\mathscr{V}_n}\ind_{\mathbf{v}|_{[m]}=\mathbf{w}}\prod_{i=1}^nv_{\beta/4}(\vartheta_i)\hspace{-0.35cm}\prod_{\{j,k\}\in E(T)}\hspace{-0.35cm}e(\vartheta_j,\vartheta_k)\\
        &=\sum_{n=m}^{2m-2}\sum_{T\in\Tr^*_{n,m}}\W_{\beta/4}[\mathbf{w}](T)\eqqcolon\sum_{n=m}^{2m-2}\W^*_{\beta/4,n}(\mathbf{w}).
    \end{split}
\end{equation}This motivates us to define the following generalization of the contracting function introduced in Equation~\eqref{eq: reducing-operator-definition}. Let $k\geq 1$, we define the $k$\textit{-th contracting function family} $\mathfrak{R}^k=\{\mathcal{R}^k_\beta:\beta>1\}$, where $\mathcal{R}^k_{\beta}:\mathscr{V}^k\rightarrow \mathbb{R}_{\geq 0}$ is given by \begin{equation}\label{eq: kth-contracting-function-definition}
    \mathcal{R}^k_{\beta}(\vartheta_1,...,\vartheta_k)\coloneqq\sum_{\vartheta\in\mathscr{V}}v_\beta(\vartheta)\prod_{i=1}^ke(\vartheta,\vartheta_i).
\end{equation} We suppose that it satisfies the following generalization of Hypothesis~\ref{hyp: reduction-operator}. 

\begin{hypothesis}\label{hyp: kth-contracting-function}
The family function $\mathfrak{V}'_k$ dominates the family function $\mathfrak{R}^k$. That is, \begin{equation*}\label{eq: kth-contracting-function-hypothesis}
    \lim_{\beta\rightarrow\infty}\sup_{\mathbf{v}\in\mathscr{V}_k}\left\lvert\frac{\mathcal{R}_{\beta}^k(\mathbf{v})}{\sum_{\varphi\in\mathrm{Sym}(k)}v'_\beta(\vartheta_{\varphi(k)})\prod_{i=1}^{k-1}e(\vartheta_{\varphi(i)},\vartheta_{\varphi(i+1)})}\right\rvert=0,
\end{equation*}where we recall that $\mathrm{Sym}(k)$ denotes the $k$-th symmetric group.
\end{hypothesis}

We note that Remark~\ref{rem: second-reduction-function} extends to $k$-th contracting functions, where $k\geq3$. Nevertheless, the graphical interpretation is more subtle in this case. 

Let $m<n\leq 2m-2$ and $T\in\Tr^*_{n,m}$. Recalling Notation~\ref{not: specific-vertex-sets}, we write $$\mathrm{d}\equiv\degr_T(n)\geq 3,\textrm{ }E\equiv E_T(n)\textrm{ and }N_T(n)=\{l_1,...,l_\mathrm{d}\},$$ where $l_1<...<l_\mathrm{d}$. Thus, \begin{equation*}
    \W_\beta[\mathbf{w}](T)=\sum_{\mathbf{v}\in\mathscr{V}_{n-1}}\ind_{\mathbf{v}|_{[m]}=\mathbf{w}}\prod_{i=1}^{n-1}v_\beta(\mathbf{\vartheta}_i)\hspace{-.45cm}\prod_{\{j,k\}\in E(T)\setminus E}\hspace{-.45cm}e(\vartheta_j,\vartheta_k)\cdot \mathcal{R}^\mathrm{d}_\beta(\vartheta_{l_1},....,\vartheta_{l_\mathrm{d}}).
\end{equation*}We begin by noting that removing the vertex $n$ induces an ordered partition $\mathbf{P}=(Q_1,...,Q_\mathrm{d})$ of $[n-1]$ such that if $p\in[\hspace{.05cm}\mathrm{d}\hspace{.05cm}]$, then $l_p\in Q_p$ and $T|_{Q_p}\in\Tr(Q_p)$, where $E(T|_{Q_p})\coloneqq E(T)\cap\mathrm{P}(Q_p)$. Hence, \begin{equation*}
    \prod_{i=1}^{n-1}v_\beta(\vartheta_i)\hspace{-.35cm}\prod_{\{j,k\}\in E(T)\setminus E}\hspace{-.35cm}e(\vartheta_j,\vartheta_k)=\prod_{p=1}^\mathrm{d} w_\beta[T|_{Q_p}](\mathbf{v}|_{Q_p}).
\end{equation*}If $\eta_\mathrm{d}>0$, then, by Hypothesis~\ref{hyp: kth-contracting-function}, there exists $\beta_0=\beta_0(\mathrm{d},\eta_\mathrm{d})$ such that $\beta\geq\beta_0$ implies \begin{equation*}
    \mathcal{R}^\mathrm{d}_\beta(\vartheta_{i_1},...,\vartheta_{i_\mathrm{d}})\leq \eta_\mathrm{d}\sum_{\varphi\in\mathrm{Sym}(\mathrm{d})}v'_\beta(\vartheta_{l_{\varphi(\mathrm{d})}})\prod_{p=1}^{\mathrm{d}-1}e(\vartheta_{l_{\varphi(p)}},\vartheta_{l_{\varphi(p+1)}}).
\end{equation*} Combining this fact with Hypothesis~\ref{hyp: vertex-edge-function-hypotheses} and $v_\beta(\cdot)v'_\beta(\cdot)\leq v_{\beta/2}(\cdot)$, we obtain \begin{equation*}
    \W_\beta[\mathbf{w}](T)\leq \eta_\mathrm{d}\sum_{\varphi\in\mathrm{Sym}(\mathrm{d})}\sum_{\mathbf{v}\in\mathscr{V}_{n-1}}\prod_{p=1}^\mathrm{d} w_{\beta/2}[T|_{Q_p}](\mathbf{v}|_{Q_p})\prod_{q=1}^{\mathrm{d}-1}e(\vartheta_{l_{\varphi(q)}},\vartheta_{l_{\varphi(q+1)}}).
\end{equation*} That is, for every $\varphi\in\mathrm{Sym}(\mathrm{d})$, there exists a (unique) tree $T_\varphi\in\Tr_{n-1}$ such that if $\beta>0$ and $\mathbf{v}\in\mathscr{V}_{n-1}$, then\begin{equation*}
    w_\beta[T_\varphi](\mathbf{v})=\prod_{p=1}^{\mathrm{d}}w_{\beta}[T|_{Q_p}](\mathbf{v}|_{Q_p})\prod_{q=1}^{\mathrm{d}-1}e(\vartheta_{l_{\varphi(q)}},\vartheta_{l_{\varphi(q+1)}}).
\end{equation*}More precisely, 
\[
    E(T_\varphi)=(E(T)\setminus E)\cup \{\{l_{\varphi(p)},l_{\varphi(p+1)}\}:p\in[\mathrm{d}-1]\}.
\]
Therefore, if $\varphi_1,\varphi_2\in\mathrm{Sym}(\mathrm{d})$, then $T_{\varphi_1}=T_{\varphi_2}$ if, and only if, \begin{equation*}
    \{\{{\varphi_1(p)},{\varphi_1(p+1)}\}:p\in[\mathrm{d}-1]\}=\{\{{\varphi_2(p)},{\varphi_2(p+1)}\}:p\in[\mathrm{d}-1]\}.
\end{equation*}It is not hard to see that either $\varphi_1=\varphi_2$ or $\varphi_1=\varphi_2\circ \varphi_{\mathrm{d},r}$, where $\varphi_{\mathrm{d},r}\in\mathrm{Sym}(\mathrm{d})$ denotes the \textit{reversal permutation}; that is, $\varphi_{\mathrm{d},r}(i)=\mathrm{d}-i$ for $i\in[\hspace{.05cm}\mathrm{d}\hspace{.05cm}]$. There is a final, and crucial, observation to be made: if $T\in\Tr_n$ and $\varphi\in\mathrm{Sym}(\mathrm{d})$, then $\degr_T(i)\leq \degr_{T_\varphi}(i)$ for all $i\in[n-1]$. That is, if $m<n\leq 2m-2$ and $T\in\Tr^*_{n,m}$, then $T_\varphi\in\Tr^*_{n-1,m}$ for all $\varphi\in\mathrm{Sym}(\mathrm{d})$. We conclude that if $m<n\leq 2m-2$, $T\in\Tr^*_{n,m}$ and $\beta\geq\beta_0(\mathrm{d},\eta)$, then\begin{equation}\label{eq: W-beta-T-star}
    \W_\beta[\mathbf{w}](T)\leq \eta_\mathrm{d}\sum_{\varphi\in\mathrm{Sym}(\mathrm{d})}\W_{\beta/2}[\mathbf{w}](T_\varphi)\leq 2\eta_\mathrm{d}\sum_{\tilde{T}\in\Tr^*_{n-1,m}}\W_{\beta/2}[\mathbf{w}](\tilde{T})=2\eta_\mathrm{d}\W_{\beta/2,n-1}^*(\mathbf{w}).
\end{equation}Figure~\ref{fig: kthcontractingfunction} provides us with a useful visual description of the operation on trees associated with the $k$-th contracting function, and how it relates to Hypothesis~\ref{hyp: kth-contracting-function}. Although these graph operations are very complex (resolving $\mathcal{R}_{\beta}^k$ generates $k!/2$ distinct trees), this does not pose us any problems. That is because of Lemma~\ref{lem: tree-size}.

\begin{figure}[hbt!]
    \centering
    \input{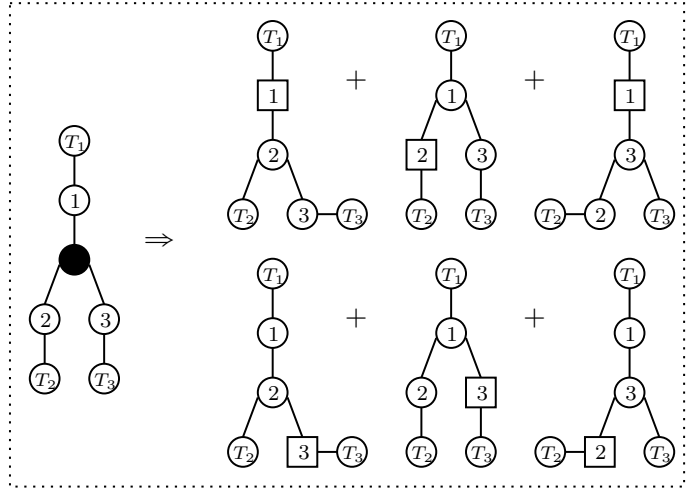}
    \caption{Graphs arising after removing the black vertex. Squares indicate the vertices carrying the extra $v'_\beta(\cdot)$ term. Note that this graph operation preserves the set of leaves. Note that the degree of every remaining vertex does not decrease.}
    \label{fig: kthcontractingfunction}
\end{figure}

\begin{theorem}\label{theo: m-point-bound}
    Let $m\geq 1$. There exists $\beta_0\coloneqq\beta_0(m)>0$ such that if $\beta\geq\beta_0$ and $\mathbf{w}\in\mathscr{V}_m$, then \begin{equation}\label{eq: m-point-bound}
       \W_\beta(\mathbf{w})\leq \sum_{T\in\Tr_m}\prod_{i=1}^mv_{2^{-m}\beta}(\varOmega_i)\hspace{-0.35cm}\prod_{\{j,k\}\in E(T)}\hspace{-0.35cm}e(\varOmega_j,\varOmega_k).
    \end{equation}
\end{theorem}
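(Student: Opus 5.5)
We argue by cases on $m$, reducing the general case to an iteration of the vertex-removal bound \eqref{eq: W-beta-T-star}. For $m=1$ the statement is Lemma~\ref{lem: reducing-sums-over-trees} with $\eta=1$, since $\Tr_1$ consists of the single-vertex tree and $v_{\beta/2}$ is exactly the claimed bound. For $m=2$ it is Corollary~\ref{cor: reduction-operator-sums}, which gives $\W_\beta(\varOmega_1,\varOmega_2)\leq v_{\beta/4}(\varOmega_1)e(\varOmega_1,\varOmega_2)v_{\beta/4}(\varOmega_2)$, and $2^{-2}\beta=\beta/4$.

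So fix $m\geq 3$ and $\mathbf{w}\in\mathscr{V}_m$. By Equation~\eqref{eq: W'-definition}, for $\beta$ large (depending on $m$),
\[
\W_\beta(\mathbf{w})\leq\sum_{n=m}^{2m-2}\W^*_{\beta/4,n}(\mathbf{w}),\qquad \W^*_{\beta,n}(\mathbf{w})=\sum_{T\in\Tr^*_{n,m}}\W_{\beta}[\mathbf{w}](T).
\]
The key step is to iterate \eqref{eq: W-beta-T-star}. For each $T\in\Tr^*_{n,m}$ with $n>m$, the vertex $n$ has degree $\mathrm{d}\leq n-1\leq 2m-3$. With a uniform choice $\eta_\mathrm{d}=\eta$ and $\beta\geq\max_{3\leq\mathrm{d}\leq 2m-3}\beta_0(\mathrm{d},\eta)$, we obtain
\[
\W^*_{\beta,n}(\mathbf{w})\leq 2\eta\,\lvert\Tr^*_{n,m}\rvert\,\W^*_{\beta/2,n-1}(\mathbf{w}).
\]
Applying this $n-m$ times yields
\[
\W^*_{\beta/4,n}(\mathbf{w})\leq C_m\eta^{n-m}\,\W^*_{2^{-(n-m)-2}\beta,m}(\mathbf{w}),
\]
where $C_m$ is a purely combinatorial constant: it involves $\lvert\Tr^*_{k,m}\rvert$ for $k\leq 2m-2$ and powers of $2$. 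To keep the temperatures above the thresholds, the base $\beta$ must be large enough that $2^{-(m-2)-2}\beta=2^{-m}\beta$ still exceeds every $\beta_0(\mathrm{d},\eta)$ used. Since $n-m\leq m-2$, the worst temperature reached is $2^{-m}\beta$. The point of Lemma~\ref{lem: tree-size} is that only finitely many such constants appear, so a single $\beta_0(m)$ suffices.

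For $n=m$, the set $\Tr^*_{m,m}$ is just $\Tr_m$, the trees on $[m]$ with no extra vertices. Moreover, $\W_{\beta'}[\mathbf{w}](T)=\prod_i v_{\beta'}(\varOmega_i)\prod_{E(T)}e(\varOmega_j,\varOmega_k)$, because the indicator $\ind_{\mathbf{v}|_{[m]}=\mathbf{w}}$ fixes all coordinates and $\mathbf{w}\in\mathscr{V}_m$. By monotonicity in $\beta$ (Hypothesis~\ref{hyp: vertex-edge-function-hypotheses} gives $v_{\beta}\leq v_{\beta/2}$, as $v'_\beta\geq0$ for large $\beta$), every term is bounded by the same expression at temperature $2^{-m}\beta$. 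Summing over $n$ gives
\[
\W_\beta(\mathbf{w})\leq\Bigl(1+\sum_{k=1}^{m-2}C_m\eta^k\Bigr)\sum_{T\in\Tr_m}\prod_i v_{2^{-m}\beta}(\varOmega_i)\prod_{\{j,k\}\in E(T)}e(\varOmega_j,\varOmega_k).
\]

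The main obstacle is the leading factor $1+\sum_k C_m\eta^k$, which exceeds $1$. I would absorb it in one of two ways. The first is to use the slack already available: Equation~\eqref{eq: W'-definition} allows the prefactor $[(8m-12)!]^{-1/m}$ on $\W_\beta(\vartheta)$ to be replaced by any smaller constant, so the $n=m$ term can carry a factor $1/2$, and choosing $\eta$ small makes the $n>m$ terms total at most $1/2$. The second, if that fails, is to give up an extra factor $v_{2^{-m}\beta}/v_{2^{-m-1}\beta}\to0$ at one vertex. The other point I would check carefully is the bookkeeping in \eqref{eq: W-beta-T-star}: it must stay inside globally compatible sums, which holds because $\mathbf{v}\in\mathscr{V}_{n-1}$ is preserved when a vertex is removed.
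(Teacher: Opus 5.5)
Your proposal is correct and follows the paper's proof. You start from \eqref{eq: W'-definition}, iterate the vertex-removal bound \eqref{eq: W-beta-T-star} down to $\Tr^*_{m,m}=\Tr_m$, and finish by monotonicity in $\beta$. The paper fixes $\eta_{\mathrm{d}}=(4\lvert\Tr_{2m-2}\rvert)^{-1}$ so that each step gives $\W^*_{\beta,n+1}(\mathbf{w})\le\frac12\W^*_{\beta/2,n}(\mathbf{w})$. It then absorbs the leading constant through the temperature gap: every term with $n<2m-2$ sits at temperature at least $2^{1-m}\beta$, so it can shed a factor $\frac12$, and the coefficients $\sum_{k=0}^{m-3}2^{-k-1}+2^{-(m-2)}$ sum to exactly $1$. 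This is the same mechanism as your first fix, so your fallback option is not needed.
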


\begin{proof}
    We assume that $m\geq 3$. The starting point is Equation~\eqref{eq: W'-definition}. Our strategy is to show that if $m\leq n\leq n+1\leq 2m-2$, then \begin{equation}\label{eq: iterative-step-m-point-bound}
        \W^*_{\beta,n+1}(\mathbf{w})\leq \frac{1}{2}\W^*_{\beta/2,n}(\mathbf{w}).
    \end{equation}This is indeed sufficient: assume that Equation~\eqref{eq: iterative-step-m-point-bound} holds. An easy induction argument proves that if $0\leq k\leq m-2$, then  $\W^*_{\beta,m+k}(\mathbf{w})\leq 2^{-k}\W^*_{2^{-k}\beta,m}(\mathbf{w})$. Therefore, \begin{equation}\label{eq: consequence-of-iterative-step-m-point-bound}
        \W_\beta(\mathbf{w})\leq\sum_{n=m}^{2m-2}\W^*_{\beta/4,n}(\mathbf{w})= \sum_{k=0}^{m-2}\W^*_{\beta/4,m+k}(\mathbf{w})\leq\sum_{k=0}^{m-2}\frac{1}{2^k}\W^*_{2^{-2-k}\beta,m}(\mathbf{w}).
    \end{equation}Since $\Tr^*_{m,m}=\Tr_m$,  we have \begin{equation*}
        \W^*_{2^{-2-k}\beta,m}(\mathbf{w})=\sum_{T\in\Tr_m}\prod_{i=1}^mv_{2^{-2-k}\beta}(\varOmega_i)\hspace{-0.35cm}\prod_{\{j,k\}\in E(T)}\hspace{-0.35cm}e(\varOmega_j,\varOmega_k).
    \end{equation*}Thus, if $0\leq k<m-2$, then, by Hypothesis~\ref{hyp: vertex-edge-function-hypotheses}, we may suppose that \begin{equation}\label{eq: W'-same-index-decreasing-beta-bound}
        \W^*_{2^{-2-k}\beta,m}(\mathbf{w})\leq\frac{1}{2}\sum_{T\in\Tr_m}\prod_{i=1}^mv_{2^{-m}\beta}(\varOmega_i)\hspace{-0.35cm}\prod_{\{j,k\}\in E(T)}\hspace{-0.35cm}e(\varOmega_j,\varOmega_k)=\frac{1}{2}\W^*_{2^{-m}\beta,m}(\mathbf{w}).
    \end{equation}Inserting Equation~\eqref{eq: W'-same-index-decreasing-beta-bound} into Equation~\eqref{eq: consequence-of-iterative-step-m-point-bound}, we obtain \begin{equation*}
        \W_\beta(\mathbf{w})\leq\left(\sum_{k=0}^{m-3}\frac{1}{2^{k+1}}+\frac{1}{2^{m-2}}\right)\W^*_{2^{-m}\beta,m}(\mathbf{w})=\sum_{T\in\Tr_m}\prod_{i=1}^mv_{2^{-m}\beta}(\varOmega_i)\hspace{-.35cm}\prod_{\{j,k\}\in E(T)}\hspace{-.35cm}e(\varOmega_j,\varOmega_k).
    \end{equation*}To finish the proof, let us show that Equation~\eqref{eq: iterative-step-m-point-bound} holds. 

    By Hypothesis~\ref{hyp: kth-contracting-function}, there exists $\beta_0$ such that if $\beta\geq\beta_0$, then \begin{equation*}
        \mathcal{R}^k_\beta(\mathbf{v})\leq (4\lvert\Tr_{2m-2}\rvert)^{-1}\sum_{\varphi\in\mathrm{Sym}(k)}v'_\beta(\vartheta_{\varphi(k)})\prod_{i=1}^{k-1}e(\vartheta_{\varphi(i)},\vartheta_{\varphi(i+1)}),
    \end{equation*}for $3\leq k\leq 2m-3$ and all $\mathbf{v}\in\mathscr{V}_k$. Hence, recalling Equation~\eqref{eq: W-beta-T-star}, we obtain \begin{equation*}
        \W^*_{\beta,n+1}(\mathbf{w})=\sum_{T\in\Tr^*_{n+1,m}}\W_\beta[\mathbf{w}](T)\leq \sum_{T\in\Tr^*_{n+1,m}}(2\lvert\Tr_{2m-2}\rvert)^{-1}\W^*_{\beta/2,n}(\mathbf{w})\leq\frac{1}{2}\W^*_{\beta/2,n}(\mathbf{w}),
    \end{equation*}where the final inequality holds since $T\in\Tr^*_{n+1,m}$ implies $\degr_{T}(n+1)\in [3,2m-3]$ and $m\leq n\leq 2m-2$ implies $\lvert\Tr^*_{n,m}\rvert\leq\lvert\Tr_{2m-2}\rvert$. The theorem has thus been proved.    
\end{proof}

\section{Convergence of the polymer gas expansion}\label{sec: convergence}

In this Section, we prove that the formal series in Equation~\eqref{eq: pressure-polymer-gas} converges absolutely provided $\beta$ is sufficiently large. Employing a graph-tree bound, we transform the sum over graphs into a sum over trees, which allows us to prove absolute convergence by applying twice the framework developed in Section~\ref{sec: sums-over-trees}.

\subsection{Implementing the program}\label{subsec: the-program}  First, we recall the following graph-tree bound (see \cite{Penrose1963}, or Proposition 5 in \cite{Fernndez2007}, or the generalist approach in \cite{Scott2005}): \begin{equation}\label{eq: graph-tree-bound}
    \lvert\phi_n(\mathbf{\Gamma})\rvert\leq \sum_{T\in\Tr_n}\ind_{E(T)\subset E(\mathbf{\Gamma})}.
\end{equation} Therefore, recalling Equation~\eqref{eq: pressure-polymer-gas} and Lemma~\ref{lem: upper-activity}, to establish the absolute convergence of the pressure, it is sufficient to show that \begin{equation*}\label{eq: upper-pressure}
    \overline{P}_{\beta,h}(\Lambda)\coloneqq\frac{1}{\lvert \Lambda\rvert}\sum_{n=1}^\infty\frac{1}{n!}\sum_{T\in\Tr_n}\sum_{\mathbf{\Gamma}\in\mathscr{P}^n_\Lambda}\prod_{i=1}^n\bar{z}_{\beta,h}(\Gamma_i)\hspace{-0.25cm}\prod_{\{j,k\}\in E(T)}\hspace{-0.25cm}\ind_{\Gamma_j\not\sim\Gamma_k}
\end{equation*}is uniformly bounded with respect to $\Lambda\in\mathrm{P}(\mathbb{Z})$. Since we are interested in applying the cluster expansion to compute the correlations of the systems, we limit ourselves to dealing with the case where $h$ is non-trivial in a finite (fixed) set of lattice sites. For this reason, we may suppose (see the proof of Theorem~\ref{theo: convergence-cluster-expansion}) that $h\equiv 0$, which we shall do for the rest of this section.

We are interested in two types of sums of trees: trees of polymers and trees of contours. Recalling Definition~\ref{def: families-of-sums}, in the case of polymers, we are interested in the family of \textit{locally} compatible sums $\mathfrak{\overline{W}}$ associated to the triple \begin{equation*}
    (\mathscr{P},\{\bar{z}_\beta(\cdot)\},\ind_{\cdot\not\sim\cdot}).
\end{equation*} In the case of contours, we are interested in the family of \textit{globally} compatible sums $\mathfrak{W}$ associated to the triple (recall Definition~\ref{def: positive-partitions}) \begin{equation*}
    (\mathscr{C},\{e^{-\beta H(\cdot)}\},\Phi(\cdot,\cdot)\ind_{\cdot\parallel\cdot}).
\end{equation*} Since each sum is of a different type, we may establish the following convention:

\begin{convention}\label{conv: tree-of-polymers-trees-of-contours}
    Whenever we write $\overline{\W}_\beta(\cdot)$, we mean a (locally compatible) sum over trees of \textit{polymers}, and whenever we write $\W_\beta(\cdot)$, we mean a (globally compatible) sum over trees of \textit{contours}.
\end{convention}

One should readily recognize that \begin{equation}\label{eq: pressure-tree-function}
    \overline{P}_\beta(\Lambda)\leq\frac{1}{\lvert\Lambda\rvert}\sum_{\Gamma\in\mathscr{P}_\Lambda}\overline{\W}_{\beta}(\Gamma).
\end{equation}In order to apply Lemma~\ref{lem: reducing-sums-over-trees}, which shall be sufficient to prove the absolute convergence of the cluster expansion, we must show that Hypothesis~\ref{hyp: vertex-edge-function-hypotheses} holds for $\{\bar{z}_\beta(\cdot)\}$ and the appropriate leaf pruning function satisfies Hypothesis~\ref{hyp: leaf-pruning-operator}. 

\begin{convention}\label{conv: leaf-pruning-operators}
    In order to distinguish between the leaf pruning function for polymers and for contours, we will denote the first by $\overline{\mathcal{R}^1}_\beta$ and the second by $\mathcal{R}^1_\beta$.
\end{convention}

The first condition is very easy to check. For any $\Gamma\in\mathscr{P}$,\begin{align}\label{eq: v-prime-function-polymers}
    \beta\mapsto\log\frac{\bar{z}_{\beta/2}(\Gamma)}{\bar{z}_{\beta}(\Gamma)}=\sum_{\gamma\in\Gamma}\frac{1}{4}\beta H(\gamma)\textrm{ is increasing and}\inf_{\Gamma\in\mathscr{P}}\log\frac{\bar{z}_{\beta/2}(\Gamma)}{\bar{z}_{\beta}(\Gamma)}\geq\frac{1}{2}\beta\xrightarrow{\beta\rightarrow\infty}\infty.
\end{align}The second condition is the content of Lemma~\ref{lem: notjoin-leaf-operator-satisfies-hypothesis}. It is reasonable that it should be proved later, for its proof rests on the application of the machinery of Section~\ref{sec: sums-over-trees} to trees of contours: the content of the next subsection.

\subsection{Sums of trees of contours}\label{subsec: sums-of-trees-of-contours}

Assume that $\overline{\mathcal{R}^1}_\beta$ satisfies Hypothesis~\ref{hyp: leaf-pruning-operator}. In this case, choosing $\eta=1$ in Lemma~\ref{lem: reducing-sums-over-trees}, for $\beta\geq \beta_0$ sufficiently large, we have \begin{align}\label{eq: pressure-intermediary-bound}
    \overline{P}_\beta(\Lambda)\leq \frac{1}{\lvert\Lambda\rvert}\sum_{\Gamma\in\mathscr{P}_\Lambda}\bar{z}_{\beta/2}(\Gamma).
\end{align}

\begin{remark}
    Although Equation~\eqref{eq: pressure-intermediary-bound} is only valid if $\overline{\mathcal{R}^1}_\beta$ satisfies Hypothesis~\eqref{hyp: leaf-pruning-operator}, every estimate involving sums over trees of contours shown in this section is true regardless. In fact, as mentioned earlier, the proof of Lemma~\ref{lem: notjoin-leaf-operator-satisfies-hypothesis} is based on such estimates.
\end{remark}

\begin{notation}\label{not: calC_k}
    We write \begin{equation*}
        \mathscr{C}_k=\{(\gamma_1,...,\gamma_k)\in\mathscr{C}^k:i\neq j\Rightarrow\gamma_i\parallel\gamma_j\}.
    \end{equation*}
\end{notation}

Recalling Equation~\eqref{eq: upper-activity-no-field}, if $m\in\mathbb{N}$ and $\mathbf{w}=(\gamma_1,...,\gamma_n)\in\mathscr{C}_m$, then\begin{equation}\label{eq: bar-zeta-compared-to-W}
    \sum_{\Gamma\in\mathscr{P}}\bar{z}_\beta(\Gamma)\prod_{i=1}^m\ind_{\gamma_i\in\Gamma}= \W_{\beta/2}(\mathbf{w}),
\end{equation}which justifies the choice of the triple $(\mathscr{C},\{e^{-\beta H(\cdot)}\},\Phi(\cdot,\cdot)\ind_{\cdot\parallel\cdot})$. Similarly to the case of polymers, we must show that $\{e^{-\beta H(\cdot)}\}$ satisfies Hypothesis~\ref{hyp: vertex-edge-function-hypotheses}. If $\gamma\in\mathscr{C}$, then \begin{equation}\label{eq: modified-vertex-function-contours}
    \log\frac{e^{-\frac{1}2\beta H(\gamma)}}{e^{-\beta H(\gamma)}}=\frac{1}{2}\beta H(\gamma)\textrm{ is increasing in $\beta$ and}\inf_{\gamma\in\mathscr{C}}\log\frac{e^{-\frac{1}{2}\beta H(\gamma)}}{e^{-\beta H(\gamma)}}\geq\beta\xrightarrow{\beta\rightarrow\infty}\infty.
\end{equation}
In order to apply Lemma~\ref{lem: reducing-sums-over-trees}, we must now show that the appropriate leaf pruning function $\mathcal{R}^1_\beta:\mathscr{C}\rightarrow \mathbb{R}$ satisfies Hypothesis~\ref{hyp: leaf-pruning-operator}. This is the content of the following lemma:

\begin{lemma}\label{lem: leaf-cutter-operator}
    There exists $\beta_0>0$ such that, if $\beta\geq\beta_0$ and $\gamma_1\in\mathscr{C}$, then \begin{equation*}
        \mathcal{R}^1_\beta(\gamma_1)\leq 2e^{-c_2\beta/2} H(\gamma_1).
    \end{equation*}
\end{lemma}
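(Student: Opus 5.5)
We need to bound
\[
\mathcal{R}^1_\beta(\gamma_1)=\sum_{\gamma\in\mathscr{C}}e^{-\beta H(\gamma)}\Phi(\gamma,\gamma_1)\ind_{\gamma\parallel\gamma_1}.
\]
The plan is to write $\Phi$ as a sum over pairs of lattice sites, exchange the order of summation, and reduce everything to Hypothesis~\ref{hyp: phase-transition-estimate} in its translated form. Recall from Equation~\eqref{eq: definition-phi} that
\[
\Phi(\gamma,\gamma_1)=\sum_{x\in\I_-(\gamma),\,y\in\I_-(\gamma_1)}4J_{xy}.
\]
Since $\I_-(\gamma)\cap\I_-(\gamma_1)=\emptyset$ when $\gamma\parallel\gamma_1$, we may drop the compatibility indicator, keeping only $x\neq y$, to get an upper bound. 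This gives
\[
\mathcal{R}^1_\beta(\gamma_1)\leq\sum_{y\in\I_-(\gamma_1)}\sum_{x\neq y}4J_{xy}\sum_{\gamma\in\mathscr{C}}\ind_{x\in\I_-(\gamma)}e^{-\beta H(\gamma)}.
\]

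By translation invariance of $\mathscr{C}$ and $H$, the innermost sum equals $\sum_{\gamma}\ind_{0\in\I_-(\gamma)}e^{-\beta H(\gamma)}$, which is at most $e^{-c_2\beta}$ by Hypothesis~\ref{hyp: phase-transition-estimate}. However, this naive route produces $4\zeta(\alpha)\cdot 2\,e^{-c_2\beta}\lvert\I_-(\gamma_1)\rvert$. That bound is too weak, since $\lvert\I_-(\gamma_1)\rvert$ is not controlled by $H(\gamma_1)$; it can be as large as $e^{c_0H}$.

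So the key step, and the main obstacle, is to avoid the factor $\lvert\I_-(\gamma_1)\rvert$ by exploiting the geometry. For each fixed $\gamma$, the interior $\I_-(\gamma)$ lies outside $\I_-(\gamma_1)$. I will split the sum over $\gamma$ according to whether $\I_-(\gamma)\subset\I_-(\gamma_1)^c$ is nested or adjacent, and bound $\Phi(\gamma,\gamma_1)$ by
\[
\sum_{x\in\I_-(\gamma),\,y\in\I_-(\gamma_1)}4J_{xy}\leq 2\sum_{x\in\I_-(\gamma),\,y\notin\I_-(\gamma)}2J_{xy}=2H(\gamma),
\]
and symmetrically by $2H(\gamma_1)$. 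Thus $\Phi(\gamma,\gamma_1)\leq 2\min\{H(\gamma),H(\gamma_1)\}\leq 2H(\gamma_1)$ (up to the constant from Equation~\eqref{eq: dual-hamiltonian}).

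The remaining task is to count the contours $\gamma$ that interact, namely those with $\Phi(\gamma,\gamma_1)>0$, which is all of them. So instead I sum over the position of $\gamma$ weighted by $\Phi$ itself, using both bounds together. Write $\Phi(\gamma,\gamma_1)\leq\sum_{x\in\I_-(\gamma)}\psi(x)$ with $\psi(x)=\sum_{y\in\I_-(\gamma_1)}4J_{xy}$. Then, exchanging sums,
\[
\mathcal{R}^1_\beta(\gamma_1)\leq\sum_{x\notin\I_-(\gamma_1)}\psi(x)\,e^{-c_2\beta}.
\]
Now $\sum_{x\notin\I_-(\gamma_1)}\psi(x)=\sum_{x\notin\I_-,\,y\in\I_-}4J_{xy}=2H(\gamma_1)$ by Equation~\eqref{eq: dual-hamiltonian}. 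This yields $\mathcal{R}^1_\beta(\gamma_1)\leq 2e^{-c_2\beta}H(\gamma_1)$, which is even stronger than claimed.

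To make this rigorous, I must check that restricting to $x\notin\I_-(\gamma_1)$ is legitimate; this follows from $\gamma\parallel\gamma_1$. I must also handle the fact that the reduction $\sum_{\gamma}\ind_{x\in\I_-(\gamma)}e^{-\beta H(\gamma)}\leq e^{-c_2\beta}$ is uniform in $x$. The stated loss $e^{-c_2\beta/2}$ leaves room for adjusting constants, for instance by replacing $\beta$ with $\beta/2$ where Hypothesis~\ref{hyp: phase-transition-estimate} needs $\beta/2\geq\beta_0$.
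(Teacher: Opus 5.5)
Your final argument is correct and is essentially the paper's proof. Both expand $\Phi(\gamma,\gamma_1)$ over pairs $x\in\I_-(\gamma_1)$, $y\in\I_-(\gamma_1)^c$ (legitimate because $\gamma\parallel\gamma_1$ forces $\I_-(\gamma)\subset\I_-(\gamma_1)^c$), exchange the sums so that $\sum_{x,y}4J_{xy}=2H(\gamma_1)$ factors out, and control the remaining one-site contour sum by Hypothesis~\ref{hyp: phase-transition-estimate}. The only difference is in that last step: the paper groups contours into translation classes and bounds the number of translates covering a site by $\mathrm{diam}(\gamma)\leq e^{c_0H(\gamma)}$, which is why it needs $\beta\geq 2c_0$ and ends with $e^{-c_2\beta/2}$. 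You instead use translation invariance exactly (the equality in Equation~\eqref{eq: hypothesis-3-modified}) and get the sharper $2e^{-c_2\beta}H(\gamma_1)$. The nested/adjacent split and the $\min\{H(\gamma),H(\gamma_1)\}$ bound in your middle paragraph are never used and can be dropped.
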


\begin{proof}
     By definition, we have \begin{equation*}
         \mathcal{R}^1_\beta(\gamma_1)=\sum_{\gamma\in\mathscr{C}}e^{-\beta H(\gamma)}\Phi(\gamma,\gamma_1)\ind_{\gamma\parallel\gamma_1}=\sum_{[\gamma\hspace{0.015cm}]\in\bar{\mathscr{C}}}e^{-\beta H([\gamma])}\sum_{\gamma\in[\gamma\hspace{.015cm}]}\Phi(\gamma,\gamma_1)\ind_{\gamma\parallel\gamma_1},
     \end{equation*}where we introduce the equivalence classes of contours under integer translations. 
     
     Let us estimate the sum over an equivalence class $[\gamma\hspace{0.025cm}]$. We have \begin{equation*}
         \sum_{\substack{\gamma\in[\gamma\hspace{.015cm}]}}\Phi(\gamma,\gamma_1)\ind_{\gamma\parallel\gamma_1}=\sum_{\substack{x\in\I_-(\gamma_1)\\y\in\I_-(\gamma_1)^c}}4J_{xy}\sum_{\gamma\in[\gamma\hspace{.015cm}]}\ind_{y\in\I_-(\gamma)}\ind_{\gamma\parallel\gamma_1}.
     \end{equation*} Since $\ind_{\gamma_1\parallel\gamma}\leq 1$ and at most $\lvert\I_-(\gamma)\rvert\leq \mathrm{diam}(\gamma)$ elements $\gamma\in[\gamma\hspace{.025cm}]$ satisfy $y\in\I_-(\gamma)$, we obtain \begin{equation*}
         \sum_{\substack{\gamma\in[\gamma\hspace{0.015cm}]}}\Phi(\gamma,\gamma_1)\ind_{\gamma\parallel\gamma_1}\leq 2\mathrm{diam}(\gamma)H(\gamma_1).
     \end{equation*} Finally, by Hypotheses~\ref{hyp: diameter-hypothesis} and \ref{hyp: phase-transition-estimate}, we obtain \begin{align*}
         \mathcal{R}^1_\beta(\gamma)\leq 2\sum_{[\gamma_1]\in\overline{\mathscr{C}}}e^{-(\beta-c_0)H([\gamma_1])} H(\gamma)\leq 2e^{-c_2\beta/2} H(\gamma),
     \end{align*} where the last inequality follows by choosing $\beta_0\geq 2c_0$ (recall Hypothesis~\ref{hyp: phase-transition-estimate}). The lemma has thus been proved.
\end{proof}

\begin{corollary}\label{cor: polymers-reduced-to-contours}
    There exists $\beta_0>0$ such that, if $\beta\geq\beta_0$ and $\gamma_1\in\mathscr{C}$, then \begin{equation*}
        \W_\beta(\gamma_1)\leq e^{-\frac{1}{2}\beta H(\gamma_1)}.
    \end{equation*}
\end{corollary}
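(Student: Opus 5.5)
The plan is to apply Lemma~\ref{lem: reducing-sums-over-trees} with $\eta=1$ to the triple $(\mathscr{C},\{e^{-\beta H(\cdot)}\},\Phi(\cdot,\cdot)\ind_{\cdot\parallel\cdot})$. Two observations make this work. First, the globally compatible sum $\W_\beta(\gamma_1)$ is dominated termwise by the locally compatible sum $\overline{\W}_\beta(\gamma_1)$ for the same triple, because $\mathscr{V}_n\subset\mathscr{V}^n$ and all weights are nonnegative. Second, when $m=1$ the modified weight and the usual weight differ only in this restriction. Hence it suffices to bound $\overline{\W}_\beta(\gamma_1)$.

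Next we verify the two hypotheses of Section~\ref{sec: sums-over-trees} for contours. Hypothesis~\ref{hyp: vertex-edge-function-hypotheses} is Equation~\eqref{eq: modified-vertex-function-contours}: the modified vertex function is $v'_\beta(\gamma)=\tfrac12\beta H(\gamma)$, which is increasing in $\beta$ and satisfies $\inf_\gamma v'_\beta(\gamma)\geq\beta\to\infty$, since $H(\gamma)\geq 2$. For Hypothesis~\ref{hyp: leaf-pruning-operator}, Lemma~\ref{lem: leaf-cutter-operator} gives, for $\beta\ge\beta_0$,
\[
\frac{\mathcal{R}^1_\beta(\gamma_1)}{v'_\beta(\gamma_1)}\leq\frac{2e^{-c_2\beta/2}H(\gamma_1)}{\tfrac12\beta H(\gamma_1)}=\frac{4}{\beta}e^{-c_2\beta/2}.
\]
This bound is uniform in $\gamma_1$ and tends to $0$.

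Lemma~\ref{lem: reducing-sums-over-trees} with $\eta=1$ then yields $\beta_0$ such that $\beta\ge\beta_0$ implies $\overline{\W}_\beta(\gamma_1)\le v_{\beta/2}(\gamma_1)=e^{-\frac12\beta H(\gamma_1)}$. Since $\W_\beta(\gamma_1)\le\overline{\W}_\beta(\gamma_1)$, the corollary follows.

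The only point needing care, and the step I expect to be the main obstacle, is the comparison $\W\le\overline{\W}$. It is immediate from the definitions in Lemma~\ref{lem: mute-indices-cluster-weights}, since both sums use the same summand $w_\beta[T](\mathbf v)\geq0$ over nested index sets. Alternatively, one can observe that the proof of Lemma~\ref{lem: reducing-sums-over-trees} only produces upper bounds, so it applies verbatim to $\W_\beta$.
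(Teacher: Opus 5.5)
Your proposal is correct and is essentially the paper's argument: the paper proves the corollary as a direct consequence of Lemma~\ref{lem: reducing-sums-over-trees} (with $\eta=1$) and Lemma~\ref{lem: leaf-cutter-operator}, the latter supplying Hypothesis~\ref{hyp: leaf-pruning-operator} for $v'_\beta(\gamma)=\tfrac12\beta H(\gamma)$. Your comparison $\W_\beta(\gamma_1)\le\overline{\W}_\beta(\gamma_1)$, which follows from $\mathscr{V}_n\subset\mathscr{V}^n$ and the nonnegativity of the weights, makes explicit a step the paper leaves implicit when it applies a lemma stated for locally compatible sums to a globally compatible one.
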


\begin{proof}
    This is a direct consequence of Lemmas~\ref{lem: reducing-sums-over-trees} and \ref{lem: leaf-cutter-operator}.
\end{proof}

\begin{remark}
    With Corollary~\ref{cor: polymers-reduced-to-contours} proven, the only missing piece needed to prove the absolute convergence of pressure is Lemma~\ref{lem: notjoin-leaf-operator-satisfies-hypothesis}.
\end{remark}

Let us now estimate $\W_\beta(\mathbf{w})$, where $\mathbf{w}=(\gamma_1,...,\gamma_m)\in\mathscr{C}_m$ and $m\geq 2$. That is, let us tackle the problem of multiple contour estimates. We shall need these estimates in the proof of Lemma~\ref{lem: notjoin-leaf-operator-satisfies-hypothesis}. They are also crucial to computing the decay of correlations.

\begin{definition}
    Let $A,B\Subset\mathbb{Z}$ be disjointed and not empty. Define \begin{equation*}
        \Phi(A,B)\coloneqq \sum_{\substack{x\in A\\y\in B}}4J(\lvert x-y\rvert)=\sum_{\substack{x\in A\\y\in B}}\frac{4}{\lvert x-y\rvert^\alpha}.
    \end{equation*}
\end{definition}

The following Lemma is a generalization of Lemma 3.5 in \cite{cluster}.

\begin{lemma}\label{lem: phi_abc_bound}
				Let $k\geq 2$ and $\{A_i\in\mathrm{P}(\mathbb{X}):i\in[k]\}$ be a family of not empty pairwise disjoint finite subsets. Let $\emptyset\neq B\Subset\mathbb{Z}$ such that $A_i\cap B=\emptyset$, for all $i\in [k]$. Then 
				\begin{equation*}\label{eq: phi_abc_bound}
					\prod_{i=1}^k\Phi(A_i,B)\leq 4^{2k-1} (\mathrm{diam}(B)+1)^{3k-2}\hspace{-.25cm}\sum_{\varphi\in \mathrm{Sym}(k)}\frac{1}{\mathrm{dist}( A_{\varphi(k)},B)^\alpha}\prod_{i=1}^{k-1}\Phi(A_{\varphi(i)},A_{\varphi(i+1)}).
				\end{equation*}
			\end{lemma}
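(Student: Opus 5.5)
The plan is to prove first a pointwise bound and then sum it. For $x_i\in A_i$ and $y_i\in B$ we have
\[
\prod_{i=1}^k\Phi(A_i,B)=4^k\sum_{x_1\in A_1,\dots,x_k\in A_k}\ \sum_{y_1,\dots,y_k\in B}\ \prod_{i=1}^k\lvert x_i-y_i\rvert^{-\alpha}.
\]
Since all $y_i$ lie in $B$, we have $\lvert y_i-y_j\rvert\le\mathrm{diam}(B)$. The idea is to trade each factor $\lvert x_i-y_i\rvert^{-\alpha}$ for a factor $\lvert x_i-x_j\rvert^{-\alpha}$, losing a power of $(\mathrm{diam}(B)+1)$. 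Fix the points $x_1,\dots,x_k$ and choose $\varphi\in\mathrm{Sym}(k)$ ordering them by their distance to $B$ in decreasing order:
\[
\mathrm{dist}(x_{\varphi(1)},B)\ge\dots\ge\mathrm{dist}(x_{\varphi(k)},B).
\]
This $\varphi$ is where the sum over $\mathrm{Sym}(k)$ comes from. We bound the term by the one indexed by $\varphi$. The last factor satisfies $\lvert x_{\varphi(k)}-y\rvert^{-\alpha}\le\mathrm{dist}(A_{\varphi(k)},B)^{-\alpha}$.

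The key elementary estimate is the following. If $\mathrm{dist}(x,B)\ge\mathrm{dist}(x',B)$, then
\[
\lvert x-x'\rvert\le \mathrm{dist}(x,B)+\mathrm{dist}(x',B)+\mathrm{diam}(B)\le 2\,\mathrm{dist}(x,B)+\mathrm{diam}(B)\le 2(\mathrm{diam}(B)+1)\lvert x-y\rvert
\]
for every $y\in B$, using $\mathrm{dist}(x,B)\ge 1$ since $A_i\cap B=\emptyset$. Consequently
\[
\lvert x_{\varphi(i)}-y_{\varphi(i)}\rvert^{-\alpha}\le 2^\alpha(\mathrm{diam}(B)+1)^\alpha\lvert x_{\varphi(i)}-x_{\varphi(i+1)}\rvert^{-\alpha}\le 4(\mathrm{diam}(B)+1)^2\lvert x_{\varphi(i)}-x_{\varphi(i+1)}\rvert^{-\alpha},
\]
using $\alpha\le 2$.

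Next we sum over the $y$'s. After the substitution, every $y_i$ is free, giving a factor $\lvert B\rvert^k\le(\mathrm{diam}(B)+1)^k$. This looks too lossy against the target exponent $3k-2$, since the count is $k+2(k-1)=3k-2$ only if we are careful. Indeed, we keep the original sum over $y_{\varphi(k)}$ inside the factor $\lvert x_{\varphi(k)}-y\rvert^{-\alpha}$, bounding $\sum_{y\in B}\lvert x-y\rvert^{-\alpha}$ by $\lvert B\rvert\,\mathrm{dist}^{-\alpha}$, so in fact all $k$ of the $y$-sums cost $\lvert B\rvert$. The total power is then
\[
(\mathrm{diam}(B)+1)^{k+2(k-1)}=(\mathrm{diam}(B)+1)^{3k-2},
\]
as required. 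The constants collect to $4^k\cdot 4^{k-1}=4^{2k-1}$. The remaining sum over $(x_i)$ of $\prod_{i<k}\lvert x_{\varphi(i)}-x_{\varphi(i+1)}\rvert^{-\alpha}$, enlarged by summing over all $\varphi$ instead of the chosen one, is bounded by $4^{-(k-1)}\prod_{i<k}\Phi(A_{\varphi(i)},A_{\varphi(i+1)})$. Here the $4$'s are absorbed by the definition of $\Phi$, which only improves the constant. Note that the $A_i$ are pairwise disjoint, so $\Phi$ between them is well defined.

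The main obstacle I expect is the bookkeeping of constants and exponents. Specifically, I need to make sure that the ordering $\varphi$ depends only on the points $x_i$ and not on the $y_i$, so that the choice is compatible with summing out the $y$'s. I also need the comparison to use $\mathrm{dist}(x,B)\ge1$ correctly, in order to get the $2(\mathrm{diam}(B)+1)$ factor and the exponent $\alpha\le2$. The ordering trick, with its decreasing distances, makes the chain well defined.
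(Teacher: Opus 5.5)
Your proof is correct and follows essentially the same route as the paper. Both arguments order the indices by distance to $B$ and use the triangle inequality through $B$ to trade each $\lvert x_{\varphi(i)}-y_{\varphi(i)}\rvert^{-\alpha}$ for $\lvert x_{\varphi(i)}-x_{\varphi(i+1)}\rvert^{-\alpha}$ at a cost of $(\mathrm{diam}(B)+1)^2$. Both then bound the last factor by $\mathrm{dist}(A_{\varphi(k)},B)^{-\alpha}$, sum the $y$'s to get $\lvert B\rvert^k\leq(\mathrm{diam}(B)+1)^k$, and dominate the chained sum by $4^{-(k-1)}\prod_{i<k}\Phi(A_{\varphi(i)},A_{\varphi(i+1)})$.

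The differences are cosmetic. You order by $\mathrm{dist}(x_i,B)$, a function of the $x$'s alone, whereas the paper orders by $\lvert a_i-b_i\rvert$ and covers $\mathbf{A}\times\mathbf{B}$ by the sets $E(\varphi)$. You also absorb $\mathrm{diam}(B)$ using $\mathrm{dist}(x,B)\geq 1$ before raising to the power $\alpha$; this actually yields the constant $4^k$, which is better than the $4^{2k-1}$ you claim.
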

			
			\begin{proof}
				Since $\alpha\in(1,2]$, if $x,y,z\in\mathbb{R}$, then 
				\begin{align}\label{eq: reverse-holder}
					\lvert x-y\rvert^{\alpha}&\leq (\lvert x-z\rvert+\lvert z-x\rvert)^\alpha\leq 2\lvert x-z\rvert^\alpha+2\lvert z-y\rvert^\alpha.
				\end{align}
				Write $\mathbf{A}\coloneqq\times_{i\in[k]}A_i$ and $\mathbf{B}=B^k$. Define $E:\mathrm{Sym}(k)\rightarrow\mathrm{P}(\mathbf{A}\times\mathbf{B})$ by 
				\begin{align*}
					E(\varphi)=\{(\mathbf{a},\mathbf{b})\in\mathbf{A}\times\mathbf{B}:i\leq j\in [k]\implies\lvert a_{\varphi(i)}-b_{\varphi(i)}\rvert\geq\lvert a_{\varphi(j)}-b_{\varphi(j)}\rvert\}.
				\end{align*} 
				Hence, if $(\mathbf{a},\mathbf{b})\in E(\varphi)$ and $i\in[k-1]$, then, by Equation~\eqref{eq: reverse-holder}, 
				\begin{equation*}
					\begin{split}
						\lvert a_{\varphi(i)}-a_{\varphi(i+1)}\rvert^\alpha&\leq(\lvert{a_{\varphi(i)}-b_{\varphi(i)}}\rvert+\lvert{b_{\varphi(i)}-b_{\varphi(i+1)}}\rvert+\lvert{b_{\varphi(i+1)}-a_{\varphi(i+1)}}\rvert)\\
						&\leq \left(2\lvert a_{\varphi(i)}-b_{\varphi(i)}\rvert+\mathrm{diam}(B)\right)^\alpha\\
						&\leq 8\lvert a_{\varphi(i)}-b_{\varphi(i)}\rvert^\alpha+2[1+\mathrm{diam}(B)]^\alpha\\
						&\leq 16[1+\diam(B)]^2\lvert{a_{\varphi(i)}-b_{\varphi(i)}}\rvert^\alpha,
					\end{split}
				\end{equation*}
				and thus
				\begin{align*}
					\prod_{i=1}^{k-1}\lvert a_{\varphi(i)}-a_{\varphi(i+1)}\rvert^\alpha\leq [4(1+\mathrm{diam}(B)]^{2(k-1)}\prod_{i=1}^{k-1}\lvert a_{\varphi(i)}-b_{\varphi(i)}\rvert^\alpha.
				\end{align*}
				Therefore, if $(\mathbf{a},\mathbf{b})\in E(\varphi)$, then 
				\begin{align*}
					\frac{\prod_{i=1}^{k-1}\lvert a_{\varphi(i)}-a_{\varphi(i+1)}\rvert^{\alpha}}{\prod_{i=1}^k\lvert a_i-b_i\rvert^{\alpha}}\leq \frac{[4(1+\mathrm{diam}(B))]^{2(k-1)}}{\lvert a_{\varphi(k)}-b_{\varphi(k)}\rvert^\alpha}.
				\end{align*}
				Since $\{E(\varphi):\varphi\in\mathrm{Sym}(k)\}$ covers $\mathbf{A}\times\mathbf{B}$, 
				\begin{equation*}
					\begin{split}
						\prod_{i\in[k]}\Phi(A_i,B)&\leq\sum_{\varphi\in\mathrm{Sym}(k)}\sum_{(\mathbf{a},\mathbf{b})\in E(\varphi)}\prod_{i\in[k]}\frac{4 }{\lvert{a_i-b_i}\rvert^\alpha}\\
						&\leq\sum_{\varphi\in\mathrm{Sym}(k)}\sum_{(\mathbf{a},\mathbf{b})\in E(\varphi)}\frac{[16 (1+\diam(B))^2]^k}{\lvert{a_{\varphi(k)}-b_{\varphi(k)}}\rvert^\alpha}\prod_{i\in[k-1]}\frac{1}{\vert{a_{\varphi(i)}-a_{\varphi(i+1)}}\rvert^\alpha}\\
						&\leq\sum_{\varphi\in\mathrm{Sym}(k)}\frac{[4(1+\diam(B))]^{2(k-1)}(4 \cdot\# B)^k}{\dist(A_{\varphi(k)},B)^\alpha}\sum_{\mathbf{a}\in \mathbf{A}}\prod_{i\in[k-1]}\frac{1}{\lvert{a_{\varphi(i)}-a_{\varphi(i+1)}}\rvert^\alpha}\\
						&\leq 4^{2k-1} (1+\diam(B))^{3k-2}\sum_{\varphi\in\mathrm{Sym}(k)}\frac{1}{\dist(A_{\varphi(k)},B)^\alpha}\prod_{i\in[k-1]}\Phi(A_{\varphi(i)},A_{\varphi(i+1)}).
					\end{split}
				\end{equation*}
				The lemma has thus been proved.
			\end{proof}
			
			\begin{corollary}\label{cor: reduction-operator-contours}
				If $k\geq2$, then there exists $\beta_0=\beta_0(\alpha,k)>0$ such that if $\beta\geq\beta_0$ and $\mathbf{w}=(\gamma_1,...,\gamma_k)\in\mathscr{C}_k$, then \begin{equation*}
					 \mathcal{R}^k_\beta(\mathbf{w})\leq \frac{4^{2k}\alpha}{\alpha-1}e^{-\frac{1}{2}c_2\beta}\sum_{\varphi\in\mathrm{Sym}(k)}v'_\beta\left(\gamma_{\varphi(k)}\right)\cdot\prod_{i\in[k-1]}\Phi\left(\gamma_{\varphi(i)},\gamma_{\varphi(i+1)}\right).
				\end{equation*}.
			\end{corollary}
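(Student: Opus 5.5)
We need to bound
\[
\mathcal{R}^k_\beta(\mathbf{w})=\sum_{\gamma\in\mathscr{C}}e^{-\beta H(\gamma)}\prod_{i=1}^k\Phi(\gamma,\gamma_i)\ind_{\gamma\parallel\gamma_i},
\]
where $v'_\beta(\gamma)=\tfrac12\beta H(\gamma)$. The plan is to apply Lemma~\ref{lem: phi_abc_bound} with $A_i=\I_-(\gamma_i)$ and $B=\I_-(\gamma)$. This is allowed because $\gamma\parallel\gamma_i$ makes $B$ disjoint from each $A_i$, and $\mathbf{w}\in\mathscr{C}_k$ makes the $A_i$ pairwise disjoint. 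We obtain
\[
\prod_{i=1}^k\Phi(\gamma,\gamma_i)\leq 4^{2k-1}(\diam(\gamma)+1)^{3k-2}\sum_{\varphi\in\mathrm{Sym}(k)}\frac{1}{\dist(\I_-(\gamma_{\varphi(k)}),\I_-(\gamma))^\alpha}\prod_{i=1}^{k-1}\Phi(\gamma_{\varphi(i)},\gamma_{\varphi(i+1)}),
\]
using $\diam(\I_-(\gamma))\leq\diam(\gamma)$. The factor $\prod\Phi(\gamma_{\varphi(i)},\gamma_{\varphi(i+1)})$ no longer depends on $\gamma$. So after exchanging the sum over $\varphi$ with the sum over $\gamma$, it remains to bound, for each fixed $\gamma_1'\coloneqq\gamma_{\varphi(k)}$,
\[
S(\gamma_1')\coloneqq\sum_{\gamma\in\mathscr{C}}e^{-\beta H(\gamma)}(\diam(\gamma)+1)^{3k-2}\,\dist(\I_-(\gamma_1'),\I_-(\gamma))^{-\alpha}\ind_{\gamma\parallel\gamma_1'}.
\]

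We handle $S$ as in the proof of Lemma~\ref{lem: leaf-cutter-operator}: split $\gamma$ into translation classes $[\gamma]\in\overline{\mathscr{C}}$. For a fixed class, fix the leftmost point $y$ of $\I_-(\gamma)$. The distance $d=\dist(\I_-(\gamma_1'),\I_-(\gamma))\geq1$ is then realized between $\I_-(\gamma_1')$ and some point of $\I_-(\gamma)$. A crude count goes as follows. The set $\I_-(\gamma_1')$ is a union of at most $|\gamma_1'|/2$ intervals, and $\I_-(\gamma)$ sits either to the left, to the right, or in a gap. So for each $d\geq1$, the number of translates at distance exactly $d$ is at most $2|\gamma_1'|\cdot|\I_-(\gamma)|\leq 2|\gamma_1'|\diam(\gamma)$. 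Since $\sum_{d\geq1}d^{-\alpha}\leq\alpha/(\alpha-1)$, summing over translates gives at most
\[
\frac{2\alpha}{\alpha-1}\,|\gamma_1'|\,\diam(\gamma).
\]
Since each spin flip costs energy, $|\gamma_1'|\leq H(\gamma_1')$; more precisely, $H(\gamma')\geq$ a constant times $|\gamma'|$, because each boundary pair contributes at least a nearest-neighbour term. This gives the factor proportional to $v'_\beta(\gamma_1')=\tfrac12\beta H(\gamma_1')$, with $\beta\geq1$ absorbing the factor $1/2$ and constants.

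The remaining class sum is
\[
\sum_{[\gamma]}e^{-\beta H([\gamma])}(\diam(\gamma)+1)^{3k-1}.
\]
By Hypothesis~\ref{hyp: diameter-hypothesis}, $(\diam(\gamma)+1)^{3k-1}\leq 2^{3k}e^{3kc_0H(\gamma)}$. Choosing $\beta_0\geq 6kc_0$, Hypothesis~\ref{hyp: phase-transition-estimate} together with \eqref{eq: hypothesis-3-modified} bounds this sum by
\[
2^{3k}e^{-c_2(\beta-3kc_0)}\leq 2^{3k}e^{-c_2\beta/2}.
\]
The $2^{3k}$ and the remaining numerical constants must then be absorbed into the prefactor. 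If they exceed $4^{2k}$, one takes $\beta$ slightly larger and keeps a sliver of the exponential; the clean constant $4^{2k}\alpha/(\alpha-1)$ is cosmetic. Collecting the factors yields the claimed bound.

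The main obstacle is the translation counting with the $\dist^{-\alpha}$ weight. Unlike Lemma~\ref{lem: leaf-cutter-operator}, we need the sum over translates of $\dist^{-\alpha}$ to be controlled by something linear in $H(\gamma_1')$ rather than in $\diam(\gamma_1')$. This is where the interval structure of $\I_-(\gamma_1')$ and the bound $|\gamma_1'|\lesssim H(\gamma_1')$ enter. If that count is too crude, the fallback is to bound $\dist(A,B)^{-\alpha}$ by $\sum_{x\in A}\sum_{y\in B}|x-y|^{-\alpha}$-type sums. This amounts to summing over the endpoint $x$ near the boundary of $\I_-(\gamma_1')$, i.e.\ over spin flips of $\gamma_1'$.
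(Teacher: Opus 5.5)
Your proposal is correct and follows essentially the paper's proof. The shared steps are: Lemma~\ref{lem: phi_abc_bound} with $A_i=\I_-(\gamma_i)$ and $B=\I_-(\gamma)$; a split into translation classes; a count of translates at each distance, which turns $\sum_{d\geq 1}d^{-\alpha}\leq\alpha/(\alpha-1)$ into a factor linear in $H(\gamma_{\varphi(k)})\propto v'_\beta(\gamma_{\varphi(k)})$; and then Hypotheses~\ref{hyp: diameter-hypothesis} and \ref{hyp: phase-transition-estimate} with $\beta_0$ of order $kc_0$.

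The only real difference is your count $2\lvert\gamma_{\varphi(k)}\rvert\,\lvert\I_-(\gamma)\rvert$. It holds because the distance is attained at an interval endpoint of $\I_-(\gamma_{\varphi(k)})$; the left/right/gap picture is not needed, and it is not always accurate when $\gamma_{\varphi(k)}$ lies in a gap of a larger $\gamma$. Your count is in fact more careful than the paper's $\gamma$-independent count $2\#\gamma_l$, which ignores which endpoint of $\I_-(\gamma)$ faces $\gamma_l$. The extra $\diam(\gamma)$ factor is harmlessly absorbed by Hypothesis~\ref{hyp: diameter-hypothesis}, exactly as you do.
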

			
			\begin{proof}
				Introducing translation-invariant equivalence classes as in the proof of Lemma~\ref{lem: leaf-cutter-operator}, 
				\begin{equation*}
					 \mathcal{R}^k_\beta(\mathbf{w})=\sum_{[\gamma]\in\overline{\mathscr{C}}}e^{-\beta H([\gamma])}\sum_{\gamma\in [\gamma]}\prod_{i=1}^k\Phi(\gamma_i,\gamma)\ind_{\gamma_i\parallel\gamma}.
				\end{equation*} 
				By Hypothesis~\ref{hyp: diameter-hypothesis}, if $\gamma\in[\gamma]$, then
				\[4^{2k-1}(\diam(\I_-(\gamma))+1)^{3k-2}\leq 4^{2k-1}\diam([\gamma])^{3k-2}\leq 4^{2k-1}e^{(3k-2)c_0H([\gamma])}.\]
				Therefore, by Lemma~\ref{lem: phi_abc_bound},
				\begin{equation*}
					\prod_{i=1}^k\Phi(\gamma_i,\gamma)\ind_{\gamma_i\parallel\gamma}\leq e^{(3k-2)c_0H([\gamma])}\sum_{\varphi\in \mathrm{Sym}(k)}\frac{4^{2k-1} \ind_{\gamma_{\varphi(k)}\parallel\gamma}}{\mathrm{dist}\left(\I_-(\gamma_{\varphi(k)}),\I_-(\gamma)\right)^\alpha}\prod_{j=1}^{k-1}\Phi\left(\gamma_{\varphi(j)},\gamma_{\varphi(j+1)}\right).
				\end{equation*}
				If $l\in[k]$ and $n\in\mathbb{N}$, then there are at most $2 \cdot\#\gamma_l$ elements of $[\gamma\hspace{0.025cm}]$ with $\mathrm{dist}(\I_-(\gamma_l),\I_-(\gamma))=n$. Hence, 
				\begin{equation*}
					\sum_{\gamma\in[\gamma]}\frac{ \ind_{\gamma_l\parallel\gamma}}{\mathrm{dist}(\I_-(\gamma_l),\I_-(\gamma))^\alpha}\leq \#\gamma_l\cdot 2 \sum_{n=1}^\infty\frac{1}{n^\alpha}\leq\frac{\alpha}{\alpha-1}H(\gamma_l).
				\end{equation*}
				Finally,
				\begin{align*}
					 \mathcal{R}^k_\beta(\mathbf{w})&\leq\frac{4^{2k-1}\alpha}{\alpha-1}\sum_{\varphi\in \mathrm{Sym}(k)}\prod_{j=1}^{k-1}\Phi(\gamma_{\varphi(j)},\gamma_{\varphi(j+1)})\sum_{[\gamma]\in\overline{\mathscr{C}}}e^{-[\beta-(3k-2)c_0]H([\gamma\hspace{.025cm}])} H\left(\gamma_{\varphi(k)}\right)\\
					&\leq \frac{4^{2k-1}\alpha}{\alpha-1}e^{-\frac{1}{2}c_2\beta}\sum_{\varphi\in \mathrm{Sym}(k)} H\left(\gamma_{\varphi(k)}\right)\prod_{j=1}^{k-1}\Phi\left(\gamma_{\varphi(j)},\gamma_{\varphi(j+1)}\right)\\
					&\leq \frac{4^{2k-1}\alpha}{\alpha-1}e^{-\frac{1}{2}c_2\beta}\sum_{\varphi\in\mathrm{Sym}(k)}v'_\beta\left(\gamma_{\varphi(k)}\right)\cdot\prod_{i\in[k-1]}\Phi\left(\gamma_{\varphi(i)},\gamma_{\varphi(i+1)}\right),
				\end{align*}
				where the last inequality is a consequence of Hypothesis~\ref{hyp: phase-transition-estimate} and $\beta\geq 2(3k-2)c_0$. The corollary has thus been proved.
			\end{proof}

\begin{corollary}\label{cor: two-contour-estimate}
    There exists $\beta_0>0$ such that if $\beta\geq\beta_0$ and $(\gamma_1,\gamma_2)\in\mathscr{C}_2$, then \begin{equation*}
        \W_\beta(\gamma_1,\gamma_2)\leq\Phi(\gamma_1,\gamma_2)e^{-\frac{1}{4}\beta [H(\gamma_1)+H(\gamma_2)]}.
    \end{equation*}
\end{corollary}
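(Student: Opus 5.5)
The plan is to specialize Corollary~\ref{cor: reduction-operator-sums} to the triple $(\mathscr{C},\{e^{-\beta H(\cdot)}\},\Phi(\cdot,\cdot)\ind_{\cdot\parallel\cdot})$. For $(\gamma_1,\gamma_2)\in\mathscr{C}_2$ it gives
\[
\W_\beta(\gamma_1,\gamma_2)\leq v_{\beta/4}(\gamma_1)\,e(\gamma_1,\gamma_2)\,v_{\beta/4}(\gamma_2)=e^{-\frac14\beta H(\gamma_1)}\,\Phi(\gamma_1,\gamma_2)\,e^{-\frac14\beta H(\gamma_2)},
\]
because $\ind_{\gamma_1\parallel\gamma_2}=1$ on $\mathscr{C}_2$. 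This is exactly the claim. So the whole task is to check that the hypotheses behind that corollary hold for contours, for $\beta$ large.

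The corollary was proved from Lemma~\ref{lem: reducing-sums-over-trees} and Lemma~\ref{lem: reduction-function-sums}. These in turn need three hypotheses.

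\begin{enumerate}
\item Hypothesis~\ref{hyp: vertex-edge-function-hypotheses}. This is Equation~\eqref{eq: modified-vertex-function-contours}: $v'_\beta(\gamma)=\frac12\beta H(\gamma)$ is increasing in $\beta$, and $v'_\beta(\gamma)\geq\beta$ because $H\geq 2$.
\item Hypothesis~\ref{hyp: leaf-pruning-operator}. By Lemma~\ref{lem: leaf-cutter-operator},
\[
\mathcal{R}^1_\beta(\gamma)/v'_\beta(\gamma)\leq 4e^{-c_2\beta/2}/\beta\to0
\]
uniformly in $\gamma$. Corollary~\ref{cor: polymers-reduced-to-contours} already records the resulting one-vertex bound $\W_\beta(\gamma)\leq e^{-\frac12\beta H(\gamma)}$.
\item Hypothesis~\ref{hyp: reduction-operator}. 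Apply Corollary~\ref{cor: reduction-operator-contours} with $k=2$. For $(\gamma_1,\gamma_2)\in\mathscr{C}_2$,
\[
\mathcal{R}^2_\beta(\gamma_1,\gamma_2)\leq \frac{4^4\alpha}{\alpha-1}e^{-\frac12c_2\beta}\bigl[v'_\beta(\gamma_1)+v'_\beta(\gamma_2)\bigr]\Phi(\gamma_1,\gamma_2).
\]
Dividing by $e(\gamma_1,\gamma_2)[v'_\beta(\gamma_1)+v'_\beta(\gamma_2)]$ gives a quantity $\leq C_\alpha e^{-c_2\beta/2}\to0$, uniformly on $\mathscr{C}_2$. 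Here we use $e=\Phi$ on $\mathscr{C}_2$, so nothing degenerate occurs.
\end{enumerate}

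Once these are verified, Lemma~\ref{lem: two-point-arbitrary sums} gives $\W_\beta(\gamma_1,\gamma_2)\leq\W_\beta(\gamma_1)\E_\beta(\gamma_1,\gamma_2)\W_\beta(\gamma_2)$. The factors $\W_\beta(\gamma_i)$ are controlled by point 2, and $\E_\beta$ is controlled by Lemma~\ref{lem: reduction-function-sums} with $\eta=1/9$. Combining these as in Corollary~\ref{cor: reduction-operator-sums} yields the bound, with $\beta_0$ the maximum of the thresholds that appear.

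The main point to watch is the interface between the abstract hypotheses and the contour estimates. In particular, Lemma~\ref{lem: reduction-function-sums} invokes Hypothesis~\ref{hyp: reduction-operator} at inverse temperature $\beta/2$, and then uses $v'_{\beta/2}\leq v'_\beta$. So we need the bound of Corollary~\ref{cor: reduction-operator-contours} at $\beta/2$, i.e. $\beta/2\geq\beta_0(\alpha,2)$. That is harmless after enlarging $\beta_0$.
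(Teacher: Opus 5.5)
Your proposal is correct and follows the paper's route. The paper proves this corollary in one line as a direct consequence of Corollary~\ref{cor: reduction-operator-sums} applied to the contour triple, with Hypothesis~\ref{hyp: reduction-operator} supplied by Corollary~\ref{cor: reduction-operator-contours} at $k=2$; your explicit checks of Hypotheses~\ref{hyp: vertex-edge-function-hypotheses} and \ref{hyp: leaf-pruning-operator} (via Equation~\eqref{eq: modified-vertex-function-contours} and Lemma~\ref{lem: leaf-cutter-operator}) simply spell out what the paper leaves implicit.
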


\begin{proof}
    This is a direct consequence of Corollaries~\ref{cor: reduction-operator-sums} and \ref{cor: reduction-operator-contours}.
\end{proof}

\begin{corollary}\label{cor: m-contour-estimate}
    Let $m\geq 3$. Then, there exists $\beta_0=\beta_0(\alpha,m)>0$ such that if $\beta\geq\beta_0$ and $\mathbf{w}\in\mathscr{C}_m$, then \begin{equation*}
        \W_\beta(\mathbf{w})\leq \sum_{T\in\Tr_m}\prod_{i=1}^me^{-\beta H(\gamma_i)/2^{m}}\hspace{-.25cm}\prod_{\{j,k\}\in E(T)}\hspace{-.5cm}\Phi(\gamma_j,\gamma_k)
    \end{equation*}
\end{corollary}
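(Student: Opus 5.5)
The plan is to apply Theorem~\ref{theo: m-point-bound} to the triple $(\mathscr{C},\{e^{-\beta H(\cdot)}\},\Phi(\cdot,\cdot)\ind_{\cdot\parallel\cdot})$, the same triple used in Corollaries~\ref{cor: polymers-reduced-to-contours} and \ref{cor: two-contour-estimate}. Here $v_\beta(\gamma)=e^{-\beta H(\gamma)}$, so $v_{2^{-m}\beta}(\gamma_i)=e^{-\beta H(\gamma_i)/2^m}$. For $\mathbf{w}\in\mathscr{C}_m$ every pair is positively compatible, so $e(\gamma_j,\gamma_k)=\Phi(\gamma_j,\gamma_k)$ along each edge of any $T\in\Tr_m$. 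With these identifications, the right-hand side of Equation~\eqref{eq: m-point-bound} is exactly the claimed bound. It therefore remains only to verify the hypotheses used in the proof of Theorem~\ref{theo: m-point-bound}, which are Hypotheses~\ref{hyp: vertex-edge-function-hypotheses}, \ref{hyp: leaf-pruning-operator}, \ref{hyp: reduction-operator} and \ref{hyp: kth-contracting-function} for $3\le k\le 2m-3$, with a single $\beta_0(\alpha,m)$.

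The hypotheses would be checked in order as follows.
\begin{enumerate}
\item Hypothesis~\ref{hyp: vertex-edge-function-hypotheses} is Equation~\eqref{eq: modified-vertex-function-contours}: $v'_\beta(\gamma)=\frac12\beta H(\gamma)\ge\beta$, which is increasing in $\beta$.
\item Hypothesis~\ref{hyp: leaf-pruning-operator} follows from Lemma~\ref{lem: leaf-cutter-operator}. The ratio $\mathcal{R}^1_\beta(\gamma)/v'_\beta(\gamma)$ is at most $4e^{-c_2\beta/2}/\beta\to0$.
\item Hypotheses~\ref{hyp: reduction-operator} and \ref{hyp: kth-contracting-function} follow from Corollary~\ref{cor: reduction-operator-contours} applied with $k=2$ and with each $3\le k\le 2m-3$. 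The prefactor $\frac{4^{2k}\alpha}{\alpha-1}e^{-c_2\beta/2}$ tends to $0$ uniformly over $\mathscr{C}_k$.
\end{enumerate}
For $k=2$ a small adjustment is needed. The corollary bounds $\mathcal{R}^2_\beta$ by a constant times $(v'_\beta(\gamma_1)+v'_\beta(\gamma_2))\Phi(\gamma_1,\gamma_2)$, which is the required form since $\mathrm{Sym}(2)$ has two elements. The denominator must be nonzero on $\mathscr{C}_2$; this holds because $\Phi>0$ for disjoint nonempty interiors.

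Since only finitely many $k$ are involved, taking $\beta_0$ to be the maximum of the finitely many thresholds gives a $\beta_0$ depending only on $(\alpha,m)$. Theorem~\ref{theo: m-point-bound} then yields the statement.

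The main obstacle I expect is bookkeeping rather than new estimates. The proof of Theorem~\ref{theo: m-point-bound} invokes Lemma~\ref{lem: reducing-sums-over-trees} and Lemma~\ref{lem: reduction-function-sums} with $m$-dependent constants (for instance $\W_\beta(\vartheta)\le[(8m-12)!]^{-1/m}v_{\beta/2}(\vartheta)$), and these enter only through $\eta$-choices. One must also confirm that, in the globally compatible setting, the reduction in Equation~\eqref{eq: W-beta-T-star} keeps the vertex vectors in $\mathscr{V}_{n-1}$. It does, since dropping a coordinate preserves pairwise compatibility. With that checked, everything follows.
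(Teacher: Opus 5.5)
Your proposal is correct and is the paper's argument: the corollary follows by applying Theorem~\ref{theo: m-point-bound} to the contour triple, with Corollary~\ref{cor: reduction-operator-contours} (for $k=2$ and for $3\le k\le 2m-3$) supplying the contracting hypotheses. Your explicit verification of Hypotheses~\ref{hyp: vertex-edge-function-hypotheses} and \ref{hyp: leaf-pruning-operator}, via Equation~\eqref{eq: modified-vertex-function-contours} and Lemma~\ref{lem: leaf-cutter-operator}, spells out what the paper's one-line proof leaves implicit.
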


\begin{proof}
    This is a direct consequence of Theorem~\ref{theo: m-point-bound} and Corollary~\ref{cor: reduction-operator-contours}.
\end{proof}

\subsection{Convergence of the pressure}\label{subsec: convergence-of-the-pressure} 

In this subsection, we prove the last lemma necessary to obtain the absolute convergence of the pressure. It is equivalent to Proposition 5.3 in \cite{Cassandro.Merola.Picco.Rozikov.14} and Proposition 2.20 in \cite{cluster}. Given the myriad of contour estimates proved in the previous subsections, we translate polymer incompatibility in terms of contours. We finish it with the proof of Theorem~\ref{theo: convergence-cluster-expansion}.

\begin{definition}\label{def: incompatibility-distance-contours-polymers}
    We define \begin{equation*}\label{eq: incompatibility-distance-contours-polymers}
        \mathscr{A}\coloneqq\left\{(\gamma_1,\gamma_2)\in\mathscr{C}^2:\mathrm{dist}(\I_-(\gamma_1),\I_-(\gamma_2))\leq 2M\min\{\mathrm{diam}(\gamma_1),\mathrm{diam}(\gamma_2)\}^{3/2}\right\}.
    \end{equation*}
\end{definition}

\begin{lemma}\label{lem: incompatibility-distance-contours-polymers}
    Let $\Gamma_1,\Gamma_2\in\mathscr{P}$. If $\Gamma_1\not\sim\Gamma_2$, then there exists $(\gamma_1,\gamma_2)\in(\Gamma_1\times\Gamma_2)\cap\mathscr{A}$. 
\end{lemma}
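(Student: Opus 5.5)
Argue by contraposition: assume that every pair $(\gamma_1,\gamma_2)\in\Gamma_1\times\Gamma_2$ lies outside $\mathscr{A}$, and show that $\Gamma_1\sim\Gamma_2$. The condition $(\gamma_1,\gamma_2)\notin\mathscr{A}$ gives
\[
\mathrm{dist}(\I_-(\gamma_1),\I_-(\gamma_2))>2M\min\{\mathrm{diam}(\gamma_1),\mathrm{diam}(\gamma_2)\}^{3/2}.
\]
The first step is to upgrade this to the compatibility relation $\gamma_1\sim\gamma_2$ of Definition~\ref{def: M-contour}, stated in terms of the spin-flip sets. In one dimension, $\I_-(\gamma)$ lies between the extreme flips of $\gamma$, and every flip of $\gamma$ is adjacent to a point of $\I_-(\gamma)$. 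Hence the distance between flip sets differs from $\mathrm{dist}(\I_-(\gamma_1),\I_-(\gamma_2))$ by at most an additive $1$ when the interiors are disjoint. Since $M\,\mathrm{diam}^{3/2}\geq M\geq 1$, the factor $2$ absorbs this error. In particular the interiors are far apart, hence disjoint, unless one contains the other; the nested case is treated separately below.

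The second step is a case analysis using the trichotomy of Definition~\ref{def: partial-ordering}. For each pair, compatibility $\gamma_1\sim\gamma_2$ gives either $\gamma_1\parallel\gamma_2$, or $\gamma_1\trianglelefteq\gamma_2$, or $\gamma_2\trianglelefteq\gamma_1$. I would pay attention to the nested case: there $\mathrm{dist}(\I_-(\gamma_1),\I_-(\gamma_2))=0$, so the pair automatically belongs to $\mathscr{A}$. Thus, under the contrapositive assumption, every pair has disjoint interiors at distance exceeding $2M\min\{\mathrm{diam}(\gamma_1),\mathrm{diam}(\gamma_2)\}^{3/2}$. Hence $\gamma_1\parallel\gamma_2$ for all pairs, which means $\Gamma_1\parallel\Gamma_2$, and so $\Gamma_1\sim\Gamma_2$ by condition (1) of Definition~\ref{def: polymers}.

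If $\mathscr{A}$ is read via the interiors as written, the argument is essentially this short. The main obstacle is making precise that distance $0$ (nested or overlapping interiors) forces membership in $\mathscr{A}$. The alternative is that an inclusion $\I_-(\gamma_2)\subset\I_-(\gamma_1)$ with compatible flips counts as positive distance. In that case the nested pairs must instead be handled through the internal alternative $\Gamma_2\in\mathscr{I}(\gamma_1)$. Under this reading, the plan is as follows. If some $\gamma_2\in\Gamma_2$ is nested in some $\gamma_1\in\Gamma_1$, then, since the contours in $\Gamma_2$ have pairwise disjoint interiors and each is far from every contour of $\Gamma_1$, all of $\Gamma_2$ must lie inside the same $\gamma_1$. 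The reason is that a positive-distance chain of compatible contours cannot straddle the boundary flips of $\gamma_1$ without coming within $\mathrm{dist}\le 2M\mathrm{diam}^{3/2}$ of $\gamma_1$. This gives $\Gamma_2\in\mathscr{I}(\gamma_1)$, which is condition (2) of Definition~\ref{def: polymers}. The remaining sub-case, where the contours of $\Gamma_2$ lie inside different contours of $\Gamma_1$, would be excluded by the same proximity argument. Conditions (1)–(3) then exhaust the possibilities, giving $\Gamma_1\sim\Gamma_2$, a contradiction.
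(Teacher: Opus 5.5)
Your argument is correct and is essentially the paper's proof in contrapositive form. The paper splits $\Gamma_1\not\sim\Gamma_2$ into two cases, which are exactly the two facts your first two paragraphs use:
\begin{itemize}
\item some pair has $\gamma_1\not\sim\gamma_2$, handled by $\mathrm{dist}(\I_-(\gamma_1),\I_-(\gamma_2))\leq 1+\mathrm{dist}(\gamma_1,\gamma_2)$ and $1\leq M\min\{\diam(\gamma_1),\diam(\gamma_2)\}^{3/2}$;
\item all pairs are compatible but some interiors intersect, so the distance is $0$ and the pair lies in $\mathscr{A}$.
\end{itemize}

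You should drop your final paragraph. It is unnecessary, because $\mathscr{A}$ is defined through interiors and nested pairs therefore lie in $\mathscr{A}$ automatically. Its claims are also false under your alternative flip-distance reading. Take $\gamma_1\parallel\gamma_1'$ in $\Gamma_1$ and $\Gamma_2=\{\gamma_2,\gamma_2'\}$, where $\gamma_2,\gamma_2'$ are two-flip contours of diameter $1$ sitting deep inside $\gamma_1$ and $\gamma_1'$ respectively, at flip distance much larger than $2M$ from every other contour. Then $\Gamma_1\not\sim\Gamma_2$, yet no pair is close in flip distance.
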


\begin{proof}
    By Remark~\ref{rem: characteristic-function-polymer}, there are two cases to consider. First, if there exist $\gamma_1\in\Gamma_1$ and $\gamma_2\in\Gamma_2$ such that $\gamma_1\not\sim\gamma_2$, then \begin{equation*}
        \begin{split}
            \mathrm{dist}(\I_-(\gamma_1),\I_-(\gamma_2))&\leq 1+\mathrm{dist}(\gamma_1,\gamma_2)\leq 1+M\min\{\mathrm{diam}(\gamma_1),\mathrm{diam(\gamma_2)}\}^{3/2}\\&\leq 2M\min\{\mathrm{diam}(\gamma_1),\mathrm{diam(\gamma_2)}\}^{3/2}.
        \end{split}
    \end{equation*}That is, $(\gamma_1,\gamma_2)\in\mathscr{A}$. If $\Gamma_1\not\sim\Gamma_2$ and $(\gamma_1,\gamma_2)\in\Gamma_1\times\Gamma_2\Rightarrow \gamma_1\sim\gamma_2$, then there exist $\gamma_1\in\Gamma_1$ and $\gamma_2\in\Gamma_2$ such that $\I_-(\gamma_1)\cap\I_-(\gamma_2)\neq\emptyset$. That is, $(\gamma_1,\gamma_2)\in\mathscr{A}$. The lemma has thus been proved.
\end{proof}

\begin{lemma}\label{lem: notjoin-leaf-operator-satisfies-hypothesis}
    There exists $\beta_0>0$ such that if $\beta\geq\beta_0$ and $\Gamma_1\in\mathscr{P}$, then \begin{equation}\label{eq: notjoin-leaf-operator-satisfies-hypothesis}
        \overline{\mathcal{R}}^1_\beta(\Gamma_0)=\sum_{\Gamma\in\mathscr{P}}\ind_{\Gamma_0\not\sim\Gamma}\bar{z}_\beta(\Gamma)\leq Me^{-\frac{1}{8}c_4\beta}\sum_{\gamma_0\in\Gamma_0}H(\gamma_0).
    \end{equation}
\end{lemma}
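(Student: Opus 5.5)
By Lemma~\ref{lem: incompatibility-distance-contours-polymers}, polymer incompatibility can be traded for contour incompatibility. Concretely,
\[
\ind_{\Gamma_0\not\sim\Gamma}\leq\sum_{\gamma_0\in\Gamma_0}\sum_{\gamma\in\Gamma}\ind_{(\gamma_0,\gamma)\in\mathscr{A}}.
\]
Inserting this and exchanging sums gives
\[
\overline{\mathcal{R}}^1_\beta(\Gamma_0)\leq\sum_{\gamma_0\in\Gamma_0}\sum_{\gamma\in\mathscr{C}}\ind_{(\gamma_0,\gamma)\in\mathscr{A}}\sum_{\Gamma\in\mathscr{P}}\bar z_\beta(\Gamma)\ind_{\gamma\in\Gamma}.
\]
By Equation~\eqref{eq: bar-zeta-compared-to-W} with $m=1$, the innermost sum equals $\W_{\beta/2}(\gamma)$. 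Corollary~\ref{cor: polymers-reduced-to-contours} then bounds it by $e^{-\frac14\beta H(\gamma)}$ for $\beta$ large. The whole statement therefore reduces to a single-contour estimate: for each $\gamma_0\in\mathscr{C}$,
\[
\sum_{\gamma\in\mathscr{C}}\ind_{(\gamma_0,\gamma)\in\mathscr{A}}e^{-\frac14\beta H(\gamma)}\leq Me^{-c\beta}H(\gamma_0),
\]
with $c$ a fixed multiple of $c_2$. The constant written $c_4$ in the statement is presumably to be read as such a constant, or fixed accordingly.

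To prove this estimate I pass to translation classes as in the proof of Lemma~\ref{lem: leaf-cutter-operator}. For a fixed class $[\gamma]$, I count the translates $\gamma$ with $\dist(\I_-(\gamma_0),\I_-(\gamma))\leq 2M\min\{\diam(\gamma_0),\diam(\gamma)\}^{3/2}$. The set of admissible positions of, say, the leftmost point of $\I_-(\gamma)$ is an interval of length at most
\[
\diam(\gamma_0)+\diam(\gamma)+4M\min\{\diam(\gamma_0),\diam(\gamma)\}^{3/2}.
\]
The key observation is $\min\{a,b\}^{3/2}\leq a\,b^{1/2}$. This lets me factor the count as at most $C M\,\diam(\gamma_0)\,\diam(\gamma)^{3/2}$, up to a constant, since $\diam\geq1$.

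The factor $\diam(\gamma)^{3/2}$ is absorbed through Hypothesis~\ref{hyp: diameter-hypothesis}, which gives $\diam(\gamma)^{3/2}\leq e^{\frac32c_0H(\gamma)}$. The factor $\diam(\gamma_0)$ is the main obstacle, because it must be converted into $H(\gamma_0)$. I would not use $\diam(\gamma_0)\leq e^{c_0H(\gamma_0)}$, since that is far too lossy. Instead I need a linear bound, either $\diam(\gamma_0)\lesssim H(\gamma_0)$ or a count of boundary points.

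The first option fails in general: $H$ grows only logarithmically in the diameter for sparse contours. So the counting has to be refined. A translate is incompatible only if $\I_-(\gamma)$ comes within distance $2M\min\{\cdot\}^{3/2}$ of $\I_-(\gamma_0)$, and $\I_-(\gamma_0)$ is a union of $\lvert\gamma_0\rvert/2$ intervals. The admissible positions therefore lie within the stated distance of one of the $\lvert\gamma_0\rvert$ spin flips of $\gamma_0$, or inside $\I_-(\gamma_0)$.

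The part inside $\I_-(\gamma_0)$ is the problem: it has size $\lvert\I_-(\gamma_0)\rvert$, which is not controlled by $H(\gamma_0)$. I would handle it by noting that contours sitting deep inside $\I_-(\gamma_0)$ and compatible in the sense $\sim$ are counted only when they are within distance $2M\diam(\gamma)^{3/2}$ of a spin flip of $\gamma_0$. Incompatibility with a point of $\I_-(\gamma_0)$ far from all its boundary points would force $\gamma\trianglelefteq\gamma_0$ with $\gamma\sim\gamma_0$, which by the definition of polymer compatibility is allowed. I expect this step to need care: possibly one has to use the pair $(\gamma_0,\gamma)$ actually produced in Lemma~\ref{lem: incompatibility-distance-contours-polymers} and check that it lies within range of $\partial$-points.

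Granting this, the count is at most $C M\lvert\gamma_0\rvert\,\diam(\gamma)^{3/2}$. Then $\lvert\gamma_0\rvert\leq H(\gamma_0)$, since each flip costs energy at least a fixed amount. Finally, summing $e^{-(\frac14\beta-\frac32c_0)H([\gamma])}$ over classes with Hypothesis~\ref{hyp: phase-transition-estimate} gives $e^{-c_2\beta/8}$ once $\beta\geq 12c_0+4\beta_0$. Summing over $\gamma_0\in\Gamma_0$ then yields the claim.
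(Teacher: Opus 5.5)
\textbf{There is a genuine gap, at exactly the step you ``grant'', and it cannot be filled within your reduction.} Your chain of reductions is the same as the paper's: Lemma~\ref{lem: incompatibility-distance-contours-polymers}, then Equation~\eqref{eq: bar-zeta-compared-to-W} with $m=1$, Corollary~\ref{cor: polymers-reduced-to-contours}, translation classes, Hypotheses~\ref{hyp: diameter-hypothesis} and \ref{hyp: phase-transition-estimate}, and $H(\gamma_0)\geq 2\lvert\gamma_0\rvert$.

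\emph{Your justification is wrong.} $\mathscr{A}$ is defined through $\dist(\I_-(\gamma_0),\I_-(\gamma))$, so every translate with $\I_-(\gamma)\subset\I_-(\gamma_0)$ lies in $\mathscr{A}$. Such a $\gamma$ is not harmless at the polymer level. Compatibility by nesting requires the \emph{whole} polymer $\Gamma$ to lie in $\mathscr{I}(\gamma_0)$ for a single $\gamma_0\in\Gamma_0$. For a counterexample:
\begin{itemize}
\item take $\Gamma_0=\{\gamma_0\}$ with $\I_-(\gamma_0)$ an interval of length $L$;
\item let $\gamma_x$ be the single-site contour at some $x$ deep inside $\I_-(\gamma_0)$;
\item let $\gamma'$ be a single-site contour at distance $>2M$ to the left of $\I_-(\gamma_0)$.
\end{itemize}
Then $\Gamma=\{\gamma_x,\gamma'\}\in\mathscr{P}$ and $\Gamma\not\sim\Gamma_0$. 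Yet its only pair in $\mathscr{A}$ is $(\gamma_0,\gamma_x)$, which is far from every spin flip of $\gamma_0$. So choosing ``the pair produced by the lemma'' does not help.

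\emph{More fundamentally, the reduced inequality is false.} After exchanging sums you have discarded the condition $\Gamma\not\sim\Gamma_0$. Use the same $\gamma_0$. Since $H(\gamma_x)=4\zeta(\alpha)$, your left-hand side $\sum_{\gamma}\ind_{\mathscr{A}}(\gamma_0,\gamma)e^{-\frac14\beta H(\gamma)}$ is at least $L\,e^{-\beta\zeta(\alpha)}$. But $H(\gamma_0)$ is of order $L^{2-\alpha}$ for $\alpha<2$ and of order $\log L$ for $\alpha=2$. So no sharper counting can close the argument.

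\emph{The paper does not supply the step either.} Its proof asserts precisely the count you grant, at most $2M\diam([\gamma])^{3/2}\lvert\gamma_0\rvert$ translates, justified only by $\lvert\partial\I_-(\gamma_0)\rvert=\lvert\gamma_0\rvert$. That overlooks the same interior translates.

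\textbf{Repair: keep more of the polymer.} Split $\ind_{\Gamma_0\not\sim\Gamma}$ into three cases.

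(a) Some $\gamma\in\Gamma$ and $\gamma_0\in\Gamma_0$ satisfy $\gamma\not\sim\gamma_0$. Then a flip of $\gamma$ lies within $M\diam(\gamma)^{3/2}$ of a flip of $\gamma_0$. A class therefore contributes at most $\lvert\gamma_0\rvert\,\lvert\gamma\rvert\,(2M\diam(\gamma)^{3/2}+1)$ translates. This is what your boundary count really proves, and the rest of your argument then goes through.

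(b) All pairs are compatible and $\gamma_0\trianglelefteq\gamma$ for some pair. The translates of $\gamma$ whose interior contains a fixed point of $\I_-(\gamma_0)$ number at most $\lvert\I_-(\gamma)\rvert\leq\diam(\gamma)$. This is absorbed by Hypothesis~\ref{hyp: diameter-hypothesis}.

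(c) All pairs are compatible and no contour of $\Gamma$ contains one of $\Gamma_0$.
\begin{itemize}
\item Since $\Gamma\not\parallel\Gamma_0$, there is $\gamma\in\Gamma$ with $\gamma\trianglelefteq\gamma_0$.
\item Since $\Gamma\notin\mathscr{I}(\gamma_0)$, there is $\gamma'\in\Gamma$ with $\I_-(\gamma')\cap\I_-(\gamma_0)=\emptyset$. Here $\gamma_0\trianglelefteq\gamma'$ is excluded because $\gamma\parallel\gamma'$.
\item Keep both contours. Equation~\eqref{eq: bar-zeta-compared-to-W} with $m=2$ and Corollary~\ref{cor: two-contour-estimate} bound the polymer sum by $\Phi(\gamma,\gamma')e^{-\frac18\beta[H(\gamma)+H(\gamma')]}$.
\item Summing the crossing factor $\Phi(\gamma,\gamma')$ over translates gives at most $2\lvert\I_-(\gamma)\rvert\,\lvert\I_-(\gamma')\rvert\,H(\gamma_0)$.
\end{itemize}
In this case the linear factor $H(\gamma_0)$ comes from the interaction across the boundary of $\I_-(\gamma_0)$, not from a count of positions.
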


\begin{proof}
				By Equation~\eqref{eq: bar-zeta-compared-to-W}, Lemma~\ref{lem: incompatibility-distance-contours-polymers} and Corollary~\ref{cor: polymers-reduced-to-contours},
				\[
					\begin{split}
						\overline{\mathcal{R}}^1_\beta(\Gamma_0)&=\sum_{\Gamma\in\mathscr{P}}\bar{z}_\beta(\Gamma)\ind_{\Gamma\not\sim\Gamma_0}\leq\sum_{\gamma_0\in\Gamma_0}\sum_{\gamma\in\mathscr{C}}\ind_{\mathscr{A}}(\gamma_0,\gamma)\sum_{\Gamma\in\mathscr{P}}\bar{z}_{\beta}(\Gamma)\ind_{\gamma\in\Gamma}\\
						&=\sum_{\gamma_0\in\Gamma_0}\sum_{\gamma\in\mathscr{C}}\ind_{\mathscr{A}}(\gamma_0,\gamma)\W_{\beta/2}(\gamma)\leq\sum_{\gamma_0\in\Gamma_0}\sum_{\gamma\in\mathscr{C}}\ind_{\mathscr{A}}(\gamma_0,\gamma)e^{-\frac{1}{4}\beta H(\gamma)}.
					\end{split}
				\]
				It is time to introduce translation-invariant equivalence classes. 
				
				Note that if $\gamma_0\in\Gamma_0$, then $\lvert\partial\I_-(\gamma_0)\rvert=\lvert\gamma_0\rvert$. Hence, if $[\gamma]\in\overline{\mathscr{C}}$, then there are at most $2M\diam([\gamma])^{3/2}\cdot\lvert\gamma_0\rvert$ elements $\gamma\in[\gamma]$ such that $\ind_{\mathscr{A}}(\gamma_0,\gamma)=1$. Hence, by Hypothesis~\ref{hyp: diameter-hypothesis} and choosing $\beta_0\geq 1+12 c_0$,
				\[
					\mathcal{R}_\beta(\Gamma_0)\leq\sum_{\gamma_0\in\Gamma_0}\lvert\gamma_0\rvert\cdot\sum_{[\gamma]\in\overline{\mathscr{C}}}2M\diam([\gamma])^{3/2}e^{-\frac{1}{4}\beta H([\gamma])}\leq 2M\sum_{\gamma_0\in\Gamma_0}\lvert\gamma_0\rvert\cdot\sum_{[\gamma]\in\overline{\mathscr{C}}}e^{-\frac{1}{8}\beta H([\gamma])}.
				\]
				Finally, by Hypothesis~\ref{hyp: phase-transition-estimate} and $H(\gamma_0)\geq 2\lvert\gamma_0\rvert$,
				\[
					\mathcal{R}_\beta(\Gamma_0)\leq Me^{-\frac{1}{8}c_4\beta}\sum_{\gamma_0\in\Gamma_0}H(\gamma_0).
				\]
				The lemma has just been proved.			
			\end{proof}

\begin{corollary}\label{cor: sum-tree-of-polymers}
    For all $\eta>0$, there exists $\beta_0=\beta_0(\eta)>0$ such that, if $\beta\geq\beta_0$ and $\varGamma\in\mathscr{P}$, then \begin{equation*}
        \overline{\W}_\beta(\varGamma)\leq \eta\bar{z}_{\beta/2}(\varGamma). 
    \end{equation*}
\end{corollary}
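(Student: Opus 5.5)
The plan is to apply Lemma~\ref{lem: reducing-sums-over-trees} to the triple $(\mathscr{P},\{\bar{z}_\beta(\cdot)\},\ind_{\cdot\not\sim\cdot})$. Under Convention~\ref{conv: tree-of-polymers-trees-of-contours}, $\overline{\W}_\beta(\varGamma)$ is exactly the family of locally compatible sums attached to this triple. So it suffices to verify the two hypotheses that lemma requires.

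\emph{Hypothesis~\ref{hyp: vertex-edge-function-hypotheses}.} This is Equation~\eqref{eq: v-prime-function-polymers}. The modified vertex function is
\[
v'_\beta(\varGamma)=\log\bigl(\bar{z}_{\beta/2}(\varGamma)/\bar{z}_\beta(\varGamma)\bigr)=\tfrac14\beta\sum_{\gamma\in\varGamma}H(\gamma).
\]
It is increasing in $\beta$. Since $H(\gamma)\geq 2$ for every contour (Hypothesis~\ref{hyp: M-choice}), its infimum over $\mathscr{P}$ is at least $\beta/2$, which tends to $\infty$.

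\emph{Hypothesis~\ref{hyp: leaf-pruning-operator}.} This step uses Lemma~\ref{lem: notjoin-leaf-operator-satisfies-hypothesis}. For $\beta\geq\beta_0$ it gives
\[
\overline{\mathcal{R}^1}_\beta(\varGamma)\leq Me^{-\frac18 c_4\beta}\sum_{\gamma\in\varGamma}H(\gamma).
\]
Dividing by $v'_\beta(\varGamma)$ yields
\[
\frac{\overline{\mathcal{R}^1}_\beta(\varGamma)}{v'_\beta(\varGamma)}\leq \frac{4M e^{-\frac18 c_4\beta}}{\beta},
\]
uniformly in $\varGamma\in\mathscr{P}$. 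The constant in the exponent is whatever positive constant the proof of that lemma actually produces; it comes from Hypothesis~\ref{hyp: phase-transition-estimate} and Corollary~\ref{cor: polymers-reduced-to-contours}, and is some multiple of $c_2$. The right-hand side tends to $0$ as $\beta\to\infty$, which is exactly the required domination.

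With both hypotheses in hand, Lemma~\ref{lem: reducing-sums-over-trees} gives $\beta_0(\eta)$ such that $\overline{\W}_\beta(\varGamma)\leq\eta\,\bar{z}_{\beta/2}(\varGamma)$ for all $\beta\geq\beta_0(\eta)$ and all $\varGamma\in\mathscr{P}$, which is the claim.

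The main obstacle is not this corollary but the chain of results behind it, and that chain is already in place. Lemma~\ref{lem: notjoin-leaf-operator-satisfies-hypothesis} rests on Corollary~\ref{cor: polymers-reduced-to-contours}, and it is there that trees of contours are controlled. The only thing to check here is that the thresholds are compatible. The bound of Lemma~\ref{lem: notjoin-leaf-operator-satisfies-hypothesis} is needed at $\beta/2$ and at every larger inverse temperature the proof of Lemma~\ref{lem: reducing-sums-over-trees} uses. So $\beta_0(\eta)$ should be taken as the maximum of the threshold from Lemma~\ref{lem: notjoin-leaf-operator-satisfies-hypothesis} (doubled) and the threshold produced inside Lemma~\ref{lem: reducing-sums-over-trees}.
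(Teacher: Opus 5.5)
Your proposal is correct and follows the paper's argument: the paper also derives the corollary directly from Lemma~\ref{lem: reducing-sums-over-trees} applied to the polymer triple, with Hypothesis~\ref{hyp: vertex-edge-function-hypotheses} supplied by Equation~\eqref{eq: v-prime-function-polymers} and Hypothesis~\ref{hyp: leaf-pruning-operator} supplied by Lemma~\ref{lem: notjoin-leaf-operator-satisfies-hypothesis}. Your extra remark, that $\beta_0(\eta)$ must be large enough for the leaf-pruning bound to hold at $\beta/2$, is a harmless detail the paper leaves implicit.
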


\begin{proof}
    This is a direct consequence of Lemmas~\ref{lem: reducing-sums-over-trees} and \ref{lem: notjoin-leaf-operator-satisfies-hypothesis}.
\end{proof}

\begin{proof}[Proof of Theorem~\ref{theo: convergence-cluster-expansion}]
    First, let us deal with the matter of convergence in the absence of an external field. By Equation~\eqref{eq: pressure-tree-function}, it is sufficient to apply the machinery developed in Section~\ref{sec: sums-over-trees}. More precisely, with Lemma~\ref{lem: notjoin-leaf-operator-satisfies-hypothesis}, we transform the sum over trees of polymers into a sum over trees of contours. Once again, we apply the machinery developed in Section~\ref{sec: sums-over-trees}; for all the details, see Section~\ref{subsec: sums-of-trees-of-contours}. That is, \begin{align*}
        \lvert \Lambda\rvert P_\beta(\Lambda)\leq\sum_{\Gamma\in\mathscr{P}_\Lambda}\overline{\W}_\beta(\Gamma)\leq\sum_{\Gamma\in\mathscr{P}_\Lambda}\bar{z}_{\beta/2}(\Gamma)\leq\sum_{\gamma\in\mathscr{C}_\Lambda}\W_{\beta/4}(\gamma)\leq \sum_{\gamma\in\mathscr{C}_\Lambda}e^{-\frac{1}{8}\beta H(\gamma)}.
    \end{align*} The first inequality is a consequence of Equation~\eqref{eq: pressure-tree-function}. The second inequality is a consequence of Corollary~\ref{cor: sum-tree-of-polymers} ($\eta=1$). The third inequality is a consequence of Equation~\eqref{eq: bar-zeta-compared-to-W}. Finally, the fourth inequality is a consequence of Corollary~\ref{cor: polymers-reduced-to-contours}.

    Let $[\gamma\hspace{+0.05cm}]$ be an equivalence class of contours associated to the action of $\mathbb{Z}$ by translation. There are at most $\lvert\Lambda\rvert$ elements of $[\gamma\hspace{+0.05cm}]$ in $\mathscr{C}_\Lambda$. Hence, \begin{align*}
        P_\beta(\Lambda)\leq \sum_{[\gamma\hspace{+0.025cm}]\in\overline{\mathscr{C}}}e^{-\frac{1}{8}\beta H([\gamma\hspace{+0.025cm}])}\leq e^{-\frac{1}{8}c_2\beta},
    \end{align*}where the second inequality is a consequence of Hypothesis~\ref{hyp: phase-transition-estimate} (recall Equation~\eqref{eq: hypothesis-3-modified}). The first part of Theorem~\ref{theo: convergence-cluster-expansion} has thus been proved.

    The second part of Theorem~\ref{theo: convergence-cluster-expansion} is a direct application of Lemma~\ref{lem: upper-activity}. We thus conclude the proof of Theorem~\ref{theo: convergence-cluster-expansion}.
\end{proof}

\section{Lattice site estimates}\label{sec: point-estimates}

In this section, we establish some estimates involving lattice sites necessary to compute correlations of $n$-points. This is done because we transform polymers into contours, and then contours into lattice sites. The approach developed in this section can be seen as an extension of the one employed by Imbrie in \cite{Imbrie.82} to estimate the two-point correlations for the exponent $\alpha=2$. We begin with an auxiliary lemma.

\begin{lemma}\label{lem: permutation-trick}
    Let $\mathbb{L}$ be a set and $C:\mathbb{L}^2\rightarrow\mathbb{R}_{>0}$ be a function such that $C(a,a)=1$ for all $a\in\mathbb{L}$. If $k\in\mathbb{N}$, $\mathbf{x}=(x_1,...,x_k)\in\mathbb{L}^k$ and $y\in\mathbb{L}$, then  \begin{equation*}
        \sum_{\varphi\in\mathrm{Sym}(l)}C(\bar{x}_{\varphi(l)},y)\prod_{i=1}^{l-1}C(\bar{x}_{\varphi(i)},\bar{x}_{\varphi(i+1)})\leq \sum_{\varphi\in\mathrm{Sym}(k)}C(x_{\varphi(k)},y)\prod_{j=1}^{k-1}C(x_{\varphi(j)},x_{\varphi(j+1)}),
    \end{equation*}where all the entries in $\bar{\mathbf{x}}=(\bar{x}_1,...,\bar{x}_l)$ are distinct and $\{\bar{x}_i:i\in[l]\}=\{x_j:j\in [k]\}$.
\end{lemma}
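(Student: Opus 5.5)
The plan is an injection argument: I will map each term on the left-hand side to a term on the right-hand side with the same value, injectively. Fix, once and for all, a \emph{representative map}. For each $i\in[l]$, choose an index $r(i)\in[k]$ with $x_{r(i)}=\bar{x}_i$. The remaining indices are grouped by value: for each $i$, let $B_i\coloneqq\{j\in[k]:x_j=\bar{x}_i\}$. The sets $B_1,\dots,B_l$ form a partition of $[k]$ with $r(i)\in B_i$.

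Given $\varphi\in\mathrm{Sym}(l)$, I build $\psi=\Theta(\varphi)\in\mathrm{Sym}(k)$ by listing blocks in the order of $\varphi$. First come the elements of $B_{\varphi(1)}$ in increasing order, then those of $B_{\varphi(2)}$ in increasing order, and so on, ending with $B_{\varphi(l)}$. So $\psi$ is the concatenation of the blocks, each listed in a fixed internal order.

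Next I check that the values agree. Inside a block, consecutive entries $x_{\psi(j)},x_{\psi(j+1)}$ are equal, so they contribute the factor $C(a,a)=1$. At a block boundary, the pair is $(\bar{x}_{\varphi(i)},\bar{x}_{\varphi(i+1)})$, contributing $C(\bar{x}_{\varphi(i)},\bar{x}_{\varphi(i+1)})$. The last entry lies in $B_{\varphi(l)}$, so the terminal factor is $C(x_{\psi(k)},y)=C(\bar{x}_{\varphi(l)},y)$. Hence the $\psi$-term on the right equals the $\varphi$-term on the left.

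Injectivity of $\Theta$ holds because $\varphi$ is recovered from $\psi$ by reading off the order in which the blocks appear, and the blocks are disjoint and nonempty.

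Since $C>0$, every term on the right is nonnegative. Therefore the left sum equals the sum over the image of $\Theta$, which is at most the full right sum, and the inequality follows. This is where the hypothesis $C(a,a)=1$ is used: without it, equal consecutive entries would not contribute a trivial factor.

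I expect no real obstacle here. The only point needing care is making the block ordering canonical, so that $\Theta$ is well defined and injective.
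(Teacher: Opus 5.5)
Your proposal is correct and is essentially the paper's proof. The paper also groups indices into value classes $K_i=\{j\in[k]:x_j=\bar{x}_i\}$ and builds $\tilde{\varphi}$ by concatenating $K_{\varphi(1)},\dots,K_{\varphi(l)}$, each in increasing order. It proves injectivity by looking at the first block where two permutations differ, and uses $C(a,a)=1$ to make the within-block factors trivial. The representative map $r(i)$ you introduce is never used and can be dropped.
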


\begin{proof}
    Let $\mathbf{x}$ and $\bar{\mathbf{x}}$ satisfy the hypotheses of the lemma. We shall construct an injection from $\mathrm{Sym}(l)$ to $\mathrm{Sym}(k)$, denoted by $\varphi\mapsto\tilde\varphi$, such that $\bar{x}_{\varphi(l)}=x_{\tilde{\varphi}(k)}$ and \begin{equation}\label{eq: C-x-x-C-bar-x-bar-x}
        \prod_{i=1}^{l-1}C(\bar{x}_{\varphi(i)},\bar{x}_{\varphi(i+1)})=\prod_{j=1}^{k-1}C(x_{\tilde{\varphi}(j)},x_{\tilde{\varphi}(j+1)}).
    \end{equation}Since $C$ is positive, the existence of such an injection proves the lemma.

    For $i\in[l]$, define $K_i\coloneqq\{j\in[k]:x_j=\bar{x}_i\}$. 
    
    Denote by $\phi_i:[\lvert K_i\rvert]\rightarrow K_i$ the bijection that orders $K_i$; that is, $j<j'\implies \phi_i(j)<\phi_i(j')$. 
    
    Let $\varphi\in\mathrm{Sym}(l)$. It is convenient to write \[k[\varphi](i)\coloneqq\sum_{p=1}^i\lvert K_{\varphi(p)}\rvert,\] with the convention $k[\varphi](0)=0$. Note that $k[\varphi](l)=k$ and that $i\mapsto k[\varphi](i)$ is strictly increasing. In order to define $\tilde{\varphi}$, we proceed in blocks: if $j\in[k]$, then there exists a unique $i(j)\in[l]$ such that $k[\varphi](i(j)-1)<j\leq k[\varphi](i(j))$. Define \begin{equation*}
        \tilde{\varphi}(j)\coloneqq \phi_{\varphi(i(j))}\left(j-k[\varphi](i(j))\right).
    \end{equation*}While this may seem rather involved, the graphical description is rather straightforward, see Figure~\ref{fig: permutation-trick}.

    \begin{figure}[hbt!]
        \centering
        \tikzset{every picture/.style={line width=0.75pt}} 

\begin{tikzpicture}[x=0.75pt,y=0.75pt,yscale=-1,xscale=1]

\draw   (120,20) -- (135,20) -- (135,35) -- (120,35) -- cycle ;
\draw   (135,20) -- (150,20) -- (150,35) -- (135,35) -- cycle ;
\draw   (150,20) -- (165,20) -- (165,35) -- (150,35) -- cycle ;
\draw   (165,20) -- (180,20) -- (180,35) -- (165,35) -- cycle ;
\draw   (180,20) -- (195,20) -- (195,35) -- (180,35) -- cycle ;
\draw   (195,20) -- (210,20) -- (210,35) -- (195,35) -- cycle ;
\draw   (210,20) -- (225,20) -- (225,35) -- (210,35) -- cycle ;
\draw   (225,20) -- (240,20) -- (240,35) -- (225,35) -- cycle ;
\draw   (40,20) -- (55,20) -- (55,35) -- (40,35) -- cycle ;
\draw   (55,20) -- (70,20) -- (70,35) -- (55,35) -- cycle ;
\draw   (70,20) -- (85,20) -- (85,35) -- (70,35) -- cycle ;
\draw   (40,65) -- (55,65) -- (55,80) -- (40,80) -- cycle ;
\draw   (55,65) -- (70,65) -- (70,80) -- (55,80) -- cycle ;
\draw   (70,65) -- (85,65) -- (85,80) -- (70,80) -- cycle ;
\draw   (120,65) -- (135,65) -- (135,80) -- (120,80) -- cycle ;
\draw   (135,65) -- (150,65) -- (150,80) -- (135,80) -- cycle ;
\draw   (150,65) -- (165,65) -- (165,80) -- (150,80) -- cycle ;
\draw   (165,65) -- (180,65) -- (180,80) -- (165,80) -- cycle ;
\draw   (180,65) -- (195,65) -- (195,80) -- (180,80) -- cycle ;
\draw   (195,65) -- (210,65) -- (210,80) -- (195,80) -- cycle ;
\draw   (210,65) -- (225,65) -- (225,80) -- (210,80) -- cycle ;
\draw   (225,65) -- (240,65) -- (240,80) -- (225,80) -- cycle ;
\draw   (40,40) -- (85,40) -- (62.5,60) -- cycle ;
\draw   (120,40) -- (240,40) -- (180,60) -- cycle ;
\draw  [dash pattern={on 0.84pt off 2.51pt}] (30,10) -- (250,10) -- (250,90) -- (30,90) -- cycle ;

\draw (127.5,27.5) node  [font=\small]  {$a$};
\draw (142.5,27.5) node  [font=\small]  {$b$};
\draw (157.5,27.5) node  [font=\small]  {$c$};
\draw (172.5,27.5) node  [font=\small]  {$a$};
\draw (187.5,27.5) node  [font=\small]  {$a$};
\draw (202.5,27.5) node  [font=\small]  {$b$};
\draw (217.5,27.5) node  [font=\small]  {$b$};
\draw (232.5,27.5) node  [font=\small]  {$c$};
\draw (62.5,27.5) node  [font=\small]  {$a$};
\draw (77.5,27.5) node  [font=\small]  {$b$};
\draw (47.5,27.5) node  [font=\small]  {$c$};
\draw (62.5,72.5) node  [font=\small]  {$a$};
\draw (77.5,72.5) node  [font=\small]  {$b$};
\draw (47.5,72.5) node  [font=\small]  {$c$};
\draw (157.5,72.5) node  [font=\small]  {$a$};
\draw (217.5,72.5) node  [font=\small]  {$b$};
\draw (127.5,72.5) node  [font=\small]  {$c$};
\draw (187.5,72.5) node  [font=\small]  {$a$};
\draw (172.5,72.5) node  [font=\small]  {$a$};
\draw (202.5,72.5) node  [font=\small]  {$b$};
\draw (232.5,72.5) node  [font=\small]  {$b$};
\draw (142.5,72.5) node  [font=\small]  {$c$};
\draw (62.5,47) node  [font=\scriptsize]  {$\varphi $};
\draw (180,47) node  [font=\scriptsize]  {$\tilde{\varphi }$};

\end{tikzpicture}
        \caption{Here, $\mathbf{x}=(a,b,c,a,a,b,b,c)$, $\bar{\mathbf{x}}=(c,a,b)$ and $\varphi\in\mathrm{Sym(3)}$ denotes the trivial permutation.}
        \label{fig: permutation-trick}
    \end{figure}

    Let us first show that $\varphi\mapsto\tilde\varphi$ is an injection. If $\varphi_1\neq\varphi_2\in\mathrm{Sym}(l)$, then there exists a \emph{smallest} $i\in[l]$ such that $\varphi_1(i)\neq\varphi_2(i)$. Thus, if $j=k[\varphi_1](i-1)+1=k[\varphi_2](i-1)+1$, then $\tilde{\varphi}_1(j)\in K_{\varphi_1(i)}$ and $\tilde\varphi_2(j)\in K_{\varphi_2(i)}$. Since $K_{\varphi_1(i)}\cap K_{\varphi_2(i)}=\emptyset$, we conclude that $\varphi\mapsto\tilde\varphi$ is an injection.

    Finally, let us now show that Equation~\eqref{eq: C-x-x-C-bar-x-bar-x} holds. If $i\in[l]$, then \begin{equation*}
        \prod_{j=k[\varphi](i-1)+1}^{k[\varphi](i)-1}C(x_{\tilde{\varphi}(j)},x_{\tilde{\varphi}(j+1)})=\prod_{j=k[\varphi](i-1)+1}^{k[\varphi](i)-1}C(\bar{x}_{\varphi(i)},\bar{x}_{\varphi(i)})=1.
    \end{equation*}Therefore, we obtain \begin{equation*}
        \prod_{j=1}^{k-1}C(x_{\tilde\varphi(j)},x_{\tilde\varphi(j+1)})=\prod_{i=1}^{l-1}C(x_{\tilde{\varphi}(k[\varphi](i))},x_{\tilde{\varphi}(k[\varphi](i)+1)})=\prod_{i=1}^{l-1}C(\bar{x}_{\varphi(i)},\bar{x}_{\varphi(i+1)}).
    \end{equation*}The reader may readily check that $\bar{x}_{\varphi(l)}=x_{\tilde{\varphi}(k)}$. The lemma has thus been proved.    
\end{proof}

Of course, the interest is not an arbitrary function $C$. Following Imbrie in \cite{Imbrie.82}, we define\begin{equation*}
    C(x,y)=\begin{cases}
        \lvert x-y\rvert^{-\alpha}, &\textrm{if }x\neq y\\
        1, &\textrm{if }x=y.
    \end{cases}
\end{equation*}In general, the natural definition is $C(x,y)=\ind_{x=y}+\ind_{x\neq y}J_{x,y}$.

Let us begin by proving an analogous result of Lemma~\ref{lem: phi_abc_bound} for lattice sites.

\begin{lemma}\label{lem: C-xyz-bound}
    Let $k\geq 2$. Let $\mathbf{x}=(x_1,...,x_k)\in\mathbb{Z}^k$ and $y\in\mathbb{Z}$, then \begin{equation}\label{eq: C-xyz-bound}
        \prod_{i=1}^kC(x_i,y)\leq 4^{k}\sum_{\varphi\in \mathrm{Sym}(k)}C(x_{\varphi(k)},y)\prod_{i=1}^{k-1}C(x_{\varphi(i)},x_{\varphi(i+1)}).
    \end{equation}In particular, we have \begin{equation}\label{eq: triple-c-estimate}
        \sum_{y}\prod_{i=1}^kC(x_i,y)\leq 4^{k}(1+2\zeta(\alpha))\sum_{\varphi\in \mathrm{Sym}(k)}\prod_{i=1}^{k-1}C(x_{\varphi(i)},x_{\varphi(i+1)}).
    \end{equation}
\end{lemma}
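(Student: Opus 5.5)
The plan is to mimic the proof of Lemma~\ref{lem: phi_abc_bound}, with $B=\{y\}$ so that $\mathrm{diam}(B)=0$, and to use Lemma~\ref{lem: permutation-trick} to handle coincident points. First reduce to the case where the $x_i$ are distinct. Write $\bar{\mathbf{x}}=(\bar x_1,\dots,\bar x_l)$ for the distinct values. Each factor satisfies $C(x_i,y)\le 1$ (indeed $C\le 1$ everywhere), so
\[
\prod_{i=1}^k C(x_i,y)\le\prod_{i=1}^l C(\bar x_i,y).
\]
If the distinct-case bound holds with constant $4^l\le 4^k$, then Lemma~\ref{lem: permutation-trick} upgrades the sum over $\mathrm{Sym}(l)$ to the sum over $\mathrm{Sym}(k)$. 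One small gap: Lemma~\ref{lem: permutation-trick} is stated for $C>0$ with $C(a,a)=1$, which our $C$ satisfies. If $l=1$ the claimed sum contains the term $C(x_1,y)\cdot 1$, so the bound is trivial.

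For distinct $x_i$, split according to the ordering of the distances. For $\varphi\in\mathrm{Sym}(k)$ let
\[
E(\varphi)=\bigl\{\,\lvert x_{\varphi(1)}-y\rvert\ge\dots\ge\lvert x_{\varphi(k)}-y\rvert\,\bigr\}.
\]
Some $\varphi$ always holds, so the product is bounded by the sum over $\varphi$ of the indicator of $E(\varphi)$ times the product. On $E(\varphi)$ use the triangle inequality:
\[
\lvert x_{\varphi(i)}-x_{\varphi(i+1)}\rvert\le 2\lvert x_{\varphi(i)}-y\rvert ,
\qquad\text{hence}\qquad
C(x_{\varphi(i)},y)\le 2^\alpha C(x_{\varphi(i)},x_{\varphi(i+1)})\le 4\,C(x_{\varphi(i)},x_{\varphi(i+1)}).
\]
Some care is needed when a distance vanishes. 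If $x_{\varphi(i)}=y$, then on $E(\varphi)$ we must have $i=k$, because distinctness allows at most one $x$ to equal $y$ and it has the smallest distance. So for $i<k$ all distances are at least $1$ and the power bounds apply. Keeping the last factor $C(x_{\varphi(k)},y)$ untouched gives
\[
\prod_{i=1}^k C(x_i,y)\le 4^{k-1}\sum_{\varphi}C(x_{\varphi(k)},y)\prod_{i=1}^{k-1}C(x_{\varphi(i)},x_{\varphi(i+1)}),
\]
which implies \eqref{eq: C-xyz-bound}.

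For \eqref{eq: triple-c-estimate}, sum \eqref{eq: C-xyz-bound} over $y\in\mathbb{Z}$. Only the factor $C(x_{\varphi(k)},y)$ depends on $y$, and
\[
\sum_{y}C(x,y)=1+2\zeta(\alpha).
\]
Exchanging the finite sum over $\varphi$ with the sum over $y$ gives the claim.

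The main obstacle is the bookkeeping of coincidences: some $x_i$ may be equal to each other, handled by the reduction via Lemma~\ref{lem: permutation-trick}, and some $x_i$ may equal $y$, handled by noting that it must sit last in the ordering. Checking the reduction step also requires the harmless observation that $C\le 1$.
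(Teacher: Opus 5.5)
Your proof is correct. Its skeleton is the paper's: replace $\prod_i C(x_i,y)$ by the product over the distinct values using $C\le 1$, prove the distinct case, lift it to $\mathrm{Sym}(k)$ with Lemma~\ref{lem: permutation-trick}, and sum over $y$ using $\sum_y C(x,y)=1+2\zeta(\alpha)$.

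The difference is in the distinct case. The paper splits according to whether some $\bar x_i$ equals $y$ and invokes Lemma~\ref{lem: phi_abc_bound} with $B=\{y\}$ in each branch. You instead run the distance-ordering and triangle-inequality argument directly, noting that on $E(\varphi)$ the (at most one) point equal to $y$ must sit in position $k$, so no case split is needed.

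What your route buys is a self-contained derivation of the constant, namely $4^{k-1}$. Citing Lemma~\ref{lem: phi_abc_bound} verbatim, with its prefactor $4^{2k-1}$ and the factor $4$ built into each $\Phi$, only yields $4^{2l-2}$ for $l$ distinct singletons. That exceeds the claimed $4^{l}$ once $l\ge 3$. So the paper's constant implicitly relies on re-running the proof of that lemma with $\mathrm{diam}(B)=0$, which is exactly what you do. The paper's route buys economy: it reuses an already-proved estimate valid for general $B$, with constants left untracked.
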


\begin{remark}
    If $\{x_i\}$ and $y$ are all distinct, this is simply Lemma~\ref{lem: phi_abc_bound}. 
\end{remark}

\begin{proof}
    Let $\bar{\mathbf{x}}=(\bar{x}_1,...,\bar{x}_l)$ such that $i\neq j\Rightarrow\bar{x}_i\neq\bar{x}_j$ and $\{x_i:i\in[k]\}=\{\bar{x}_i:i\in[l]\}$. Then \begin{align*}
        \prod_{i=1}^kC(x_i,y)=\prod_{j=1}^lC(\bar{x}_j,y)^{k_j},
    \end{align*}where $k_j=\lvert\{i\in[k]:x_i=\bar{x}_j\}\rvert$. Hence, \begin{equation*}
        \prod_{i=1}^kC(x_i,y)\leq\prod_{j=1}^lC(\bar{x}_j,y)=\prod_{j=1}^l\frac{\ind_{\bar{x}_j\neq y}}{\lvert \bar{x}_j-y\rvert^\alpha}+\sum_{i=1}^l\ind_{\bar{x}_i=y}\prod_{j\neq i}\frac{\ind_{\bar{x}_j\neq y }}{\lvert y-\bar{x}_j\rvert^\alpha}.
    \end{equation*}By Lemma~\ref{lem: phi_abc_bound}, we bound the first term by \begin{equation*}
        \prod_{j=1}^l\frac{\ind_{\bar{x}_j\neq y}}{\lvert \bar{x}_j-y\rvert^\alpha}\leq 4^{l}\prod_{i=1}^l\ind_{\bar{x}_i\neq y}\sum_{\varphi\in \mathrm{Sym}(l)}C(\bar{x}_{\varphi(l)},y)\prod_{j=1}^{l-1}C(\bar{x}_{\varphi(j)},\bar{x}_{\varphi(j+1)}).
    \end{equation*}Let us now bound the second term. For $i\in[l]$, define $\mathbf{z}^i=(\bar{x}_1,...,\bar{x}_{i-1},\bar{x}_{i+1},...,\bar{x}_l)$. Thus, the same Lemma~\ref{lem: phi_abc_bound} implies that \begin{equation*}
        \ind_{\bar{x}_i=y}\prod_{j=1}^{l-1}\frac{\ind_{z^i_j\neq y}}{\lvert z^i_j-y\rvert^{\alpha}}\leq 4^{l-1}\ind_{\bar{x}_i=y}\prod_{j\neq i}\ind_{\bar{x}_j\neq y}\hspace{-.25cm}\sum_{\varphi\in\mathrm{Sym}(l-1)}\prod_{j=1}^{l-2}C(z^{i}_{\varphi(j)},z^i_{\varphi(j+1)})\cdot C(z^i_{\varphi(l-1)},\bar{x}_i).
    \end{equation*}There is a canonical bijection $\mathrm{Sym}(l-1)\ni\varphi\leftrightarrow\tilde{\varphi}\in\mathrm{Sym}(l)$ given by \begin{equation*}
        \tilde{\varphi}(j)\coloneqq\begin{cases}
            i, &\textrm{if }j=l;\\
            \varphi(j), &\textrm{if }\varphi(j)<i;\\
            \varphi(j)+1, &\textrm{if }\varphi(j)\geq i.
        \end{cases}
    \end{equation*}Recalling that $\ind_{\bar{x}_i=y}=\ind_{\bar{x}_i=y}C(\bar{x}_i,y)$, the right-hand side can be written as\begin{equation*}
        4^{l-1}\ind_{\bar{x}_i=y}\prod_{j\neq i}\ind_{\bar{x}_i\neq\bar{x}_j}\sum_{i=1}^l\sum_{\substack{\varphi\in\mathrm{Sym}(l)\\\varphi(l)=i}}C(\bar{x}_{\varphi(l)},y)\prod_{j=1}^{l-1}C(\bar{x}_{\varphi(j)},\bar{x}_{\varphi(j+1)}).
    \end{equation*}Finally, since \[\prod_{i=1}^l\ind_{\bar{x}_i\neq y}+\sum_{i=1}^l\ind_{\bar{x}_i=y}\prod_{j\neq i}\ind_{\bar{x}_i\neq \bar{x}_j}\leq 1,\]we obtain that\begin{equation*}
        \prod_{i=1}^kC(x_i,y)\leq 4^{l}\sum_{\varphi\in\mathrm{Sym}(l)}C(\bar{x}_{\varphi(l)},y)\prod_{i=1}^{l-1} C(\bar{x}_{\varphi(i)},\bar{x}_{\varphi(i+1)}).
    \end{equation*}By Lemma~\ref{lem: permutation-trick}, the proof is finished.
\end{proof}

Consider the family of locally compatible sums of trees of lattice sites associated to the triple
\[
    \left(\mathbb{Z},\left\{v_\beta(\cdot)\equiv e^{-\beta}\right\},C(\cdot,\cdot)\right).
\]
Iterating Equation~\eqref{eq: triple-c-estimate}, if $n\geq2$ and $x_1,x_2\in\mathbb{Z}$, then
\begin{equation}\label{eq: total-edge-lattice-sites}
    \sum_{\mathbf{y}\in\mathbb{Z}^n}\ind_{y_1=x_1}\ind_{y_n=x_2}\prod_{i\in[n-1]}C(y_i,y_{i+1})\leq [32(1+2\zeta(\alpha))]^{(n-2)}C(x_1,x_2).
\end{equation}
Although we are dealing with locally compatible sums of trees of lattice sites, Lemma~\ref{lem: C-xyz-bound} and Equation~\eqref{eq: total-edge-lattice-sites} imply that the results for globally compatible trees are true in this case. Hence, if $m\in\mathbb{N}$, then there exists $\beta_0$ such that if $\mathbf{x}\in\mathbb{Z}^m$ and $\beta\geq\beta_0$, then
\begin{equation}
    \sum_{n=m}^{\infty}\frac{e^{-n\beta}}{(n-m)!}\sum_{T\in\Tr_n}\sum_{\mathbf{y}\in\mathbb{Z}^n}\ind_{\mathbf{y}|_{[m]}=\mathbf{w}}\prod_{\{j,k\}\in E(T)}C(y_j,y_k)\leq e^{-\frac{m\beta}{2^{m}}}\sum_{T\in\Tr_m}\prod_{\{j,k\}\in E(T)}C(x_j,x_k).
\end{equation}

Let us now prove the estimates that allow us to translate estimates in terms of contours into estimates in terms of lattice sites.

\begin{lemma}\label{lem: contour-to-point-bound}
    Let $x,y\in\mathbb{Z}$. 
    \begin{enumerate}
        \item If $\gamma\in\mathscr{C}$ and $x,y\in\I_-(\gamma)$, then $1\leq C(x,y)e^{2c_0 H(\gamma)}$.
        \item If $(\gamma_1,\gamma_2)\in\mathscr{C}_2$, $x\in\I_-(\gamma_1)$ and $y\in\I_-(\gamma_2)$, then $\Phi(\gamma_1,\gamma_2)\leq 2^6\cdot e^{3 c_0[H(\gamma_1)+H(\gamma_2)]}C(x,y)$.
        \item If $(\gamma_1,\gamma_2)\in\mathscr{A}$, $x\in\I_-(\gamma_1)$ and $y\in\I_-(\gamma_2)$, then $1\leq (4M)^2 e^{2 c_0[H(\gamma_1)+H(\gamma_2)]}C(x,y).$
    \end{enumerate}
\end{lemma}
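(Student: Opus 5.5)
All three parts rest on one geometric fact: by Hypothesis~\ref{hyp: diameter-hypothesis}, every $\gamma\in\mathscr{C}$ has $\mathrm{diam}(\gamma)\leq e^{c_0H(\gamma)}$. Since $\I_-(\gamma)$ lies between the extreme spin flips of $\gamma$, any two sites in $\I_-(\gamma)$ are at distance at most $\mathrm{diam}(\gamma)+1\leq 2\,\mathrm{diam}(\gamma)$. We will also use that $H(\gamma)\geq 2$ and $1<\alpha\leq 2$, so that $C(x,y)\geq |x-y|^{-2}$ whenever $x\neq y$.

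\textbf{Part (1).} If $x=y$ then $C(x,y)=1$ and there is nothing to prove. Otherwise
\[
C(x,y)^{-1}=|x-y|^\alpha\leq (\mathrm{diam}(\gamma)+1)^2.
\]
Here $(\mathrm{diam}+1)^2\leq 4\,\mathrm{diam}^2\leq 4e^{2c_0H}$. If the constant $4$ matters, absorb it either by enlarging $c_0$ (the hypothesis survives this) or by using the slack in $\mathrm{diam}$ versus $|\I_-|$. I will state this as the routine step and absorb the constant.

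\textbf{Part (3).} Let $(\gamma_1,\gamma_2)\in\mathscr{A}$, with $x\in\I_-(\gamma_1)$ and $y\in\I_-(\gamma_2)$. By the triangle inequality,
\[
|x-y|\leq \mathrm{diam}(\I_-(\gamma_1))+\mathrm{dist}(\I_-(\gamma_1),\I_-(\gamma_2))+\mathrm{diam}(\I_-(\gamma_2)).
\]
The middle term is at most $2M\min\{\mathrm{diam}\}^{3/2}$ by the definition of $\mathscr{A}$. The whole sum is therefore bounded by
\[
4M\,\mathrm{diam}(\gamma_1)\,\mathrm{diam}(\gamma_2)\leq 4M\,e^{c_0[H(\gamma_1)+H(\gamma_2)]},
\]
using $d^{3/2}\leq d_1d_2$ for the minimum. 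Raising to the power $\alpha\leq 2$ gives $C(x,y)^{-1}\leq (4M)^2e^{2c_0[H(\gamma_1)+H(\gamma_2)]}$, which is part (3).

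\textbf{Part (2).} This is the main step. Write
\[
\Phi(\gamma_1,\gamma_2)=\sum_{u\in\I_-(\gamma_1),\,v\in\I_-(\gamma_2)}4|u-v|^{-\alpha}
\]
and compare each term with $C(x,y)$. Since the interiors are disjoint, $u\neq v$. Apply Equation~\eqref{eq: reverse-holder} twice to the chain $x\to u\to v\to y$:
\[
|x-y|^\alpha\leq 2|x-u|^\alpha+4|u-v|^\alpha+4|v-y|^\alpha.
\]
Bound $|x-u|$ and $|v-y|$ by $2\,\mathrm{diam}(\gamma_i)$, and use $|u-v|\geq 1$. This gives
\[
|x-y|^\alpha\leq 2^4\,\mathrm{diam}(\gamma_1)^2\,\mathrm{diam}(\gamma_2)^2\,|u-v|^\alpha,
\]
so that $|u-v|^{-\alpha}\leq 2^4e^{2c_0[H(\gamma_1)+H(\gamma_2)]}C(x,y)$. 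If $x=y$ the bound is trivial, since $\Phi$ is at most a constant times $|\I_-(\gamma_1)|$.

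Summing over $u,v$ then costs a factor $|\I_-(\gamma_1)|\,|\I_-(\gamma_2)|\leq e^{c_0[H(\gamma_1)+H(\gamma_2)]}$. Together with the factor $4$ from the definition of $\Phi$, this yields
\[
\Phi(\gamma_1,\gamma_2)\leq 2^6e^{3c_0[H(\gamma_1)+H(\gamma_2)]}C(x,y).
\]
The obstacle is only the bookkeeping of constants. This counting is crude but sufficient, and there is no need to exploit the summability of $|u-v|^{-\alpha}$.
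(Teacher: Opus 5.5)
Your argument is essentially the paper's. Parts (1) and (3) are the same estimates. In (2), you compare each pair $(u,v)$ with $(x,y)$ directly, whereas the paper passes through the uniform bounds $\Phi(\gamma_1,\gamma_2)\leq 4\,\mathrm{diam}(\gamma_1)\mathrm{diam}(\gamma_2)\,\mathrm{dist}(\gamma_1,\gamma_2)^{-\alpha}$ and $|x-y|^\alpha\leq 16\,\mathrm{dist}(\gamma_1,\gamma_2)^\alpha\mathrm{diam}(\gamma_1)^2\mathrm{diam}(\gamma_2)^2$. Your version needs only $|u-v|\geq 1$ and loses the same powers of the diameters.

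The constant bookkeeping you deferred does not close as written. The problem is your opening estimate that two sites of $\mathrm{I}_-(\gamma)$ lie within $\mathrm{diam}(\gamma)+1$; it is lossier than necessary, and it costs you in both places.

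In (1), it leaves a factor $4$ that the statement does not have. Enlarging $c_0$ is admissible, since Hypothesis~\ref{hyp: diameter-hypothesis} is monotone in $c_0$, but it is unnecessary.

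In (2), write $d_i=\mathrm{diam}(\gamma_i)$. Inserting $|x-u|\leq 2d_1$ and $|v-y|\leq 2d_2$ into your reverse-H\"older chain gives only $|x-y|^\alpha\leq (8d_1^2+16d_2^2+4)\,|u-v|^\alpha$. This exceeds your claimed $2^4d_1^2d_2^2|u-v|^\alpha$ whenever $d_1=1$, which happens for a single-site contour. So the constant $2^6$ is not actually derived.

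The fix is one observation. If the extreme flips of $\gamma$ are $a-\frac12$ and $b+\frac12$, then $\mathrm{diam}(\gamma)=b-a+1$ and $\mathrm{I}_-(\gamma)\subset[a,b]$. Hence two sites of $\mathrm{I}_-(\gamma)$ are within $\mathrm{diam}(\gamma)-1$ of each other, and $|\mathrm{I}_-(\gamma)|\leq\mathrm{diam}(\gamma)$.

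\begin{itemize}
\item Part (1) then reads $|x-y|^\alpha\leq\mathrm{diam}(\gamma)^2\leq e^{2c_0H(\gamma)}$, with no constant.
\item In (2), use the linear triangle inequality rather than the $\alpha$-power one. Since $|u-v|\geq1$ and $(d_1-1)(d_2-1)\geq 0$, you get $|x-y|\leq (d_1-1)+|u-v|+(d_2-1)\leq d_1d_2\,|u-v|$. Hence $\Phi(\gamma_1,\gamma_2)\leq 4d_1^3d_2^3\,C(x,y)$, well inside $2^6e^{3c_0[H(\gamma_1)+H(\gamma_2)]}C(x,y)$.
\end{itemize}

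Finally, the case $x=y$ in (2) cannot occur, because $\gamma_1\parallel\gamma_2$ forces disjoint minus-interiors. The justification you gave for that case would in any event produce a constant like $8\zeta(\alpha)$, not $2^6$.
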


\begin{proof}
    The first bound is immediate from Hypothesis~\ref{hyp: diameter-hypothesis}. In general, we have \begin{equation}\label{eq: trivial-phi-bound}
        \Phi(\gamma_1,\gamma_2)\leq\frac{4\mathrm{diam}(\gamma_1)\mathrm{diam}(\gamma_2)}{\mathrm{dist}(\gamma_1,\gamma_2)^\alpha}.
    \end{equation}On the other hand, by the triangular inequality, we have \begin{equation*}
        \lvert x-y\rvert\leq \mathrm{diam}(\gamma_1)+\mathrm{dist}(\gamma_1,\gamma_2)+\mathrm{diam}(\gamma_2).
    \end{equation*}Dividing and multiplying by $2$ and using $x,y\geq 2\Rightarrow x+y\leq xy$, we arrive at \begin{equation}\label{eq: crude-x-y-bound}
        \lvert x-y\rvert^\alpha\leq 16\cdot\mathrm{dist}(\gamma_1,\gamma_2)^\alpha\mathrm{diam}(\gamma_1)^2\mathrm{diam}(\gamma_2)^2.
    \end{equation} Combining Hypothesis~\ref{hyp: diameter-hypothesis} with Equations~\eqref{eq: trivial-phi-bound} and \eqref{eq: crude-x-y-bound}, we obtain the second bound. 
    
    Let us prove the third bound. Let $(\gamma_1,\gamma_2)\in\mathscr{A}$, $x_1\in\I_-(\gamma_1)$ and $x_2\in\I_-(\gamma_2)$. Then \begin{equation*}
        \lvert x-y\rvert\leq \mathrm{diam}(\gamma_1)+\mathrm{dist}(\I_-(\gamma_1),\I_-(\gamma_2))+\mathrm{diam}(\gamma_2).
    \end{equation*}By the definition of $\mathscr{A}$, \begin{equation*}
        \mathrm{dist}(\I_-(\gamma_1),\I_-(\gamma_2))\leq 2M\mathrm{diam}(\gamma_1)^{3/4}\mathrm{diam}(\gamma_2)^{3/4}.
    \end{equation*} Since $M\mathrm{diam}(\gamma_1)^{3/4}\mathrm{diam}(\gamma_2)^{3/4}+\mathrm{diam}(\gamma_1)\leq 2M\mathrm{diam}(\gamma_1)\mathrm{diam}(\gamma_2)^{3/4}$,  \begin{equation*}
        \begin{split}
            \lvert x-y\rvert\leq M\mathrm{diam}(\gamma_1)^{3/4}\mathrm{diam}(\gamma_2)^{3/4}(2\mathrm{diam}(\gamma_1)^{1/4}+2\mathrm{diam}(\gamma_2)^{1/4})\leq 4M\diam(\gamma_1)\diam(\gamma_2),
        \end{split}
    \end{equation*}which combined with Hypothesis~\ref{hyp: diameter-hypothesis} yields the second bound. The lemma has thus been proved.
\end{proof}

All the ingredients needed to give an upper bound to the decay rate of the $n$-point correlation functions have now been described.

\section{Correlations of n-points}\label{sec: correlations}

In this section, we apply the cluster expansion developed in Sections~\ref{sec: polymers} and \ref{sec: convergence} and the site estimates proved in Section~\ref{sec: point-estimates} to obtain an upper bound to the $n$-point correlations of the long-range one-dimensional ferromagnetic Ising model with a homogeneous boundary condition. That is, we are interested in estimating \begin{equation}\label{eq: n-point-correlations}
    \langle :\sigma_A:\rangle^+_{\beta}\coloneqq\lim_{\Lambda\uparrow\mathbb{Z}}\frac{1}{(-2\beta)^{\lvert A\rvert}}\frac{\partial}{\partial h_A}|_{h=0}\log Z^+_{\Lambda,\beta,h},
\end{equation}where $A$ is an arbitrary subset of $\Lambda$.
Note that if $A=\{x,y\}$, then
\[
\langle:\sigma_A:\rangle^+_\beta=\langle \sigma_x\sigma_y\rangle^+_\beta-\langle\sigma_x\rangle^+_\beta\langle\sigma_y\rangle^+_\beta.
\]
Let $\mathbf{A}\in \Pt(A,p)\coloneqq\{\mathbf{A}\in \mathrm{P}(A)^p: \{A_i:i\in[p]\}\in\Pt(P)\}$, i.e., the set of labeled partitions of $A$ with $p\geq 2$ elements. Inspired by Lemma~\ref{lem: many-point-arbitrary-sums}, define
\begin{equation}
    \overline{\W}^*_\beta(\mathbf{A})\coloneqq\sum_{n=p}^{2p-2}\frac{1}{(n-p)!}\sum_{T\in\Tr^*_{n,p}}\sum_{\mathbf{\Gamma}\in\mathscr{P}^n}\prod_{q\in[p]}\ind_{A_q\subset\I_-(\Gamma_q)}\prod_{i\in[n]}\overline{\W}_\beta(\Gamma_i)\prod_{\{j,k\}\in E(T)}\overline{\E}_\beta(\Gamma_j,\Gamma_k),
\end{equation}
where I recall that
\begin{equation}\label{eq: total-edge-function-for-polymers}
    \overline{\E}_\beta(\varGamma_1,\varGamma_2)\coloneqq\sum_{n=2}^\infty\sum_{\mathbf{\Gamma}\in\mathscr{P}^n}\ind_{\Gamma_1=\varGamma_1}\ind_{\Gamma_n=\varGamma_2}\prod_{i=2}^{n-1}\overline{\W}_\beta(\Gamma_j)\prod_{j\in[n-1]}\ind_{\Gamma_j\not\sim\Gamma_{j+1}}.
\end{equation}
Thus, by Lemma~\ref{lem: many-point-arbitrary-sums}, Theorem~\ref{theo: convergence-cluster-expansion} and Cauchy's estimates for holomorphic functions, 
\begin{equation}
\begin{split}
    \frac{\lvert\langle :\sigma_A:\rangle^+_{\beta}\rvert}{(6\lvert A\rvert)^{\lvert A\rvert}}&\leq
    \sum_{\Gamma\in\mathscr{P}}\ind_{A\subset\I_-(\Gamma)}\overline{\W}_\beta(\Gamma)+\sum_{p=2}^{\lvert A\rvert}\frac{1}{p!}\sum_{\mathbf{A}\in\Pt(A,p)}\overline{\W}^*_\beta(\mathbf{A}).
\end{split}
\end{equation}
We restrict ourselves to two-point functions. The computations for three(+)-point functions are quite complicated and not particularly novel technique-wise. The interested reader will be able to find a detailed account in Corsini's PhD thesis \cite{Corsini-tese}.

Henceforth, assume that $A=\{x,y\}$. Thus,
\begin{equation}
    \frac{1}{144}\left\lvert\langle:\sigma_A:\rangle^+_\beta\right\rvert\leq\sum_{\Gamma\in\mathscr{P}}\ind_{x,y\in\I_-(\Gamma)}\overline{\W}_\beta(\Gamma)+\sum_{\mathbf{\Gamma}\in\mathscr{P}^2}\ind_{x\in\I_-(\Gamma_1)}\ind_{y\in\I_-(\Gamma_2)}\overline{\W}_\beta(\Gamma_1)\overline{\E}_\beta(\Gamma_1,\Gamma_2)\overline{\W}_\beta(\Gamma_2). 
\end{equation}
Let us begin with the first term. By Corollary~\ref{cor: sum-tree-of-polymers} and Equation~\eqref{eq: bar-zeta-compared-to-W},
\[
    \sum_{\Gamma\in\mathscr{P}}\ind_{x,y\in\I_-(\Gamma)}\overline{\W}_\beta(\Gamma)\leq\sum_{\gamma\in\mathscr{C}}\ind_{x,y\in\I_-(\gamma)}\W_{\beta/4}(\gamma)+\sum_{\mathbf{w}\in\mathscr{C}_2}\ind_{x\in\I_-(\gamma_1)}\ind_{y\in\I_-(\gamma_2)}\W_{\beta/4}(\mathbf{w}).
\]
By Hypothesis~\ref{hyp: diameter-hypothesis}, there exists $c_6>0$ such that
\[
    \sum_{\gamma\in\mathscr{C}}\ind_{x,y\in\I_-(\gamma)}e^{-\beta H(\gamma)}\leq 2^{-c_6\beta}\lvert x-y\rvert^{-c_6\beta }\leq 2^{-c_6\beta} C(x,y),
\]
provided $\beta\geq\beta_0$ is sufficiently large. The reader will find the proof of such statement in \cite{Corsini-tese}, Section 5.2.2. On the other hand, by Corollary~\ref{cor: two-contour-estimate}, Lemma~\ref{lem: contour-to-point-bound} and Hypothesis~\ref{hyp: phase-transition-estimate},
\[
    \sum_{\mathbf{w}\in\mathscr{C}_2}\ind_{x\in\I_-(\gamma_1)}\ind_{y\in\I_-(\gamma_2)}\W_{\beta/4}(\mathbf{w})\leq 16\cdot\left(e^{-\left(\frac{1}{16}\beta-3c_0\right)c_2}\right)^{2}C(x,y)\leq e^{-\frac{1}{16}c_2\beta }C(x,y).
\]
Thus, the final piece of the puzzle is simply bounding $\overline{\E}_\beta
(\cdot,\cdot)$, which is the object of the next subsection.

\subsection{Total edge function bounds}\label{subsec: total-edge-function-bounds}

Let us now bound the total edge function for polymer trees (see Equation~\eqref{eq: total-edge-function-for-polymers}). Let $\Gamma_1,\Gamma_2,\Gamma_3$ be polymers such that $\Gamma_1\not\sim\Gamma_2$ and $\Gamma_2\not\sim\Gamma_3$. By Lemma~\ref{lem: incompatibility-distance-contours-polymers}, there exist $(\gamma_{1,2},\gamma_{2,1})\in(\Gamma_1\times\Gamma_2)\cap\mathscr{A}$ and $(\gamma_{2,3},\gamma_{3,2})\in(\Gamma_2\times\Gamma_3)\cap\mathscr{A}$. There are two cases to consider: $\gamma_{2,1}=\gamma_{2,3}$ or $\gamma_{2,1}\parallel\gamma_{2,3}$, that is, $(\gamma_{2,1},\gamma_{2,3})\in\mathscr{C}_2$. For a graphical description, see Figure~\ref{fig: type-of-incompatibility-edges}.

\begin{figure}[hbt!]
    \centering
    \tikzset{every picture/.style={line width=0.75pt}} 

\begin{tikzpicture}[x=0.75pt,y=0.75pt,yscale=-1,xscale=1]

\draw   (70,67) .. controls (70,58.72) and (76.72,52) .. (85,52) .. controls (93.28,52) and (100,58.72) .. (100,67) .. controls (100,75.28) and (93.28,82) .. (85,82) .. controls (76.72,82) and (70,75.28) .. (70,67) -- cycle ;

\draw   (130,67) .. controls (130,58.72) and (136.72,52) .. (145,52) .. controls (153.28,52) and (160,58.72) .. (160,67) .. controls (160,75.28) and (153.28,82) .. (145,82) .. controls (136.72,82) and (130,75.28) .. (130,67) -- cycle ;

\draw   (190,67) .. controls (190,58.72) and (196.72,52) .. (205,52) .. controls (213.28,52) and (220,58.72) .. (220,67) .. controls (220,75.28) and (213.28,82) .. (205,82) .. controls (196.72,82) and (190,75.28) .. (190,67) -- cycle ;

\draw    (100,67) -- (130,67) ;
\draw    (160,67) -- (190,67) ;
\draw   (40,145) .. controls (40,136.72) and (46.72,130) .. (55,130) .. controls (63.28,130) and (70,136.72) .. (70,145) .. controls (70,153.28) and (63.28,160) .. (55,160) .. controls (46.72,160) and (40,153.28) .. (40,145) -- cycle ;

\draw   (100,145) .. controls (100,136.72) and (106.72,130) .. (115,130) .. controls (123.28,130) and (130,136.72) .. (130,145) .. controls (130,153.28) and (123.28,160) .. (115,160) .. controls (106.72,160) and (100,153.28) .. (100,145) -- cycle ;
\draw   (160,145) .. controls (160,136.72) and (166.72,130) .. (175,130) .. controls (183.28,130) and (190,136.72) .. (190,145) .. controls (190,153.28) and (183.28,160) .. (175,160) .. controls (166.72,160) and (160,153.28) .. (160,145) -- cycle ;
\draw   (220,145) .. controls (220,136.72) and (226.72,130) .. (235,130) .. controls (243.28,130) and (250,136.72) .. (250,145) .. controls (250,153.28) and (243.28,160) .. (235,160) .. controls (226.72,160) and (220,153.28) .. (220,145) -- cycle ;
\draw    (70,145) -- (100,145) ;
\draw    (130,145) -- (160,145) ;
\draw    (190,145) -- (220,145) ;
\draw   (40,204.5) .. controls (40,196.22) and (46.72,189.5) .. (55,189.5) .. controls (63.28,189.5) and (70,196.22) .. (70,204.5) .. controls (70,212.78) and (63.28,219.5) .. (55,219.5) .. controls (46.72,219.5) and (40,212.78) .. (40,204.5) -- cycle ;

\draw   (100,204.5) .. controls (100,196.22) and (120.15,189.5) .. (145,189.5) .. controls (169.85,189.5) and (190,196.22) .. (190,204.5) .. controls (190,212.78) and (169.85,219.5) .. (145,219.5) .. controls (120.15,219.5) and (100,212.78) .. (100,204.5) -- cycle ;
\draw   (220,204.5) .. controls (220,196.22) and (226.72,189.5) .. (235,189.5) .. controls (243.28,189.5) and (250,196.22) .. (250,204.5) .. controls (250,212.78) and (243.28,219.5) .. (235,219.5) .. controls (226.72,219.5) and (220,212.78) .. (220,204.5) -- cycle ;
\draw    (70,204.5) -- (100,204.5) ;
\draw    (190,204.5) -- (220,204.5) ;
\draw    (146,89.33) -- (146,117.2) ;
\draw [shift={(146,119.2)}, rotate = 270] [color={rgb, 255:red, 0; green, 0; blue, 0 }  ][line width=0.75]    (10.93,-3.29) .. controls (6.95,-1.4) and (3.31,-0.3) .. (0,0) .. controls (3.31,0.3) and (6.95,1.4) .. (10.93,3.29)   ;
\draw  [dash pattern={on 0.84pt off 2.51pt}] (30,40) -- (260,40) -- (260,230) -- (30,230) -- cycle ;

\draw (197,57.9) node [anchor=north west][inner sep=0.75pt]    {$\Gamma _{3}$};
\draw (137,57.9) node [anchor=north west][inner sep=0.75pt]    {$\Gamma _{2}$};
\draw (77,57.9) node [anchor=north west][inner sep=0.75pt]    {$\Gamma _{1}$};
\draw (103.5,45.4) node [anchor=north west][inner sep=0.75pt]    {$\not{\sim }$};
\draw (163.5,45.4) node [anchor=north west][inner sep=0.75pt]    {$\not{\sim }$};
\draw (103,138.4) node [anchor=north west][inner sep=0.75pt]    {$\gamma _{2,1}$};
\draw (163,138.4) node [anchor=north west][inner sep=0.75pt]    {$\gamma _{2,3}$};
\draw (223,138.4) node [anchor=north west][inner sep=0.75pt]    {$\gamma _{3,2}$};
\draw (43,138.4) node [anchor=north west][inner sep=0.75pt]    {$\gamma _{1,2}$};
\draw (74.5,125.9) node [anchor=north west][inner sep=0.75pt]    {$\mathscr{A}$};
\draw (194.5,125.9) node [anchor=north west][inner sep=0.75pt]    {$\mathscr{A}$};
\draw (134.5,125.92) node [anchor=north west][inner sep=0.75pt]    {$\mathscr{C}_{2}$};
\draw (113,197.9) node [anchor=north west][inner sep=0.75pt]    {$\gamma _{2,1} =\gamma _{2,3}$};
\draw (223,197.9) node [anchor=north west][inner sep=0.75pt]    {$\gamma _{3,2}$};
\draw (74.5,185.4) node [anchor=north west][inner sep=0.75pt]    {$\mathscr{A}$};
\draw (194.5,185.4) node [anchor=north west][inner sep=0.75pt]    {$\mathscr{A}$};
\draw (43,197.9) node [anchor=north west][inner sep=0.75pt]    {$\gamma _{1,2}$};
\draw (140,166.07) node [anchor=north west][inner sep=0.75pt]  [font=\scriptsize]  {$or$};

\end{tikzpicture}
    \caption{Graphical description of the transitive character of polymer incompatibility in terms of contours.}
    \label{fig: type-of-incompatibility-edges}
\end{figure}
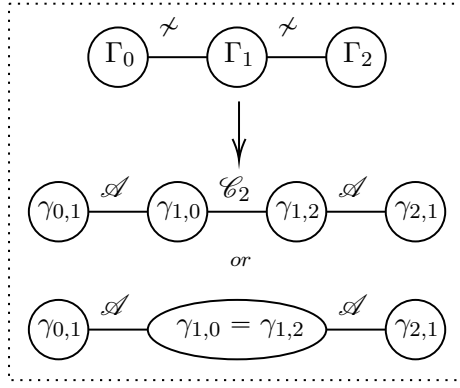

\begin{notation}
    Let $n\geq 2$.
    \begin{enumerate}
        \item Write $\mathcal{I}_n\coloneqq\{(i,j)\in[n]^2:\lvert x-y\rvert=1\}$.
        \item Write $\mathcal{J}_n\coloneqq\mathrm{P}([2,n-1])$. Note that $\mathcal{J}_2=\{\emptyset\}$.
        \item Write $\mathrm{J}_n\in\mathcal{J}_n$ and $\mathrm{J}_n^c\coloneqq[2,n-1]\setminus\mathrm{J}_n$.
    \end{enumerate}
\end{notation}

Thus,
\begin{equation*}
    \ind_{\Gamma_1\not\sim\Gamma_2}\ind_{\Gamma_2\not\sim\Gamma_3}\leq\sum_{\Upsilon\in\mathscr{C}^{\mathcal{I}_3}}\ind_\mathscr{A}(\gamma_{1,2},\gamma_{2,1})\tilde{\ind}(\gamma_{2,1},\gamma_{2,3})\ind_{\mathscr{A}}(\gamma_{2,3},\gamma_{3,2})\ind_{\gamma_{1,2}\in\Gamma_1}\ind_{\gamma_{2,1},\gamma_{2,3}\in\Gamma_2}\ind_{\gamma_{3,2}\in\Gamma_3},
\end{equation*}where $\tilde{\ind}:\mathscr{C}^2\rightarrow\{0,1\}$ is given by \begin{equation}\label{eq: tilde-indicator-definition}
    \tilde{\ind}(\gamma_1,\gamma_2)\coloneqq\ind_{\gamma_1=\gamma_2}+\ind_{\gamma_1\parallel\gamma_2}.
\end{equation}This generalizes rather easily to linear trees: let $(\Gamma_1,\Gamma_2,...,\Gamma_{n})\in\mathscr{P}^{n}$, where $n\geq 3$, then \begin{equation*}\label{eq: first-try-linear-incompatibilities}
    \prod_{i=1}^{n-1}\ind_{\Gamma_i\not\sim\Gamma_{i+1}}\leq\sum_{\Upsilon\in\mathscr{C}^{\mathcal{I}_n}}\prod_{i=1}^{n-1}\ind_{\mathscr{A}}(\gamma_{i,i+1},\gamma_{i+1,i})\prod_{j=2}^{n-1}\tilde{\ind}(\gamma_{j,j-1},\gamma_{j,j+1})\prod_{(k,l)\in\mathcal{I}_k}\ind_{\gamma_{k,l}\in\Gamma_k}.
\end{equation*}Writing \begin{equation*}\label{eq: J-tilde-indicator-function}
    \prod_{j=2}^{n-1}\tilde{\ind}(\gamma_{j,j-1},\gamma_{j,j+1})=\hspace{-0.25cm}\sum_{\mathrm{J}_n\in\mathcal{J}_n}\prod_{j_1\in \mathrm{J}_n}\ind_{\gamma_{j_1-1,j_1}\parallel\gamma_{j_1,j_1+1}}\hspace{-.25cm}\prod_{j_2\in\mathrm{J}_n^c}\hspace{-.25cm}\ind_{\gamma_{j_2,j_2-1}=\gamma_{j_2,j_2+1}}\eqqcolon\sum_{J_n\in\mathcal{J}_n}\tilde{\ind}_{J_n}(\Upsilon),
\end{equation*}we obtain \begin{equation}\label{eq: linear-incompatibilities-final-formula}
    \prod_{i=1}^{n-1}\ind_{\Gamma_i\not\sim\Gamma_{i+1}}\leq \sum_{\mathrm{J}_n\in\mathcal{J}_n}\sum_{\Upsilon\in\mathscr{C}^{\mathcal{I}_n}}\tilde{\ind}_{\mathrm{J}_n}(\Upsilon)\tilde{\ind}_{\mathscr{A}}(\Upsilon)\prod_{(k,l)\in\mathcal{I}_n}\ind_{\gamma_{k,l}\in\Gamma_k},
\end{equation}where \begin{equation*}
    \tilde{\ind}_\mathscr{A}(\Upsilon)\coloneqq\prod_{i=1}^{n-1}\ind_{\mathscr{A}}(\gamma_{i,i+1},\gamma_{i+1,i}).
\end{equation*}

\begin{lemma}\label{lem: concrete-overline-E-beta-bound}
    There exists $\beta_0>0$ such that if $\beta\geq\beta_0$ and $\varGamma_1,\varGamma_2\in\mathscr{P}$, then \begin{equation}\label{eq: concrete-overline-E-beta-bound}
        \overline{\E}_\beta(\varGamma_1,\varGamma_2)\leq (8M)^\alpha \sum_{\substack{\gamma_1\in\varGamma_1\\\gamma_2\in\varGamma_2}}e^{2c_0H(\gamma_1)}e^{2c_0H(\gamma_2)}\sum_{x_1,x_2}\ind_{x_1\in\I_-(\gamma_1)}\ind_{x\in\I_-(\gamma_2)}.
    \end{equation}
\end{lemma}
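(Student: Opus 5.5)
The plan is to expand $\overline{\E}_\beta(\varGamma_1,\varGamma_2)$ along the linear chain $(\Gamma_1,\dots,\Gamma_n)$ with $\Gamma_1=\varGamma_1$ and $\Gamma_n=\varGamma_2$, and to replace every polymer incompatibility by contour data using Equation~\eqref{eq: linear-incompatibilities-final-formula}. Each $\Gamma_i$ then carries one or two marked contours $\gamma_{i,i\pm1}$. Consecutive marked contours are linked by $\ind_{\mathscr{A}}$. Inside an intermediate polymer the two marked contours are either equal ($j\in\mathrm{J}_n^c$) or positively compatible ($j\in\mathrm{J}_n$). I will actually aim for the bound with an extra factor $C(x_1,x_2)$ inside the sum over $x_1,x_2$, which is presumably the intended statement; since $C\le 1$, this is stronger than the displayed inequality.

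\textbf{Step 1: sum out the intermediate polymers.} For $2\le i\le n-1$ I sum over $\Gamma_i$ with its marked contours fixed. Corollary~\ref{cor: sum-tree-of-polymers} with $\eta=1$ gives $\overline{\W}_\beta(\Gamma_i)\le \bar z_{\beta/2}(\Gamma_i)$. Equation~\eqref{eq: bar-zeta-compared-to-W} then turns the sum into $\W_{\beta/4}(\gamma)$ when there is one marked contour, and into $\W_{\beta/4}(\gamma,\gamma')$ when there are two. These are bounded by $e^{-\beta H(\gamma)/8}$ (Corollary~\ref{cor: polymers-reduced-to-contours}) and by $\Phi(\gamma,\gamma')e^{-\beta[H(\gamma)+H(\gamma')]/16}$ (Corollary~\ref{cor: two-contour-estimate}), respectively. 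The endpoint polymers are fixed, so I only sum over their marked contours $\gamma_1\in\varGamma_1$ and $\gamma_2\in\varGamma_2$.

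\textbf{Step 2: pass to lattice sites.} In every marked contour I choose a site, at the cost of a factor $\sum_{x\in\I_-(\gamma)}1\le \diam(\gamma)\le e^{c_0H(\gamma)}$ for intermediate contours. I then apply Lemma~\ref{lem: contour-to-point-bound}. Item (3) turns each $\mathscr{A}$-link into $(4M)^2e^{2c_0[H+H']}C(x,x')$. Item (2) turns each $\Phi$-factor into $2^6e^{3c_0[H+H']}C(x,x')$. Item (1) is used when an intermediate polymer has a single marked contour carrying two chosen sites, giving $1\le C(x,x')e^{2c_0H}$. The result is a linear chain of $C$'s running from $x_1\in\I_-(\gamma_1)$ to $x_2\in\I_-(\gamma_2)$. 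Each intermediate contour contributes at most $e^{-(\beta/16-10c_0)H(\gamma)}$, and each endpoint contour contributes $e^{2c_0H(\gamma_{1})}$ or $e^{2c_0H(\gamma_2)}$ (plus a harmless constant).

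\textbf{Step 3: sum over intermediate contours and sites.} With the sites fixed, I sum over intermediate contours whose $\I_-$ contains the given site. By Hypothesis~\ref{hyp: phase-transition-estimate} this yields at most $e^{-c\beta}$ per contour for $\beta\ge\beta_0$. Next I sum the intermediate sites of the chain using Equation~\eqref{eq: total-edge-lattice-sites}, which collapses the chain to $[32(1+2\zeta(\alpha))]^{O(n)}C(x_1,x_2)$. The sum over $\mathrm{J}_n\in\mathcal{J}_n$ contributes $2^{n}$. Altogether the $n$-th term is bounded by $(K e^{-c\beta})^{n-2}(4M)^2C(x_1,x_2)$ times the endpoint factors, for a constant $K$. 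Summing over $n\ge2$ gives a geometric series, which gives the claim for $\beta$ large. The $n=2$ term is handled directly by Lemma~\ref{lem: incompatibility-distance-contours-polymers} together with item (3) of Lemma~\ref{lem: contour-to-point-bound}.

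The main obstacle is the bookkeeping in Steps 2 and 3. I must check that every intermediate contour is counted only a bounded number of times. The cost of choosing its sites and the translation entropy, both of order $e^{O(c_0)H}$, must be absorbed by its decay $e^{-\beta H/16}$. Finally, the order of summation (sites first, then contours through a given site) has to be arranged so that Hypothesis~\ref{hyp: phase-transition-estimate} applies to each contour exactly once.
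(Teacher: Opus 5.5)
Your proposal is correct and follows essentially the same route as the paper's proof. Both arguments mark contours via Equation~\eqref{eq: linear-incompatibilities-final-formula}, remove the intermediate polymers with Corollaries~\ref{cor: sum-tree-of-polymers}, \ref{cor: polymers-reduced-to-contours} and \ref{cor: two-contour-estimate}, pass to sites with Lemma~\ref{lem: contour-to-point-bound}, apply Hypothesis~\ref{hyp: phase-transition-estimate} to each contour through a fixed site, collapse the chain by Equation~\eqref{eq: total-edge-lattice-sites}, and sum a geometric series in $n$.

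The paper's own argument also ends with the factor $C(x_1,x_2)$ and with the prefactor $(4M)^2/(1-2^{20}(1+2\zeta(\alpha))^2M^2e^{-c_2\beta/16})$ rather than $(8M)^\alpha$, so your reading of the intended statement is right. The extra $e^{c_0H}$ you budget for choosing a site is unnecessary but harmless, since inserting $1\le\sum_x\ind_{x\in\I_-(\gamma)}$ costs nothing.
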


\begin{remark}
    This is very similar to the type of bound announced in Equation~\eqref{eq: total-edge-function-bound}.
\end{remark}

\begin{proof}
    By Equation\eqref{eq: linear-incompatibilities-final-formula} and Corollary~\ref{cor: sum-tree-of-polymers}, we have \begin{equation*}
    \begin{split}
        \overline{\E}_{\beta}(\varGamma_1,\varGamma_2)&\leq\sum_{n=2}^\infty\sum_{J_n\in\mathcal{J}_n}\sum_{\Upsilon\in\mathscr{C}^{\mathcal{I}_n}}\tilde{\ind}_{\mathrm{J}_n}(\Upsilon)\tilde{\ind}_\mathscr{A}(\Upsilon)\sum_{\mathbf{\Gamma}\in\mathscr{P}^n}\ind_{\Gamma_1=\varGamma_1}\ind_{\Gamma_n=\varGamma_2}\prod_{(k,l)\in\mathcal{I}_n}\ind_{\gamma_{k,l}\in\Gamma_k}\prod_{j=2}^{n-1}\bar{z}_{\beta/2}(\Gamma_j)\\
        &\eqqcolon\sum_{n=2}^\infty\sum_{J_n\in\mathcal{J}_n}\sum_{\Upsilon\in\mathscr{C}^{\mathcal{I}_n}}\tilde{\ind}_{\mathrm{J}_n}(\Upsilon)\tilde{\ind}_\mathscr{A}(\Upsilon)\ind_{\gamma_{1,2}\in\varGamma_1}\ind_{\gamma_{n,n-1}\in\varGamma_2}\prod_{j=2}^{n-1}\mathcal{Z}(\{\gamma_{j,j-1},\gamma_{j,j+1}\}).
    \end{split}
    \end{equation*}
    By Corollaries~\ref{cor: polymers-reduced-to-contours} and \ref{cor: sum-tree-of-polymers} and Equation~\eqref{eq: bar-zeta-compared-to-W}, if $j\in\mathrm{J}_n$, then
    \[
        \mathcal{Z}(\{\gamma_{j,j-1},\gamma_{j,j+1}\})\leq\W_{\beta/4}(\gamma_{j,j-1},\gamma_{j,j+1})\leq e^{-\frac{1}{16}\beta H(\gamma_{j,j-1})}\Phi(\gamma_{j,j-1},\gamma_{j,j+1})e^{-\frac{1}{16}\beta H(\gamma_{j,j+1})}.
    \]
    If $j\in\mathrm{J}_n^c$, then 
    \[
        \mathcal{Z}(\{\gamma_{j,j-1},\gamma_{j,j+1}\})\leq\ind_{\gamma_{j,j-1}=\gamma_{j,j+1}}\W_{\beta/4}(\gamma_{j,j-1},\gamma_{j,j+1})\leq e^{-\frac{1}{16}\beta H(\gamma_{j,j-1})}e^{-\frac{1}{16}\beta H(\gamma_{j,j+1})}.
    \]
    Finally, it is useful to write $\ind_n(\varGamma_1,\varGamma_2)\coloneqq \ind_{\gamma_{1,2}\in\varGamma_1}\ind_{\gamma_{n,n-1}\in\varGamma_2}$ and
    \[
        \mathcal{Z}^{\mathrm{J}_n}_{\beta}(\Upsilon)\coloneqq\prod_{j=2}^{n-1}e^{-\beta H(\gamma_{j,j-1})}e^{-\beta H(\gamma_{j,j+1})}\prod_{j\in\mathrm{J}_n}\Phi(\gamma_{j,j-1},\gamma_{j,j+1})
    \]
    Therefore,
    \[
        \begin{split}
            \overline{\E}_\beta(\varGamma_1,\varGamma_2)\leq\sum_{n=2}^\infty\sum_{\mathrm{J}_n\in\mathcal{J}_n}\sum_{\Upsilon\in\mathscr{C}^{\mathcal{I}_n}}&\tilde{\ind}_{\mathrm{J}_n}(\Upsilon)\tilde{\ind}_{\mathscr{A}}(\Upsilon)\ind_n(\varGamma_1,\varGamma_2)\mathcal{Z}^{\mathrm{J}_n}_{\beta/16}(\Upsilon).
        \end{split}
    \]
    In order to proceed, Lemma~\ref{lem: contour-to-point-bound} is necessary. Hence, we introduce lattice sites:
    \[
        \begin{split}
            \overline{\E}_\beta(\varGamma_1,\varGamma_2)&\leq \sum_{n=2}^\infty\sum_{\mathrm{J}_n\in\mathcal{J}_n}\sum_{\mathbf{x}\in\mathbb{Z}^{\mathcal{I}_n}}\sum_{\Upsilon\in\mathscr{C}^{\mathcal{I}_n}}\prod_{(k,l)\in\mathcal{I}_n}\ind_{x_{k,l}\in\I_-(\gamma_{k,l})}\cdot\tilde{\ind}_{\mathrm{J}_n}(\Upsilon)\tilde{\ind}_{\mathscr{A}}(\Upsilon)\ind_n(\varGamma_1,\varGamma_2)\mathcal{Z}^{\mathrm{J}_n}_{\beta/16}(\Upsilon)\\
            &\coloneqq\sum_{n=2}^\infty\sum_{\mathrm{J}_n\in\mathcal{J}_n}\sum_{\mathbf{x}\in\mathbb{Z}^{\mathcal{I}_n}}\sum_{\Upsilon\in\mathscr{C}^{\mathcal{I}_n}}\ind_n(\mathbf{x},\Upsilon)\tilde{\ind}_{\mathrm{J}_n}(\Upsilon)\tilde{\ind}_{\mathscr{A}}(\Upsilon)\ind_n(\varGamma_1,\varGamma_2)\mathcal{Z}^{\mathrm{J}_n}_{\beta/16}(\Upsilon).
        \end{split}
    \]
    First, if $\ind_n(\mathbf{x},\Upsilon)=1$, then
    \[
        \tilde{\ind}_{\mathscr{A}}(\Upsilon)\leq \prod_{i=1}^{n-1}(4M)^2e^{2c_0 H(\gamma_{i,i+1})}e^{2c_0H(\gamma_{i+1,i})}C(x_{i,i+1},x_{i+1,i}).
    \]
    If $j\in\mathrm{J}_n$ and $\ind_n(\mathbf{x},\Upsilon)=1$, then
    \[
        \Phi(\gamma_{j,j-1},\gamma_{j,j+1})\leq 2^6e^{3c_0H(\gamma_{j,j-1})}e^{3c_0H(\gamma_{j,j+1})}C(x_{j,j-1},x_{j,j+1}).
    \]
    Finally, if $j\in\mathrm{J}_n^c$ and $\ind_n(\mathbf{x},\Upsilon)=1$, then
    \[
        1\leq e^{c_0H(\gamma_{j,j-1})}e^{c_0H(\gamma_{j,j+1})}C(x_{j,j-1},x_{j,j+1}).
    \]
    Therefore, by writing $\mathcal{I}_n'\coloneqq\{(j,k)\in\mathcal{I}_n:2\leq j\leq n-1\}$,
    \[
        C(\mathbf{x})\coloneqq C(x_{1,2},x_{2,1})C(x_{2,1},x_{2,3})...C(x_{n-1,n-2},x_{n-1,n})C(x_{n-1,n},x_{n,n-1}).
    \]
    and
    \[
        \tilde{\ind}_n(\varGamma_1,\varGamma_2)\coloneqq\sum_{\gamma_1,\gamma_2\in\mathscr{C}}\ind_{x_{1,2}\in\I_-(\gamma_1)}\ind_{x_{n,n-1}\in\I_-(\gamma_2)}\ind_{\gamma_1\in\varGamma_1}\ind_{\gamma_2\in\varGamma_2}e^{2c_0H(\gamma_1)}e^{2c_0H(\gamma_2)},
    \]
    inputing the estimates arisen from Lemma~\ref{lem: contour-to-point-bound}, we obtain
    \[
        \begin{split}
            \overline{\E}_\beta(\varGamma_1,\varGamma_2)\leq (4M)^2\sum_{n=2}^\infty\sum_{\mathrm{J}_n\in\mathcal{J}_n}\sum_{\mathbf{x}\in\mathbb{Z}^{\mathcal{I}_n}}&\tilde{\ind}_n(\varGamma_1,\varGamma_2)C(\mathbf{x})\left(2^6(4M)^2\right)^{n-2}\\
            &\cdot\prod_{(k,l)\in\mathcal{I}'_n}\sum_{\gamma_{k,l}\in\mathscr{C}}\ind_{x_{k,l}\in\I_-(\gamma_{k,l})}e^{\left(5c_0-\frac{1}{16}\beta\right)H(\gamma_{k,l})}.
        \end{split}
    \]
    If $\beta\geq 160c_0$, then, by Hypothesis~\ref{hyp: phase-transition-estimate} and $\lvert\mathcal{I}'_n\rvert=2(n-2)$,
    \[
        \overline{\E}_\beta(\varGamma_1,\varGamma_2)\leq (4M)^2\sum_{n=2}^\infty\left(2^{10}M^2e^{-\frac{1}{16}c_2\beta}\right)^{n-2}\sum_{\mathrm{J}_n\in\mathcal{J}_n}\sum_{\mathbf{x}\in\mathbb{Z}^n}\tilde{\ind}_n(\varGamma_1,\varGamma_2)C(\mathbf{x}).
    \]
    The next step is to apply Equation~\eqref{eq: total-edge-lattice-sites} and sum over $\mathrm{J}_n\in\mathcal{J}_n$, which produces a combinatorial factor $2^{n-2}$. Thus,
    \[
        \sum_{\mathbf{x}\in\mathbb{Z}^n}\tilde{\ind}_n(\varGamma_1,\varGamma_2)C(\mathbf{x})\leq[32(1+2\zeta(\alpha))]^{2(n-2)}\sum_{\substack{\gamma_1\in\varGamma_1\\\gamma_2\in\varGamma_2}}e^{2c_0H(\gamma_1)}e^{2c_0H(\gamma_2)}\sum_{\substack{x_1\in\I_-(\gamma_1)\\x_2\in\I_-(\gamma_2)}}C(x_1,x_2).
    \]
    Finally, summing the geometric series,
    \[
        \overline{\E}_\beta(\varGamma_1,\varGamma_2)\leq \frac{(4M)^2}{1-2^{20}(1+2\zeta(\alpha))^2M^2e^{-\frac{1}{16}c_2\beta}}\sum_{\substack{\gamma_1\in\varGamma_1\\\gamma_2\in\varGamma_2}}e^{2c_0H(\gamma_1)}e^{2c_0H(\gamma_2)}\sum_{\substack{x_1\in\I_-(\gamma_1)\\x_2\in\I_-(\gamma_2)}}C(x_1,x_2),
    \]
    which concludes the proof.
\end{proof}

\subsection{Two-point correlations}\label{subsec: two-point-correlations}

In this subsection, we employ Lemma~\ref{lem: concrete-overline-E-beta-bound} to obtain an upper bound to the two-point correlation function.

\begin{proof}[Proof of Theorem~\ref{theo: two-point-correlations}]
    Given the discussion right before Subsection~\ref{subsec: total-edge-function-bounds}, it is sufficient to bound
    \[
        \overline{\W}^*_\beta(x,y)\coloneqq\sum_{\mathbf{\Gamma}\in\mathscr{P}^2}\ind_{x\in\I_-(\Gamma_1)\ind_-(\Gamma_2)}\overline{\W}_\beta(\Gamma_1)\overline{\E}_\beta(\Gamma_1,\Gamma_2)\overline{\W}_\beta(\Gamma_2).
    \]
    Write $\mathcal{I}\coloneqq\{(1,0),(1,2),(2,1),(2,0)\}$. By Corollary~\ref{cor: sum-tree-of-polymers} and Lemma~\ref{lem: concrete-overline-E-beta-bound},
    \[
        \begin{split}
            \overline{\W}^*_\beta(x,y)\leq &\sum_{\mathbf{x}\in\mathbb{Z}^\mathcal{I}}\ind_{x_{1,0}=x}\ind_{x_{2,0}=y}C(x_{1,2},x_{2,1})\\
            &\cdot\sum_{\Upsilon\in\mathscr{C}^\mathcal{I}}\prod_{\mathbf{i}\in\mathcal{I}}\ind_{x_{\mathbf{i}}\in\I_-(\gamma_{\mathbf{i}})}\cdot\tilde{\ind}(\gamma_{1,0},\gamma_{1,2})\tilde{\ind}(\gamma_{2,1},\gamma_{2,0})e^{2c_0H(\gamma_{1,0})}e^{2c_0H(\gamma_{2,0})}\\
            &\cdot\sum_{\mathbf{\Gamma}\in\mathscr{P}^2}\ind_{\gamma_{1,0},\gamma_{1,2}\in\Gamma_1}\ind_{\gamma_{2,1},\gamma_{2,0}\in\Gamma_2}\bar{z}_{\beta/2}(\Gamma_1)\bar{z}_{\beta/2}(\Gamma_2).
        \end{split}
    \]
    There is no need to go into all the details of a computation which was already done in the previous subsection. Essentially, we separate the four possible cases arising from $\tilde{\ind}(\gamma_{1,0},\gamma_{1,2})\tilde{\ind}(\gamma_{2,1},\gamma_{2,0})$. This allows us to remove the sum over polymer via Equation~\eqref{eq: bar-zeta-compared-to-W} and Corollaries~\ref{cor: polymers-reduced-to-contours} and \ref{cor: two-contour-estimate}. Then, it is sufficient to input Lemma~\ref{lem: contour-to-point-bound}, which allows us to eliminate the sums over contours. The final step is simply a triple iteration of Lemma~\ref{lem: C-xyz-bound}, which concludes the proof.
\end{proof}

\section{Concluding Remarks}

This work successfully continues the program of obtaining results for the entire region of phase transition $1<\alpha\leq 2$, whenever they are true, and erasing the hypothesis that the interaction between first-neighbors should be arbitrarily large. 

The matter of phase-separation remains open in the case $\alpha\in (1,3-3\log2]\sqcup\{2\}$: the case $1<\alpha <3-3\log2$ should be rather easy given \cite{Cassandro.Merola.Picco.17} and this work. We point out that another relatively natural avenue of further research is the study of the decay of correlations in the presence of a random field when $\alpha\in (1,3/2)$, especially given recent work by Ding, Huang, and Maia \cite{Ding}. 

\section*{Acknowledgements}

RB was supported by CNPq grants 311658/2025-3 and 407527/2025-7. This study was financed in part by the Coordena\c{c}\~ao de Aperfei\c{c}oamento de Pessoal de N\'ivel Superior - Brasil (CAPES) - Finance Code 001. HC was supported by CAPES grant 160295/2024-6 and by CNPq grant 140800/2022-0. We would like to thank Lucas Affonso, Jo\~ao Maia, Pierre Picco, and Kelvyn Welsch for many discussions about the long-range Ising model, and the contours and cluster expansion arguments employed in studying it. Without their extensive prior work, this paper would not have been possible.

\bibliographystyle{habbrv}
\bibliography{refs}

\end{document}